\documentclass[a4paper,UKenglish,cleveref,autoref, thm-restate]{lipics-v2021}

\usepackage{environ}
\makeatletter
\let\@frontnotes\@empty
\newcommand\frontnote[2]{%
  \g@addto@macro\@frontnotes{%
    \vskip\topmattervskip\baselineskip\noindent
    \textcolor{lipicsGray}{\fontsize{9}{11}\sffamily\bfseries #1\enskip}#2}}
\long\def\frontnote@swap#1#2{\frontnote{#2}{#1}}
\NewEnviron{frontnotes}[1]{\expandafter\frontnote@swap\expandafter{\BODY}{#1}}
\g@addto@macro\endabstract{\@frontnotes}
\makeatother

 \hideLIPIcs  

\title{Graph Distortion Distances, Gromov--Hausdorff and the Hardness of Gromov--Wasserstein} 

\titlerunning{Graph Distortion Distances, Gromov--Hausdorff and Gromov--Wasserstein}

\author{Jens Breitung}{RWTH Aachen University, Germany}{breitung@informatik.rwth-aachen.de}{https://orcid.org/0009-0006-6697-5275}{European Union (ERC, SymSim, 101054974)}

\author{Timo Gervens}{RWTH Aachen University, Germany}{gervens@informatik.rwth-aachen.de}{https://orcid.org/0000-0002-1224-9853}{European Union (ERC, SymSim, 101054974)}

\author{Julius Pörzgen}{TU Dortmund University, Germany}{julius.poerzgen@tu-dortmund.de}{https://orcid.org/0009-0007-7077-1184}{}

\authorrunning{J. Breitung and T. Gervens and J. Pörzgen} 

\Copyright{Jens Breitung and Timo Gervens and Julius Pörzgen} 

\ccsdesc[500]{Mathematics of computing~Approximation algorithms}
\ccsdesc[500]{Mathematics of computing~Graph algorithms}
\ccsdesc[500]{Mathematics of computing~Combinatoric problems}
\ccsdesc[500]{Theory of computation~Problems, reductions and completeness}
\ccsdesc[500]{Theory of computation~Graph algorithms analysis}

\keywords{graph similarity, maximum additive distortion, Gromov--Hausdorff, Gromov--Wasserstein, Gromov--Monge, pathlength, inapproximability} 

\category{} 

\relatedversion{} 

\acknowledgements{We thank Robin Westermann for the proof of an earlier version of 
\cref{subclaim:bounding_inter_layer_mapping} and Martin Grohe for his valuable advice and for helpful feedback on an earlier draft of this paper.}

\nolinenumbers 

\EventEditors{John Q. Open and Joan R. Access}
\EventNoEds{2}
\EventLongTitle{42nd Conference on Very Important Topics (CVIT 2016)}
\EventShortTitle{CVIT 2016}
\EventAcronym{CVIT}
\EventYear{2016}
\EventDate{December 24--27, 2016}
\EventLocation{Little Whinging, United Kingdom}
\EventLogo{}
\SeriesVolume{42}
\ArticleNo{23}

\usepackage[textsize=tiny]{todonotes}
\usepackage{csquotes}

\usepackage{refcount}

\usepackage{bm}
\usepackage{bbm}
\usepackage{complexity}
\usepackage{mathtools}
\usepackage{nccmath}
\usepackage{xfrac}

\usepackage{environ}   
\usepackage{hyperref}
\usepackage[capitalise]{cleveref}
\AddToHook{cmd/appendix/before}{\crefalias{section}{appendix}}

\usepackage{adjustbox}

\usepackage[section]{algorithm}
\usepackage{algorithmicx}
\usepackage{algpseudocode}

\usetikzlibrary{
    patterns,
    patterns.meta,
    calc,
    math,
    arrows,
    arrows.meta,
    automata,
    positioning,
    shapes.symbols,
    shapes.geometric,
    decorations.pathmorphing,
    decorations.pathreplacing,
    shapes,
    nfold,
    fit
}
\usepackage{etoolbox}

\newcommand{\WLOG}{\text{w.l.o.g.\@}}
\newcommand{\WLOGcap}{\text{W.l.o.g.\@}}
\newcommand{\ie}{\text{i.e.\@}}

\newcommand{\multiset}[1]{\{\!\!\{#1\}\!\!\}}

\newcommand{\N}{\ensuremath{\mathbb{N}}}

\renewcommand{\R}{\ensuremath{\mathbb{R}}}
\newcommand{\Rnn}{\ensuremath{\mathbb{R}_{\geq 0}}}

\newcommand{\Bij}{\ensuremath{\textup{Bij}}}
\newcommand{\Perm}{\ensuremath{\textup{Perm}}}
\newcommand{\Corr}{\ensuremath{\textup{Corr}}}

\newcommand{\DS}{\ensuremath{\textup{DS}}}
\newcommand{\supp}{\ensuremath{\textup{supp}}}

\renewcommand{\O}{\ensuremath{\textup{O}}}
\newcommand{\GSim}[1]{\textup{\textsc{GSim}$\big(#1\big)$}}

\newcommand{\GW}{\textup{\textsf{GW}}}
\newcommand{\OPT}{\ensuremath{\textup{\textsf{OPT}}}}
\newcommand{\threepartition}{\textsc{Three-Partition}}
\newcommand{\hamcycle}{\textsc{Hamiltonian Cycle}}
\newcommand{\hampath}{\textsc{Hamiltonian Path}}
\newcommand{\bandwidth}{\textsc{Bandwidth}}

\newcommand{\norm}[1]{\ensuremath{\left \lVert #1 \right \rVert}}
\newcommand{\dis}{\ensuremath{\Delta}}
\newcommand{\dist}{\ensuremath{\delta}}
\newcommand{\Ddist}[1]{\ensuremath{\dist_{\textup{\textsf{D}}, #1}}}
\newcommand{\Adist}[1]{\ensuremath{\dist_{\textup{\textsf{A}}, #1}}}
\newcommand{\Mdist}[1]{\ensuremath{\textup{\textsf{d}}_{#1}}}
\newcommand{\GH}{\ensuremath{\textup{\textsf{GH}}}}
\newcommand{\Gdist}{\ensuremath{\textup{dist}}}     

\newcommand{\etal}{et al.\@}
\newcommand{\memoli}{\text{Mémoli}}
\newcommand{\hastad}{\text{Håstad}}

\renewcommand{\G}{\ensuremath{\mathcal G}}
\newcommand{\diam}{\ensuremath{\textup{diam}}}
\newcommand{\ecc}{\ensuremath{\textup{ecc}}}
\newcommand{\X}{\ensuremath{\mathcal{X}}}
\newcommand{\Y}{\ensuremath{\mathcal{Y}}}
\newcommand{\len}{\ensuremath{\textsf{len}}}
\newcommand{\wz}{
    \ensuremath{
        \mathrlap{\raisebox{-0.1ex}{$\widetilde{\phantom{z}}$}}%
        z%
    }
}
\newcommand{\arr}[1]{
    \ensuremath{
        \xrightarrow{\smash{\raisebox{-0.3ex}{\scriptsize$#1$}}}
    }
}
\newcommand{\x}{\ensuremath{\bm{x}}}
\newcommand{\eps}{\ensuremath{\varepsilon}}
\newcommand{\yescase}{\proofsubparagraph*{Completeness}}
\newcommand{\nocase}{\proofsubparagraph*{Soundness}}
\newcommand{\ceq}{\ensuremath{\coloneqq}}
\newcommand{\midd}{\ensuremath{\, \middle| \, }}
\DeclarePairedDelimiter{\abs}{\lvert}{\rvert}
\DeclareMathOperator*{\argmin}{arg\,min}
\newcommand{\indicator}{\ensuremath{\mathbbm{1}}}

\makeatletter
\@ifclasswith{lipics-v2021}{anonymous}
  {\newcommand{\pc}[1]{\textcolor{red}{{Anonymous contributor}}}}   
  {\newcommand{\pc}[1]{#1}}          
\makeatother

\AddToHook{cmd/appendix/after}{%
  \crefalias{subsection}{appendix}%
  \crefalias{subsubsection}{appendix}%
}

\begin{document}

\maketitle

\allowdisplaybreaks[4]

\begin{abstract}
    We study the computational complexity of 
    graph distortion distances, a family of graph metrics 
    measuring how vertex distances differ between two graphs.
    Formally, these distances are defined in terms of standard matrix norms on the difference of distance matrices under an optimal vertex alignment.
    The resulting metrics have natural interpretations, corresponding to three ways of measuring the distortion of shortest-path distances:
    the \emph{maximum} distortion among all vertex pairs (entrywise maximum norm),  \emph{global} aggregation (entrywise $p$-norms), and \emph{local} aggregation (operator $p$-norms) of distortions.
    
    For maximum distortion, we show that approximation below a factor of two is \NP-hard even for highly restricted trees. 
    Due to the close connection to the Gromov--Hausdorff distance, we obtain a similar hardness result for the latter.
    In both cases, we show essentially optimal additive inapproximability.
    On the algorithmic side, we give a subcubic-time algorithm for optimal embeddings of connected graphs into paths, based on a structural decomposition along a diametral path. 
    Additionally, we give a simple approximation algorithm when one input graph has a \enquote{relatively path-like} metric structure, \ie{} has bounded pathlength.

    For local and global distortion, we derive various hardness results even for restricted inputs.
    Making use of a close connection between the global distortion distances and Gromov--Wasserstein distances, we give the first formal proof of \NP-hardness of the latter, which has been claimed in the literature many times but 
    has, to the best of our knowledge, never been correctly proven.
\end{abstract}

\newpage

\section{Introduction}
Graphs are a fundamental model throughout computer science, mathematics, and numerous application areas. 
They are used to describe objects and processes ranging from chemical compounds and biological systems to social networks, transportation systems, and knowledge graphs, making graph comparison an important task.
Naturally, there are many different ways to determine how (dis)similar two graphs are.
For instance, there are approaches based on homomorphism counting \cite{boker21_homomorphism_counts, borgs_cut_distance_bound, seppelt_2024_dissertation}, optimal alignment of adjacency matrices \cite{ arvind_approx_graph_isomorphism, bunke97_edit_distance,gervens2022graphsimilaritybasedmatrix, grohe_rattan_woeginger_edit_distance, odonnell14_edit_distance}, spectral similarity \cite{kolla18_spectral_similarity, tsitsulin18_spectral_similarity, umeyama99_spectral_similarity}, optimal transport \cite{barbe20_optimal_transport, maretic19_optimal_transport, Memoli2011}, and behavioural equivalence \cite{sangiorgi11_behavioral_equivalence, glabbeek01_behavioral_equivalence}.
See \cite{grohe2025thoughtsgraphsimilarity} for a survey highlighting connections between these approaches.

\subparagraph{Adjacency-Based Graph Metrics} In the most basic setting, graphs are unweighted and simple. 
Here, a well-studied measure of graph similarity,
or more precisely a \emph{graph metric} (see \cref{section:preliminaries}), 
is the \emph{edit distance} \cite{arvind_approx_graph_isomorphism,bunke97_edit_distance,grohe_rattan_woeginger_edit_distance,odonnell14_edit_distance}, the minimum number of edges that need to be added or deleted to transform one graph into an isomorphic copy of the other.
An equivalent definition can be given in terms of the Frobenius norm of the difference of adjacency matrices under an optimal bijective alignment between the vertex sets of the graphs \cite{grohe_rattan_woeginger_edit_distance}.
A generalisation of this was studied by Gervens and Grohe \cite{gervens2022graphsimilaritybasedmatrix}:
Each matrix norm $\norm{\cdot}$ gives rise to a graph metric\footnote{Strictly speaking, this only holds for \emph{permutation-invariant} matrix norms, see \cite{gervens2022graphsimilaritybasedmatrix}.}
(on graphs of the same order), that is
\begin{equation}
    \label{equation:adjacency_distance_template}
    \Adist{\norm{\cdot}}(G, H) \ceq \min_{\pi} \norm{A_G^\pi - A_H}
\end{equation}
where $A_G, A_H$ denote the adjacency matrices of $G, H$, respectively. Here, $\pi$ ranges over all bijections from the vertex set of $G$ to that of $H$, and $A_G^\pi$ denotes the matrix obtained from $A_G$ by permuting the rows and columns according to $\pi$.

These graph metrics capture the structure of \emph{edge mismatches} between graphs, with standard matrix norms yielding very natural interpretations for the corresponding graph metrics (compare \cite{gervens2022graphsimilaritybasedmatrix,grohe2025thoughtsgraphsimilarity}).
For the entrywise maximum norm, this corresponds to the graph isomorphism distance (isomorphic graphs have distance zero while non-isomorphic graphs have distance one). 
All other entrywise $p$-norms are equivalent to the edit distance and essentially count the number of edge mismatches \emph{globally}.
The operator $p$-norms adopt a more \emph{local} view, with the operator $1$-norm being the maximum number of edge mismatches incident to a single vertex. Meanwhile, other $p$-norms also account for the mismatch structure in the connected component of the vertex (see \cite[Section 4]{gervens2022graphsimilaritybasedmatrix}).

\subparagraph{From Adjacency to Distortion}
The above graph metrics do not use the 
underlying distance structure of the input graphs (although in principle distances between vertices could easily be computed from the adjacency matrices).
However, in many application areas 
this distance information is of great relevance. 
In chemistry, the Wiener index, defined as the sum of shortest-path distances between all vertex pairs, 
is an important molecular descriptor \cite{wiener_in_chemistry, wiener_og}.
In network science, closeness centrality assesses the importance of a vertex by its average 
shortest-path distance to all other vertices \cite{network_science_standard}. 

An approach making explicit use of the shortest-path distances is to view (connected) graphs as metric spaces.
This allows the use of well-known similarity measures\footnote{To avoid ambiguity, here we refer to metrics defined on the space of metric spaces as similarity measures.} on metric spaces such as the Gromov--Hausdorff \cite{gromov_original_paper,memoli_properties_of_gromov_hausdorff} and Gromov--Wasserstein distances \cite{memoli_gw_original, Memoli2011} to compare graphs.
While these similarity measures have received considerable attention in the context of metric spaces in general, they have received less attention in the context of graphs.
In this work, we address this gap by combining previous work on graph similarity and metric space theory, making explicit use of distance information in the graphs.

\subparagraph{Distortion-Based Graph Metrics}
We replace
the adjacency matrices $A_G, A_H$ in \cref{equation:adjacency_distance_template} by the distance matrices $D_G, D_H$, respectively.
(The distance matrix of a graph records the shortest-path distance between every pair of vertices.)
Specifically, for a matrix norm $\norm{\cdot}$, we consider the \emph{graph distortion distance} defined by
\begin{equation}
    \label{equation:distortion_distance_template}
    \Ddist{\norm{\cdot}}(G, H) \ceq \min_{\pi} \norm{D_G^\pi - D_H}
\end{equation}
where $G, H$ are connected graphs of the same order. Here, connectivity is crucial to ensure that $\norm{D_G^\pi - D_H}$ is well-defined.
Intuitively, we no longer consider edge mismatches, but rather the \emph{(additive) distortion} of shortest-path distances, that is, the absolute change in shortest-path distance between pairs of vertices under a bijective alignment of the vertex sets.
This yields a natural comparison between graphs when viewed as metric spaces.

In contrast to the adjacency-based graph metrics, this allows for a meaningful comparison between sparse and dense graphs: Insights from graph sparsification, specifically the theory of spanners and emulators, show that the shortest-path distance structure of dense graphs can be approximately preserved by sparse graphs \cite{ahmed_graph_spanners_tutorial}. 

Similar to the adjacency-based graph metrics, our graph metrics have natural interpretations.
For the entrywise maximum norm, we obtain the \emph{maximum distortion distance} $\Ddist{(\infty)}$: the largest value $d$ such that every vertex alignment necessarily distorts the distance of some pair of vertices by at least $d$.
Note that the Gromov--Hausdorff distance can be recovered from $\Ddist{(\infty)}$
by relaxing the optimisation domain from bijections to \emph{correspondences}, \ie{} only requiring
every vertex in each graph to be paired with at least one vertex of the other graph.

The remaining entrywise $p$-norms yield graph metrics, denoted $\Ddist{(p)}$, that aggregate the distortions of all vertex pairs and may therefore be viewed as \enquote{global} measures of metric distortion, with larger values of $p$ placing increasing emphasis on the most severely distorted pairs and smaller values averaging distortion more uniformly across all pairs. 
Closely related are the Gromov--Wasserstein distances, which compare two weighted metric spaces in terms of their geometric discrepancy under an optimal soft assignment between the point sets.
For connected graphs of the same order with uniform distribution on the vertex sets, the $p$-Gromov--Wasserstein distance
coincides, up to a factor, with
$\Ddist{(p)}$
when replacing bijections with
fractional vertex alignments, see \cref{subsection:gw_derivation}.
In contrast, the operator $p$-norms yield metrics denoted $\Ddist{p}$ that provide a more \enquote{local} perspective. 
For example, $\Ddist{1}$ measures the maximum aggregate distortion of distances from a single vertex to all other vertices.

\subparagraph{Our Results}
For a graph metric $\dist$, we study the computational complexity of $\GSim{\dist}$, the problem of computing $\dist(G, H)$ for input graphs $G, H$, or, depending on the context, its decision version (decide whether $\dist(G, H) \leq d$ for a given $d$).
While computing $\Adist{(\infty)}$ is equivalent to \textsc{Graph Isomorphism} and can be solved
in quasipolynomial time \cite{babai_graph_iso_quasipolynomial}, computing $\Ddist{(\infty)}$ is much harder (under 
standard complexity-theoretic assumptions) even for highly restricted graph classes.
Unless stated otherwise, $n$ denotes the order of the input graphs.

\begin{theorem}[label=dummy1,restate=thmMaxDistortionHardness]
    \label{theorem:max_distortion_hardness}
    For any $\eps > 0$, unless $\P = \NP$, there is no polynomial-time approximation algorithm for $\GSim{\Ddist{(\infty)}}$ with multiplicative error guarantee $2-\eps$ or additive error guarantee $\O(n^{1-\eps})$, even if both input graphs are trees of pathwidth two, or if both input graphs are trees of maximum degree three.
\end{theorem}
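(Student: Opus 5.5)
The plan is a gap-producing reduction from \threepartition{}, which is strongly \NP-hard, so all numbers may be assumed to be encoded in unary. Given integers $a_1,\dots,a_{3m}$ with $\sum_i a_i = mB$ and $B/4 < a_i < B/2$, I will build two trees $G$ and $H$ of the same order with the following gap property. In the \emph{completeness} case, where a valid partition exists, some bijection distorts every distance by at most $d$. In the \emph{soundness} case, where none exists, every bijection distorts some pair by at least $2d - c$, where $c$ is an absolute constant and $d$ is a scaling parameter to be chosen later.

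The intended gadget is a ``comb with teeth''. $G$ has a central spine path, and to it we attach $3m$ pendant paths (teeth) whose lengths are proportional to $a_1,\dots,a_{3m}$. These attachment points are spaced far apart along the spine, with the spacing chosen so that $G$ has pathwidth two. $H$ has a spine of the same shape, but carries only $m$ ``bin'' regions. Each bin is built so that three teeth of total length $B$ can be laid into it, and this placement perturbs distances only by the scaled spacing term, which gives distortion $d$.

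For completeness, I map each triple of teeth in a partition class into its bin and map spine vertices to spine vertices. I then check pair types one at a time: same tooth, same bin, different bins, and tooth versus spine. This should only need a direct computation.

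For soundness, I will first argue that any bijection of distortion below $2d-c$ is forced to respect the macroscopic structure. The key observation is that the spine endpoints are essentially the only diametral pairs. So an approximate isometry must map the spine near the spine, and it must map long teeth near bins, because vertices at eccentricity $e$ can only go to vertices of eccentricity within the distortion of $e$. After that, I use a counting argument. Since the bins have exactly $B$ ``slots'' and the bijection is onto, some bin receives vertices from teeth whose total length is not $B$. The bounds $B/4<a_i<B/2$ mean that every bin must take three teeth and that no tooth can be split between two bins without a distance of order $2d$ being stretched. Together these yield a pair distorted by at least $2d - c$.

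I expect this step to be the main obstacle. The reason is that bijections may scramble vertices locally, so the structural claim, that teeth are mapped essentially wholesale into single bins, needs a careful \enquote{no splitting} lemma. I would try to prove it by looking at the two endpoints of a tooth and its attachment vertex, and showing that their images must lie in one bin.

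The scaling step gives both inapproximability bounds. We multiply all lengths by a factor $k$, namely by subdividing every edge into a path of length $k$. This takes $d$ to roughly $dk$, while the additive error from local scrambling remains $O(1)$ per unit. The multiplicative gap is then $(2dk - O(1))/(dk) \to 2$, which rules out factor $2-\eps$ once $k$ is large. The order of the graphs becomes $N = \mathrm{poly}(n)\cdot k$, and the additive gap is $\Theta(k)$. Choosing $k = n^{C}$ for a large constant $C$ depending on $\eps$ gives $k \geq N^{1-\eps}$, which excludes additive error $\O(N^{1-\eps})$.

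For the maximum-degree-three variant, I will replace every high-degree spine vertex by a long path of degree-three branching points. In other words, each tooth attaches at its own spine vertex, and the spacings are large compared to $d$ but small compared to the bin separation. The same eccentricity argument then goes through with adjusted constants. The price is that pathwidth is no longer two, which is why the theorem gives the two restrictions as separate alternatives.
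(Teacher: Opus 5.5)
\textbf{Completeness fails for the comb.} The construction breaks at the step you expected to be a direct computation, before any of the soundness difficulties you anticipate.

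On a linear spine, the attachment points of the $3m$ teeth are at pairwise distances ranging from one spacing unit to roughly $3m$ units. A witnessing partition is arbitrary with respect to this order, since the reduction cannot know it. Your YES-bijection must therefore do three things at once:
\begin{itemize}
\item put teeth attached at far-apart spine positions into a common bin, contracting a distance of order $m$ times the spacing;
\item send teeth attached side by side to different bins;
\item keep spine vertices near $p_i$ near $p_i$, even though tooth $i$ moves to the bin $\sigma(i)$ of its triple.
\end{itemize}
So no bound $d$ independent of the partition exists, and the ratio-$2$ comparison with a split tooth collapses.

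What is needed is that all teeth be pairwise equidistant, which in a tree means a spider. The paper hangs $3m$ paths $P^i$ of one common length $\ell$ from a root $r_G$. In $H$ it hangs $m$ paths $Q^j$ on $2\ell$ vertices and $m$ paths $R^j$ on $\ell$ vertices from $r_H$. Two paths of a triple go into $Q^j$, and the third goes reversed into $R^j$.

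Encoding $a_i$ by tooth length is a second problem, because it lets the instance values enter the metric. For example, if a bin is a path on $B$ vertices, then whatever orientations you choose, some junction between consecutive teeth puts a tip next to a base or a tip. That pair is at $G$-distance about $a_i=\Theta(B)$ plus the spine term. So the placement does not perturb distances \enquote{only by the spacing term}.

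The paper instead encodes $a_i$ as a set $C^i$ of $c\,a_i$ leaves at the tip of $P^i$, and the target ($A$, your $B$) as a set $D^j$ of $cA$ leaves at the tip of $Q^j$. These sets have diameter $2$ regardless of size, so the metric scale is $\ell$ alone. The YES distortion is then at most $2\ell+1$.

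\textbf{Soundness under the paper's design.} This design reduces soundness to a few lines and replaces both your eccentricity heuristic and the \enquote{no splitting} lemma you leave open. Two leaves of one $C^i$ are at distance $2$, while leaves of distinct $D^j, D^{j'}$ are at distance $4\ell+2$. Hence, under distortion below $4\ell$, each $\pi(C^i)$ lies in $D^j\cup(V(H)\setminus D)$ for a single $j$. Taking $c=3m\ell+2>|V(H)\setminus D|$ forces $\pi(C^i)\cap D\neq\emptyset$. It also gives $c\sum_{i\in\sigma^{-1}(j)}a_i\le |D^j|+|V(H)\setminus D|<c(A+1)$, which yields a valid partition.

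\textbf{Scaling and the two restrictions.} Your subdivision step is unnecessary: $\ell$ already serves as the scale, and $\ell=m^b$ gives the additive bound. It would also require a fresh soundness proof, because subdivided trees admit bijections not induced by bijections of the originals.

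Your explanation of why the two restrictions are stated separately is also off. A comb with one tooth per spine vertex already has pathwidth at most two and maximum degree three, so if it worked it would give both at once. In the paper the cases differ because the spider has high-degree vertices, namely the roots and the leaf-set centres. Replacing these by balanced binary trees of height $O(\log m)$ adds only $O(\log m)$ distortion once $\ell=\Omega(m)$, but it destroys bounded pathwidth.
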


\noindent A simple modification also leads to a similar result for the Gromov--Hausdorff distance (\cref{theorem:gh_inapprox_result_julius}).
In both cases, our results regarding additive inapproximability are essentially asymptotically tight
because both metrics admit a trivial approximation algorithm with additive error guarantee $\O(n)$.

Contrasting the hardness of \GSim{\Ddist{(\infty)}} for graphs
of almost minimal pathwidth, we find tractability when restricting one input graph to be a path, even when leaving the other input graph completely unrestricted. 

\begin{theorem}[label=dummy2,restate=thmMaxDistortionIntoPath]
    \label{theorem:max_distortion_graph_into_path}
    Let $G$ be a connected $n$-vertex graph and $P_n$ be the $n$-vertex path. Then
    $\Ddist{(\infty)}(G, P_n) = n - 1 - \diam(G)$.
    Moreover, a minimising bijection can be computed in time $\O(n^\omega \log n)$ where $\omega$ is 
    the best known exponent for multiplication of two $n\times n$ matrices.
\end{theorem}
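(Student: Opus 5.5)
The equality splits into a lower bound, which holds for every bijection, and a matching upper bound, which needs an explicit construction. The running time comes almost entirely from computing all shortest-path distances.

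\textbf{Lower bound.} Let $\pi$ be any bijection from $V(G)$ to $V(P_n)$, and identify $V(P_n)$ with $\{1,\dots,n\}$, so that $d_{P_n}(a,b)=|a-b|$. Let $u=\pi^{-1}(1)$ and $v=\pi^{-1}(n)$. Their distance in $P_n$ is $n-1$, while $d_G(u,v)\le \diam(G)$. Hence $\norm{D_G^\pi - D_{P_n}}_{(\infty)} \ge n-1-\diam(G)$.

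\textbf{Upper bound.} Set $D\ceq\diam(G)$ and fix a diametral path $v_0,\dots,v_D$. Form the BFS layers $L_i=\{x : d_G(v_0,x)=i\}$ for $i=0,\dots,D$. Since $\ecc(v_0)=D$, these layers cover $V(G)$, and each contains $v_i$. Let $\pi$ list the vertices layer by layer, in some order within each layer to be chosen. Exactly $n-1-D$ vertices lie off the diametral path, which gives two bounds for $x\in L_i$, $y\in L_j$ with $i\le j$.
\begin{itemize}
  \item \emph{Stretch.} The positions of $x$ and $y$ both lie in the consecutive block $L_i\cup\dots\cup L_j$. This block has $j-i+1$ path vertices plus at most $n-1-D$ others, so $|\pi(x)-\pi(y)|\le j-i+n-1-D\le d_G(x,y)+n-1-D$, using $d_G(x,y)\ge j-i$.
  \item \emph{Contraction.} We need $d_G(x,y)\le |\pi(x)-\pi(y)|+n-1-D$. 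Between $x$ and $y$ the order contains at least one vertex of each intermediate layer, so $|\pi(x)-\pi(y)|\ge j-i$ whenever $i<j$.
\end{itemize}

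\textbf{Main obstacle.} It remains to bound $d_G(x,y)$ by the layer gap plus the number of off-path vertices that can cause extra distance. Two vertices in the same or nearby layers can be far apart in $G$, so a naive within-layer order may fail. My plan is to charge the excess $d_G(x,y)-(j-i)$ to off-path vertices. A shortest path from $x$ to $y$ must leave the diametral path. Its off-path vertices lying in layers that the segment from position $\pi(x)$ to $\pi(y)$ does not sweep should be distinct from the off-path vertices counted in that segment. Together these number at most $n-1-D$.

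To make this charging work, I would refine the order within each layer. For instance, I would sort vertices by their distance to $v_D$ (a second BFS), or group off-path vertices by the path vertex they attach to. The aim is that vertices placed far apart in the order correspond to detours through many off-path vertices. This is the structural decomposition along the diametral path. I expect that verifying it for general (non-tree) graphs is where the real work lies, and if one ordering fails I would switch to the two-sided ordering by $d(v_0,\cdot)-d(v_D,\cdot)$.

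\textbf{Algorithm.} Computing all-pairs shortest paths in the unweighted graph $G$ with Seidel's algorithm takes $\O(n^\omega\log n)$ time. This yields $D$ and a diametral pair $v_0,v_D$. The BFS layers and the within-layer ordering then take $\O(n^2)$ time, so the total running time is dominated by the all-pairs shortest-path computation.
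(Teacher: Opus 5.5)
Your lower bound, your stretch estimate for the layered order, and the running-time analysis are fine. The upper bound, however, is not proved. The contraction inequality $d_G(x,y)\le|\pi(x)-\pi(y)|+n-1-D$ is left as a plan, and for the ordering you commit to it is false whatever order you choose inside the layers.

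Take the spider with centre $c$ and pendant paths of lengths $3,3,2$. Then $n=9$, $D=6$, and the target value is $n-1-D=2$. Layering from the end $v_0$ of the first leg, the layer at distance $5$ contains exactly two vertices: the vertex $y$ of the second leg and the end $z$ of the third leg, both at distance $2$ from $c$. They are consecutive in every layer-by-layer order, yet $d_G(y,z)=4$, so the distortion is $3>2$. More generally, legs $k,k,b$ give distortion $2b-1>b$ for all $2\le b\le k$.

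So no within-layer refinement or charging scheme can rescue one-sided BFS layers. An off-path branch has to be placed next to the diametral vertex it hangs from. The signed difference $a(w)=d_G(v_0,w)-d_G(v_D,w)$ does exactly this; here the whole third leg falls into the bag of $c$.

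The ordering by $a(\cdot)$, which you mention only as a fallback, is the one the paper uses, but with it the substance of the proof is still missing. Your stretch argument must be redone, using that an edge changes $a$ by at most $2$ and that the diametral path occupies only every second bag; that part is routine.

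For contraction, the paper's argument runs as follows:
\begin{enumerate}
\item There is nothing to show unless $d_G(x,y)\ge n-D+1$.
\item In that case a shortest $xy$-path $R$ must meet the diametral path $Q$. An exchange argument lets you assume $R\cap Q$ is a subpath $Q_2$ of $Q$ from $q$ to $q'$.
\item Counting gives $\len(R)\le n-D-2+|Q_2|$.
\item If $a(x)\le a(q)\le a(q')\le a(y)$, the $|Q_2|-2$ interior vertices of $Q_2$ are placed strictly between $\pi(x)$ and $\pi(y)$. Hence $|\pi(x)-\pi(y)|\ge|Q_2|-1$, and the bound follows.
\end{enumerate}
This is the precise form your charging idea has to take.

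The hard step, which your proposal does not address, is proving this monotonicity of $a$-values. The paper assumes an edge of $R$ at which the $a$-value crosses $a(q)$ in the wrong direction. From the endpoint of that edge nearer $x$, it takes a neighbour one step closer to $v_D$ and follows a shortest path towards $v_D$ until it first meets $R\cup Q$. It then considers six possible locations of the meeting point. In each it reaches a contradiction, in one of two ways:
\begin{itemize}
\item directly, via a shortcut for $R$ or an impossible jump in the distance to $v_0$ along an edge;
\item by exhibiting at least $|Q_2|-2$ vertices outside $R\cup Q$, which forces $\len(R)\le n-D$.
\end{itemize}
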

This is surprising given the close resemblance
to the notoriously hard \bandwidth{} problem, which minimises the maximum stretch of an edge under a bijective embedding into a path.
Relaxing the extremal distance structure of path graphs, we obtain a simple approximation algorithm for graphs with a \enquote{path-like} metric structure, that is, graphs of bounded pathlength.

\begin{theorem}
    \label{theorem:pathlength_approx}
    There is a cubic-time approximation algorithm for $\GSim{\Ddist{(\infty)}}$ with a combined guarantee of multiplicative error $5$ and additive error $12k+12$ if one input graph has pathlength at most $k$.
    If both input graphs have pathlength at most $k$, the error guarantees can be improved to $\frac{7}{4}$ and 
    $\frac{11}{2}k + 11$, respectively.
\end{theorem}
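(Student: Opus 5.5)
The plan is to compare both graphs to the path $P_n$ and combine the exact result of \cref{theorem:max_distortion_graph_into_path} with the triangle inequality for $\Ddist{(\infty)}$. Recall that a path decomposition of length $k$ of $H$ is a sequence of bags $B_1,\dots,B_m$ such that every bag has diameter at most $k$ in $H$. Such a decomposition is computable with constant-factor loss: a BFS layering from a suitable vertex gives a decomposition of length at most $O(k)$ in quadratic time. First I will show that $\Ddist{(\infty)}(H,P_n) \le c_1 k + c_2$ whenever $H$ has pathlength at most $k$, so that $H$ is additively close to a path.

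To prove this, I will order the vertices of $H$ by the first bag containing them, or equivalently by BFS distance from an endpoint of a near-diametral path, and map them in this order onto $P_n$. The key point is that graph distances in $H$ are then governed by bag indices up to $O(k)$. However, the path distance counts vertices, while bag indices count bags. This mismatch is the real obstacle: layers can be large, so a layered graph of small pathlength with wide layers is far from $P_n$ in additive terms.

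For this reason I expect that $H$ cannot be compared to $P_n$ directly. Instead, I will compare the two input graphs through their layer structures. Let $G$ have $n$ vertices, and let $H$ have pathlength at most $k$ with diametral layering $L_0,\dots,L_D$ where $D=\diam(H)$. Any bijection $\pi$ of distortion $d$ forces $|\diam(G)-D|\le d$. It also forces the BFS layers of $G$ from $\pi^{-1}(v_0)$ to correspond to the layers $L_i$ up to shifts of $d$. The algorithm tries all $n$ choices of root $u$ in $G$, and each choice costs quadratic time for BFS, which gives cubic time in total. For each root it computes the greedy layer-respecting bijection, matching the BFS layers of $G$ from $u$ to $L_0,\dots,L_D$ in order by counts, i.e.\ the monotone matching of the two cumulative layer-size sequences.

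The analysis then has two parts. For the upper bound, if two vertices lie in layers $i$ and $j$ of $H$, their distance is $|i-j|\pm O(k)$, using the pathlength bound together with the fact that the layers come from a diametral path. For the lower bound, the optimum is at least half of the maximal displacement between the cumulative layer counts. Combining the two should give an error of the form $5\cdot\OPT+12k+12$, where the constant $5$ collects several triangle-inequality detours between the root, the layers and the mapped images. If both graphs have small pathlength, both sides enjoy the $\pm O(k)$ estimate, and the constant improves to $7/4$. The main obstacle is the lower bound relating the cumulative-count mismatch to $\OPT$. I would attempt it first by a counting argument: an optimal $\pi$ must send the vertices within distance $r$ of $u$ into the vertices within distance $r+\OPT$ of $\pi(u)$, which compares the cumulative counts of the two layerings.
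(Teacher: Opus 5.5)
The genuine gap is the two-sided bound: nothing in your proposal produces the factor $\frac74$. Knowing that both graphs are within $\pm O(k)$ of their layer indices only reduces the task to aligning two multisets on the integer line. For that problem, the guarantee previously available for the better of the monotone and antitone alignments is the factor $2$ of Hall and Papadimitriou \cite[Theorem 5]{approximating_the_distortion}. By \cref{example:caterpillar-74-example} the value $\frac74$ is tight, so it will not fall out of a soft triangle-inequality count.

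In the paper this improvement is exactly \cref{proposition:2_approx_real_point_sets}, proved in steps:
\begin{itemize}
\item \cref{lemma:auxiliary_monotone_guarantee} shows that if the monotone alignment contracts a pair by more than $\frac32\dis$, then every alignment of distortion $\dis$ reverses the order between the two outer tails.
\item This forces the worst pair of the monotone alignment to be a contraction and that of the antitone alignment to be an expansion.
\item \cref{claim:aux_mon_and_anti} excludes a simultaneous nesting of the two witnessing index ranges.
\item \cref{claim:mon_anti_diam_bound,claim:mon_anti_tail_bounds} give $\diam(Y)\le 2\dis$ and bound the tails by $\frac32\dis$, so the two distortions sum to at most $\frac72\dis$.
\end{itemize}
Your algorithm is also not set up to use such a result. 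It layers $G$ from $u=\pi^{-1}(v_0)$, which need not be a good root of $G$ (its BFS layers can have diameter of order $\OPT$). So the $\pm O(k)$ estimate for $G$ does not apply to the layering you actually use. You need to do the following: layer each graph from its own root (the paper takes the canonical caterpillars of both graphs); output the better of the monotone and antitone alignments of the resulting multisets; pass from multisets to sets via the integrality argument of \cref{corr:multiset_to_set}; and account $\frac74(\OPT+2k+2)+\frac{11}{2}+2k+2=\frac74\OPT+\frac{11}{2}k+11$.

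For the one-sided bound your algorithm is, in substance, the paper's. For each root, sending the vertices of $G$ in order of distance to the leftmost layer of $H$ that still has free points is Hall and Papadimitriou's \textsc{Intervals} greedy. The paper runs it against the multiset of the canonical caterpillar of $H$ and cites \cite[Theorem 6]{approximating_the_distortion} for the factor $5$, which is where $5(2k+2)+2k+2=12k+12$ comes from. Analysing it yourself on a BFS layering $\ell$ of $H$ from a diametral vertex $v_0$ is a legitimate alternative. However, your sketch misstates the lower bound and omits the step that handles the unrestricted graph $G$.

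Let $\pi$ be optimal with distortion $d$, $u=\pi^{-1}(v_0)$, $r=\Gdist_G(u,\cdot)$, and let $\delta$ bound $\Gdist_H(w,w')-|\ell(w)-\ell(w')|$.
\begin{itemize}
\item Then $\pi$ maps $\{x: r(x)\le t\}$ into $\{w:\ell(w)\le t+d\}$, and $\pi^{-1}$ does the converse. Hence the greedy $g$ satisfies $|\ell(g(x))-r(x)|\le d$. Your weaker ``displacement at most $2\OPT$'' would only yield a factor $7$ below.
\item Using $\pi$ a second time gives $|r(x)-r(x')|\le\Gdist_G(x,x')\le|r(x)-r(x')|+3d+\delta$.
\item Meanwhile $|r(x)-r(x')|-2d\le\Gdist_H(g(x),g(x'))\le|r(x)-r(x')|+2d+\delta$.
\end{itemize}
Together these give $\dis_{(\infty)}(g)\le 5d+\delta$. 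It then remains to prove an explicit $\delta\le 12k+12$ for your layering. The separation property of the bags gives, e.g., $\delta\le 4k$, which would even improve the additive term. As written, however, the constants $5$ and $12k+12$ are asserted rather than derived.
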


In the remainder of the paper, we shift our attention towards local and global distortion distances,
establishing multiplicative and additive inapproximability results (under the assumption $\P \neq \NP$) using differing techniques. 
We initially give a simple \enquote{centre vertex} construction that transfers the $\NP$-hardness of the 
corresponding adjacency-based metrics to both 
the local metrics $\Ddist{p}$ (\cref{theorem:local_dist_hardness_hamiltonian}) and the global metrics 
$\Ddist{(p)}$ with $p < \infty$ (see \cref{subsection:global_distortion}).
The proof of \cref{theorem:local_dist_hardness_hamiltonian} already yields constant-factor inapproximability for $\Ddist{p}$.
A different construction shows that $\Ddist{1}$ remains 
hard to approximate within a constant factor 
on trees of bounded diameter (\cref{theorem:local_dist_mult_hardness_trees}).
For the global distortion distances $\Ddist{(p)}$, we rule out the existence of a
\PTAS{} (see \cref{subsection:global_distortion}). 
Moreover, by suitably amplifying the centre vertex construction, we obtain an additive inapproximability result (\cref{theorem:additive_inapprox_global_distortion}). 

Finally, based on the connection between $\Ddist{(p)}$ and the $p$-Gromov--Wasserstein distance,
we establish the \NP-hardness of the latter. To the best of our knowledge, this is the first formal proof of this fact (see \cref{appendix:gw_literature} for a detailed discussion).

\begin{theorem}
    \label{theorem:gw_hard}
    For $1 \leq p \leq \infty$, computing the $p$-Gromov--Wasserstein distance is \NP-hard.
    This holds even if both input spaces are simple graphs of the same order with uniform distribution on the vertex sets.
\end{theorem}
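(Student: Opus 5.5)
The plan is to show that on suitably chosen instances the optimal Gromov--Wasserstein coupling can be taken to be a (scaled) permutation matrix. On such instances the $p$-Gromov--Wasserstein distance coincides, up to the fixed factor from \cref{subsection:gw_derivation}, with $\Ddist{(p)}$, and hardness of the latter transfers. Recall that for uniform measures on $n$ points the admissible couplings are the matrices $\frac1n X$ with $X$ doubly stochastic. The objective is
\[
\sum_{i,j,k,l} \abs{D_G(i,k)-D_H(j,l)}^p\, T_{ij}T_{kl}
\]
for $p<\infty$, and for $p=\infty$ it is the maximum of $\abs{D_G(i,k)-D_H(j,l)}$ over pairs $(i,j),(k,l)$ in the support of $T$.

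\emph{Case $p=\infty$.} By Birkhoff--von Neumann, the support of every doubly stochastic matrix contains the support of some permutation matrix $\pi$. The maximum over a superset of pairs is at least the maximum over the subset, so every coupling has cost at least that of some bijection. Hence $\GW_\infty(G,H)=\Ddist{(\infty)}(G,H)$ exactly, and \cref{theorem:max_distortion_hardness} gives \NP-hardness already for trees.

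\emph{Case $1\le p<\infty$.} First I make the objective independent of $p$. I use the centre-vertex construction: add a universal vertex to both graphs, so all distances lie in $\{0,1,2\}$. Since $D_G(i,k)=0$ exactly when $i=k$, a positive diagonal distance only meets an off-diagonal one. Off-diagonal distortions therefore lie in $\{0,1\}$, where $\abs{a-b}^p=\abs{a-b}=(a-b)^2$. Diagonal–off-diagonal terms are $D_G(i,k)^p$ or $D_H(j,l)^p$, each paired with $T_{ij}T_{il}$ or $T_{ij}T_{kj}$. Their total depends on $T$, not only on its marginals, so it must be tracked separately. Such terms carry total mass $O(1/n)$ of the coupling. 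I expect to handle them either by an additional $T$-dependent estimate or by adapting the construction so their contribution is absorbed into the hardness gap.

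Granting this, the remaining off-diagonal objective can be treated as the $p=2$ objective. Expanding $(a-b)^2=a^2+b^2-2ab$, the squared terms are constant on the coupling polytope, because they only depend on the marginals. So we minimise $-2\,\mathrm{tr}(D_G X D_H X^{\top})$ over doubly stochastic $X$.

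Concavity of this function implies that its minimum is attained at a vertex, i.e. a permutation by Birkhoff. Let $Y$ be any direction in the polytope, so $Y\mathbf 1=0$ and $Y^{\top}\mathbf 1=0$. The relevant quadratic form is $\mathrm{vec}(Y)^{\top}(D_H\otimes D_G)\mathrm{vec}(Y)$, and $\mathrm{vec}(Y)$ lies in $\mathbf 1^{\perp}\otimes\mathbf 1^{\perp}$. If both $D_G$ and $D_H$ are negative semidefinite on $\mathbf 1^{\perp}$, i.e.\ conditionally negative definite, then their Kronecker product is positive semidefinite on $\mathbf 1^{\perp}\otimes\mathbf 1^{\perp}$. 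Then $-2\,\mathrm{tr}(D_G X D_H X^{\top})$ is concave on the polytope, and $\GW_p$ agrees (up to the normalisation) with the bijective quantity.

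For a diameter-two graph with adjacency matrix $A$ we have $D=2(J-I)-A$. On $\mathbf 1^{\perp}$ this gives $x^{\top}Dx=-2\norm{x}^2-x^{\top}Ax$. So conditional negative definiteness amounts to $\lambda_{\min}(A)\ge -2$ on $\mathbf 1^{\perp}$, which holds for line graphs, cycles, and generalised line graphs. The reduction I would use is from \hamcycle{} on line graphs, which is \NP-complete. Take $G$ to be the line graph plus a universal vertex, and $H$ to be $C_n$ plus a universal vertex. The centre-vertex argument behind \cref{theorem:local_dist_hardness_hamiltonian} and \cref{subsection:global_distortion} then gives a threshold for $\Ddist{(p)}$ that is attained iff $G$ has a Hamiltonian cycle.

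\emph{Main obstacle.} I expect the hardest step to be checking that the spectral condition survives adding the universal vertex. The join with a single vertex mixes $\mathbf 1$ with the rest, so one has to recompute the form of $D$ on $\mathbf 1^{\perp}$ in the $(n+1)$-dimensional space. The second difficulty is controlling the diagonal–off-diagonal terms of the objective. If the spectral check fails, my fallback is to pad both graphs by many universal vertices, or to pick $H$ so that $D_H$ is a negative multiple of the identity on $\mathbf 1^{\perp}$ (e.g.\ a complete-like target). In that case only $D_G$ needs controlling, with the reduction adjusted accordingly.
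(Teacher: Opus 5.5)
Your $p=\infty$ case is the paper's argument: Birkhoff's theorem gives a permutation supported inside $\supp(Q)$, and then \cref{theorem:max_distortion_hardness} applies. For finite $p$ you take a genuinely different route, proving that the Birkhoff relaxation is tight on special instances, and much of it works. Write $G=\widehat{L}$, $H=\widehat{C_n}$, and let $B_G,B_H$ be the $0/1$ matrices marking pairs at distance two.

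The step you call the main obstacle is harmless. For $x=(y,t)\perp\mathbf 1$ the $t$-terms cancel and $x^{\top}D_{\widehat L}x=-2\norm{y}^2-y^{\top}A_Ly$. So $\lambda_{\min}(A_L)\ge -2$ on all of $\R^n$ suffices, and this holds for line graphs ($A_L=B^{\top}B-2I$, with $B$ the incidence matrix of the root graph) and for cycles.

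For $p=2$ the whole objective, diagonal terms included, is a constant minus $2\,\mathrm{tr}(D_GXD_HX^{\top})$ on the doubly stochastic matrices, so your concavity argument is correct there. For $p>2$ you can finish by pointwise comparison, since $\dis_{(p)}(X)^p=\dis_{(2)}(X)^2+(2^p-4)\big(\mathrm{tr}(XB_HX^{\top})+\mathrm{tr}(X^{\top}B_GX)\big)$, and the correction is nonnegative and vanishes on permutation matrices. You would still need to redo the Hamiltonicity threshold for line graphs instead of cubic graphs. Bijections that do not match the two centres are excluded by an edge count as long as $L$ has no universal vertex.

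The genuine gap is $1\le p<2$, which includes $p=1$. There the coefficient $2^p-4$ is negative. So at fractional $X$ your objective can lie below the concave $p=2$ objective by an amount of order $n$, while the YES/NO gap is a constant number of mismatched pairs. Your remark that these terms carry mass $O(1/n)$ does not help, because in the normalised coupling the gap itself is only $O(1/n^2)$.

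This is not just a lossy estimate. Take $x$ a $(-2)$-eigenvector of $A_L$; it exists whenever the root graph has an even cycle (alternating signs on its edges) and is automatically orthogonal to $\mathbf 1$. Take $y$ an eigenvector of $A_{C_n}$ for $2\cos(2\pi/n)$, and set $Y=(x,0)(y,0)^{\top}$, which has zero row and column sums. Along $Y$ the $p=2$ objective has vanishing second-order term, since $(2I+A_G)(x,0)=0$. The correction, however, has second-order coefficient $(4-2^p)\cdot 2\cos(2\pi/n)\norm{x}^2\norm{y}^2>0$.

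So for every $p\in[1,2)$ the objective is strictly convex along a line through the interior of the Birkhoff polytope, and the vertex argument is unavailable. Padding with further universal vertices leaves the same direction $Y$ available, so your fallback does not help.

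The paper avoids the tightness question altogether. Its \threepartition{} reduction lower-bounds $\dis_{(p)}(Q)$ for arbitrary doubly stochastic $Q$ directly, via the clique anchor (\cref{subclaim:bounding_inter_layer_mapping}), the degree-mismatch distortion (\cref{subclaim:degree_diff_distortion}) and a forced fractional split of some path (\cref{subclaim:d2Q}). It uses only bounds like $\abs{z}^p\ge\abs{z}$ for integers $z$, which hold uniformly for all $p\ge 1$.
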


\subsection*{Related Work}
Most metric embedding literature considers \emph{multiplicative distortion}, often under additional restrictions such as requiring embeddings to be non-contractive \cite{sidiropoulos_line_embedding_mult}.
Papadimitriou and Safra showed that the multiplicative distortion cannot be approximated below a factor of three, even for point sets in $\R^3$ \cite{papadimitriou_mult_distortion_mult_hardness_3d}.
For one dimension, subsequent work \cite{approximating_the_distortion} establishes a weaker inapproximability result, while Kenyon, Rabani, and Sinclair found an exact polynomial-time algorithm for distortion at most $3+2\sqrt{2}$ \cite{kenyon_line_metrics}.
On the positive side, Fellows \etal{} give exact algorithms for embeddings into the line under suitable parametrisations \cite{fellows2008parameterizedlowdistortionembeddings}.

The notion of additive distortion is mostly considered in the context of coarse graph theory
\cite{Georgakopoulos2025_coarse_graph_theory} and spanners \cite{ahmed_graph_spanners_tutorial}. 
Chartrand, Kubicki, and Schultz investigated structural properties of our graph metric $\Ddist{(1)}$ 
\cite{chartrand_global_bijective_distortion}; however, they do not settle its computational tractability.
In the metric embedding literature, most research considers mapping metric spaces into various line 
metrics. For instance, \hastad{} \etal{} consider embedding arbitrary distance matrices optimally
into the real
line with applications in computational biology \cite{matrix_to_line_HastadIvanssonLagergren1998}.
Resembling our setting more closely, Hall and Papadimitriou consider maximum additive distortion
into and between \emph{given} subsets of the real line and obtain simple constant-factor approximations 
\cite{approximating_the_distortion}.

In the more general setting of comparing two arbitrary metric spaces, two common similarity measures are the 
Gromov--Hausdorff and Gromov--Wasserstein distances. The former has been studied extensively for metric 
trees, yielding constant-factor inapproximability 
\cite{agarwal_et_al_GH_NP_hard_metric_trees,Schmiedl2017} as well as (parametrised) 
approximation algorithms based on merge trees
\cite{agarwal_et_al_GH_NP_hard_metric_trees,farahbakhshtouli_fpt_gh_14}.
Moreover, there are constant-factor approximations for subsets of the real line 
\cite{wenk_GH_approx_R1} and plane \cite{GH_approx_R2}.
The Gromov--Wasserstein distance is mainly studied empirically with many heuristic approaches and variants or extensions proposed in recent years \cite{GW-and-WL, peyre_gromov_wasserstein_discrepancy_coupling, solomon_entropic_gw,vayer_fused_gw,vayer_sliced_gw}.
\section{Preliminaries}
\label{section:preliminaries}
By $\R$ and $\Rnn$ we denote the real and nonnegative real numbers, respectively.
We denote by $\N$ the set of nonnegative integers.
For every positive $n \in \N$, we let $[n] \ceq \{1, \ldots, n\}$.
For sets $A, B$ with $|A| = |B|$, we denote by $\Bij(A, B)$ the set of all bijections from $A$ to $B$.

\subparagraph*{Matrices and Matrix Norms}
We consider matrices with real entries and with rows and columns indexed by arbitrary finite sets. 
For finite sets $V, W$, the set of all $V \times W$ matrices is denoted $\R^{V \times W}$, and we denote the entries of a matrix $A \in \R^{V \times W}$ by $A(v, w)$ for $v \in V, w \in W$.
For a square matrix $A \in \R^{V \times V}$ and a bijection $\pi: V \to W$, we let $A^\pi$ be the
$W \times W$-matrix with entries $A^\pi(w, w') \ceq A(\pi^{-1}(w), \pi^{-1}(w'))$.
For finite sets $V, W$ with $|V| = |W|$, 
a matrix $P \in \{0, 1\}^{V \times W}$ is a \emph{permutation matrix} if $P$ has exactly one $1$-entry in each row and column.
We denote the set of $V \times W$ permutation matrices by $\Perm(V, W)$.
A matrix $Q \in [0, 1]^{V \times W}$ is \emph{doubly stochastic} if its row and column sums equal one.
The set of doubly stochastic $V \times W$ matrices is denoted $\DS(V, W)$.

A function $\norm{\cdot} : \R^{V \times V} \to \Rnn$ is a \emph{(real) matrix norm} if for all $A, B \in \R^{V \times V}$ 
it holds that $\norm{A} = 0$ iff $A = 0$, $\norm{cA} = |c|\norm{A}$ for all $c \in \R$, and $\norm{A + B} \leq \norm{A} + \norm{B}$.
The following common matrix norms will be important. Let 
$A \in \R^{V \times V}$.
\begin{enumerate}
    \item Entrywise $p$-norm: $\norm{A}_{(p)} \ceq \left( \sum_{v, v' \in V} |A(v, v')|^p \right)^{1/p}$ for $1 \leq p < \infty$. For $p=\infty$, $\norm{A}_{(
    \infty)} \ceq \max_{v, v' \in V} |A(v, v')|$.
    \item Operator $p$-norm: $\norm{A}_p \ceq \max_{\x \in \R^V, \norm{\x}_p = 1} \norm{A\x}_p$, where the vector $p$-norm is defined by $\norm{\x}_p \ceq \left( \sum_{v \in V} |\x(v)|^p \right)^{1/p}$ for $1 \leq p < \infty$ and $\norm{\x}_{\infty} \ceq \max_{v \in V} |x(v)|$.
\end{enumerate}

\subparagraph*{Metric Spaces}
A \emph{metric space} is a pair $(X,d)$, where $X$ is a set and $d : X \times X \to \Rnn$ is a function satisfying $d(x, y) = 0$ if and only if $x=y$, $d(x,y) = d(y,x)$, and $d(x,z) \leq d(x,y) + d(y,z)$ for all $x,y,z \in X$.
The function $d$ is a \emph{metric} on $X$. We write $|X|$ for the \emph{cardinality} of $X$.
A \emph{weighted compact metric space} $\X$ is a triple $(X,d,\mu)$, where $(X,d)$ is a compact metric space and $\mu$ is a finite Borel measure on $X$. We refer to $\mu$ as the weight measure of the space.

\subparagraph*{Graphs} We use standard graph notation \cite{diestel2025graph}; graphs are simple, finite, and undirected.
The class of all graphs is denoted by $\mathcal G$.
Let $G$ and $H$ be graphs with $v, v' \in V(G)$.
We denote edges by $vv'$ instead of $\{v, v'\}$. 
The \emph{adjacency matrix} of $G$ is denoted  $A_G$.
By $G \cong H$, we denote that $G$ and $H$ are isomorphic.
We denote the
\emph{open} and \emph{closed neighbourhoods} of $v$ in $G$ by $N_G(v)$ and $N_G[v]$, respectively.
The \emph{degree} of $v$ is $\deg_G(v) \ceq |N_G(v)|$.
For $v, v'$, a path $P$ from $v$ to $v'$ is a sequence of distinct vertices $w_1, \ldots, w_\ell \in V(G)$ with $w_1 = v$, $w_\ell = v'$, and $w_iw_{i+1} \in E(G)$ for $i \in [\ell - 1]$. Note that paths may have length zero, \ie{} consist only of a single vertex.
The length of $P$ is the number of its edges.
For $n \geq 1$, we denote by $P_n, C_n$, and $K_n$ the \emph{path graph}, the \emph{cycle graph}, and the \emph{complete graph} on $n$ vertices, respectively.
For vertices $v, v' \in V(G)$ of a connected graph $G$, $\Gdist_G(v, v')$ denotes the length of a shortest path between $v$ and $v'$ in $G$.
We denote the \emph{diameter} of $G$ by $\diam(G)$. The \emph{eccentricity} of a vertex $v$ in $G$ is $\ecc_G(v) \ceq \max_{v' \in V(G)} \Gdist_G(v, v')$.
The \emph{distance matrix} of $G$ is the matrix $D_G \in \N^{V(G) \times V(G)}$ with entries $D_G(v, v') \ceq \Gdist_G(v, v')$ for all $v, v' \in V(G)$.
For a bijection $\pi \in \Bij(V(G), V(H))$, we define the distortion incurred by the pair $\{v, v'\}$ under $\pi$ as 
\[
    \dis(\pi, v, v') \ceq \big|\Gdist_G(v, v') - \Gdist_H(\pi(v), \pi(v'))\big|.
\]

\noindent We define metrics on $\G$ 
following \cite{gervens2022graphsimilaritybasedmatrix} but restricted to connected graphs of the same order.
\begin{definition}[Graph Metrics]
    \label{definition:graph_metric}
    A \emph{graph metric} is a partial function $\delta:\G \times \G \to \Rnn$
    such that for all connected graphs $F, G, H$ of the same order, it holds that
    \begin{description}
        \item[\textup{(GM0)}] $\delta(G,H)=\delta(G',H')$ for all $G', H'$ such that $G\cong G'$ and 
          $H\cong H'$; 
        \item[\textup{(GM1)}] $\delta(G,G)=0$;
        \item[\textup{(GM2)}] $\delta(G,H)=\delta(H,G)$;
        \item[\textup{(GM3)}] $\delta(F,H)\le \delta(F,G)+\delta(G,H)$.
    \end{description}
\end{definition}
\noindent Note that our graph metrics are, strictly speaking, \emph{pseudometrics}, as we do not require $\dist(G,H)>0$ for all $G\neq H$. 
However, all graph metrics considered below satisfy $\dist(G,H)>0$ whenever $G\not\cong H$.
Thus, on the isomorphism classes, these define proper metrics.
The proof of the following \cref{lemma:graph_metric} and all subsequent proofs are deferred to the appendix. 

\begin{lemma}
    \label{lemma:graph_metric}
    $\Ddist{p}$, $\Ddist{(p)}$, and $\Ddist{(p)}^{\GW}$ 
    (defined in \cref{subsection:gw_derivation})
    are graph metrics for $1\leq p\leq\infty$.
\end{lemma}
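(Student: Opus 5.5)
We verify (GM0)--(GM3) for each family. The GW variant is defined only later (\cref{subsection:gw_derivation}); we assume it is, up to a constant factor, $\min_{Q} \bigl(\sum_{v,v',w,w'} |D_G(v,v') - D_H(w,w')|^p\, Q(v,w) Q(v',w')\bigr)^{1/p}$ over $Q \in \DS(V(G),V(H))$ (suitably normalised), with the usual maximum over the support of $Q\otimes Q$ for $p=\infty$.

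\textbf{The bijective metrics $\Ddist{p}$ and $\Ddist{(p)}$.} The key property is \emph{permutation invariance} of $\norm{\cdot}$: for every bijection $\sigma \in \Bij(V, W)$ we have $\norm{A^\sigma} = \norm{A}$.
\begin{itemize}
\item For entrywise norms this holds because $A^\sigma$ is just a reindexing of the entries of $A$.
\item For operator norms, write $A^\sigma = P A P^{\top}$ with $P$ a permutation matrix. The vector $p$-norm is invariant under $P$, so the substitution $\x \mapsto P^\top \x$ leaves the maximum unchanged.
\end{itemize}
With this in hand, the axioms follow.
\begin{itemize}
\item (GM0): if $\alpha\colon G' \to G$ is an isomorphism, then $D_{G'}^{\alpha} = D_G$. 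Hence the two minimisations range over the same set of values after reparametrising $\pi \mapsto \pi\circ\alpha$. The same argument applies on the $H$ side.
\item (GM1): take $\pi = \mathrm{id}$.
\item (GM2): $\norm{D_G^\pi - D_H} = \norm{(D_G^\pi - D_H)^{\pi^{-1}}} = \norm{D_G - D_H^{\pi^{-1}}} = \norm{D_H^{\pi^{-1}} - D_G}$, using invariance and $\norm{-A}=\norm{A}$.
\item (GM3): take optimal $\pi\colon F\to G$ and $\tau\colon G\to H$. Then
\[
\norm{D_F^{\tau\circ\pi} - D_H} \le \norm{(D_F^{\pi} - D_G)^{\tau}} + \norm{D_G^\tau - D_H},
\]
and by invariance the first term equals $\norm{D_F^\pi - D_G}$.
\end{itemize}

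\textbf{The GW variant $\Ddist{(p)}^{\GW}$.} The minimum is attained because $\DS$ is compact and the objective is continuous.
\begin{itemize}
\item (GM0) follows exactly as above, by reindexing $Q$.
\item (GM1): take $Q = I$.
\item (GM2): take $Q^\top$, which preserves the objective since $D_G, D_H$ are symmetric in the relevant sense.
\end{itemize}

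The main obstacle is the triangle inequality (GM3), since couplings cannot simply be composed like bijections. I would follow M\'emoli's gluing argument, which is explicit in the finite case. Given optimal $Q_1 \in \DS(V(F),V(G))$ and $Q_2 \in \DS(V(G),V(H))$, all graphs have order $n$ and uniform weights $1/n$, so define
\[
R(u,v,w) \ceq Q_1(u,v)\,Q_2(v,w).
\]
Its marginal $Q_3(u,w) = \sum_v R(u,v,w)$ is again doubly stochastic.

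Set $g(u,v,u',v') \ceq |D_F(u,u')-D_G(v,v')|$ and $h(v,w,v',w') \ceq |D_G(v,v')-D_H(w,w')|$. By the pointwise triangle inequality,
\[
|D_F(u,u')-D_H(w,w')| \le g(u,v,u',v') + h(v,w,v',w').
\]
Integrate the $p$-th power against $R\otimes R$, which projects to $Q_3\otimes Q_3$, and apply Minkowski's inequality in $L^p(R\otimes R)$. The $g$-term marginalises to the cost of $Q_1$ and the $h$-term to the cost of $Q_2$. For $p=\infty$, use instead that the support of $R\otimes R$ projects into the supports of $Q_1\otimes Q_1$ and $Q_2\otimes Q_2$.

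Some care is needed with the normalisation, namely the factors $n$ between $\DS$ matrices and probability couplings, but they cancel consistently on all three sides.
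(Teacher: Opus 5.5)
Your proof is correct. For $\Ddist{p}$ and $\Ddist{(p)}$ it matches the paper's argument: composition of bijections together with permutation invariance of the norm. There are two small differences there. You verify (GM0) directly by reparametrising, whereas the paper deduces it from $\delta(G,G')=0$ for $G\cong G'$ together with (GM2) and (GM3). You also spell out why operator $p$-norms are permutation invariant, which the paper takes for granted.

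The genuine difference is the triangle inequality for $\Ddist{(p)}^{\GW}$. The paper simply cites M\'emoli's theorem that $\Mdist{(p)}^{\GW}$ is a metric on fully supported metric measure spaces. It implicitly transfers this through a scaling factor that depends only on $n$ and $p$, which is harmless because all three graphs have the same order. You instead reprove it in this special case. Your observation that, for equal-order uniform measures, the gluing is just $R(u,v,w)=Q_1(u,v)Q_2(v,w)$ means the composed plan is the product $Q_1Q_2\in\DS$. This makes the argument self-contained and elementary, with no normalisation bookkeeping. Your support argument also handles $p=\infty$ in the same stroke; alternatively, one can use $\Ddist{(\infty)}^{\GW}=\Ddist{(\infty)}$. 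The citation buys brevity and generality; your route buys transparency.

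Two small inaccuracies, neither affecting correctness:
\begin{itemize}
\item In (GM2), the transpose preserves the objective because $|a-b|=|b-a|$, not because of any symmetry of the distance matrices.
\item For $p=\infty$, the objective is not continuous in $Q$. The minimum is still attained, since the objective depends only on $\supp(Q)$ and there are finitely many supports.
\end{itemize}
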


\subparagraph*{Approximation Algorithms}
An algorithm $\mathcal A$ for a minimisation problem is an
\emph{$(\alpha,\beta)$-approximation algorithm} if, for every instance
$\mathcal I$, it outputs a feasible solution $\mathcal A(\mathcal I)$ with
$\OPT(\mathcal I) \leq \mathrm{val}(\mathcal A(\mathcal I)) \leq \alpha \cdot \OPT(\mathcal I) + \beta$,
where $\OPT(\mathcal I)$ is the optimal value of $\mathcal I$. That is, $\mathcal A$ has a \emph{multiplicative error guarantee} of $\alpha$ and an \emph{additive error guarantee} of $\beta$.
A \emph{polynomial-time approximation scheme}, denoted \PTAS{}, is an algorithm that, given an
instance $\mathcal I$ of a minimisation problem and a parameter $\eps > 0$ as input,
returns a $(1 + \eps, 0)$-approximation
and runs in time polynomial in $\abs{\mathcal I}$ for every fixed $\eps$.

\section{Hardness of Maximum Distortion}
\label{section:hardness_max_distortion}

In this section, we investigate the maximum distortion distance and the closely related Gromov--Hausdorff distance.
Recall that the maximum distortion distance $\Ddist{(\infty)}$ is derived by applying the entrywise $\infty$-norm on the difference of aligned distance matrices.
Intuitively, this graph metric measures the maximum discrepancy between the shortest-path distance of a vertex pair in the two graphs guaranteed to be incurred under any bijection.

Formally, for connected graphs $G, H$ of the same order, it is defined as
\begin{equation}
    \label{equation:maximum_distortion_distance}
    \Ddist{(\infty)}(G, H) \ceq \min_{\pi \in \Bij(V(G), V(H))} \max_{v, v' \in V(G)} 
    \big|D_G(v, v') - D_H(\pi(v), \pi(v'))\big|.
\end{equation}
We note the similarity to the \emph{Gromov--Hausdorff} distance \cite{gromov_original_paper,memoli_properties_of_gromov_hausdorff}, which, for two arbitrary metric spaces $\X = (X, d_\X),\, \Y=(Y, d_\Y)$, is defined as
\begin{equation}
    \label{equation:gromov_hausdorff_distance}
    \Mdist{\GH}(\X, \Y) \ceq \mfrac{1}{2} \cdot \inf_{R \in \Corr(X,Y)} \sup_{(x, y), (x', y') \in R} 
    \big|d_\X(x, x') - d_\Y(y, y')\big|
\end{equation}
where $\Corr(X, Y)$ denotes the set of correspondences between $X$ and $Y$.
A \emph{correspondence} between two sets $X, Y$ is a relation $R \subseteq X \times Y$ such that for each $x \in X$, there is a $y\in Y$ with $(x, y) \in R$, and conversely, for every $y \in Y$, there is an $x\in X$ with $(x, y) \in R$. In particular, every bijection is a correspondence.
Thus, up to the factor $\frac{1}{2}$, \cref{equation:maximum_distortion_distance} is a variant of the Gromov--Hausdorff distance for unweighted finite graphs of the same order (we interpret each graph as a finite metric space where the metric is the shortest-path distance) in which the optimisation ranges only over bijections.
In contrast, \cref{equation:gromov_hausdorff_distance} is well-defined also for graphs of different orders (or more generally for metric spaces of arbitrary cardinality).

Extending the notation introduced in \cref{section:preliminaries}, for graphs $G, H$, and a bijection $\pi \in \Bij\big(V(G), V(H)\big)$,
the objective function of \cref{equation:maximum_distortion_distance} can be rewritten as
\begin{equation}
    \label{equation:max_distortion_objective}
    \dis_{(\infty)}(\pi) \ceq \max_{v, v' \in V(G)} \dis(\pi, v, v').
\end{equation}

In the following, we prove the computational hardness of the maximum distortion distance
and, as a byproduct, also obtain a similar inapproximability result for the Gromov--Hausdorff distance.
Our proofs are based on previous hardness results for Gromov--Hausdorff \cite{agarwal_et_al_GH_NP_hard_metric_trees,Schmiedl2017}.
For the definition of pathwidth, we refer to \cref{definition:tree_and_path_decomposition}.

\thmMaxDistortionHardness*

\begin{proof}[Proof Sketch]
We only sketch the case of pathwidth two here.
The proof is by reduction from \threepartition{} \cite{garey1979computers}, which asks whether $3m$ positive integers $a_1, \ldots, a_{3m}$ can be partitioned into $m$ triples, each summing to a given target value $A$.
Roughly speaking, given an instance $\mathcal I = (A, a_1, \ldots, a_{3m})$, we construct two trees $G$ and $H$
of the same order
encoding the integer values of the instance by suitable leaf counts (compare \cref{figure:trees_pathwidth_two}).
\begin{figure}[b!]
    \captionsetup{singlelinecheck = true, justification=justified}
    \centering

    \begin{tikzpicture}[
            box/.style={
                draw,
                fill=white,
                rounded corners,
                minimum width=1.6cm,
                minimum height=0.8cm,
                thick
            },
            dot/.style={
                draw,
                fill=black,
            }
        ]
    
        \newcommand{\branchleaves}[4]{%
            \coordinate (#1root) at (#2, #3);
            \coordinate (#1nstart) at (#2-0.6,#3-0.75);
            \coordinate (#1nend)   at (#2+0.6,#3-0.75);
    
            \fill (#1nstart) circle (2pt);
            \fill (#1nend)   circle (2pt);
        
            \coordinate (#1mid) at ($(#1nstart)!0.5!(#1nend)$);
            \node at (#1mid) {$\cdots$};
        
            \draw[thick] (#1root) -- (#1nstart);
            \draw[thick] (#1root) -- (#1nend);
        
            \draw[decorate,decoration={brace},very thick]
                ($(#1nend)+(0.1,-0.2)$)
                --
                ($(#1nstart)+(-0.1,-0.2)$)
                node[midway,below=2pt] {#4};
        }
    
        \newcommand{\kpath}[6][right]{%
    
            \begin{scope}[shift={(#3,#4)}]
        
                \coordinate (#2start) at (0,0);
                \coordinate (#2end) at (0,-#5);
        
                \def\pad{0.2}
        
                \draw[rounded corners, fill=white]
                    (-\pad,\pad) rectangle (\pad,-#5-\pad);
                    
                \fill (#2start) circle (2pt);
                \fill (#2end)   circle (2pt);
        
                \node[rotate=90] (mid) at ($(#2start)!0.5!(#2end)$) {$\cdots$};
                \node[#1=4pt] at (mid) {#6};
        
            \end{scope}
        }
    
        \coordinate (G) at (0,-3);
        \coordinate (H) at (6,-0.8); 

        \begin{scope}[shift=(G)]
            \def\gshift{0.05}                 
            \pgfmathsetmacro\gL{\gshift}      
            \pgfmathsetmacro\gR{3-\gshift}    

            \node[box, minimum width=5cm, minimum height=2.3cm, fill=green!10] at (1.5, -1.0) {};
            \node[box, minimum width=5cm, minimum height=1.1cm, fill=blue!10] at (1.5, -2.85) {};
    
            \node at (1.5, -3.8) {$G$};
        
            \coordinate (rG) at (1.5,-0.5);
            \fill (rG) circle (2pt);
            \node[above=3pt] at (rG) {$r_G$};
            \kpath[left]{a1}{\gL}{-1}{0.8}{$P^1$};
            \kpath[right]{a3m}{\gR}{-1}{0.8}{$P^{3m}$};
            \node at (1.5,-1.4) {$\cdots$};
            \draw[thick] (\gL,-1) -- (rG);
            \draw[thick] (\gR,-1) -- (rG);
            \branchleaves{a1}{\gL}{-1.8}{$ca_1$};
            \node at (1.5,-2.55) {$\cdots$};
            \node at (1.5,-3.05) {$\cdots$};
            \branchleaves{a3m}{\gR}{-1.8}{$ca_{3m}$};
            \fill (\gL, -1.8) circle (2pt);
            \fill (\gR, -1.8) circle (2pt);
        \end{scope}
    
        \begin{scope}[shift=(H)]
            \node[box, minimum width=5cm, minimum height=4cm, fill=green!10] at (1.5, -2.35) {};
            \node[box, minimum width=5cm, minimum height=1.1cm, fill=blue!10] at (1.5, -5.05) {};
    
            \node at (1.5, -6) {$H$};
    
            \coordinate (rH) at (1.5,-1.85);
            \fill (rH) circle (2pt);
            \node[below=3pt] at (rH) {$r_H$};
        
            \kpath[left]{a11}{0}{-0.75}{0.8}{$R^1$};
            \kpath[left]{a12}{0}{-2.15}{1.85}{$Q^1$};
            
            \kpath[right]{a11}{3}{-0.75}{0.8}{$R^m$};
            \kpath[right]{a12}{3}{-2.15}{1.85}{$Q^m$};
            
            \node at (1.5,-3.075) {$\cdots$};
            \node at (1.5,-1.15) {$\cdots$};
    
            \branchleaves{A1}{0}{-4}{$cA$};
            \branchleaves{Am}{3}{-4}{$cA$};
            \node at (1.5,-4.75) {$\cdots$};
            \node at (1.5,-5.25) {$\cdots$};
            \fill (0, -4) circle (2pt);
            \fill (3, -4) circle (2pt);
            
            \draw[thick] 
                (rH) -- (0,-1.55)
                (rH) -- (3,-1.55)
                (rH) -- (0,-2.15)
                (rH) -- (3,-2.15)
            ;
        \end{scope}
        
    \end{tikzpicture}

    \caption{The trees of pathwidth two constructed in the reduction of \cref{theorem:max_distortion_hardness}. White boxes correspond to paths of $\ell$ vertices ($P, R$) and paths of $2\ell$ vertices ($Q$). If $\mathcal I$ is a YES-instance, there is a bijection that preserves the coloured regions and maps each leaf set of $G$ into a single leaf set of $H$. If $\mathcal I$ is a NO-instance, for every bijection, a leaf set corresponding to some $a_i$ in $G$ is split up and, by choice of $c \ceq 3m\ell + 2$, its vertices end up in at least two distinct leaf sets of $H$.}
    \label{figure:trees_pathwidth_two}
\end{figure}
$G$ has $3m$ subtrees $P^1, \ldots, P^{3m}$, with their corresponding leaf sets encoding the values $a_1, \ldots, a_{3m}$. $H$ has $m$ subtrees $Q^1, \ldots, Q^m$, each with a leaf set for the value $A$, and $m$ pendant paths $R^1, \ldots, R^m$. Since \threepartition{} is strongly \NP-complete, the number of vertices in the graphs is polynomial.
If $\mathcal I$ is a YES-instance, a corresponding partition of the $a_i$ into triples that each sum to $A$ can be turned into a bijection mapping the leaf sets of three subtrees of $G$ into a single leaf set of $H$. 
For NO-instances, the size imbalance between the number of leaves and interior vertices (controlled by the choice of $c$) in $G$ and $H$ forces at least one leaf set of $G$ to be split across multiple leaf sets of $H$.
This incurs roughly twice the distortion of a YES-instance.
\end{proof}

\cref{theorem:max_distortion_hardness} can be seen as an improvement of a result by Schmiedl \cite[Theorem 3.7]{Schmiedl2017} which relies on more general metric spaces
(specifically they use \emph{subsets} of weighted tree metrics) and thus does not carry over to our setting of unweighted graphs. 
Schmiedl transfers this hardness to $\Mdist{\GH}$ by suitably scaling the constructed 
metrics \cite[Corollary 3.8]{Schmiedl2017}. Intuitively, for unweighted graphs scaling roughly corresponds
to subdividing edges. Using this idea to slightly modify the construction behind \cref{theorem:max_distortion_hardness},
we limit the advantage correspondences might have over bijections and are thereby able to transfer the hardness of $\Ddist{(\infty)}$ to $\Mdist{\GH}$ in the setting of unweighted graphs. Specifically,
for YES-instances there always exists a bijection with lower maximum distortion than any correspondence achieves in NO-instances.

\begin{theorem}
    \label{theorem:gh_inapprox_result_julius}
    For any $\eps > 0$, unless $\P = \NP$, there is no polynomial-time approximation algorithm for $\GSim{\Mdist{\GH}}$\footnote{Slightly abusing notation, we denote by $\GSim{\Mdist{\GH}}$ the problem of computing the Gromov--Hausdorff distance between two unweighted graphs given as metric spaces.}
    with additive error guarantee $\O(n^{1-\eps})$, even if both input
    graphs are trees of pathwidth three on 
    the same number of vertices.
\end{theorem}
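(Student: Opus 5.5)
We reduce from \threepartition{}, reusing the pathwidth-two construction from the proof of \cref{theorem:max_distortion_hardness} with one change. Every edge on the connecting structure is subdivided into a path of length $s$, where $s$ is polynomial in the instance size. This covers the edges between $r_G$ and the paths $P^i$, the edges between $r_H$ and the paths $R^j, Q^j$, and the edges along these paths. The leaf sets of sizes $ca_i$ and $cA$ stay pendant to a single vertex. The subdivided trees have pathwidth at most three. To keep both graphs the same order we pad with a few extra pendant leaves or vertices on the $R^j$; this is possible because $\sum a_i = mA$ and the path lengths are matched. In the scaled graphs a YES-instance gives a bijection with distortion about $2s\cdot\ell$ plus lower-order terms. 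A NO-instance forces two leaves that lie at distance $2$ in $G$ to land in different leaf sets of $H$, at distance roughly $\Theta(s\ell)$ more.

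\textbf{Completeness.} Given a valid partition, we build the bijection exactly as in \cref{theorem:max_distortion_hardness}. The three subtrees $P^{i_1},P^{i_2},P^{i_3}$ of a triple go into $Q^j$ and $R^j$, with all their leaves mapped into the leaf set of $Q^j$. We bound the maximum distortion by $sD$ for an explicit constant-ish quantity $D$. Since every bijection is a correspondence, this gives $\Mdist{\GH}(G,H)\le \tfrac12 sD$.

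\textbf{Soundness.} This is the main obstacle. A correspondence may map one vertex of $G$ to many vertices of $H$ and vice versa, so the counting argument (leaves outnumber interior vertices by the factor $c$) no longer applies directly. We proceed in four steps.

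\begin{enumerate}
\item Suppose a correspondence $R$ has distortion below a threshold $sD' < sD''$, where $D''$ is the NO-case bound. Then every leaf of $G$ must correspond only to vertices near leaf sets of $H$. Vertices on long subdivided paths have characteristic eccentricity and distance profiles that leaves cannot match within additive error much smaller than $s$.
\item By the triangle-inequality argument, all partners of one leaf set $L_i$ of $G$ lie within distance $o(s)$ of each other. Hence they lie in a single leaf set of $H$, since distinct leaf sets of $H$ are $\Theta(s\ell)$ apart.
\item Every vertex of the $cA$-leaf set of $Q^j$ needs some preimage, and only leaves of $G$ can be close to it. So the $G$-leaf sets mapped to $Q^j$ have total size at least $cA$.
\item Leaf sets map to single targets, so this gives an assignment of the $a_i$ to the $m$ bins with each bin receiving a sum of at least $A$. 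Since $\sum a_i=mA$, each bin receives exactly $A$. Because $a_i$ may be assumed to lie strictly between $A/4$ and $A/2$, this is a valid three-partition, a contradiction.
\end{enumerate}

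The delicate part is step 1. There the subdivision length $s$ must dominate the non-scaled additive slack (terms like $+2$ from leaf-to-parent edges) that correspondences can exploit, which is why $s$ is chosen large.

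\textbf{Gap and conclusion.} The gap is $\tfrac12 s(D''-D)=\Omega(s)$. The number of vertices is $n=\mathrm{poly}(m,A)\cdot s$, with a fixed polynomial $N_0$ for the unsubdivided part. Choosing $s=N_0^{K}$ for large $K=K(\eps)$ makes $s\ge n^{1-\eps}$ asymptotically. An algorithm with additive error $\O(n^{1-\eps})$ would then separate YES- from NO-instances. Since \threepartition{} is strongly \NP-complete, this gives the claim.
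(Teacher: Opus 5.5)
There is a genuine gap, and it lies in the construction itself, not in a single step of the analysis. You keep every leaf set pendant to a single vertex, so the Gromov--Hausdorff distance cannot see the multiplicities $ca_i$ and $cA$. Let $G^\circ$ be the tree obtained from $G$ by deleting all but one leaf $\lambda_i$ of each leaf set $L_i$. Relating every vertex of $L_i$ to $\lambda_i$, and every other vertex to itself, is a correspondence of distortion $2$: two leaves of $L_i$ are at distance $2$, and all other distances are preserved. Hence $\Mdist{\GH}(G,G^\circ)\le 1$, and likewise $\Mdist{\GH}(H,H^\circ)\le 1$. But $G^\circ$ and $H^\circ$ are determined by $m$, $\ell$ and $s$ alone, not by the $a_i$. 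So $|\Mdist{\GH}(G,H)-\Mdist{\GH}(G^\circ,H^\circ)|\le 2$. A NO-instance and a YES-instance with the same $m$ and $A$ therefore receive values within an additive constant of each other, and no gap of order $s$ can exist, whatever $s$ you choose.

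In your soundness argument this shows up in Step~3. From ``every leaf of the $cA$-leaf set of $Q^j$ has a partner among the leaves of $G$'' you conclude that the partnered $G$-leaf sets have total size at least $cA$. But a single $G$-leaf may be the partner of all $cA$ of them at cost $2$. That inference is exactly the injectivity-based counting you yourself said is unavailable for correspondences. Steps~1 and~2 also fail as stated. After subdivision, a $G$-leaf and $r_H$ both have eccentricity about $2s\ell$, so eccentricity cannot separate them. And the partners of $L_i$ are only within $2+\dis_{(\infty)}(R)=\Theta(s\ell)$ of each other, not $o(s)$.

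The missing idea is that, against correspondences, each unit of an integer $a_i$ must be a metrically distinguishable point. Units must be pairwise farther apart than the distortion threshold while each has small diameter. Then a low-distortion correspondence can neither merge two units nor split one, and so it must induce a bijection between units.

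The paper achieves this by replacing each leaf with a pendant path on $6\ell$ vertices, hung from $p_i^1$ in $G$ and from $q_j^1$ in $H$. The tip segments of these paths have diameter below $\frac{3}{2}\ell$ in $G$ and below $\frac{1}{2}\ell$ in $H$. They are pairwise at distance at least $9\ell+2$ in $G$ and $11\ell$ in $H$. After an eccentricity argument showing that $H$-segments must be related to $G$-segments, any correspondence of distortion below $5\ell$ induces a bijection between the $mA$ segments on each side.

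Segments with the same $i$ lie within $12\ell$ of each other in $G$, while segments at distinct $q_j^1, q_{j'}^1$ are at least $17\ell$ apart in $H$. So this bijection respects the grouping and yields a three-partition. A YES-case bijection of distortion $4\ell$ then gives the additive gap $\frac{\ell}{2}$, and choosing $\ell$ polynomially large gives the $\O(n^{1-\eps})$ bound.
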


\noindent Our result matches the additive approximation hardness implicit in
the multiplicative hardness proof of Agarwal \etal{} \cite[Theorem 3.3]{agarwal_et_al_GH_NP_hard_metric_trees}.
However, complementing their result, our 
construction yields \emph{unweighted} trees of \emph{equal order} for which approximation hardness is already witnessed by 
\emph{bijections}. 
Intuitively, this shows that the hardness does not inherently rely on the additional flexibility introduced by correspondences in the optimisation domain.

Finally, we note that $\Ddist{(\infty)}$ and $\Mdist{\GH}$ admit approximation with additive error at most $\eps n$ for every $\eps > 0$, based on coarsening the metric structure of the input graphs 
(see the proof of \cref{theorem:additive_approx_global_distortion}).
Hence, the above additive inapproximability results are essentially optimal.

\section{Maximum Distortion into the Path}
\label{section:max_distortion_path}
The hardness results of the previous section gave inapproximability even for very \enquote{path-like} trees, \ie{} trees of pathwidth two.
In this section, we show that when restricting one input graph even further, that is, fixing it to be a path graph, computing the maximum distortion becomes tractable. This holds even if the other graph is an arbitrary connected graph of the same order.

\thmMaxDistortionIntoPath*

This result is somewhat striking since maximum distortion embedding into the path is closely related to
\bandwidth{}.
In the latter,
the vertices of a graph $G$ are assigned distinct labels from ${1,\ldots,|V(G)|}$ so as to minimise the maximum difference between the labels of adjacent vertices. 
Equivalently, this can be viewed as a bijective embedding of 
$G$ into a path that minimises the maximum stretch of any edge.
Hence, the only difference between these two problems is over which pairs the maximum is taken.
Despite this close similarity, \bandwidth{} is computationally much harder: It is \NP-hard on trees of maximum degree three \cite{garey_graham_bandwidth_degree_3} and on caterpillars of bounded hairlength \cite{monien_bandwidth_hairlength_three}. Moreover, it does not admit a constant-factor approximation on caterpillars with unbounded hairlength \cite{dubey_uriel_unger_hardness_of_approximating_bandwidth,unger_caterpillar_bandwidth_inapprox_extended_abstract}.

The proof of \cref{theorem:max_distortion_graph_into_path} yields a simple structural characterisation of an optimal bijection whenever one input graph is a path. Given a diametral pair $\{u,v\}$ of $G$, we partition $V(G)$ according to the signed difference of the distances to $u$ and $v$. Then, any enumeration respecting the order of these sets, in either direction, yields a bijection into the path with optimal maximum distortion.
A formal description is provided in \cref{alg:graph_to_path}. 
This algorithm not only solves the decision problem but also computes an optimal alignment; note that this does not follow from a decision algorithm via the standard self-reduction, which does not obviously apply to $\Ddist{(\infty)}$ (see \cref{remark:returning_bijection}).

\subparagraph*{Overview of the Proof}
To prove the distortion guarantee of \cref{theorem:max_distortion_graph_into_path}, we need to show two directions: No bijection yields a smaller maximum distortion and, conversely, the bijection returned by \cref{alg:graph_to_path} attains equality.
The lower bound follows from the fact that the endpoints of $P_n$ are at distance $n-1$ whereas their preimages under any bijection can be at distance at most $\diam(G)$.
For the second statement, we verify that the bijection $\pi^\ast$ returned by \cref{alg:graph_to_path} satisfies $\dis_{(\infty)}(\pi^\ast) = n-1-\diam(G)$, which is shown in two separate lemmas. 
For notational convenience, for a fixed graph $G$ we let $d \ceq \diam(G)$ and let $C_{-d}, \ldots, C_d$ be the sets (in the following referred to as bags) computed by \cref{alg:graph_to_path}. 

\begin{lemma}
    \label{lemma:graph_to_path_no_large_increase}
    For all $x, x' \in V(G)$, 
    $\Gdist_{P_n}(\pi^\ast(x), \pi^\ast(x')) - \Gdist_G(x, x') \leq n-1-d.$
\end{lemma}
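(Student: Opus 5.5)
The plan is to exploit two facts about the bag function. Write $u,v$ for the diametral pair chosen by \cref{alg:graph_to_path}, so that $\Gdist_G(u,v)=d$, and set $f(x) \ceq \Gdist_G(x,u)-\Gdist_G(x,v)\in\{-d,\ldots,d\}$, so that $x\in C_{f(x)}$. The bijection $\pi^\ast$ lists the bags $C_{-d},\ldots,C_d$ consecutively along $P_n$. The first fact is that $f$ is $2$-Lipschitz: by the triangle inequality, $|f(x)-f(x')|\le |\Gdist_G(x,u)-\Gdist_G(x',u)|+|\Gdist_G(x,v)-\Gdist_G(x',v)|\le 2\,\Gdist_G(x,x')$. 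The second fact is that a shortest $u$--$v$ path $u=p_0,p_1,\ldots,p_d=v$ satisfies $\Gdist_G(p_i,u)=i$ and $\Gdist_G(p_i,v)=d-i$. Hence $f(p_i)=2i-d$, so the $d+1$ vertices $p_0,\ldots,p_d$ lie in pairwise distinct bags, one in each bag of index $\equiv d \pmod 2$.

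Now fix $x,x'$ and assume \WLOG{} that $a\ceq f(x)\le b\ceq f(x')$. Since $\pi^\ast$ enumerates bags in order, both $\pi^\ast(x)$ and $\pi^\ast(x')$ lie in the contiguous block of positions occupied by $C_a\cup\cdots\cup C_b$. Therefore
\[
\Gdist_{P_n}(\pi^\ast(x),\pi^\ast(x'))\le \Big|\bigcup_{k=a}^{b} C_k\Big|-1 = n-1-\Big|\bigcup_{k\notin[a,b]} C_k\Big|.
\]
This holds regardless of how ties inside a bag are ordered, and regardless of the direction of the enumeration.

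It then suffices to show that at least $d-\Gdist_G(x,x')$ vertices lie in bags outside $[a,b]$. The diametral path supplies these vertices. The number of indices $i\in\{0,\ldots,d\}$ with $2i-d\in[a,b]$ is at most $\lfloor (b-a)/2\rfloor+1$, which by the Lipschitz bound is at most $\Gdist_G(x,x')+1$. Consequently at least $(d+1)-(\Gdist_G(x,x')+1)=d-\Gdist_G(x,x')$ of the $p_i$ lie in bags outside $[a,b]$. Substituting gives $\Gdist_{P_n}(\pi^\ast(x),\pi^\ast(x'))\le n-1-d+\Gdist_G(x,x')$, which is the claim.

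The only delicate step is the counting of path vertices inside the window $[a,b]$. It relies on the spacing-two pattern of the indices $f(p_i)$ exactly matching the factor $2$ in the Lipschitz bound. This is where the choice of the \emph{signed difference} of distances, rather than the distance to $u$ alone, is essential: using $\Gdist_G(\cdot,u)$ would give only a $1$-Lipschitz function with spacing one on the path, which also works. I would double-check that either variant is consistent with how \cref{alg:graph_to_path} defines the bags. The degenerate cases $x=x'$ and $a=b$ are covered by the same inequality.
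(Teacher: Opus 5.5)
Your proof is correct and is essentially the paper's argument. Both count the diametral-path vertices lying in bags outside the window $[a,b]$, which $\pi^\ast$ places outside the block containing $\pi^\ast(x)$ and $\pi^\ast(x')$, and bound the number of those inside the window by $\Gdist_G(x,x')+1$. The only difference is that the paper obtains $\Gdist_G(x,x') \geq \lceil (b-a)/2\rceil$ from the separator claim (\cref{subclaim:slice_separators}, which rests on adjacent vertices differing in bag index by at most two), whereas your triangle-inequality proof of the $2$-Lipschitz property gets the same bound more directly.

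Your closing remark is self-contradictory but harmless: the signed difference is not essential for this lemma, since ordering by $\Gdist_G(\cdot,u)$ would also work, as you observe yourself; it matters for \cref{lemma:graph_to_path_no_large_decrease}.
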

To show that the distance between any two vertices $x, x'$ does not increase too much under $\pi^\ast$, we find $d+1$ vertices that are either not mapped into the (unique) shortest $\pi^\ast(x)\pi^\ast(x')$-path or also lie on a shortest $xx'$-path in $G$.
Intuitively, these vertices cannot contribute to an increase in distance between $x$ and $x'$.
For the first kind, assuming $x \in C_b$ and $x' \in C_{b'}$ with $b \leq b'$ (exchanging $x$ and $x'$ if necessary), we count the non-empty bags $C_a$ with $a < b$ or $b' < a$.
The second kind is found by analysing $xx'$-separators that only consist of vertices from bags between $C_b$ and $C_{b'}$.

\begin{lemma}
    \label{lemma:graph_to_path_no_large_decrease}
    For all $x, x' \in V(G)$,  
    $\Gdist_G(x, x') - \Gdist_{P_n}(\pi^\ast(x), \pi^\ast(x')) \leq  n-1-d.$
\end{lemma}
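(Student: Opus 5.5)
The plan is a counting argument that compares a shortest $xx'$-path in $G$ with a shortest $uv$-path $Q$ along the diametral pair $\{u,v\}$ used by \cref{alg:graph_to_path}. Let $f(y)\ceq\Gdist_G(y,u)-\Gdist_G(y,v)$, so that $C_a=f^{-1}(a)$. Write $Q=q_0,\dots,q_d$ with $q_0=u$ and $q_d=v$. Then $f(q_i)=2i-d$, so the vertices of $Q$ lie in pairwise distinct bags. By the triangle inequality, $|f(y)-f(y')|\le 2\,\Gdist_G(y,y')$, and in particular $f$ changes by at most $2$ along any edge.

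Fix $x\in C_b$ and $x'\in C_{b'}$ with $b\le b'$ and $x\neq x'$. Let $P$ be a shortest $xx'$-path in $G$ of length $L=\Gdist_G(x,x')$, and put $t\ceq\Gdist_{P_n}(\pi^\ast(x),\pi^\ast(x'))$. Since $\pi^\ast$ enumerates the bags in order, every vertex of a bag $C_a$ with $b<a<b'$ is mapped strictly between $\pi^\ast(x)$ and $\pi^\ast(x')$. Hence $t\ge 1+|\{y : b<f(y)<b'\}|$, and when $b<b'$ this is at least $1+|\{i : b<f(q_i)<b'\}|$. The target inequality $L-t\le n-1-d$ is equivalent to $L+1+d+1\le n+1+t$. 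Since $|V(P)\cup V(Q)|\le n$ and $|V(P)\cup V(Q)|=(L+1)+(d+1)-|V(P)\cap V(Q)|$, it suffices to prove
\[
|V(P)\cap V(Q)|\le t+1 .
\]

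To bound $|V(P)\cap V(Q)|$, I would choose $P$ carefully among the shortest $xx'$-paths. Let $q_i$ and $q_j$ ($i<j$) be the first and last vertices of $Q$ met by $P$. Replacing the segment of $P$ between them by $q_i,\dots,q_j$ keeps $P$ a shortest path, because both segments are shortest $q_iq_j$-paths. So we may assume $V(P)\cap V(Q)=\{q_i,\dots,q_j\}$, which has $j-i+1$ elements. Among these, the vertices with $f$ strictly between $b$ and $b'$ are already counted in $t-1$. The vertices with $f\in\{b,b'\}$ number at most two, since $Q$ meets each bag at most once. That leaves the vertices $q_k$ with $f(q_k)<b$ or $f(q_k)>b'$.

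This last case is the main obstacle: excess vertices of $Q$ on $P$ whose bags lie outside $[b,b']$. I would first try a detour argument using the $2$-Lipschitz property together with the fact that $f$ increases by exactly $2$ per step along $Q$. Suppose $P$ reaches some $q_k$ with $f(q_k)<b$. Then the subpaths $x\to q_k$ and $q_k\to x'$ of $P$ must together traverse the $f$-range from $b$ down to $f(q_k)$ twice. This makes $L$ at least $\Gdist_G(x,q_k)+\Gdist_G(q_k,x')$, and each wasted $Q$-vertex should be paid for by one unit of extra $L$ that is not counted in $|V(P)\cup V(Q)|$. Equivalently, I would refine the count by replacing $Q$ with $Q$ restricted to the bags in $[b,b']$, plus the portions of $Q$ outside $[b,b']$ that are disjoint from $P$.

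If this bookkeeping fails, the fallback is the separator view suggested for \cref{lemma:graph_to_path_no_large_increase}. Every $xx'$-path must intersect each level set $f^{-1}(\{a,a+1\})$ for $b\le a<b'$. So the bags strictly between $C_b$ and $C_{b'}$ give $xx'$-separators, and one vertex of $P$ can be charged to each separator. Vertices of $P$ outside these separators would then be matched injectively with vertices of $Q$ that are absent from $P$.
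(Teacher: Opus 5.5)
Your reduction is correct as far as it goes. With $P$ chosen so that $V(P)\cap V(Q)=\{q_i,\dots,q_j\}$, the count $|V(P)\cup V(Q)|\le n$ shows that $|V(P)\cap V(Q)|\le t+1$ suffices. You also correctly verify this bound when every vertex of $V(P)\cap V(Q)$ lies in a bag $C_a$ with $b\le a\le b'$, which is precisely the paper's easy case $a(x)\le a(q)\le a(q')\le a(x')$.

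The gap is the remaining case, which is the whole difficulty of the lemma, and there your intermediate target is false for general pairs. Take $G=C_{2k}$ with $k\ge 5$, vertices $0,\dots,2k-1$, $u=0$, $v=k$ and $Q=0,1,\dots,k$. Then $C_{2s-k}=\{s,2k-s\}$ for $1\le s\le k-1$. For $x=2$ and $x'=2k-2$, both lie in $C_{4-k}$, so $t=1$. The unique shortest path $2,1,0,2k-1,2k-2$ meets $Q$ in three vertices, so $|V(P)\cap V(Q)|=3>t+1$. The lemma holds here only because some vertices lie outside $V(P)\cup V(Q)$.

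The example also shows why your detour heuristic cannot succeed. A shortest path may genuinely dip through $Q$ into bags below $b$, so nothing based only on the $2$-Lipschitz property of $f$ and shortestness can forbid it. Nor is there any \enquote{extra $L$} outside $|V(P)\cup V(Q)|$ to pay with, since every vertex of $P$ is already counted there. What you need are vertices outside $V(P)\cup V(Q)$. These must be produced from the hypothesis $\Gdist_G(x,x')\ge n-d+1$, which your proposal never invokes. Pairs with $\Gdist_G(x,x')\le n-d$ are trivial since $t\ge 1$, and in the example $\Gdist_G(x,x')=4<n-d+1$.

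The paper supplies exactly this missing step. Assuming the ordering fails, it takes a \emph{bad edge} $yz$ on the tail $X$ from $x$ to $q$, where the $a$-value drops from above $a(q)$ to at most $a(q)$. It uses that $y$ has a neighbour $\widetilde{z}$ with $j(\widetilde{z})<j(y)$, and follows a shortest $\widetilde{z}v$-path, whose $j$-value drops by exactly one per edge, until it first meets $R\cup Q$ at $s$. It then distinguishes cases by where $s$ lies:
\begin{itemize}
\item on $Q_2$, or on $X$ between $q$ and $y$: this gives a shorter $xx'$-path;
\item on $Q_1$: some edge would have to change the $i$-value by at least two;
\item on $Q_3$, on $X'$, or on $X$ beyond $y$ (the last via the last meeting point $s^\ast$ and a continuation $T$ to $Q_3$): the new segment contributes at least $|Q_2|-2$ vertices outside $R\cup Q$, forcing $\Gdist_G(x,x')\le n-d$, a contradiction.
\end{itemize}

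Your separator fallback does not replace this. The separators $C_a\cup C_{a+1}$ give a \emph{lower} bound on the number of vertices of $P$ in intermediate bags. What you need is an \emph{upper} bound on the number of vertices of $P$ in bags outside $[b,b']$. The injective matching you describe is essentially a restatement of the claim to be proved. As written, the proposal establishes only the easy case.
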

For \cref{lemma:graph_to_path_no_large_decrease}, the argument is a bit more nuanced. If the distance between 
$x$ and $x'$ in $G$ is large, then we show that either $x$ and $x'$ lie in bags whose indices are sufficiently far apart, or there is a path $S$
that is disjoint from both a shortest $xx'$-path $R$ and a shortest $uv$-path $Q$.
We analyse the interaction of $S$ with the vertices of $R$ and $Q$ (an overview of the resulting case distinction is provided in \cref{fig:case_distinction_cases}) and, roughly speaking, prove that $S$ either allows \enquote{shortcuts} between $x$ and $x'$ (yielding a contradiction) or is sufficiently long.
In the latter case, we bound the length of $R$ with a counting argument based on the vertex counts of the aforementioned paths and hence show that $\Gdist_G(x, x')$ is sufficiently small.

\subparagraph*{Runtime Analysis} The distance matrix $D_G$ of an (undirected, unweighted) $n$-vertex
graph $G$ can be computed in time $\mathrm{O}(n^\omega \log n)$ \cite[Theorem 1]{seidel_apsp_undirected_unweighted_matrix_multiplication}, where
$\omega < 2.371177$ is the best known matrix multiplication exponent at the time of writing \cite{dupont2026improvingmatrixmultiplicationexponent}. 
Given $D_G$, a diametral pair $u,v$ is found in time $\mathrm{O}(n^2)$; the loop in \cref{alg:graph_to_path} and the bijection take linear time. 
Overall, the runtime is $\O(n^\omega \log n)$. 
As only the distances from $u$ and $v$ are needed, for sparse graphs with $m$ edges it can be faster to find $u,v$ via $n$ BFS runs and then run two BFS from $u$ and $v$, in total time $\O(nm)$.

\begin{algorithm}[!htbp]
    \vspace{2pt}
    \textbf{Input}: A connected graph $G$ on $n$ vertices\\
    \textbf{Output}: $\pi \in \Bij(V(G), V(P_n))$ with 
    $\dis_{(\infty)}(\pi) = n-1-\diam(G)$
    \begin{algorithmic}[1]
        \State $D_G \gets \text{distance matrix of $G$}$
        \State $\{u, v\} \gets \arg\max_{\{x, y\} \subseteq V(G)} D_G(x, y)$ \Comment{$\{u, v\}$ is a diametral pair in $G$}
        \State $d \gets D_G(u, v)$ \Comment{$d = \diam(G)$}
        \State Initialise $C_{-d}, \ldots, C_d$ as empty sets.
        \For{$w \in V(G)$}
            \State $i \gets D_G(u, w)$ \Comment{$w$ at distance $i$ from $u$}
            \State $j \gets D_G(v, w)$ \Comment{$w$ at distance $j$ from $v$}
            \State $C_{i-j} \gets C_{i-j} \cup \{w\}$ \Comment{add $w$ to $C_{i-j}$}
        \EndFor{}
        \State Enumerate $V(G)$ as $v_1,\ldots,v_n$ such that, for $k<\ell$, if $v_k\in C_b$ and $v_\ell\in C_{b'}$, then $b\leq b'$.
        \hspace*{\fill}\Comment{sort vertices according to their bag indices in non-decreasing order}
        \State \Return $\pi: V(G) \to [n],\ v_k \mapsto k$ \Comment{assume $V(P_n) = \{1, \ldots, n\}$}
    \end{algorithmic}
    
    \caption{(Embedding a Graph into a Path)\textbf{.}}
    \label{alg:graph_to_path}
\end{algorithm}
\section{Maximum Distortion on Graphs of Bounded Pathlength} 
\label{section:max_distortion_bounded_pathlength}
As shown in \cref{section:hardness_max_distortion}, computing $\Ddist{(\infty)}$, even approximately 
below a constant multiplicative error guarantee of two, is already hard if both input graphs have 
pathwidth two. In contrast, if one graph is a path, then exact computation becomes polynomial-time 
solvable (\cref{theorem:max_distortion_graph_into_path}).
In the gap between these results lies the class of connected graphs of pathwidth one --
the caterpillars. A \emph{caterpillar} is a tree whose vertices of degree larger than one form a path.
In the literature, these are sometimes referred to as caterpillars of \emph{hairlength one}.
Despite their seemingly simple structure, we did not find a polynomial-time algorithm for computing 
$\Ddist{(\infty)}$ exactly on caterpillars. 
Unfortunately, the restrictive structure of caterpillars does not 
readily admit a hardness reduction. We therefore leave the exact complexity open and instead present a
simple approximation algorithm with a combined approximation guarantee.

\begin{lemma}
    \label{lemma:max_distortion_caterpillar_approx}
    There is a $\left(\frac{7}{4}, \frac{11}{2}\right)$-approximation
    algorithm for $\GSim{\Ddist{(\infty)}}$ if both input graphs are caterpillars. 
    This algorithm runs in linear time in the order of the input graphs.
\end{lemma}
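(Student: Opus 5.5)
The algorithm is the obvious one: use the path result as a proxy. Let $G,H$ be caterpillars on $n$ vertices. We apply \cref{alg:graph_to_path} to each of them, obtaining bijections $\pi_G\colon V(G)\to V(P_n)$ and $\pi_H\colon V(H)\to V(P_n)$, and output $\pi\ceq\pi_H^{-1}\circ\pi_G$. For trees the diametral pair and the distances from $u,v$ come from two BFS runs. The bucket sort into bags is linear, so the whole algorithm runs in linear time.

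By the triangle inequality for entrywise maxima,
\[
\dis_{(\infty)}(\pi)\le \dis_{(\infty)}(\pi_G)+\dis_{(\infty)}(\pi_H).
\]
By \cref{theorem:max_distortion_graph_into_path} the right-hand side equals $(n-1-d_G)+(n-1-d_H)$, where $d_G\ceq\diam(G)$ and $d_H\ceq\diam(H)$. This bound is too crude, so instead I will estimate $\dis(\pi,x,x')$ directly, exploiting caterpillar structure. In a caterpillar with spine $s_1,\dots,s_{d-1}$ (plus two end leaves forming the diametral path), every vertex $w$ is either on the spine or a leaf attached to some $s_i$. The bag index $i-j$ of $w$ is essentially $2i-d$ for spine vertices, and the same index holds for leaves hanging off $s_i$. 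Hence \cref{alg:graph_to_path} lists the vertices spine vertex by spine vertex together with their hairs. Define the \emph{position} of $w$ as the index of its spine attachment. Then
\[
\Gdist_G(x,x')=|\mathrm{pos}(x)-\mathrm{pos}(x')|+[\text{$x$ leaf}]+[\text{$x'$ leaf}],
\]
so each distance equals the positional difference up to an additive error of $2$. The path index of $x$, on the other hand, equals $\mathrm{pos}(x)$ plus the number of leaves attached before $x$'s spine vertex, which is an increasing function.

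Next we need a lower bound on $\OPT=\Ddist{(\infty)}(G,H)$. Two quantities give one: the diameter difference $|d_G-d_H|$, since endpoints of a diametral pair must map somewhere, and a term capturing the mismatch in how leaves are distributed along the spines. Leaves concentrated near $s_i$ in $G$ form a set of pairwise distance $2$, and this set must map into $H$ inside a ball of radius about $\OPT/2+1$, which holds correspondingly few vertices. Formally, for every $t$ the first $k$ vertices in the $G$-order (all within distance about $t$ of $u_G$) must be mapped to vertices whose mutual distances in $H$ are at most their distances in $G$ plus $\OPT$. Comparing the cumulative vertex counts $F_G(t)$ and $F_H(t)$ of the two orders then yields a lower bound on $\OPT$ in terms of the horizontal shift between the staircase functions $F_G$ and $F_H$.

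The main obstacle is matching the upper and lower bounds with the precise constants $\frac74$ and $\frac{11}{2}$. Under $\pi$, the distortion of a pair is at most $2$ (leaf corrections in $G$) plus $2$ (in $H$) plus the positional discrepancy between where $x$ sits on the spine of $G$ and where its image sits on the spine of $H$. That discrepancy must be bounded by a combination of the diameter gap and the horizontal gap between the staircases. I would first attempt to show that the positional discrepancy is at most $\frac34\OPT$ plus a constant, and allow for reversing the order of $H$ by trying both orientations of $\pi_H$ and keeping the better result. The $\frac34$ should come from averaging the two available lower bounds, a shift of $s$ forcing $\OPT\gtrsim\frac{2s}{3}$ or similar. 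The rounding and leaf offsets then accumulate into the additive term $\frac{11}{2}$.
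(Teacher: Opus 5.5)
\textbf{Gap: the $\frac74$ guarantee is never proved.} Your algorithm is in fact the paper's. On a caterpillar, \cref{alg:graph_to_path} puts every spine vertex $s_i$ into the same bag as its leaves. The diametral endpoints get bags $C_{\mp d}$ of their own, but these sit at the two ends, so the order is still nondecreasing in spine position. Hence $\pi_H^{-1}\circ\pi_G$ and its reversed variant are exactly the monotone and antitone alignments of the multisets of spine positions, and your distance formula is the paper's reduction from caterpillars to multisets.

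Everything rests on showing that, for two multisets on the line, the better of the monotone and antitone alignments has maximum distortion at most $\frac74$ times the optimum. Your lower bound via the ``horizontal shift between the staircases'' (``a shift of $s$ forcing $\OPT\gtrsim\frac{2s}{3}$ or similar'') is a heuristic, not an argument. This is the hard part of the lemma. Optimal alignments can be neither monotone nor antitone; for instance, they may split an end bag between both ends of the other space (\cref{example:caterpillar_spaces_32}). Moreover, $\frac74$ is tight for point sets (\cref{example:caterpillar-74-example}). So lower bounds built only from the diameter difference and cumulative counts have no evident reason to produce the right constant.

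Your stated target is also inconsistent. If ``positional discrepancy'' means the positional distortion of a pair, then a bound $\frac34\OPT+O(1)$ would put the algorithm below the optimum for large $\OPT$. If it means the tilt $t(x)=\mathrm{pos}_G(x)-\mathrm{pos}_H(\pi(x))$ of a single vertex, then under a monotone alignment the positional distortion of a pair is a difference of two tilts. What must be bounded is then the spread $\max t-\min t$, by $\frac74\OPT$.

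\textbf{The missing idea is structural (\cref{lemma:auxiliary_monotone_guarantee}).} If the monotone alignment contracts a pair $x_i<x_j$ by more than $\frac32\OPT$, then an optimal $\pi$ must map all of $X_{[j,n]}$ below all of $X_{[1,i]}$. Apply this to the monotone alignment, the antitone alignment and $\pi^{-1}$, reversing spaces as needed. This forces one failure to be a contraction and the other an expansion, and yields $x_i<\pi^{-1}(y_1)<\pi^{-1}(y_n)<x_j$ and $y_{n+1-\ell}<\pi(x_n)<\pi(x_1)<y_{n+1-k}$. Together with a counting claim on the index intervals, one derives $\diam(Y)\le 2\OPT$ and four tail bounds of $\frac32\OPT$. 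Hence the two distortions sum to at most $\frac72\OPT$ (\cref{proposition:2_approx_real_point_sets}). A perturbation by $j\varepsilon/n$ plus integrality transfers this to multisets (\cref{corr:multiset_to_set}). With that in hand, the lemma takes a few lines from your position formula.

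\textbf{Two further details need correcting.}
\begin{enumerate}
    \item Both leaf corrections lie in $[0,2]$, so the caterpillar and positional distortion of any pair differ by at most $2$, not $2+2$. The additive term is then $\frac74\cdot2+2=\frac{11}{2}$, coming from the sandwich $\frac74(\OPT+2)+2$, not from rounding. With your $4$ you would only get $\frac74\OPT+11$.
    \item ``Keeping the better result'' is not linear time if you evaluate $\dis_{(\infty)}$ over all pairs. You need that the positional distortion of a monotone alignment (and of an antitone one, after reversing $H$) equals the spread of the tilts. This is computable in one pass, and you should select by it, since that is all the guarantee uses.
\end{enumerate}
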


Intuitively, we associate with each caterpillar a multiset over $\R$ (extending the notion of maximum
distortion in the natural way), which we then translate into a set in order to apply a result of Hall and 
Papadimitriou \cite[Theorem 5]{approximating_the_distortion}.
Our refined analysis improves their guarantee from a factor 2 to factor $\frac{7}{4}$.

\begin{theorem}
    \label{proposition:2_approx_real_point_sets}
    Given finite sets $X = \{x_1, \ldots, x_n\},\, Y = \{y_1, \ldots, y_n\} \subseteq \R$ 
    with enumerations in increasing order, the better of the two bijections 
    $\pi(x_i) = y_i$ and $\pi(x_i) = y_{n+1-i}$ for $i \in [n]$ between $X$ and $Y$ has a maximum 
    additive distortion of at most $\frac{7}{4}$ times the optimum.
\end{theorem}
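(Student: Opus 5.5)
We may use two invariances: the maximum additive distortion does not change under translating or reflecting either point set. For the order-preserving bijection $\pi_+\colon x_i \mapsto y_i$, the distortion of a pair is $\abs{(x_j-y_j)-(x_i-y_i)}$. Hence
\[
\dis_{(\infty)}(\pi_+) = \max_i (x_i-y_i) - \min_i (x_i-y_i),
\]
the width of the set of offsets $o_i^+ \ceq x_i - y_i$. Similarly, $\dis_{(\infty)}(\pi_-)$ is the width of the offsets $o_i^- \ceq x_i + y_{n+1-i}$. So it suffices to fix an optimal bijection $\sigma$ with value $\delta^\ast$ and show that one of these two offset sets lies in an interval of length $\frac74\delta^\ast$.

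\textbf{Step 1: local structure of $\sigma$.} For every $i,j$ we have $\abs{\abs{x_i-x_j} - \abs{\sigma(x_i)-\sigma(x_j)}}\le\delta^\ast$. Taking $x_1$ as an anchor, $\sigma(x_i)$ lies within $\delta^\ast$ of one of the two points $\sigma(x_1)\pm(x_i-x_1)$. This splits the $x_i$ into a \enquote{forward} part and a \enquote{backward} part, and the split is unambiguous once $x_i - x_1 > \delta^\ast$. Doing the same with $x_n$ as anchor, and combining with the diameter bound $\abs{(x_n-x_1)-(y_n-y_1)}\le\delta^\ast$, we reflect $Y$ if necessary so that $\sigma$ is \enquote{mostly forward}. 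Concretely, $\sigma(x_1)$ should sit near $y_1$ and $\sigma(x_n)$ near $y_n$, each within $\delta^\ast$ up to a bounded slack.

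\textbf{Step 2: a counting lemma comparing ranks.} Show that for every threshold $t$, the number of $x$'s in $[x_1,x_1+t]$ and the number of $y$'s in $[y_1,y_1+t]$ differ by at most the number of points that $\sigma$ moves across the threshold. Because $\sigma$ distorts by at most $\delta^\ast$, such crossing points lie in a window of width about $2\delta^\ast$ around the threshold (forward part), or come from the folded backward part. Applied with $t=x_i-x_1$, this pins $y_i$ into a window around $y_1 + (x_i - x_1)$. That gives $o_i^+ \in [o_1^+-\delta^\ast,\, o_1^++\delta^\ast]$ in the unfolded case, hence width at most $2\delta^\ast$; this is essentially the Hall--Papadimitriou factor $2$.

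\textbf{Step 3 (main obstacle): the refinement to $\frac74$.} The factor $2$ is tight for $\pi_+$ alone only when the offsets reach both extremes $o_1^+\pm\delta^\ast$. I would do a case analysis on the offsets. Suppose some $o_i^+ > o_1^+ + \frac34\delta^\ast$ and some $o_j^+ < o_1^+ - \frac34\delta^\ast$ (otherwise we are done). Then show, using the pair constraints of $\sigma$ on the indices $i$, $j$, $1$ and $n$, that the points near these extremes are forced to be mapped in a folded, reversed way. This folding pins the reversed offsets $o_k^-$ into a window of width at most $\frac74\delta^\ast$.

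The difficulty is organising the folding configurations: initial segment reversed, final segment reversed, or both. In each configuration, one must verify that the slack lost in Step 2 on the $\pi_+$ side is compensated on the $\pi_-$ side by exactly the amount needed to obtain the constant $\frac74$. For this I would first try a weighted averaging argument over the four anchor pairs $\{1,n\},\{1,i\},\{i,j\},\{j,n\}$: add their distortion inequalities with coefficients chosen so that $\dis_{(\infty)}(\pi_+)+\dis_{(\infty)}(\pi_-)\le\frac72\delta^\ast$, which gives the claim since the smaller of the two is then at most $\frac74\delta^\ast$.
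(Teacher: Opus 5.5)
You correctly express $\dis_{(\infty)}(\pi_\pm)$ as the width of the offset sets. Your final target $\dis_{(\infty)}(\pi_+)+\dis_{(\infty)}(\pi_-)\le\frac72\delta^\ast$ is exactly the inequality the paper proves. However, the proposal does not establish it. Step~3, which carries the whole content of the $\frac74$ bound, is a plan rather than an argument, and the scaffolding beneath it fails.

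The window $o_i^+\in[o_1^+-\delta^\ast,o_1^++\delta^\ast]$ from Step~2 is only justified when $\sigma$ does not fold, e.g.\ when $\sigma(x_1)=y_1$. In folded configurations, which are the only ones where the theorem has content, it is false. Take the paper's tight example (\cref{example:caterpillar-74-example}, $\delta^\ast=4$), perturbed into sets as in \cref{corr:multiset_to_set}; the perturbation leaves the offset widths unchanged. There the monotone offsets fill $[o_1^+-5,\,o_1^++2]$, and after reflecting $Y$ (the orientation your Step~1 selects) they fill $[o_1^+-7,\,o_1^+]$. In neither orientation do they stay within $\delta^\ast$ of $o_1^+$. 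So the dichotomy opening Step~3 (``otherwise we are done'') has no basis.

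More fundamentally, the proposed weighted averaging cannot close the gap. The distortion inequalities of $\sigma$ on the pairs $\{1,n\},\{1,i\},\{i,j\},\{j,n\}$ constrain only $\sigma(x_1),\sigma(x_i),\sigma(x_j),\sigma(x_n)$. But $\dis_{(\infty)}(\pi_+)$ is attained at $(x_i,y_i),(x_j,y_j)$, where $y_i$ is the $i$-th order statistic of $Y$. That quantity is tied to $\sigma$ only by counting how many points $\sigma$ sends into an interval. Moreover, $\pi_-$ is extremal at a different index pair $(k,\ell)$.

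The missing ingredients, arguing by contradiction from $\dis_{(\infty)}(\pi_\pm)>\frac74\delta^\ast$, are as follows.

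\textbf{(a) Tail reversal.} If $(x_j-x_i)-(y_j-y_i)>\frac32\dis_{(\infty)}(\sigma)$, then $\sigma(x')<\sigma(x)$ for all $x\le x_i$ and $x'\ge x_j$. The proof is a pigeonhole step: the $j-i+1$ points $y_i,\dots,y_j$ outnumber the $j-i-1$ points strictly between $x_i$ and $x_j$, so some outside point maps into $[y_i,y_j]$. Two triangle-inequality contradictions follow.

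\textbf{(b) Signs of the extremal pairs.} Normalise to $\diam(X)\le\diam(Y)$ and $\sigma^{-1}(y_1)<\sigma^{-1}(y_n)$. Applying (a) to $\sigma$, $\sigma^{-1}$ and their compositions with reflections shows that $\dis_{(\infty)}(\pi_+)$ is attained by a contraction and $\dis_{(\infty)}(\pi_-)$ by an expansion. This yields $x_i<\sigma^{-1}(y_1)<\sigma^{-1}(y_n)<x_j$ and $y_{n+1-\ell}<\sigma(x_n)<\sigma(x_1)<y_{n+1-k}$.

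\textbf{(c) Second pigeonhole.} One of $k\le i$, $\ell\ge j$, $i\le n+1-\ell$, $j\ge n+1-k$ must hold.

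\textbf{(d) Case bounds.} In each case, $\dis_{(\infty)}(\pi_+)+\dis_{(\infty)}(\pi_-)\le\diam(Y)+\tau$ for a tail length $\tau\in\{x_n-x_\ell,\,x_k-x_1,\,y_n-y_j,\,y_i-y_1\}$. The orderings from (b) give $\diam(Y)\le2\delta^\ast$ and $\tau\le\frac32\delta^\ast$, which yields the $\frac72\delta^\ast$ bound.
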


For a caterpillar $T$, the \emph{spine} of $T$ is the path $v_1, \ldots, v_k$ that remains after 
removing all its leaves. Contracting the edges between spine vertices and leaves in $T$ yields a 
collection of bags of vertices arranged along the spine which we view as a multiset with support $\{1, \ldots, k\}$.
This simplifies the metric structure and allows us to apply \cref{proposition:2_approx_real_point_sets}
after slightly shifting the points (to obtain sets) in a way that preserves the guarantees of \cref{proposition:2_approx_real_point_sets}, compare \cref{corr:multiset_to_set}.
The mapping found by the algorithm of \cref{lemma:max_distortion_caterpillar_approx} is either \emph{monotone} or \emph{antitone} (formal definitions are in 
\cref{appendix:max_distortion_bounded_pathlength}) in the sense that the order along the spine vertices 
is either preserved or reversed by the mapping.
Moreover, this mapping as well as the resulting maximum distortion can be computed in linear time using the monotonicity.

We note that the mapping between caterpillars found in this way is, in general, not optimal.
The following example, for simplicity described in terms of multisets, gives an infinite family with a multiplicative gap of $\frac{3}{2}$.
Translated back to caterpillars, this yields a lower bound of $\frac{3}{2} - \eps$ (for every $\eps > 0$) for our algorithmic approach.
Furthermore, \cref{example:caterpillar-74-example} shows that the factor $\frac{7}{4}$ in \cref{proposition:2_approx_real_point_sets} is optimal.

\begin{figure}[!htbp]
    \captionsetup{singlelinecheck = true, justification=justified}
    \centering
    \begin{subfigure}[t]{0.48\textwidth}
        \centering
        \begin{adjustbox}{width=\linewidth}
            \begin{tikzpicture}[
    scale=0.9,
    box/.style={
        draw,
        fill=white,
        rounded corners,
        minimum width=0.6cm,
        minimum height=0.6cm,
        thick
    }
]

    \coordinate (plotoffset) at (9, 0);

    \coordinate (c0) at (0.35, 0);
    \coordinate (cend) at (7.65, 0);
    \foreach \i in {1,...,6} {
        \pgfmathsetmacro{\px}{1+(\i-1)*0.25}
        \coordinate (c\i) at (\px, 0);
    }
    \foreach \i in {7,...,12} {
        \pgfmathsetmacro{\px}{7-(11-(\i-1))*0.25}
        \coordinate (c\i) at (\px, 0);
    }

    \pgfmathsetmacro{\oy}{-1.5}
    \coordinate (d0) at (0.35, \oy);
    \coordinate (dmid) at (4, \oy);
    \coordinate (dend) at (7.65, \oy);
    \foreach \i in {1,...,3} {
        \pgfmathsetmacro{\px}{1+(\i-1)*0.25}
        \coordinate (d\i) at (\px, \oy);
    }
    \foreach \i in {4,...,6} {
        \pgfmathsetmacro{\px}{3-(5-(\i-1))*0.25}
        \coordinate (d\i) at (\px, \oy);
    }
    \foreach \i in {7,...,9} {
        \pgfmathsetmacro{\px}{5+(\i-7)*0.25}
        \coordinate (d\i) at (\px, \oy);
    }
    \foreach \i in {10,...,12} {
        \pgfmathsetmacro{\px}{7-(11-(\i-1))*0.25}
        \coordinate (d\i) at (\px, \oy);
    }


    \foreach \i/\j/\k in {
        0/0/4,
        0/1/1,
        0/2/1,
        0/3/1,
        0/4/1,
        0/5/1,
        0/6/1,
        0/mid/1,
        1/mid/1,
        2/mid/1,
        3/mid/1,
        4/mid/1,
        5/mid/1,
        6/mid/1,
        7/mid/1,
        8/mid/1,
        9/mid/1,
        10/mid/1,
        11/mid/1,
        12/mid/1,
        end/mid/1,
        end/7/1,
        end/8/1,
        end/9/1,
        end/10/1,
        end/11/1,
        end/12/1,
        end/end/4
    } {
        \path (c\i) -- (d\j);

        \pgfgetlastxy{\ax}{\ay}
        \pgfmathsetmacro{\dx}{\ax}
        \pgfmathsetmacro{\dy}{\ay}

        \path (c\i);
        \pgfgetlastxy{\cx}{\cy}
        \path (d\j);
        \pgfgetlastxy{\dx}{\dy}

        \pgfmathsetmacro{\vx}{\dx-\cx}
        \pgfmathsetmacro{\vy}{\dy-\cy}
        \pgfmathsetmacro{\len}{veclen(\vx,\vy)}

        \pgfmathsetmacro{\ux}{-\vy/\len}
        \pgfmathsetmacro{\uy}{\vx/\len}

        \foreach \t in {1,...,\k} {
            \pgfmathsetmacro{\off}{(\t-(\k+1)/2)*0.05}

            \draw[thick, gray!50]
                ($ (c\i) + (\off*\ux,\off*\uy) $)
                --
                ($ (d\j) + (\off*\ux,\off*\uy) $);
        }
    }

    \draw[line width=1.5pt, red] (c0) -- (4, -1.5);
    \draw[line width=1.5pt, red] (cend) -- (4, -1.5);

    \begin{scope}
        \node[box] (a0) at (c0) {$2n$};
        \foreach \i in {1,...,12} {
            \fill (c\i) circle (2pt);
        }
        \node at (4, 0) {$\cdots$};
        \node[box] (ar) at (cend) {$2n$};

        \draw [
            very thick,
            decoration={
                brace,
                raise=0.25cm
            },
            decorate
        ] (1, 0) -- (7, 0)
        node [pos=0.5,anchor=south,yshift=0.25cm] {$3n-1$};
    \end{scope}

    \begin{scope}
        \node[box] (b0) at (d0) {$n$};
        \foreach \i in {1,...,12} {
            \fill (d\i) circle (2pt);
        }
        \node at (2, \oy) {$\cdots$};

        \draw [
            very thick,
            decoration={
                brace,
                mirror,
                raise=0.25cm
            },
            decorate
        ] (1, \oy) -- (3, \oy)
        node [pos=0.5,anchor=north,yshift=-0.25cm] {$n-1$};

        \node[box] (bmiddle) at (4, \oy) {$3n+1$};
        \node at (6, \oy) {$\cdots$};

        \draw [
            very thick,
            decoration={
                brace,
                mirror,
                raise=0.25cm
            },
            decorate
        ] (5, \oy) -- (7, \oy)
        node [pos=0.5,anchor=north,yshift=-0.25cm] {$n-1$};

        \node[box] (bs) at (dend) {$n$};
    \end{scope}

\end{tikzpicture}
        \end{adjustbox}
        \caption{A monotone alignment.}
        \label{subfig:monotone_32}
    \end{subfigure}
    \hfill
    \begin{subfigure}[t]{0.48\textwidth}
        \centering
        \begin{adjustbox}{width=\linewidth}
            \begin{tikzpicture}[
    scale=0.9,
    box/.style={
        draw,
        fill=white,
        rounded corners,
        minimum width=0.6cm,
        minimum height=0.6cm,
        thick
    }
]

    \coordinate (c0) at (0.35, 0);
    \coordinate (cend) at (7.65, 0);
    \foreach \i in {1,...,6} {
        \pgfmathsetmacro{\px}{1+(\i-1)*0.25}
        \coordinate (c\i) at (\px, 0);
    }
    \foreach \i in {7,...,12} {
        \pgfmathsetmacro{\px}{7-(11-(\i-1))*0.25}
        \coordinate (c\i) at (\px, 0);
    }

    \pgfmathsetmacro{\oy}{-1.5}
    \coordinate (d0) at (0.35, \oy);
    \coordinate (dmid) at (4, \oy);
    \coordinate (dend) at (7.65, \oy);
    \foreach \i in {1,...,3} {
        \pgfmathsetmacro{\px}{1+(\i-1)*0.25}
        \coordinate (d\i) at (\px, \oy);
    }
    \foreach \i in {4,...,6} {
        \pgfmathsetmacro{\px}{3-(5-(\i-1))*0.25}
        \coordinate (d\i) at (\px, \oy);
    }
    \foreach \i in {7,...,9} {
        \pgfmathsetmacro{\px}{5+(\i-7)*0.25}
        \coordinate (d\i) at (\px, \oy);
    }
    \foreach \i in {10,...,12} {
        \pgfmathsetmacro{\px}{7-(11-(\i-1))*0.25}
        \coordinate (d\i) at (\px, \oy);
    }

    \newcommand{\drawbundle}[4]{%
        \path #1;
        \pgfgetlastxy{\cx}{\cy}

        \path #2;
        \pgfgetlastxy{\dx}{\dy}

        \pgfmathsetmacro{\vx}{\dx-\cx}
        \pgfmathsetmacro{\vy}{\dy-\cy}
        \pgfmathsetmacro{\len}{veclen(\vx,\vy)}

        \pgfmathsetmacro{\ux}{-\vy/\len}
        \pgfmathsetmacro{\uy}{\vx/\len}

        \foreach \t in {1,...,#3} {
            \pgfmathsetmacro{\off}{(\t-(#3+1)/2)*0.05}

            \draw[thick, #4]
                ($ #1 + (\off*\ux,\off*\uy) $)
                --
                ($ #2 + (\off*\ux,\off*\uy) $);
        }
    }

    \drawbundle{(c0)}{(d0)}{4}{gray!50}
    \drawbundle{(c0)}{(dend)}{4}{gray!50}
    \drawbundle{(cend)}{(dmid)}{8}{gray!50}

    \drawbundle{(c1)}{(d1)}{1}{gray!50}
    \drawbundle{(c2)}{(d12)}{1}{gray!50}
    \drawbundle{(c3)}{(d2)}{1}{gray!50}
    \drawbundle{(c4)}{(d11)}{1}{gray!50}
    \drawbundle{(c5)}{(d3)}{1}{gray!50}
    \drawbundle{(c6)}{(d10)}{1}{gray!50}
    \drawbundle{(c12)}{(dmid)}{1}{gray!50}
    \drawbundle{(c11)}{(dmid)}{1}{gray!50}
    \drawbundle{(c10)}{(d6)}{1}{gray!50}
    \drawbundle{(c9)}{(d7)}{1}{gray!50}
    \drawbundle{(c8)}{(d5)}{1}{gray!50}
    \drawbundle{(c7)}{(d8)}{1}{gray!50}

    \draw[line width=1.5pt, red] (c0) -- (d0);
    \draw[line width=1.5pt, red] (c0) -- (dend);

    \begin{scope}
        \node[box] (a0) at (c0) {$2n$};
        \foreach \i in {1,...,12} {
            \fill (c\i) circle (2pt);
        }
        \node at (4, 0) {$\cdots$};
        \node[box] (ar) at (cend) {$2n$};

        \draw [
            very thick,
            decoration={
                brace,
                raise=0.25cm
            },
            decorate
        ] (1, 0) -- (7, 0)
        node [pos=0.5,anchor=south,yshift=0.25cm] {$3n-1$};
    \end{scope}

    \begin{scope}
        \node[box] (b0) at (d0) {$n$};
        \foreach \i in {1,...,12} {
            \fill (d\i) circle (2pt);
        }
        \node at (2, \oy) {$\cdots$};

        \draw [
            very thick,
            decoration={
                brace,
                mirror,
                raise=0.25cm
            },
            decorate
        ] (1, \oy) -- (3, \oy)
        node [pos=0.5,anchor=north,yshift=-0.25cm] {$n-1$};

        \node[box] (bmiddle) at (4, \oy) {$3n+1$};
        \node at (6, \oy) {$\cdots$};

        \draw [
            very thick,
            decoration={
                brace,
                mirror,
                raise=0.25cm
            },
            decorate
        ] (5, \oy) -- (7, \oy)
        node [pos=0.5,anchor=north,yshift=-0.25cm] {$n-1$};

        \node[box] (bs) at (dend) {$n$};
    \end{scope}

\end{tikzpicture}
        \end{adjustbox}
        \caption{A non-monotone alignment.}
        \label{subfig:non_monotone_32}
    \end{subfigure}
    
    \caption{Two alignments between the same two multisets. 
    Each multiset is arranged horizontally    
    with its bags (black dots and boxes) placed at consecutive integer positions, \ie{} the top multiset has support $[3n+1]$, the bottom one $[2n+1]$.
    Black dots denote a single point (bags of size one) while a box labelled $k$ denotes a bag of $k$ points, all at the same position. Both multisets have $7n-1$ elements.
    Alignments are visualised by grey edges; an edge connects two bags $B, B'$ if there is an $x \in B$ with $\pi(x) \in B'$. 
    Red edges indicate pairs of maximum distortion under the respective alignment.
    }
    \label{fig:non_monotone_caterpillar_space}
\end{figure}
\begin{example}
    \label{example:caterpillar_spaces_32}
    For $n \geq 2$, consider the multisets depicted in \cref{fig:non_monotone_caterpillar_space}.
    The monotone alignment (\cref{subfig:monotone_32}) incurs a distortion of $3n$ since the leftmost and rightmost bags of the top multiset are at distance $3n$ and both contain a point that is mapped into the bag of size $3n+1$ in the bottom multiset. 
    By symmetry of the two multisets, the antitone alignment yields the same result.
    In contrast, the non-monotone alignment (\cref{subfig:non_monotone_32}) splits up the leftmost bag into the left- and rightmost bags yielding a distortion of only $2n$.
    The remaining bags are processed left-to-right, assigning each point of the top space to a point not yet assigned in the left- or rightmost possible bags, alternating between the two choices.
    Any two points in the top space at distance $2n + k$ have images in the bottom space at distance at least $k$. 
    Thus, the maximum distortion is at most (and in fact exactly) $2n$.
\end{example}

\cref{lemma:max_distortion_caterpillar_approx} can easily be extended to 
connected graphs of bounded pathlength.
The \emph{length} of a path decomposition of a graph $G$ is the largest diameter of its bags measured in the shortest-path metric of $G$.
The \emph{pathlength} of $G$ is the minimum length taken over all path decompositions of $G$. 
Path decompositions are formally defined in \cref{definition:tree_and_path_decomposition}.

\newcounter{savedtheorem}
\setcounter{savedtheorem}{\value{theorem}}
\setcounter{theorem}{\getrefnumber{theorem:pathlength_approx}-1}
\begin{theorem}[rephrased] 
    \label{theorem:max_distortion_approx_2_4}
    (1) There is a $\big (\frac{7}{4}, \frac{11}{2}k + 11 \big )$-approximation algorithm for $\GSim{\Ddist{(\infty)}}$
    if both input graphs have pathlength at most $k$, which runs in time $\O(n^3)$. \\
    (2) If only one input graph has pathlength at most $k$, there is a $(5, 12k+12)$-approximation 
    algorithm for $\GSim{\Ddist{(\infty)}}$ running in time $\O(n^3)$.
\end{theorem}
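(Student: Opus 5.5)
Both parts follow one scheme: collapse each graph of bounded pathlength onto a line (a multiset over $\R$), handle the line instances with \cref{lemma:max_distortion_caterpillar_approx} and \cref{proposition:2_approx_real_point_sets}, and absorb the collapse error into the additive term. The multiplicative and additive constants then come from the triangle inequality for $\dis_{(\infty)}$ across compositions of maps.

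\textbf{Step 1: path-like projection.} For a connected graph $G$ of pathlength at most $k$ I would not compute an optimal path decomposition, since that is hard. Instead I would use the standard $\O(n^3)$ (after APSP) construction. Pick a vertex $s$ of a diametral-type pair, found via a double sweep, and take BFS layers $L_i = \{w : \Gdist_G(s,w) = i\}$. Map each $w$ to the integer $f(w) = \Gdist_G(s,w)$.

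The key structural claim is that for graphs of pathlength $k$ with $s$ chosen suitably, every layer has diameter $O(k)$. Consequently
\[
\big|\Gdist_G(w,w') - |f(w)-f(w')|\big| \le c_1 k + c_2
\]
for all $w,w'$. The lower bound $\Gdist_G \ge |f(w)-f(w')|$ holds trivially. For the upper bound, I would walk along a shortest $s$-path and use bounded layer diameter near each endpoint. This turns $G$ into a multiset $M_G$ on $\{0,\dots,\ecc(s)\}$ such that any bijection realises $\Gdist_G$ up to additive $O(k)$.

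\textbf{Step 2: part (1).} Replace $G,H$ by $M_G,M_H$, which changes the optimum by at most an additive $O(k)$ in each direction. Then apply the monotone/antitone alignment, whose guarantee for multisets comes via \cref{corr:multiset_to_set} and \cref{proposition:2_approx_real_point_sets}, giving value at most $\frac74\OPT(M_G,M_H)$ plus a small constant. Let $e$ be the per-graph distortion of the projection. Chaining gives
\[
\dis_{(\infty)}(\pi) \le \tfrac74(\OPT + 2e) + 2e + O(1).
\]
I would tune $e$ to roughly $k+2$ so that the total becomes $\frac{11}{2}k + 11$. This means tracking exactly $e \le k + 2$, for example $k$ from layer diameter plus rounding, and matching the caterpillar lemma's additive $\frac{11}{2}$ at $k=0$. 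This recovers the constant $\frac{11}{2}$ of \cref{lemma:max_distortion_caterpillar_approx} when $k=0$.

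\textbf{Step 3: part (2).} Only $H$ is path-like, and $G$ is arbitrary. I would still project $H$ to $M_H$ with error $e$. For $G$, I would use \cref{alg:graph_to_path} to embed it into a path, i.e. the bag structure $C_{-d},\dots,C_d$, yielding a multiset line $M_G$. The quality of this projection is not additive in general. However, if $G$ is within $\OPT$ of $H$, then $G$ is within $\OPT + e$ of the line metric $M_H$. That should force $G$'s projection along a diametral pair to have distortion at most roughly $2(\OPT+e)$, by a triangle argument on distances to $u,v$: for every $w$, $\Gdist(u,w) + \Gdist(w,v) - d$ is small because this quantity is near zero in a line.

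Combining $2(\OPT+e)$ for $G$ with the line alignment factor $2$ (the simpler Hall--Papadimitriou bound suffices) and the $H$-projection gives something like $\OPT + 2\cdot 2(\OPT+e) = 5\,\OPT + O(e)$. Bookkeeping then gives $12k+12$.

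\textbf{Main obstacle.} The hardest step is this last estimate: showing that for arbitrary $G$ close to a path-like $H$, the projection onto $f(w) = \Gdist(u,w) - \Gdist(v,w)$ distorts by at most $2\OPT + O(k)$. I would first try to transfer the diametral pair of $G$ through the optimal (unknown) bijection to an approximately diametral pair of $H$, where the layer structure from Step 1 applies. I would then argue pointwise via four-point triangle inequalities.
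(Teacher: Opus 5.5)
\textbf{Part (2) has a genuine gap.} You never analyse your algorithm for (2), which projects $G$ via $a$-values along a diametral pair, layers $H$, and then aligns the two lines. Two things are missing.

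First, the estimate that $G$'s projection has additive error about $2(\OPT+e)$ is only conjectured. It is also delicate: all projected values lie in $[-\frac d2,\frac d2]$, so vertices whose images under the near-isometry $h\circ\pi^\ast$ fall beyond those of $u$ or $v$ get squeezed.

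Second, and more importantly, even granting that estimate, ``$\OPT+2\cdot 2(\OPT+e)$'' does not follow. Suppose $G$ is replaced by $M_G$ with per-pair error $\delta_G$ and $H$ by $M_H$ with error $e$. Then an $\alpha$-approximate line alignment $\pi$ only satisfies
\[
  \dis_{(\infty)}(\pi)\le \alpha\bigl(\OPT+\delta_G+e\bigr)+\delta_G+e .
\]
The projection error of $G$ is paid once when bounding the line optimum and again when transferring back. With $\delta_G=2(\OPT+e)$ this gives $8\OPT+9e$ for $\alpha=2$, and still $\frac{29}{4}\OPT+O(e)$ for $\alpha=\frac74$. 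So neither the factor $5$ nor $12k+12$ is justified, and the factor-$2$ Hall--Papadimitriou bound does not suffice on this route.

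The missing idea is to not project $G$ at all. Instead, use Hall and Papadimitriou's algorithm \textsc{Intervals} (their Theorem~6), a $5$-approximation for bijectively embedding an \emph{arbitrary} finite metric into a \emph{given} point set on the line:
\begin{itemize}
  \item For each $x\in V(G)$, send $x$ to a point of the leftmost bag of the multiset $M$ obtained from $H$.
  \item Assign the remaining vertices, in increasing order of $\Gdist_G(x,\cdot)$, to the leftmost free points.
  \item Return the best of these $n$ bijections; each costs $O(n^2)$ once $D_G$ is known.
\end{itemize}
After shifting $M$ apart as in \cref{corr:multiset_to_set} and using integrality, the result is within factor $5$ of the optimum between $G$ and $M$. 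Replacing $H$ by $M$ changes every distortion by at most $2k+2$ (the paper's canonical-caterpillar bound). Hence $5(\OPT+2k+2)+2k+2=5\OPT+12k+12$.

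\textbf{Part (1)} is essentially the paper's route: BFS layering with small layers, collapse to a line, apply the $\frac74$ alignment of \cref{corr:multiset_to_set}, and chain the errors. However, $e$ is fixed by the construction and cannot be ``tuned'', and your symmetric chaining loses a factor of two.

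The clean fact is this. If $s$ lies in an end bag of a path decomposition of length $k$, every BFS layer from $s$ has diameter at most $2k$. Indeed, if all bags of $x$ precede those of $y$, a shortest $s$--$y$ path meets the last bag of $x$ in some $z$. Then $\Gdist(x,z)\le k$, and $\Gdist(z,y)=i-\Gdist(s,z)\le k$. Since that end bag is unknown, you should try all $s$; this is the $O(n^3)$ Dragan--K\"ohler--Leitert step. A double-sweep vertex only yields $4k$ by the same argument.

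With $f=\Gdist_G(s,\cdot)$ the error is one-sided: $0\le\Gdist_G(w,w')-|f(w)-f(w')|\le 2k$. So for a fixed bijection the two graphs' errors combine to $2k$, not $2e=4k$. This gives $\frac74(\OPT+2k)+2k=\frac74\OPT+\frac{11}{2}k$. That is within the claim, and even slightly better than the paper's detour through the canonical caterpillar and \cref{lemma:max_distortion_caterpillar_approx}. Symmetric chaining with $e=2k$ would only give $\frac74\OPT+11k$.
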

\setcounter{theorem}{\value{savedtheorem}}

\noindent Here we use that graphs of bounded pathlength have embeddings of small maximum distortion into 
suitable caterpillars \cite[Lemma 2]{dragan2025graphparameterscoarselyequivalent_pathlength}. Such 
caterpillars can be found in cubic time \cite{Dragan2017_pathlength_approx, dragan2025graphparameterscoarselyequivalent_pathlength}. 
For (2), we rely on the algorithm \textsc{Intervals} of 
Hall and Papadimitriou \cite[Section 3.2]{approximating_the_distortion}.

\section{Beyond Maximum Distortion}
\label{section:beyond_max_distortion}
So far, we have only considered the maximum distortion distance, corresponding to \cref{equation:distortion_distance_template} with 
entrywise $\infty$-norm.
In this section we investigate local and global distortion variants obtained from \cref{equation:distortion_distance_template} with
$\norm{\cdot}_p$ and $\norm{\cdot}_{(p)}$.
This corresponds to replacing the objective function $\Delta_{(\infty)}$ by $\Delta_p$ and $\Delta_{(p)}$, respectively.
For the operator $p$-norms with $1 \leq p \leq \infty$, the objective function is defined as
$\dis_{p}(\pi) \ceq \norm{D_G^\pi - D_H}_p$.
Since $\norm{\cdot}_1$ is the maximum absolute column sum, for $p=1$
the objective simplifies to
\[
    \dis_{1}(\pi) = \max_{v \in V(G)} \sum_{v' \in V(G)} \dis(\pi, v, v'),
\]
that is, the maximum cumulative distortion incurred \enquote{locally} by a single vertex (\ie{} the sum of its distortions to all other vertices).
Instead, we may also consider cumulative distortion of all vertices \enquote{globally} (or more precisely all ordered vertex pairs), 
which corresponds to the entrywise $1$-norm of $D_G^\pi - D_H$. 
More generally, for $1 \leq p < \infty$, we obtain
\begin{equation}
    \label{equation:global_distortion_def}
    \dis_{(p)}(\pi) \ceq \norm{D_G^\pi - D_H}_{(p)} =  \left(\sum_{v, v' \in V(G)} \dis(\pi, v, v')^p\right)^{1/p}.
\end{equation}
In the limit, as $p$ tends to $\infty$, this recovers the maximum distortion objective (compare \cref{equation:max_distortion_objective}).
Writing $\Ddist{p} \ceq \Ddist{\norm{\cdot}_p}$ and $\Ddist{(p)} \ceq \Ddist{\norm{\cdot}_{(p)}}$ as shorthands,
we obtain
\[
    \Ddist{p}(G, H) = \min_{\pi \in \Bij(V(G), V(H))} \dis_p(\pi), 
    \qquad
    \Ddist{(p)}(G, H) = \min_{\pi \in \Bij(V(G), V(H))} \dis_{(p)}(\pi),
\]
for $1 \leq p \leq \infty$. In analogy to \cite{gervens2022graphsimilaritybasedmatrix}, we refer to these as \emph{local} and \emph{global distortion distances}, respectively.
We note that the graph metrics $\Ddist{(p)}$ coincide with \emph{Gromov--Monge distances} \cite{Dumont2025_existence_of_gromov_monge_maps,memoli2022distancedistributionsinverseproblems} on equal-order graphs with uniform probability measure on the vertex sets.

In the following, we investigate the computational hardness of these distances in detail.
As before, our hardness results hold for fairly restricted graph classes.
On the positive side, we can give an additive approximation algorithm 
based on a metric coarsening argument. 

\subsection{Local Distortion}
\label{subsection:local_distortion}
Many results for local distortion can be derived from previous results on the graph metrics $\Adist{p}$ of \cite{gervens2022graphsimilaritybasedmatrix} by making use of the following simple observation, 
establishing a condition under which $\Adist{\norm{\cdot}}$ and $\Ddist{\norm{\cdot}}$ coincide.

\begin{observation}
    \label{observation:diameter_two}
    Let $G, H$ be connected $n$-vertex graphs of diameter at most two. Then, for any norm $\norm{\cdot}$ and for 
    any bijection $\pi \in \Bij(V(G), V(H))$, it holds that
    \begin{equation}
        \label{equation:distance_vs_adjacency} 
        \norm{D_{G}^{\pi} - D_{H}} = \norm{A_G^\pi - A_H}.
    \end{equation}
\end{observation}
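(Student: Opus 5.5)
The plan is to show that the two matrices $D_G^\pi - D_H$ and $A_G^\pi - A_H$ are equal entrywise up to a global sign. Equality of norms then follows from absolute homogeneity with $c=-1$, which holds for any matrix norm by the definition in \cref{section:preliminaries}. Since $\pi$ is a bijection and permuting rows and columns commutes with taking differences, it is enough to compare entries at a pair $(w,w') = (\pi(v),\pi(v'))$ with $v,v' \in V(G)$.

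The key step is an entrywise identity for a connected graph $F$ of diameter at most two. We claim
\[
    D_F(x,y) = 2\cdot\indicator[x\neq y] - A_F(x,y) \quad \text{for all } x,y \in V(F).
\]
We check this by cases.
\begin{itemize}
    \item If $x=y$, both $D_F(x,y)$ and $A_F(x,y)$ are zero.
    \item If $xy\in E(F)$, then $D_F(x,y)=1=2-1$.
    \item If $x\neq y$ and $xy\notin E(F)$, connectivity gives $1\le D_F(x,y)\le 2$ by the diameter bound. Non-adjacency rules out distance one, so $D_F(x,y)=2=2-0$.
\end{itemize}
This is the only place where the diameter hypothesis is used. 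It is also where $n=1$ is harmless, since the matrices are then $1\times 1$ zero matrices.

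Now apply the identity to $G$ at $(v,v')$ and to $H$ at $(\pi(v),\pi(v'))$. Because $\pi$ is a bijection, $v\neq v'$ if and only if $\pi(v)\neq\pi(v')$, so the terms $2\cdot\indicator[\cdot\neq\cdot]$ cancel. We obtain
\[
    (D_G^\pi - D_H)(w,w') = -(A_G^\pi - A_H)(w,w'),
\]
so $D_G^\pi - D_H = -(A_G^\pi - A_H)$ and \cref{equation:distance_vs_adjacency} follows from $\norm{-M}=\norm{M}$.

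There is no real obstacle here. The only point needing care is that the cancellation of the diagonal indicator relies on $\pi$ being injective, which holds because it is a bijection.
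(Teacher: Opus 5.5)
Your proof is correct and follows the paper's reasoning: in graphs of diameter at most two, distinct non-adjacent vertices are at distance exactly two, so every edge mismatch under $\pi$ becomes a distortion of one. Your explicit identity $D_G^\pi - D_H = -(A_G^\pi - A_H)$ is the precise form this needs for an arbitrary norm. The paper's one-line justification only matches the entries in absolute value, which by itself suffices only for norms that depend on absolute values of entries, whereas your global sign flip together with $\norm{-M} = \norm{M}$ covers every norm, including the operator $2$-norm.
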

This is because in graphs of diameter at most two, 
any non-adjacent pair of distinct vertices $\{v, v'\}$ is at distance two.
As a result, each (edge-)mismatch under $\pi$ corresponds precisely to a pairwise distortion of one, \ie{} $\dis(\pi, v, v') = 1$.

To use \cref{observation:diameter_two}, we augment each graph with a so-called \enquote{centre vertex}, thereby reducing its diameter to at most two. Formally, for a graph $G$, let $\widehat{G}$ denote the graph obtained by adding a new centre vertex $c_G$ and connecting it to each $v \in V(G)$ by an edge.
Applying this technique to the proof of \cite[Theorem 16]{gervens2022graphsimilaritybasedmatrix} yields \NP-hardness for the graph metrics $\Ddist{p}$ where $1 \leq p \leq \infty$.
Here, we let $b(p) \ceq \max\big\{ \frac{1}{p}, 1-\frac{1}{p}\big\}$ with $b(\infty) \ceq 1$.

\begin{theorem}
    \label{theorem:local_dist_hardness_hamiltonian}
    For $1 \leq p \leq \infty$ and any $\eps > 0$, unless $\P = \NP$, there is no polynomial-time approximation algorithm for \GSim{\Ddist{p}} with multiplicative error guarantee $3^{b(p)} - \eps$.
\end{theorem}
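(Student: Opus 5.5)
The plan is to reuse the reduction behind \cite[Theorem 16]{gervens2022graphsimilaritybasedmatrix}, which gives the same inapproximability bound $3^{b(p)}-\eps$ for the adjacency metrics $\Adist{p}$, and to lift it to $\Ddist{p}$ with the centre vertex construction together with \cref{observation:diameter_two}.

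\textbf{The underlying reduction.} That proof reduces from \hamcycle{} (on a suitable restricted class, e.g.\ cubic graphs). It compares an input graph $G$ on $n$ vertices with the cycle $C_n$, possibly after a padding or amplification. In the YES case some bijection makes the mismatch matrix $A_G^\pi - A_{C_n}$ have only small local structure, with operator $p$-norm around $1$. In the NO case every bijection leaves a vertex incident to several mismatches, or a mismatch structure such as a path of length three, which forces operator $p$-norm at least about $3^{b(p)}$. The value $b(p)=\max\{1/p,1-1/p\}$ arises from evaluating $\norm{\cdot}_p$ on a star-like or path-like $0/\pm1$ pattern.

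\textbf{Step 1: pass to diameter-two graphs.} Map the instance pair $(G,H)$ to $(\widehat G,\widehat H)$. Both graphs are connected, have equal order, and have diameter at most two. By \cref{observation:diameter_two}, for every bijection $\widehat\pi$ we have $\norm{D_{\widehat G}^{\widehat\pi}-D_{\widehat H}}_p=\norm{A_{\widehat G}^{\widehat\pi}-A_{\widehat H}}_p$. It therefore suffices to show that $\Adist{p}(\widehat G,\widehat H)$ keeps the YES/NO gap of the original instance.

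\textbf{Step 2 (completeness).} Extend the good bijection $\pi$ by $c_G\mapsto c_H$. The mismatch matrix is then exactly $A_G^\pi-A_H$ padded with a zero row and column, so the operator norm is unchanged.

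\textbf{Step 3 (soundness), the main obstacle.} A bijection $\widehat\pi$ need not send $c_G$ to $c_H$. If it does, the mismatch matrix restricts to that of an ordinary bijection $G\to H$, and the original NO-bound applies. If instead $c_G\mapsto w\ne c_H$, then $c_G$ is adjacent to all $n$ vertices while $w$ has degree $\deg_H(w)+1$, which is bounded since the instance graphs have bounded degree. So the row of $c_G$ contains about $n-O(1)$ mismatches, and the operator $p$-norm is at least roughly $(n-O(1))^{b(p)}$, for instance by testing the column against a suitable indicator vector. For large $n$ this far exceeds the NO threshold $3^{b(p)}$.

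Combining the two cases, every bijection in the NO case has cost at least $\min\{3^{b(p)},\Omega(n^{b(p)})\}$ times the YES value. A $(3^{b(p)}-\eps)$-approximation for $\GSim{\Ddist{p}}$ would then decide \hamcycle{}, which is impossible unless $\P=\NP$. I expect to check this degree bound separately for $p=1$, $p=\infty$ and intermediate $p$, using $\norm{A}_p\ge\norm{A e_v}_p/\norm{e_v}_p$ and its dual via the transpose.
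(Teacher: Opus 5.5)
Your proposal matches the paper's proof. It reduces from \hamcycle{} on $3$-regular graphs via the pair $\widehat{C_n},\widehat{G}$ and maps centre to centre in the YES case. In the NO case it rules out $c_{C_n}\mapsto w\neq c_G$ by the at least $n-4$ degree mismatches, giving $(n-4)^{b(p)}\ge 3^{b(p)}$ for $n\ge 8$, and otherwise reuses the original \cite[Theorem 16]{gervens2022graphsimilaritybasedmatrix} argument.

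One thing to tighten: the underlying bounds are exact, not approximate. In the YES case every vertex has exactly one mismatch (its chord in the perfect matching complementary to the Hamiltonian cycle), so the norm is exactly $1$. In the NO case some vertex has at least three mismatches, so the norm is at least $3^{b(p)}$ by the row/column test-vector argument you describe.
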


\noindent The proof is by reduction from \hamcycle{} on $3$-regular graphs, which is \NP-complete \cite{hamcycle_threeregular_np_complete}. Given a $3$-regular graph $G$, 
we compute an instance of $\GSim{\Ddist{p}}$ consisting of $\widehat{C_n}$ and $\widehat{G}$.
If $G$ has a Hamiltonian cycle, we extend the natural bijection between the cycle $C_n$ and $G$ so that each vertex has an edge mismatch of exactly one.
When $G$ does not have a Hamiltonian cycle, 
under every bijection there is a vertex with at least three mismatches.
This observation is preserved for $\widehat{C_n}$ and $\widehat{G}$, and thus directly transfers into the decision gap for $\Ddist{p}(\widehat{C_n},\widehat{G})$ between YES-instances and NO-instances.

For $p=1$, we can get a similar result that holds even when the input graphs are trees, albeit with a 
smaller inapproximability factor.
\begin{theorem}
    \label{theorem:local_dist_mult_hardness_trees}
    For any $\eps > 0$, unless $\P = \NP$, there is no polynomial-time approximation algorithm for $\GSim{\Ddist{1}}$ with multiplicative error guarantee $\frac{7}{6} - \eps$ or additive error guarantee $\O(n^{1-\eps})$, even when both input graphs are trees of diameter four.
\end{theorem}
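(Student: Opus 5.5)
We would prove \cref{theorem:local_dist_mult_hardness_trees} by reduction from \threepartition{}, reusing the idea from the proof of \cref{theorem:max_distortion_hardness}: integer values are encoded by leaf counts. Here the trees are flattened to depth two so that the diameter is four. Since \threepartition{} is strongly \NP-complete, we may assume $A/4 < a_i < A/2$ and that all numbers are polynomially bounded. Given $\mathcal I=(A,a_1,\ldots,a_{3m})$ and a blow-up factor $c$, we build the following two trees.
\begin{itemize}
    \item $G$ has a root $r_G$ with $3m$ children $g_1,\ldots,g_{3m}$, where $g_i$ carries $ca_i$ pendant leaves.
    \item $H$ has a root $r_H$ with $m$ children $h_1,\ldots,h_m$, each carrying $cA$ leaves.
\end{itemize}
The two trees must have the same order, so we attach $2m$ extra leaves to $r_H$. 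This keeps the diameter at four and changes distances only by $O(m)$ per vertex. In such trees the leaf-to-leaf distances are $2$ (same parent) or $4$ (different parents). All pairs involving the $O(m)$ internal vertices contribute at most $O(m)\cdot O(1)$ to any column sum.

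The first key step is to reduce the objective to a clean combinatorial quantity up to an additive error. For a leaf $x$ of $G$, its column sum in $\dis_1(\pi)$ equals $2\cdot|\{x' : x,x' \text{ are co-bundled in exactly one of } G,H\}| \pm O(m)$. Here \enquote{co-bundled} means sharing a parent; the error $O(m)$ accounts for the internal vertices and for leaves mapped onto internal vertices or padding leaves.

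\yescase{} Map the three bundles of each triple into one bundle of $H$. A leaf from bundle $i$ then sees exactly the $c(A-a_i)$ leaves of the other two bundles of its triple wrongly co-bundled. Hence $\dis_1(\pi)\le 2c(A-\min_i a_i)+O(m) < \tfrac32 cA + O(m)$.

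\nocase{} If $\mathcal I$ is a NO-instance, every bijection $\pi$ must split some $G$-bundle across at least two $H$-bundles, or else mix a $G$-bundle with many foreign leaves. We then exhibit a leaf whose column sum is at least $\tfrac74 cA - O(m)$. Note that $\tfrac74/\tfrac32=\tfrac76$. The candidate leaf lies in the smaller fragment of a split bundle, or in an $H$-bundle receiving fragments of at least four $G$-bundles. Such a leaf suffers mismatches from two sources: its own $G$-bundle's leaves mapped elsewhere, and the foreign leaves sharing its $H$-bundle. Combining these two counts with the size constraints $A/4<a_i<A/2$ should give the bound.

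The main obstacle is this soundness counting, namely to prove that the worst leaf always loses a $\tfrac74$ fraction. We would first try a case split.
\begin{itemize}
    \item Some $H$-bundle receives leaves from at least four $G$-bundles. Then a leaf from the bundle with the smallest share already sees many foreign co-bundled leaves.
    \item Some $G$-bundle is split into parts $p\le q$. Then a leaf in the $p$-part pays $2q$ for its separated bundle-mates, plus $2(cA-p)$ for the foreign leaves in its $H$-bundle.
\end{itemize}
If the exact constant resists, we would tune the bundle sizes, for example by adding extra pendant structure at each $h_j$. This is the step where we expect the construction may need adjustment.

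Finally, we choose $c = n_0^{K}$ for a large constant $K$, where $n_0$ is the instance size. The order is then $n=\Theta(cmA)$ and the gap is $\Theta(cA)$, while all error terms are $O(m)$. Consequently, an additive $O(n^{1-\eps})$ approximation would separate YES- from NO-instances, as would a multiplicative $(\tfrac76-\eps)$ approximation.
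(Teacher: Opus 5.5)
Your construction is exactly the paper's, including the leaf blow-up by $c$ for the additive bound. Completeness is essentially right, but you must specify the internal vertices: $r_G\mapsto r_H$, one $g_i$ of the $j$-th triple goes to $h_j$, and the other $g_i$ go to padding leaves. Otherwise the column of an internal vertex can be of order $cA$ or $cmA$ (e.g.\ $g_i\mapsto h_j$ for the wrong $j$, or $r_G\mapsto h_j$). So your remark that internal vertices cost only $O(m)$ per column fails for their own columns.

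\textbf{The gap: soundness is left open, and you suggest the construction may need to change.} Two ingredients are missing.

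First, your reduction to ``co-bundled in exactly one of $G,H$'' up to $\pm O(m)$ is false for a $G$-leaf $x$ whose image is $r_H$, some $h_j$, or a padding leaf. These leaves must be excluded separately, which is easy. If $\pi(x)=r_H$, all $H$-distances from $\pi(x)$ are at most $2$, while $x$ is at distance $4$ from every leaf outside its bundle. If $\pi(x)$ is an $h_j$ or a padding leaf, its distances to bundle leaves of $H$ are odd, while leaf--leaf distances in $G$ are even. Either way the column of $x$ is of order $cmA$, hence larger than $2cA$ for $m\ge 3$ and $c$ large. So all $G$-leaves map onto the $cmA$ bundle leaves of $H$, bijectively by cardinality, and no error terms remain.

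Second, drop your first bullet. On its own, a smallest share in an $H$-bundle hit by four $G$-bundles gives only $2\cdot\frac34 cA=\frac32 cA$, the YES value. Instead argue as in \cref{theorem:max_distortion_hardness}: if no $G$-bundle were split, each $H$-bundle would consist of whole $G$-bundles of total size exactly $cA$, which is a 3-partition. So in a NO-instance some bundle $i$ is split.

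Your second bullet then finishes the proof with no tuning. Take the smallest part of bundle $i$, of size $p$ with $1\le p\le ca_i/2$, and a leaf $x$ in it. It pays $2$ for each of its $ca_i-p$ separated bundle-mates and $2$ for each of the $cA-p$ foreign leaves sharing its $H$-bundle. Its column sum is therefore $2(cA+ca_i-2p)\ge 2cA>\frac74 cA$.

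This is the paper's argument. The paper first shows that roots and depth-one vertices are preserved, by the same counting and parity reasoning. It then bounds the two terms separately, $ca_i-p\ge ca_i/2>cA/8$ and $cA-p\ge cA-ca_i/2>\frac34 cA$, and so obtains only $\frac74 cA$. Adding the terms as in your set-up gives $2cA$, so this construction yields a gap ratio tending to $\frac43$, more than the $\frac76$ the statement needs.
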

This result is proven by another reduction from \threepartition{} which employs a tree construction 
similar to the one of 
\cref{theorem:max_distortion_hardness}.
Again, we use the fact that for a NO-instance every bijection necessarily splits the leaf set corresponding to some $a_i$ in the constructed tree encoding the values $a_1, \ldots, a_{3m}$.
However, since $\dis_1$ considers the sum of distortions relative to some fixed vertex, a split in a leaf set implies that there is a vertex that is separated from at least half of the vertices in that leaf set. This observation, together with the size bounds of $\frac{A}{4} < a_i < \frac{A}{2}$ on the leaf sets for all $i \in [3m]$, yields the multiplicative gap. 

\subsection{Global Distortion}
\label{subsection:global_distortion}
Moving to global distortion, we can use the same construction as for \cref{theorem:local_dist_hardness_hamiltonian} to obtain a (weaker) \NP-hardness result for $\Ddist{(p)}$ for all $1 \leq p < \infty$.
In the YES-case, a bijection aligning the cycle optimally with the Hamiltonian cycle of $G$ yields a total of $\frac{n}{2}$ mismatched edges.
For NO-instances, any bijection not mapping $c_{C_n}$ to $c_G$ necessarily has even more edge mismatches due to degree differences.
Any other bijection also produces strictly more than $\frac{n}{2}$ edge mismatches due to the imperfect cycle alignment.
Since $\widehat{C_n}$ and $\widehat{G}$ have diameter two, \cref{observation:diameter_two} implies $\Ddist{(p)}(\widehat{C_n}, \widehat{G}) \leq n^{1/p}$ if and only if $G$ is Hamiltonian, because each mismatched edge appears twice in the difference of adjacency matrices.
By a modification of this construction which expands vertices into cliques, we 
\enquote{stack}
multiple copies of an instance for \hamcycle{} to obtain a larger additive gap.

\begin{theorem}
    \label{theorem:additive_inapprox_global_distortion}
    For $1 \leq p < \infty$ and any $\eps > 0$,  unless $\P = \NP$, there is no polynomial-time approximation algorithm for $\GSim{\Ddist{(p)}}$ with additive error guarantee $\O\big(n^{\frac{1}{p}-\eps}\big)$.
\end{theorem}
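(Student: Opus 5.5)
We prove the statement by a gap-amplifying reduction from \hamcycle{} on $3$-regular graphs, building on the centre-vertex construction just described. For a $3$-regular graph $G$ on $n_0$ vertices, that construction gives $\Ddist{(p)}(\widehat{C_{n_0}},\widehat{G})^p = n_0$ in the YES-case. In the NO-case every bijection has at least $n_0+2$ mismatched ordered pairs: edge counts are equal, so mismatches come in pairs, and mismatched edges appear twice in the matrix difference. Thus the $p$-th powers differ by an additive constant. To turn this into an additive gap in $\Ddist{(p)}$ itself, we replace every vertex of $C_{n_0}$, $G$ and the centre by a clique of size $t$, joining cliques completely whenever the original vertices are adjacent. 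This is the lexicographic product with $K_t$, and we apply it to both $\widehat{C_{n_0}}$ and $\widehat{G}$. The resulting graphs still have diameter at most two, so \cref{observation:diameter_two} reduces everything to adjacency mismatches. The order is $N = t(n_0+1)$.

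\textbf{Completeness.} If $G$ is Hamiltonian, we map the clique of each cycle vertex onto the clique of its image. Each original mismatched edge then becomes $t^2$ mismatched pairs, so $\dis_{(p)}(\pi)^p = 2 \cdot \frac{n_0}{2} \cdot t^2 = n_0 t^2$.

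\textbf{Soundness.} We must show that every bijection, not only clique-respecting ones, incurs at least $(n_0+c)t^2$ mismatches for a constant $c>0$. This is the main obstacle. The plan is as follows. First, degree counting shows the centre clique has to go (almost entirely) to the centre clique: its vertices have degree $\approx (n_0+1)t$, whereas all others have degree $\approx 4t$, and each displaced centre vertex costs $\Omega(n_0 t)$ mismatches. Next, for the non-centre vertices we use an averaging argument. Pick independently one representative from each clique of $\widehat{C_{n_0}}$. The bijection induces a random "fractional" assignment, and the expected number of mismatches, counted over representatives, is a convex combination related to alignments of $C_{n_0}$ with $G$. We view the bijection as a $t$-scaled doubly stochastic matrix $Q$ between the cycle and $G$. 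The mismatch count is then at least $t^2\bigl(\norm{A_{C}}_{(1)}+\norm{A_G}_{(1)} - 2\langle A_C, Q A_G Q^\top\rangle\bigr)$, ignoring within-clique terms, which can only add mismatches.

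The remaining step is to show that $\max_Q \langle A_C, QA_GQ^\top\rangle \le n_0 - c'$ when $G$ is non-Hamiltonian. This amounts to showing that the quadratic relaxation has no fractional solution reaching the Hamiltonian value. We expect this to be the delicate point. The plan is to decompose $Q$ via Birkhoff into permutations and argue that attaining value $2n_0$ forces, on the support, a cycle cover of $G$ by the image of $C_{n_0}$. If that fails, the fallback is to replace the cycle with a rigid gadget, or to use "stacking" as the paper suggests: take $k$ disjoint copies of the instance joined through the centre. With $k = n^{\Theta(1)}$ copies and $t=1$, disjoint copies are forced to align copy-wise by degree and connectivity arguments, and each non-Hamiltonian copy adds at least $2$ to the $p$-th power.

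\textbf{Conclusion.} Either way, the YES-value $a^{1/p}$ and the NO-value $(a+\Delta)^{1/p}$ differ by $\Theta(\Delta\, a^{1/p-1})$. With the clique scaling, $a = n_0t^2$ and $\Delta = ct^2$, so the gap is $\Theta(t^{2/p} n_0^{1/p-1})$. Choosing $t = n_0^{K}$ for large $K$ makes the gap $\Omega(N^{2/p-\eps'})$ in terms of the order $N$, which is even better than the claimed bound. Conservatively, it is at least $\Omega(N^{1/p-\eps})$, hence it exceeds any additive guarantee $\O(N^{1/p-\eps})$. An approximation algorithm with such a guarantee would therefore decide \hamcycle{}.
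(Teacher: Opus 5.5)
\paragraph{The main route has a gap.} It sits exactly at the step you flag as delicate. In $\widehat{C_{n_0}}[K_t]$ versus $\widehat{G}[K_t]$ nothing forces a clique onto a single clique. Every doubly stochastic $Q\in\DS(V(C_{n_0}),V(G))$ with entries in $\frac1t\mathbb{Z}$ (centre clique sent to centre clique) is realised by some bijection, and its number of mismatched ordered pairs is exactly $t^2 f(Q)$, where
\[
f(Q) \;=\; n_0 + 2\sum_{v}\sum_{v'\in N_{C}[v]}\sum_{w}\sum_{w'\notin N_G[w]} Q(v,w)\,Q(v',w').
\]
The neighbourhoods here are closed. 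Cycle-adjacent vertices sent into the same target clique cost nothing, and neither does a clique split over two adjacent target cliques. So your bound via $\langle A_C, QA_GQ^\top\rangle$ is not a valid lower bound; one needs $A_G+I$.

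Consequently, for $t=n_0^K$ your soundness claim is essentially equivalent to a lower bound on the excess $f(Q)-n_0$ over all of $\DS$. That bound must hold uniformly over non-Hamiltonian cubic $G$, at least inverse-polynomially in $n_0$; in other words, you need a soundness theorem for a Gromov--Wasserstein-type relaxation. Birkhoff plus compactness only shows that the value $n_0$ is not attained for each fixed $G$. It gives nothing quantitative, and you supply no argument. That the outcome would be a gap of order $N^{2/p-\eps'}$, far stronger than the theorem, is itself a warning sign.

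The paper avoids fractional alignments altogether. Besides blowing vertices up into $k$-cliques, it adds a clique on $2n(n+4)k^2$ vertices split into blocks of size $2(n+4)k^2$, with block $i$ fully joined to the clique of $v_i$. If a $k$-clique were split over two target cliques, one of its vertices would lose at least $(n+4)k^2$ neighbours. So each clique is mapped wholesale, and one is back to an honest bijection $C_n\to F$ with the integral gap $n$ versus $n+4$, scaled by $k^2$. The anchor makes $N=\Theta(n^2k^2)$, which is why only $N^{1/p-\eps}$ results.

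\paragraph{The fallback works.} Once argued correctly, it proves the theorem, and more simply than the paper. Degrees do not force copy-wise alignment, but you do not need it. Take $t=1$ and $k$ copies, i.e.\ $\widehat{kC_{n_0}}$ versus $\widehat{kG}$.
\begin{itemize}
\item If the centre is not mapped to the centre, the row-sum (degree) bound gives at least $3kn_0-8\ge k(n_0+4)$ mismatches.
\item Otherwise the number of mismatched ordered pairs is $kn_0+4b$, where $b$ counts cycle edges mapped to non-edges.
\item Moreover $b\ge k$. Each cycle copy either lands inside one copy of $G$, hence bijectively onto it since the orders agree, and then has a broken edge because $G$ is non-Hamiltonian. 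Or it meets two copies and, being connected, has at least two edges running between copies.
\end{itemize}
So YES-instances give $kn_0$ and NO-instances give at least $k(n_0+4)$; the per-copy increment is $4$, not $2$. With $N=kn_0+1$, the additive gap is at least $\frac4p k^{1/p}(n_0+4)^{1/p-1}$, which is $N^{1/p-O(1/K)}$ for $k=n_0^K$. Connectivity of each cycle copy does the job of the paper's anchor clique, with the same exponent $1/p$. You should make this the proof and drop the clique blow-up claim.
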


When considering multiplicative approximation, a result of Arvind \etal{} \cite[Theorem 5]{arvind_approx_graph_isomorphism} combined with our centre vertex augmentation yields that there is no \PTAS{} for \GSim{\Ddist{(p)}} for any $1 \leq p < \infty$, unless $\P = \NP$.
Their problem \textsf{Min-PGI} corresponds to $\GSim{\Adist{(1)}}$ which, after augmenting the graphs with a centre vertex (this forces centres to be mapped onto each other), is equivalent to $\GSim{\Ddist{(p)}}$ by \cref{observation:diameter_two} since all individual distortions are either zero or one.
Another way to see that no \PTAS{} exists is by a connection to $\textsc{TSP}(1, 2)$, the travelling salesperson problem restricted to distances one and two, which is hard to approximate below a fixed constant \cite{engebretsen_tsp12}. Intuitively, this problem can be encoded in $\GSim{\Ddist{(p)}}$ by fixing one input to be the wheel graph ($\widehat{C_n}$) of the corresponding order.

Having shown additive and multiplicative approximation hardness, one may hope to find tractability in restricted cases.
Analogously to \cref{section:max_distortion_path}, we may consider global distortion into the path graph.
However, in stark contrast to \cref{theorem:max_distortion_graph_into_path}, global distortion remains \NP-hard in this case, at least for $p=1$, based on a reduction from \hampath{}. The proof relies on a suitable rearrangement of the distortion sum which is not possible if the individual distortions are raised to powers $p > 1$.
\begin{theorem}
    \label{theorem:global_distortion_into_path_hardness}
    \GSim{\Ddist{(1)}} is \NP-complete even if one of the input graphs is a path. 
\end{theorem}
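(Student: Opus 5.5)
The plan is to reduce from \hampath{}, restricted to connected graphs; this restriction is harmless, since a disconnected graph has no Hamiltonian path. The key rearrangement compares the distortion sum with the difference of Wiener indices. For a connected $n$-vertex graph $G$ and any $\pi \in \Bij(V(G), V(P_n))$, the inequality $|x| \geq x$ gives
\[
    \dis_{(1)}(\pi) = \sum_{v, v' \in V(G)} \big|\Gdist_{P_n}(\pi(v),\pi(v')) - \Gdist_G(v,v')\big| \;\geq\; \sum_{v,v'} \Gdist_{P_n}(\pi(v),\pi(v')) - \sum_{v,v'} \Gdist_G(v,v') .
\]
The right-hand side equals $2\big(W(P_n) - W(G)\big)$, where $W$ denotes the Wiener index, and so does not depend on $\pi$. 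Equality holds if and only if $\Gdist_{P_n}(\pi(v),\pi(v')) \geq \Gdist_G(v,v')$ for every pair, i.e.\ $\pi^{-1}$ is non-contractive as a map from $P_n$ to $G$. This step uses $p=1$ essentially: for $p>1$ the powers destroy the telescoping of the sum.

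Next, characterise when such a non-contractive bijection exists. Write $v_k \ceq \pi^{-1}(k)$. If $\Gdist_G(v_k, v_{k'}) \leq |k-k'|$ for all $k,k'$, then taking $k' = k+1$ gives $\Gdist_G(v_k, v_{k+1}) \leq 1$. Hence consecutive vertices are adjacent and $v_1,\ldots,v_n$ is a Hamiltonian path. Conversely, given a Hamiltonian path $v_1,\ldots,v_n$, the subpath between $v_k$ and $v_{k'}$ witnesses $\Gdist_G(v_k,v_{k'}) \leq |k-k'|$, by the triangle inequality along the path. Hence
\[
    \Ddist{(1)}(G,P_n) \leq 2\big(W(P_n)-W(G)\big) \iff G \text{ has a Hamiltonian path},
\]
and otherwise every bijection has strictly larger cost.

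The reduction therefore maps a connected graph $G$ to the instance $(G, P_n, d)$ with $d \ceq 2\big(W(P_n)-W(G)\big)$. This threshold is computable in polynomial time via BFS, and $W(P_n) = \binom{n+1}{3}$. Membership of the decision version in \NP{} is immediate: guess $\pi$ and evaluate $\dis_{(1)}(\pi)$ in polynomial time. The main obstacle is spotting the Wiener-index lower bound. Once it is in place, the only delicate point is the equality characterisation, and the argument above handles it directly.
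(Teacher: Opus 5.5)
Your proof is correct. It uses the same rearrangement as the paper, but in a more general form.

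The paper first reduces \hampath{} to graphs of diameter two by adding a universal vertex and a pendant leaf. For such $G$ it then derives the exact identity $\dis_{(1)}(\pi) = f(G) + 4\,|M(\pi)|$. Here $M(\pi)$ is the set of path edges whose preimages are non-edges of $G$, and $f(G) = 2\big(\frac{n(n^2-1)}{6} + m - n(n-1)\big)$. This $f(G)$ is exactly your $2\big(W(P_n)-W(G)\big)$, since $W(G) = n(n-1) - m$ when the diameter is two. The diameter-two restriction is what lets the paper evaluate each term explicitly, as $\ell_\pi(v,v') - 1$ for edges and $\ell_\pi(v,v') - 2 + 2\cdot\indicator[\ell_\pi(v,v') = 1]$ for non-edges.

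Your identity $|x| = x + 2\max\{0,-x\}$ gives the correction term $4\sum_{\{v,v'\}}\max\{0,\Gdist_G(v,v') - \Gdist_{P_n}(\pi(v),\pi(v'))\}$ for arbitrary connected $G$. In the diameter-two case this collapses to $4\,|M(\pi)|$. Your version therefore needs no preliminary reduction and gives hardness directly for all connected $G$; integrality of distances still gives a gap of at least $4$. The paper's version gives an explicit closed formula for every bijection, which the reduction does not actually need.

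One wording slip: the equality condition $\Gdist_{P_n}(\pi(v),\pi(v')) \geq \Gdist_G(v,v')$ says that $\pi$ is non-contractive, equivalently that $\pi^{-1}$ is $1$-Lipschitz. It does not say that $\pi^{-1}$ is non-contractive. The inequality you actually use in the next step is the correct one, so the argument is unaffected.
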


\noindent Analogously to the relation between \GSim{\Ddist{(\infty)}} with one input graph fixed to be a path and \bandwidth{}, \GSim{\Ddist{(1)}} with the same restriction is closely related to the \textsc{Minimum Linear Arrangement} problem (in short: MLA), in which 
the vertices of a graph $G$ are assigned distinct labels from $1,\ldots,|V(G)|$ so as to minimise the sum of absolute differences between the labels of adjacent vertices. Equivalently, this can be viewed as a bijective embedding of the graph $G$ into a path that minimises the total stretch of the edges.
MLA is known to be \NP-complete \cite{GareyJohnsonStockmeyer_MLA_np_complete}, but admits polynomial-time algorithms if the input graph is restricted to trees \cite{chung_linear_arrangement_trees} and can be solved in linear time under further restrictions \cite{alemany_puig_projective_linearizations_MLA}.
In contrast, for $\GSim{\Ddist{(1)}}$, even the complexity of mapping a tree into a path remains open.

On the positive side, we show that there is an additive $\eps n^3$-approximation for $\GSim{\Ddist{(1)}}$ based on 
coarsening the metric structure of the input graphs. Closing the gap between this approximation
guarantee and the inapproximability of \cref{theorem:additive_inapprox_global_distortion} remains open.

\begin{theorem}
    \label{theorem:additive_approx_global_distortion}
    There is an algorithm for $\GSim{\Ddist{(1)}}$ that, for every $\eps > 0$, has additive error guarantee $\eps n^3$ and runs in time $n^{\O(1/\eps^2)}$.
\end{theorem}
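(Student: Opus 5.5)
We follow the coarsening idea mentioned after \cref{theorem:gh_inapprox_result_julius}. Set $t \ceq \lceil 1/\eps' \rceil$ for a suitable $\eps' = \Theta(\eps)$. In each connected $n$-vertex graph $G$, we compute a maximal set $S_G$ of vertices with pairwise distance greater than $\eps' n$, for instance greedily. We then partition $V(G)$ into Voronoi cells $V_s$, $s \in S_G$, by assigning each vertex to a nearest centre. Every vertex then lies within distance $\eps' n$ of its centre.

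The first point is that $|S_G| = \O(1/\eps')$. Balls of radius $\eps' n/2$ around the centres are disjoint. Since $G$ is connected and, without loss of generality, $\diam(G) > \eps' n$ (otherwise there is only one cell), each such ball contains at least about $\eps' n/2$ vertices, because a shortest path leaving the ball supplies them. Replacing each distance $\Gdist_G(v,v')$ by the centre distance $\Gdist_G(s(v),s(v'))$ changes every entry of $D_G$ by at most $2\eps' n$. Doing the same in $H$ and applying the triangle inequality for $\norm{\cdot}_{(1)}$, we get that for every bijection $\pi$, the objective $\dis_{(1)}(\pi)$ differs from the coarsened objective by at most $4\eps' n \cdot n^2$.

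Next we compute the coarsened objective exactly. Under the coarsened metrics it depends only on the $k \times k'$ integer \emph{flow matrix} $F(s,s') \ceq |\{v \in V_s : \pi(v) \in V'_{s'}\}|$, where $k, k' = \O(1/\eps')$. Indeed, it equals $\sum_{s,s',r,r'} F(s,s')F(r,r')\,\abs{\Gdist_G(s,r) - \Gdist_H(s',r')}$. The realisable $F$ are exactly the nonnegative integer matrices whose row and column sums match the cell sizes, and any such $F$ is realised by some bijection. Enumerating all of them takes $n^{kk'} = n^{\O(1/\eps^2)}$ time. For each we evaluate the quadratic form in $\O(k^2k'^2)$ time, build a corresponding bijection, and output the best one.

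The returned bijection has a true cost at most its coarsened cost plus $4\eps' n^3$. The optimum has coarsened cost at most $\OPT + 4\eps' n^3$. Hence the additive error is at most $8\eps' n^3$, and choosing $\eps' = \eps/8$ gives the claimed guarantee. We also need to confirm the bound on the number of centres, so that the exponent is really $\O(1/\eps^2)$ and not worse. This is the main point to check. If the ball-counting argument is delicate at the boundary, we can fall back on taking the centres along a BFS layering from an arbitrary vertex, grouping layers into $\O(1/\eps')$ blocks of $\eps' n$ consecutive layers each, and using connected components inside blocks. That route, however, requires bounding the number of components of large size, so we would try the packing argument first.
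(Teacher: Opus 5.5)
Your proposal is correct and follows essentially the same route as the paper: a greedily computed maximal set of centres with pairwise distance greater than $\eps' n$, the packing bound from disjoint balls of radius $\eps' n/2$, and a per-entry distance error of $2\eps' n$ in each graph. As in the paper, you then enumerate the integer class-to-class count matrices with prescribed row and column sums, and choosing $\eps' = \eps/8$ gives total additive error $8\eps' n^3 = \eps n^3$. The packing step you flagged goes through exactly as you sketched: a shortest path from a centre to any other centre supplies at least $\lfloor \eps' n/2\rfloor + 1$ vertices of its ball. Hence there are at most $2/\eps'$ centres, and the BFS-layering fallback is not needed.
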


Finally, we note that whenever $p$ is not an integer, 
we do not even know whether $\GSim{\Ddist{(p)}}$ is in \NP.
This is because for $p = \frac{k}{2}$ for some odd $k$ and a given bijection $\pi \in \Bij(V(G), V(H))$ and a bound $d \in \N$, one has to determine if 
\[
    \sum_{v, v' \in V(G)} \sqrt{\big|D_G(v, v') - D_H(\pi(v), \pi(v'))\big|^k} \leq \sqrt{d^k}.
\]
This is somewhat
reminiscent of the
\textsc{Sum of Square Roots}
problem \cite{sum_of_square_roots,goemans_sum_of_square_roots_mentioned} in which one is given nonnegative integers 
$b_1, \ldots, b_n, \ell$
and has to decide whether $\sum_{i = 1}^n \sqrt{b_i} \leq \ell$ holds.
This problem is known to be in \PSPACE, with Allender \etal{} showing containment in the fourth level of the counting hierarchy \cite{allender_et_al_counting_hierarchy}.
However, even membership in \NP{} remains open.

\section{Hardness of Gromov--Wasserstein}
\label{section:gw}
Closely related to our global distortion distances $\Ddist{(p)}$ is the well-known family of Gromov--Wasserstein distances, which was introduced by \memoli{} in 2007 \cite[Section 5]{memoli_gw_original}.
Formally, for $1 \leq p < \infty$,
the \emph{$p$-Gromov--Wasserstein distance} between two weighted compact metric spaces $\X = (X, d_{\X}, \mu_\X)$ and $\Y = (Y, d_{\Y}, \mu_\Y)$ with $\mu_\X(X) = \mu_\Y(Y)$ is defined as
\begin{equation}
    \label{equation:gromov_wasserstein}
    \Mdist{(p)}^{\GW}(\X, \Y) \ceq \frac{1}{2} \inf_{\lambda \in \mathcal C(\mu_\X, \mu_\Y)} \left(
    \int_{X \times Y} \int_{X \times Y} 
    \left| d_\X(x, x') - d_\Y(y, y') \right|^p d\lambda(x, y)d\lambda(x', y') 
    \right)^{1/p}
\end{equation}
where $\mathcal{C}(\mu_{\X},\mu_{\Y})$ denotes the set of couplings
between $\mu_{\X}$ and $\mu_{\Y}$.
A \emph{coupling} between $\mu_\X$ and $\mu_\Y$ is a measure $\lambda$ on $X \times Y$ such that $\lambda(X_0 \times Y) = \mu_{\mathcal X}(X_0)$ and 
$\lambda(X \times Y_0) = \mu_{\mathcal Y}(Y_0)$
for all Borel sets $X_0 \subseteq X$ and $Y_0 \subseteq Y$.
\memoli{} also defined a version of this distance for $p = \infty$.
Due to space constraints, we defer the treatment of $\Mdist{(\infty)}^{\GW}$ to \cref{appendix:gw_hardness_infty}.

To the best of our knowledge, there has not yet been a formal proof of the \NP-hardness of computing the Gromov--Wasserstein distance, although it has often been claimed, cited, or otherwise implied in the literature. For an extended discussion, see \cref{appendix:gw_literature}.
In the remainder of this section, we 
develop our \NP-hardness proof based on the close connection between $\Mdist{(p)}^{\GW}$ and $\Ddist{(p)}$, which we outline below.

\paragraph*{From Global Distortion to Gromov--Wasserstein}
\label{subsection:gw_derivation}
We now show that a natural fractional relaxation of $\Ddist{(p)}$, where instead of mapping vertices one-to-one we allow fractional assignments between vertices, can be seen as a special case of the $p$-Gromov--Wasserstein distance.
Specifically, for graphs $G, H$ with vertex sets $V, W$, we use that each bijection $\pi \in \Bij(V, W)$ can be expressed as a permutation matrix $P \in \{0, 1\}^{V \times W}$ with $P(v, w) = 1$ if and only if $\pi(v) = w$.
Subsequently, the set of permutation matrices $\Perm(V, W)$ can be relaxed to the set of doubly stochastic matrices $\DS(V, W)$.
Formally, 
for $1 \leq p < \infty$ we have

\begin{align*}
    \Ddist{(p)}(G, H) &= \mathmakebox[55pt][c]{\min_{\pi \in \Bij(V, W)}} \dis_{(p)}(\pi) \\ 
    &= \mathmakebox[55pt][c]{\min_{\pi \in \Bij(V, W)}} \left( \sum_{v, v' \in V} |D_G(v, v') - D_H(\pi(v), \pi(v'))|^p \right)^{1/p} \\
    &= \mathmakebox[55pt][c]{\min_{P \in \Perm(V, W)}} \left( \sum_{\substack{v, v' \in V\\w, w' \in W}} |D_G(v, v') - D_H(w, w')|^pP(v, w)P(v', w') \right)^{1/p} \\
    &\geq \mathmakebox[55pt][c]{\min_{Q \in \DS(V, W)}} 
    \underbrace{\left( \sum_{\substack{v, v' \in V\\w, w' \in W}} |D_G(v, v') - D_H(w, w')|^pQ(v, w)Q(v', w') \right)^{1/p}}_{\hspace{0pt}\scalebox{1}{$\eqqcolon \dis_{(p)}(Q)$}}\label{equation:minimiser_Q} \\
    &\eqqcolon \Ddist{(p)}^{\GW}(G, H)
\end{align*}
where we denote this fractional relaxation as $\Ddist{(p)}^{\GW}$ and $\dis_{(p)}(Q)$ is the natural generalisation of $\dis_{(p)}(\pi)$ to doubly stochastic matrices.
Note that by Birkhoff's Theorem \cite{birkhoff_theorem}, the doubly stochastic matrices are precisely the convex combinations of permutation matrices.

The superscript $\GW$ in $\Ddist{(p)}^{\GW}$ is chosen due to the close relation with the $p$-Gromov--Wasserstein distance.
This is because, modulo a scaling factor of $2n^{2/p}$, $\Ddist{(p)}^{\GW}(G, H)$ can be seen as an instance of $\Mdist{(p)}^{\GW}(\X, \Y)$ where $\X \ceq (V, \Gdist_G, \mu_G)$ and $\Y \ceq (W, \Gdist_H, \mu_H)$ are the weighted metric spaces associated with the graphs $G$ and $H$ where $\mu_G$ and $\mu_H$ are the uniform distributions on the respective vertex sets.
The discrepancy of $2n^{2/p}$ comes from the factor $\frac{1}{2}$ in \cref{equation:gromov_wasserstein}, the $p$-th root, and the normalisation of each coupling $\lambda \in \mathcal C(\mu_G,\mu_H)$ as a $V \times W$ matrix with row and column sums of $\frac{1}{n}$, rather than one for doubly stochastic matrices.
We also define $\Ddist{(\infty)}^\GW$ and discuss its relation to $\Mdist{(\infty)}^{\GW}$ in \cref{appendix:gw_hardness_infty}.
While $\Ddist{(\infty)}^{\GW}$ actually coincides with $\Ddist{(\infty)}$ (compare \cref{proposition:gw_max_distortion_relation}), this is not the case if $p \neq \infty$. 
Indeed, for $1 \leq p < \infty$, there exist graphs $G, H$ with $\Ddist{(p)}^{\GW}(G, H) < \Ddist{(p)}(G, H)$.
For instance, augmenting \cite[Example 3.6]{grohe2025thoughtsgraphsimilarity} with a centre vertex,
and attaching additional leaves to force bijections to preserve centre vertices, gives such an example by combining \cite[Theorem 3.10]{grohe2025thoughtsgraphsimilarity} and \cref{observation:diameter_two}.

\paragraph*{Proof of the Hardness of Gromov--Wasserstein}
The following theorem, together with the above relation between $\Ddist{(p)}^{\GW}$ and $\Mdist{(p)}^{\GW}$, implies the \NP-hardness of the $p$-Gromov--Wasserstein distance (\cref{theorem:gw_hard}).

\begin{theorem}
    \label{theorem:glob_gw_dist}
    For $1 \leq p \leq \infty$, $\GSim{\Ddist{(p)}^{\GW}}$ is \NP-hard. 
    Moreover, the problem is \NP-complete if $p$ is an integer or $p = \infty$. 
\end{theorem}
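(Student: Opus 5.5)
We reduce from \hamcycle{} on $3$-regular graphs, reusing the centre vertex construction from \cref{theorem:local_dist_hardness_hamiltonian}. The key point is that in the diameter-two setting the fractional relaxation gains nothing at the decision threshold. Given a $3$-regular graph $G'$ on $n$ vertices, the instance is $\widehat{C_n}$ (the wheel) and $\widehat{G'}$, both of diameter at most two. Every distortion is then $0$ or $1$, so $|D(v,v')-D(w,w')|^p = |A(v,v')-A(w,w')|$ for every $p$, and the exponent $p$ plays no role.

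Using $|a-b| = a+b-2ab$ for $a,b\in\{0,1\}$ and the fact that $Q$ has unit row and column sums, we would expand, for $Q\in\DS(V,W)$ with $V = V(\widehat{C_n})$, $W = V(\widehat{G'})$:
\[
    \dis_{(p)}(Q)^p = 2|E(\widehat{C_n})| + 2|E(\widehat{G'})| - 2\sum_{\substack{v,v'\in V\\ w,w'\in W}} A_{\widehat{C_n}}(v,v')A_{\widehat{G'}}(w,w')Q(v,w)Q(v',w').
\]
Minimising therefore means maximising the cross term $T(Q)$. For each ordered edge $(v,v')$ of the wheel, the inner sum $\sum_{w,w'} Q(v,w)Q(v',w')A_{\widehat{G'}}(w,w')$ is at most $\sum_{w,w'} Q(v,w)Q(v',w') = 1$. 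Hence $T(Q)\le 2|E(\widehat{C_n})| = 4n$.

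\textbf{Completeness.} If $G'$ has a Hamiltonian cycle $H$, the permutation that maps the wheel's centre to $c_{G'}$ and its rim onto $H$ attains $T = 4n$. The value is then $2(2n)+2(n+\frac{3n}{2})-8n = n$.

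\textbf{Soundness.} Suppose some $Q$ attains $T(Q)=4n$. Then for every edge $vv'$ of the wheel, $Q(v,w)>0$ and $Q(v',w')>0$ imply $ww'\in E(\widehat{G'})$. By Birkhoff's Theorem, the support of $Q$ contains a permutation matrix, i.e.\ a bijection $\sigma$ that maps every wheel edge to an edge of $\widehat{G'}$. The wheel's centre has degree $n$, and $c_{G'}$ is the only vertex of $\widehat{G'}$ of degree at least $n$ (all others have degree $4<n$ for $n\ge 5$). So $\sigma$ fixes centres and maps the rim $C_n$ edge-preservingly and bijectively onto $V(G')$, giving a Hamiltonian cycle.

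Thus $\Ddist{(p)}^{\GW}(\widehat{C_n},\widehat{G'})\le n^{1/p}$ if and only if $G'$ is Hamiltonian, which gives \NP-hardness for all $1\le p<\infty$. For $p=\infty$ we would instead invoke \cref{proposition:gw_max_distortion_relation} ($\Ddist{(\infty)}^{\GW}=\Ddist{(\infty)}$) together with \cref{theorem:max_distortion_hardness}.

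\textbf{Membership in \NP.} This is the step I expect to be the main obstacle. For integer $p$, the coefficients $|D_G(v,v')-D_H(w,w')|^p$ are integers, and $\dis_{(p)}(Q)^p$ is a (nonconvex) quadratic polynomial in the entries of $Q$ over the Birkhoff polytope. Deciding whether its minimum is at most $d^p$ is thus a quadratic programming feasibility question. I would appeal to Vavasis' theorem that quadratic programming is in \NP, since optimal solutions with polynomial-size rational encodings exist. Nonconvexity means we cannot simply argue that the optimum lies at a vertex, so that theorem carries the argument. For $p=\infty$, membership follows from $\Ddist{(\infty)}^{\GW}=\Ddist{(\infty)}$ by guessing a bijection. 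For non-integer $p$ the coefficients may be irrational, so we claim hardness only.
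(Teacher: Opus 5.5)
Your reduction works after one correction, and it takes a genuinely different route from the paper's.

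\textbf{The false step.} The identity $|D(v,v')-D(w,w')|^p=|A(v,v')-A(w,w')|$ holds only along a bijection, and so does your displayed equality for $\dis_{(p)}(Q)^p$. \cref{observation:diameter_two} is a statement about $\pi$, not about $Q$. In the relaxed sum the four indices are independent, so for a non-permutation $Q$ there are terms of positive weight with $v=v'$ and $w\neq w'$ (or $v\neq v'$ and $w=w'$). When the distinct pair is non-adjacent, its distortion is $2$ and it contributes $2^p$, where your formula has $0$. Checking the cases, the termwise inequality $|D(v,v')-D(w,w')|^p\ge |A(v,v')-A(w,w')|$ does hold, so
\[
    \dis_{(p)}(Q)^p \;\ge\; 9n-2T(Q)\qquad\text{for all } Q\in\DS(V,W),
\]
with equality for permutation matrices. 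This is exactly the direction you need. Completeness uses a permutation matrix. Soundness only uses $\dis_{(p)}(Q)^p\le n\Rightarrow T(Q)\ge 4n$, after which your support, Birkhoff and degree argument goes through unchanged; the minimum is attained by compactness of $\DS(V,W)$.

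\textbf{The threshold.} Your NO-case conclusion is only the strict inequality $\min_Q\dis_{(p)}(Q)^p>n$, while $n^{1/p}$ is generally irrational. To output a rational threshold of polynomial size you need an explicit margin. It is easy to get: write $Q=\sum_{k=1}^K\lambda_kP_k$ with $K\le n^2+1$ (Carath\'eodory) and discard the cross terms. Since each $P_k$ maps some wheel edge to a non-edge, this gives $4n-T(Q)\ge 2\sum_k\lambda_k^2\ge 2/K$, an inverse-polynomial gap.

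\textbf{Comparison with the paper.} The paper reduces from \threepartition{} and anchors the layers with a large clique attached to a universal vertex. It then tracks fractional mass in three steps:
\begin{itemize}
\item it bounds the inter-layer mass $\eta$;
\item it extracts the unavoidable degree-difference distortion $6m-3\eta$ at the path endpoints;
\item it gains an extra $\frac14-2A\eta$ because some path must be split across cycles.
\end{itemize}
Your argument instead uses that the YES-value $n$ is a lower bound for \emph{every} doubly stochastic $Q$. So the relaxation is tight at the threshold, and attaining the bound fractionally forces a support condition that Birkhoff turns into an edge-preserving bijection. In effect you show that the centre-vertex reduction the paper uses for $\Ddist{(p)}$ in \cref{subsection:global_distortion} already survives the passage to $\Ddist{(p)}^{\GW}$. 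Your route is much shorter, and it gives hardness for diameter-two inputs with one side fixed to be a wheel. The paper's heavier analysis buys a margin bounded below by a constant without relying on exact tightness, so its rational threshold needs no further argument.

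Your treatment of \NP{} membership (Vavasis for integer $p$, bijection certificates for $p=\infty$) matches the paper, as does your hardness argument for $p=\infty$.
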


\begin{proof}[Proof Sketch]
    \NP-hardness is shown by reduction from \threepartition{} and inspired by \cite[Theorem 8]{grohe_rattan_woeginger_edit_distance}.
    An instance $\mathcal I = (A, a_1, \ldots, a_{3m})$ is encoded as a pair of graphs as follows: The graph $G$ contains for each $i \in [3m]$ a path on $a_i$ vertices, whereas $H$ contains $m$ disjoint cycles on $A$ vertices each.
    In a second step, we add a centre vertex to $G$ and $H$ each, to obtain connected graphs.
    Finally, we add to both graphs a large clique with an edge between one of its vertices and the centre vertex.
    We refer to each of the three vertex sets added in the above three steps as a separate \emph{layer} going forward.
    
    For a YES-instance, the witnessing partition mapping yields a natural bijection mapping sets of three paths contiguously into a single cycle, incurring distortion only at the $6m$ endvertices of the paths.

    For a NO-instance, even though the construction is simple, its analysis is much more involved, since we need to carefully track \enquote{fractional distortions}.
    We proceed in three steps.
    We first bound the amount of \enquote{mass} that is mapped between distinct layers of the constructed graphs (\cref{subclaim:bounding_inter_layer_mapping}). Here, the large clique serves as an \enquote{anchor} and forces any mass that is mapped between different layers to incur a large distortion due to a change of distance to almost all the vertices of the clique layer.
    We then observe that the distortion incurred by the $6m$ endvertices of the paths in $G$ is essentially unavoidable due to the degree differences between $G$ and $H$ together with a mass-preservation argument (\cref{subclaim:degree_diff_distortion}).
    In the final step, we argue that at least one path of $G$ must be split up fractionally across multiple cycles of $H$ and determine a lower bound on this additional distortion (\cref{subclaim:d2Q}).
    Combining the above steps, we can find a total distortion that is strictly larger than for the YES-instances.
\end{proof}

\section{Concluding Remarks}
In this paper, we studied various distortion-based graph metrics with a focus on their computational 
complexity and their relations to each other and to the Gromov--Hausdorff and Gromov--Wasserstein distances.
Thereby, our paper connects graph similarity research with the theory and application of metric spaces.
Our findings feature a broad range of algorithmic and hardness results, including, to the best of our knowledge,
the first formal \NP-hardness proof for Gromov--Wasserstein distances.

Many interesting questions remain open, most notably whether inapproximability can be shown for Gromov--Wasserstein or be improved for the other metrics we considered.
Moreover, the precise complexity of maximum distortion on caterpillars remains to be settled.
Similarly, an exact algorithm for maximum distortion between subsets of the real line is still unknown.


Beyond exact algorithms, approximation algorithms for more general (albeit possibly still restricted) settings would be 
desirable. For instance, we do not know of any approximation algorithm for trees in any of the graph metrics studied 
in this paper, even when the error guarantee may grow sublinearly in the input size. 
The only positive result in this direction is the multiplicative $\O\big(\!\sqrt{n}\big)$-approximation algorithm for 
$\Mdist{\GH}$ on metric trees (with unit edge lengths) by \cite{agarwal_et_al_GH_NP_hard_metric_trees}.
However, their approach is based on merge trees (which are also used in \cite{farahbakhshtouli_fpt_gh_14}) and does not 
carry over to the bijective setting, since it exploits that correspondences, 
like the interleavings between merge trees, may match sets of different sizes.

Given the wide range of hardness results obtained in this paper, 
the search for \FPT{} algorithms for graph distortion metrics seems worthwhile. 
However, we note that our results rule out many standard graph parameters for this purpose,
for example pathwidth and maximum degree for trees (\cref{theorem:max_distortion_hardness}).

    \paragraph*{AI Disclosure}
     We used AI only to prove a minor result (\cref{theorem:global_distortion_into_path_hardness}),
    to close a small gap in the analysis leading to \cref{proposition:2_approx_real_point_sets} 
    (\cref{claim:aux_mon_and_anti}), to find the matching lower-bound example 
    (\cref{example:caterpillar-74-example}), and to find mistakes in our draft.
    For \cref{theorem:global_distortion_into_path_hardness}, a first draft of the proof was generated by ChatGPT 5.5.
    Regarding \cref{proposition:2_approx_real_point_sets}, we originally set out to prove a factor of $\frac{3}{2}$ 
    as is suggested by both the example due to Hall and Papadimitriou 
    \cite[Figure 2]{approximating_the_distortion} and our \cref{example:caterpillar_spaces_32}.
    Guided by structural insights from our proof attempts, ChatGPT 6 Astra discovered 
    \cref{example:caterpillar-74-example} and contributed \cref{claim:aux_mon_and_anti}, which, combined with 
    our insights, yields the optimal factor $\frac{7}{4}$. 
    
    We carefully verified all AI-assisted findings, and the final versions of them were written by us.
    With the exception of the above, all proofs are entirely our own and fully written by us.
    We take full responsibility for the correctness of all results. 

\bibliography{references}

\appendix
\section{Hardness of Gromov--Wasserstein in the Literature}
\label{appendix:gw_literature}
In the literature, computing the Gromov--Wasserstein distances $\Mdist{(p)}^{\GW}$ is often said to be \NP-hard 
\cite{beier_wrongful_gw_hardness,
caratelli2026_wrongful_gw_hardness_by_non_convexity_kravtsova, 
heile2026_wrongful_gw_hardness_by_QAP_and_nonconvex,
houry_26_wrongful_gw_hardness_by_QAP,
le2021_wrongful_gw_hardness_by_non_convexity,
martin2026_wrongful_gw_hardness_by_QAP,
nakashima_2025_wrongful_gw_hardness_by_nonconvex,
peyre2020_gw_hardness_by_non_convex_qap_inprecise,
scetbon_wrongful_gw_hardness_by_QAP,
seyedi2025_wrongful_gw_hardness_by_QAP,
seyedi2026_wrongful_gw_claims,
shahbazi2026_wrongful_GW_hardness_by_nonconvex,
tran_sum_of_squares_wrongful_np_hardness_of_GW_by_kravtsova,
zhang_wrongful_hardness_of_gw_by_QAP}
or otherwise computationally intractable 
\cite{chen_2024_wrongful_gw_intractable,
nakashima_2025_wrongful_gw_hardness_by_nonconvex,
seyedi2026_wrongful_gw_intractability,
seyedi2025_wrongful_gw_hardness_by_QAP,
seyedi2026_wrongful_gw_claims}.
Many authors 
\cite{caratelli2026_wrongful_gw_hardness_by_non_convexity_kravtsova,
chen_2024_wrongful_gw_intractable,
heile2026_wrongful_gw_hardness_by_QAP_and_nonconvex,
martin2026_wrongful_gw_hardness_by_QAP,
seyedi2026_wrongful_gw_claims,
tran_sum_of_squares_wrongful_np_hardness_of_GW_by_kravtsova}
attribute this to 
\cite[Theorem 4]{kravtsova_2025_wrongful_np_hardness_of_gw} which states the \NP-hardness with an (incorrect) proof; however, this part of Theorem 4 has been removed in an updated version 
\cite{kravtsova_2026_note_computational_complexity_gw}.

The two main lines of reasoning for the \NP-hardness of Gromov--Wasserstein we found are (1) it being an instance of the notoriously hard Quadratic Assignment Problem (QAP) 
\cite{heile2026_wrongful_gw_hardness_by_QAP_and_nonconvex,
houry_26_wrongful_gw_hardness_by_QAP,
martin2026_wrongful_gw_hardness_by_QAP,
seyedi2025_wrongful_gw_hardness_by_QAP,
seyedi2026_wrongful_gw_claims,
zhang_wrongful_hardness_of_gw_by_QAP,
zhang_2023_wrongful_gw_hardness_by_QAP}
and (2) its inherently non-convex quadratic nature 
\cite{caratelli2026_wrongful_gw_hardness_by_non_convexity_kravtsova,
heile2026_wrongful_gw_hardness_by_QAP_and_nonconvex,
le2021_wrongful_gw_hardness_by_non_convexity,
scetbon_wrongful_gw_hardness_by_QAP,
seyedi2026_wrongful_gw_intractability, 
shahbazi2026_wrongful_GW_hardness_by_nonconvex}.
However, we stress that neither of these lines of reasoning is sufficient to prove hardness of $\Mdist{(p)}^{\GW}$.
The main flaw in both arguments is that hardness of $\Mdist{(p)}^{\GW}$ is claimed by showing it to be an instance of an \NP-hard problem. 
While both QAP and non-convex quadratic programming are \NP-hard in general 
\cite{quadratic_assignment_np_completeness, pardalos_quadratic_programming_one_negative_eigenvalue}, 
the implicit reductions the arguments (1) and (2) rely on go in the wrong direction.
Instead of reducing Gromov--Wasserstein to an \NP-hard problem, one would have to reduce an \NP-hard problem to Gromov--Wasserstein to prove hardness.

In the case of argument (1), there is an additional inaccuracy:
While the bijective version of Gromov--Wasserstein (which can be seen as a special case of Gromov--Monge \cite{memoli2022distancedistributionsinverseproblems,Dumont2025_existence_of_gromov_monge_maps}) has a natural formulation as a QAP, for the standard Gromov--Wasserstein distances such a formulation is not known.
This discrepancy even appears in work where (1) is not used to argue \NP-hardness \cite{peyre2020_gw_hardness_by_non_convex_qap_inprecise,solomon_entropic_gw}.
We suspect this misconception to be (at least in part) due to the conflation (at least in notation or name) of the bijective version with the standard formulation of Gromov--Wasserstein in the literature.
For example, Ryner \etal{} \cite{ryner_gw_conflation} refer to the bijective variant as Gromov--Wasserstein discrepancy, whereas Peyré \etal{} \cite{peyre_gromov_wasserstein_discrepancy_coupling} refer to the continuous setting by the same name.
Meanwhile, the term \enquote{discrete Gromov--Wasserstein} is used for the continuous setting
where the weight measures are discrete
in \cite{tran_sum_of_squares_wrongful_np_hardness_of_GW_by_kravtsova} while \enquote{discrete Gromov--Wasserstein discrepancy} refers to the bijective variant in \cite{ryner_gw_conflation}.

In our opinion, calling the bijective version \enquote{Gromov--Wasserstein} might be confusing since the term \enquote{Wasserstein} comes from the Wasserstein distance which is based on continuous transport plans. 
Similarly, the term \enquote{discrete} does not necessarily imply that transport plans are bijective. 
This version is better understood as a Gromov--Monge-type problem as mentioned in \cref{section:beyond_max_distortion}.

Finally, we note that there are works in which the authors do not explicitly claim \NP-hardness; however, the wording might leave the reader with this impression 
\cite{takeda_imprecise_language_gw_hard,
zhang_2023_wrongful_gw_hardness_by_QAP};
see also \cite[Section 4]{ryner_gw_conflation}.

We note that the citations in this section are not exhaustive.

\section{\texorpdfstring{Proofs Omitted from \cref{section:preliminaries}}{Proofs Omitted from Section~\ref{section:preliminaries}}}
\label{appendix:preliminaries}
\begin{proof}[Proof of \cref{lemma:graph_metric}]
    Fix $1 \leq p \leq \infty$ and let $\delta \in \{ \Ddist{p}, \Ddist{(p)}, \Ddist{(p)}^{\GW} \}$. 
    We verify the conditions of \cref{definition:graph_metric} for $\delta$.
    (GM1) holds by taking the identity permutation or the identity matrix in the case of $\Ddist{(p)}^{\GW}$; (GM2) is witnessed by taking the inverse (respectively the transpose for $\Ddist{(p)}^{\GW}$) of a minimiser of \cref{equation:distortion_distance_template} (respectively $\argmin_{Q \in \DS(V, W)}\dis_{(p)}(Q)$).
    Note that $\delta(G, G') = 0$ whenever $G \cong G'$.
    Together with (GM2) and (GM3), this implies (GM0).
    
    It remains to show (GM3). For $\Ddist{(p)}^{\GW}$ we refer to \cite[Theorem 5.1a]{Memoli2011}
    which shows that the generalisation $\Mdist{(p)}^{\GW}$ (see \cref{equation:gromov_wasserstein}) is a metric on the collection of all isomorphism classes of weighted compact metric spaces
    with fully supported measures, i.e. measures assigning positive mass to all non-empty open subsets. 
    For $\delta\in\{\Ddist{p}, \Ddist{(p)}\}$ let $F, G, H$ be connected graphs of the same order.
    Choose $\pi \in \Bij(V(F), V(G))$ and $\rho \in \Bij(V(G), V(H))$ so that $\dist(F, G)=\norm{D_F^{\pi} - D_G}$ and $\dist(G, H)=\norm{D_G^{\rho} - D_H}$
    where $\norm{\cdot}$ is the norm corresponding to $\delta$.
    Let $\rho \circ \pi\in\Bij(V(F), V(H))$ denote the composition of $\rho$ and $\pi$.
    The triangle inequality and permutation invariance of $\norm{\cdot}$ now give
    \begin{align*}
        \dist(F, H)
        &\le \norm{D_F^{\rho \circ \pi} - D_H}\\
        &= \norm{D_F^{\rho\circ\pi} - D_G^{\rho} + D_G^{\rho} - D_H}\\
        &\leq \norm{D_F^{\rho\circ\pi} - D_G^{\rho}} + \norm{D_G^{\rho} - D_H}\\
        &= \norm{D_F^{\pi} - D_G} + \norm{D_G^{\rho} - D_H}\\
        &= \dist(F, G) + \dist(G, H)\qedhere.
    \end{align*}
\end{proof}

\section{\texorpdfstring{Proofs Omitted from \cref{section:hardness_max_distortion}}{Proofs Omitted from Section~\ref{section:hardness_max_distortion}}}
\label{appendix:hardness_max_distortion}
We first state the definition of path decompositions and pathwidth.
\begin{definition}[Path Decompositions and Pathwidth]
    \label{definition:tree_and_path_decomposition}
    Let $G$ be a graph. A \emph{tree decomposition} of $G$ is a pair $(T, \beta)$ consisting of a tree $T$ and a map $\beta: V(T) \to 2^{V(G)}$ satisfying the following properties:
    \begin{enumerate}
        \item $\bigcup_{t \in V(T)} \beta(t) = V(G)$.
        \item For each $e \in E(G)$, there exists $t \in V(T)$ with $e \subseteq \beta(t)$.
        \item For each $v \in V(G)$, $T[\{t \in V(T) \mid v \in \beta(t) \}]$ is connected.
    \end{enumerate}
    A \emph{path decomposition} is a tree decomposition $(T, \beta)$ where $T$ is a path.
    
    The \emph{width} of a path decomposition is the size of its largest bag minus one. The \emph{pathwidth} of a graph is the minimum width of its path decompositions.
\end{definition}

\begin{proof}[Proof of \cref{theorem:max_distortion_hardness}]
    We give a reduction from the 
    \NP-complete \threepartition{} problem \cite[Problem SP15]{garey1979computers}.
    The input consists of integers $A$ and $a_1, \ldots, a_{3m}$ in unary representation satisfying $\sum_{i=1}^{3m} a_i = mA$ and $\frac{A}{4} < a_i < \frac{A}{2}$ for $1 \leq i \leq 3m$.
    The question then is to decide whether $a_1, \ldots, a_{3m}$ can be partitioned into $m$ triples such that the sum of each triple is exactly $A$.
    The constraints on the values $a_i$ ensure that exactly three values need to be grouped to form a total of $A$.
    In particular, in the case of a NO-instance, for any partition of the $a_i$ into triples, formally a mapping $\sigma: [3m] \to [m]$ with $|\sigma^{-1}(j)| = 3$ for every $j \in [m]$, there exists some $j \in [m]$ such that $a_{i_1} + a_{i_2} + a_{i_3} \geq A + 1$ where $\{i_1, i_2, i_3\} = \sigma^{-1}(j)$.
    Observe that an instance $\mathcal I \ceq (A, a_1, \ldots, a_{3m})$ is a YES-instance if and only if, for every $c \geq 1$, $c\mathcal I$ is also a YES-instance. Here, the instance $c\mathcal I$ is defined as $(cA, ca_1, \ldots, ca_{3m})$.
    Similarly, if $\mathcal I$ is a NO-instance,
    then in $c\mathcal I$ there will always exist a triple whose sum is at least $cA + c$.
    
    Note that \threepartition{} is strongly \NP-complete, that is, it remains hard for instances $(A, a_1, \ldots, a_{3m})$ with $A \leq p(m)$ for some fixed polynomial $p$. This allows us to encode the integer values of a given instance using vertex multiplicities.

    In the following, we first prove the case where the input graphs are trees of pathwidth two. Then, we give the necessary modifications for the case where the input graphs are trees of maximum degree three.

    \paragraph*{Trees of Pathwidth Two}
    Given an instance $\mathcal I \ceq (A, a_1, \ldots, a_{3m})$ of \threepartition{}, we construct a pair of graphs $(G, H)$ parametrised by integers $c, \ell$ for $\GSim{\Ddist{(\infty)}}$.
    The specific values of $c$ and $\ell$ will be chosen later in the proof.
    $G$ is constructed from the disjoint union of $3m$ paths on $\ell$ vertices each, where one of the respective endvertices is connected by an edge to a fresh vertex $r_G$.
    Then, for each $i \in [3m]$, we add $ca_i$ isolated vertices (denote this set of vertices by $C^i$) and connect all of them to the $i$-th leaf of the graph obtained so far by an edge.
    $H$ is obtained from the disjoint union of $m$ paths on $2\ell$ vertices ($Q^1, \ldots, Q^m$) and $m$ paths on 
    $\ell$ vertices ($R^1, \ldots, R^{m}$). Again we connect these paths to a common vertex $r_H$ by an edge to some endvertex of each path. For each $j \in [m]$, we add $cA$ isolated vertices (denote this vertex set by $D^j$) and introduce an edge between each such vertex and the remaining endvertex of the $j$-th path on $2\ell$ vertices.
    The resulting graphs are depicted in \cref{figure:trees_pathwidth_two}.
    For notational convenience, we write $P^i_s$ for the endvertices of the paths $P^i$ adjacent to $r_G$ and $P^i_t$ for the other endvertex, respectively. The same notation is used for the paths $Q^j$ and $R^j$. Moreover, let $D \ceq D^1 \cup \cdots \cup D^m$ be the leaf vertices corresponding to the $m$ copies of the value $A$.
     
    Note that $G$ and $H$ have $1 + 3m\ell + cmA$ vertices each, which is polynomial in the size of $\mathcal I$ if $A, c, \ell = \poly(m)$. Moreover, it is easy to see that both graphs are trees of pathwidth two, since a path decomposition of width two is obtained for either graph by placing the root in every bag, using two additional vertices to collect the edges along each $P^i$ ($Q^j, R^j$ respectively), and collecting the leaves once the respective endvertices have been reached.

    We claim that $\Ddist{(\infty)}(G, H) \leq 2\ell+1$ if $\mathcal I$ is a YES-instance and $\Ddist{(\infty)}(G, H) \geq 4\ell$ otherwise.

    \yescase Suppose that $\mathcal I$ is a YES-instance and let $\sigma: [3m] \to [m]$ be a witnessing assignment.
    We define a bijection $\pi \in \Bij(V(G), V(H))$ between $G$ and $H$ such that $\pi(r_G) = r_H$ and, for every $j \in [m]$, it holds that
    \begin{itemize}
        \item $\pi\left( \bigcup_{i \in \sigma^{-1}(j)} C^i \right) = D^j$, and
        \item if $\sigma^{-1}(j) = \{i_1, i_2, i_3\}$, then $\pi(P^{i_1} \cup P^{i_2}) = Q^j$ and $\pi(P^{i_3}) = R^j$ such that $\pi(P^{i_1}_s) = Q^j_s$, $\pi(P^{i_2}_t) = Q^j_t$, $\pi(P^{i_3}_s) = R^j_t$ and each edge of each $P^i$ is mapped onto an edge of $Q^j$ or $R^j$.
    \end{itemize}
    Informally, $\pi$ maps the root of $G$ to the root of $H$ and the leaf sets of the two trees are mapped onto each other bijectively as determined by the witnessing assignment $\sigma$.
    Finally, whenever the leaf sets of the paths $P^{i_1}, P^{i_2}, P^{i_3}$ are grouped together into the leaf set of $Q^j$, two of these paths are mapped onto $Q^j$ preserving the order of vertices to the root, while the remaining third path is mapped onto $R^j$ such that the order to the root vertex is reversed.
    
    \begin{claim}
        \label{subclaim:pi_distortion_at_most_2ell_plus_one}
        $\pi$ incurs a distortion of at most $2\ell + 1$, \ie{}
        $\dis_{(\infty)}(\pi) \leq 2\ell + 1$.
    \end{claim}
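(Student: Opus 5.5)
The plan is to control distortion through depths (distances to the root). Write $h_G(x) \ceq \Gdist_G(r_G, x)$ and $h_H(y) \ceq \Gdist_H(r_H, y)$. Both graphs are trees in which every branch hanging off the root is a path, possibly followed by a leaf star. Hence two vertices in different root-branches are at distance $h(x)+h(x')$, and two vertices in the same branch are at distance $|h(x)-h(x')|$, or $2$ if both are leaves of the same star.

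The first step is to list the depth shift $s(x) \ceq h_H(\pi(x)) - h_G(x)$ for every vertex, for a fixed triple $\sigma^{-1}(j)=\{i_1,i_2,i_3\}$:
\begin{itemize}
    \item the root has $s=0$;
    \item the $k$-th vertex of $P^{i_1}$ (with $h_G=k$) is sent to depth $k$, so $s=0$;
    \item the $k$-th vertex of $P^{i_2}$ is sent to depth $\ell+k$, so $s=\ell$;
    \item the $k$-th vertex of $P^{i_3}$ is sent to depth $\ell+1-k$ in $R^j$, so $|s|\le \ell-1$;
    \item the leaves in $C^i$ (depth $\ell+1$) are sent to $D^j$ (depth $2\ell+1$), so $s=\ell$.
\end{itemize}
Thus $|s(x)|\le \ell$ for all $x$.

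The second step is a case split on a pair $x,x'$, according to whether the two vertices share a root-branch in $G$ and whether their images share one in $H$.
\begin{enumerate}
    \item \emph{Different branches in both graphs.} Both distances are depth sums, so $\dis(\pi,x,x') \le |s(x)|+|s(x')| \le 2\ell$.
    \item \emph{Same branch in both graphs.} This means both vertices lie in $P^i\cup C^i$ with images in one branch. Along $P^{i_1}$ and $P^{i_2}$ the map is an isometry onto a subpath. Along $P^{i_3}$ it is an order-reversing isometry. A pair $(x,z)$ with $x\in P^{i_1}\cup P^{i_2}$ and $z\in C^{i_1}\cup C^{i_2}$ changes distance by at most $\ell$. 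Two leaves of the same $C^i$ stay at distance $2$.
    \item \emph{Different branches in $G$, same branch $Q^j\cup D^j$ in $H$.} Here both vertices come from $P^{i_1}\cup P^{i_2}\cup C^{i_1}\cup C^{i_2}\cup C^{i_3}$. The distance in $G$ is a depth sum of at most $2\ell+2$. The distance in $H$ is a depth difference, or $2$ for two leaves of $D^j$. I will check the subcases explicitly. For example, two leaves from different $C^i$ give $2\ell+2$ versus $2$, a difference of $2\ell$. A vertex of $P^{i_1}$ at depth $k$ and a vertex of $P^{i_2}$ at depth $k'$ give $k+k'$ versus $\ell+k'-k$, a difference of $|2k-\ell|\le\ell$.
    \item \emph{Same branch in $G$, different branches in $H$.} The only instance is $x\in P^{i_3}$ and $z\in C^{i_3}$. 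If $x$ has depth $k$, then $\Gdist_G(x,z)=\ell+1-k$, while $\Gdist_H(\pi(x),\pi(z)) = (\ell+1-k)+(2\ell+1)$. The difference is exactly $2\ell+1$, and this case is where the $+1$ in the claim comes from.
\end{enumerate}

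The main obstacle is bookkeeping rather than any single idea. I have to make sure no mixed case in step three exceeds $2\ell+1$. The delicate ones pair a vertex mapped to $R^j$ with one mapped to $Q^j$ or $D^j$: these are different branches in $H$, so they fall under the depth-sum bound unless they also lie in the same branch of $G$, which is case four. To keep this clean, I would state a single inequality covering every pair in different branches of $H$, namely
\[
    \big|\Gdist_H(\pi(x),\pi(x')) - \Gdist_G(x,x')\big| \le |s(x)|+|s(x')| + 2\,h_G(\mathrm{lca}_G(x,x')),
\]
with a nonnegative lowest-common-ancestor correction. I would then observe that this correction is nonzero only in case four, where it can be bounded directly.
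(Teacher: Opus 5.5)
Your proof is correct. It checks the same explicit tree distances as the paper, but organises them differently.

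\textbf{How the paper proceeds.} It enumerates pairs by vertex type and computes each distance separately:
\begin{itemize}
\item pairs containing $r_G$, bounded crudely via $\ecc_H(r_H)=2\ell+1$;
\item two leaves of the same $C^i$, or of different $C^i$;
\item a leaf of $C^i$ with a vertex of $P^i$ or of $P^{i'}$, split by whether $\sigma(i)=\sigma(i')$;
\item two path vertices, split by whether their images lie in one $Q^j$, in two different $Q^j$, in two $R^j$, or in a $Q^j$ and an $R^{j'}$.
\end{itemize}

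\textbf{How you proceed.} You first isolate the depth shift $s$ and note $|s|\le\ell$. Then any pair lying in different root branches of both trees has distortion exactly $|s(x)+s(x')|\le 2\ell$. That one line absorbs most of the paper's subcases, including the root pairs, where you even get $\ell$. Your $2\times 2$ split by branch relation makes exhaustiveness obvious, and it pinpoints the pair $P^{i_3}$ versus $C^{i_3}$ as the only source of the value $2\ell+1$. The paper's enumeration is more explicit line by line, but it leaves the reader to check by hand that every type combination is covered.

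\textbf{Small remarks.}
\begin{itemize}
\item Say explicitly that $r_G$ is treated as depth $0$ with $s(r_G)=0$, since your case split never names it.
\item In case 2, restrict the pairs $(x,z)$ with $x\in P^{i_1}\cup P^{i_2}$ and $z\in C^{i_1}\cup C^{i_2}$ to a common index $i$. For $x\in P^{i_2}$ at depth $k$ and $z\in C^{i_1}$ the distortion is $2k$, not at most $\ell$; that pair belongs to case 3 anyway.
\item Your displayed lowest-common-ancestor inequality, with absolute values, does not give case 4: for $x=P^{i_3}_t$ it yields only $4\ell-1$. What actually gives $2\ell+1$ is the signed identity $\Gdist_H(\pi(x),\pi(x'))-\Gdist_G(x,x')=s(x)+s(x')+2h_G(\mathrm{lca}_G(x,x'))$, or equivalently your direct computation.
\end{itemize}
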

    \begin{claimproof}
        We prove the claim by showing $\dis(\pi, v, v') \leq 2\ell + 1$ for all (unordered) pairs $\{v, v'\}$ of distinct vertices $v, v' \in V(G)$.
        Any such pair containing $r_G$ incurs a distortion of at most $2\ell + 1$ 
        since $\ecc_H(r_H) = 2\ell + 1$ and $r_H = \pi(r_G)$.
        
        If $v, v' \in C^i$ for some $i \in [3m]$, then clearly $\dis(\pi, v, v') = 0$.
        If $v \in C^i$ and $v' \in C^{i'}$ where $i \neq i'$, then $\dis(\pi, v, v') = (2\ell+2) - 2 = 2\ell$ if $\pi(v), \pi(v') \in D^j$ for some $j \in [m]$ and $\dis(\pi, v, v') = |(2\ell+2) - (4\ell+2)| = 2\ell$ otherwise. 
        If $v \in C^i$ and $v' \in P^i$, then $\dis(\pi, v, v') \leq 2\ell + 1$. To see this, let $j \in [m]$ such that $\pi(v) \in D^j$. Then, if $\pi(P^i) \subseteq Q^j$, this holds trivially since the vertex in $Q^j$ that is furthest from $\pi(v)$ is at distance $2\ell$. It remains to deal with the case $\pi(P^i) = R^j$.
        Observe that $\Gdist_H(\pi(v), \pi(P^i_t)) = 2\ell+2$ since $\Gdist_H(\pi(v), r_H) = 2\ell+1$ and $\pi(P^i_t)$ is adjacent to $r_H$.
        Thus, $\Gdist_H(\pi(v), \pi(v')) = 2\ell+2+ \Gdist_G(P^i_t, v')$, whereas $\Gdist_G(v, v') = 1 + \Gdist_G(P^i_t, v')$, which implies the claim.
        
        If $v' \in P^{i'}$, then if $\sigma(i) = \sigma(i')$, by the same reasoning as above, we get $\dis(\pi, v, v') \leq 2\ell+1$. Otherwise, let $v'$ be the unique vertex in $P^{i'}$ at distance $k$ from the root, that is, $\Gdist_G(r_G, v') = k$. Then $\Gdist_G(v, v') = \ell + 1 + k$ and $\Gdist_H(\pi(v), \pi(v')) = 2\ell + 1 + (b\ell + k)$ for some $b \in \{0, 1\}$ if $\pi(P^{i'}) \subseteq Q^{j'}$ for some $j' \in [m]$, where the value of $b$ depends on whether $P^{i'}$ is mapped as the first or the second subpath into the path $Q^{j'}$.
        In any case, $\dis(\pi, v, v') = |(\ell + 1) + k - (2\ell + 1 + (b\ell + k))| \leq \ell + \ell = 2\ell$.
        If $\pi(P^{i'}) = R^{j'}$, then $\dis(\pi, v, v') = |(\ell + 1 + k) - (3\ell + 2 - k)| = 2\ell + 1 - 2k \leq 2\ell$.
        
        This already covers all pairs where one of the vertices belongs to $\{r_G\} \cup C^1 \cup \ldots \cup C^{3m}$.
        If $v, v' \in P^i$ for some $i \in [3m]$, then $\dis(\pi, v, v') = 0$ since paths are mapped isometrically under $\pi$. 
        So suppose $v \in P^{i}$ and $v' \in P^{i'}$ for distinct $i, i' \in [3m]$, let $k \ceq \Gdist_G(v, r_G)$ and $k' \ceq \Gdist_G(v', r_G)$ and note that $1 \leq k, k' \leq \ell$.
        If $\pi(v), \pi(v') \in Q^j$ for some $j \in [m]$, then clearly $\dis(\pi, v, v') \leq 2\ell$ since $|Q^{j}| = 2\ell$.
        If $\pi(v) \in Q^j$ and $\pi(v') \in Q^{j'}$ for distinct $j, j' \in [m]$, then $\Gdist_H(\pi(v), \pi(v')) = b\ell +  k + b'\ell + k'$ where $b, b' \in \{0, 1\}$ depending on whether $P^i$ (resp.\@ $P^{i'}$) is mapped into the first or second subpath of $\ell$ vertices in $Q^j$ (resp.\@ $Q^{j'}$). Hence $\dis(\pi, v, v') = |(k + k') - (b\ell + k + b' \ell + k')| \leq 2\ell$.
        
        If $\pi(v) \in R^j$ and $\pi(v') \in R^{j'}$ for some $j, j' \in [m]$, then $\dis(\pi, v, v') \leq 2\ell$ since vertices in distinct $R^j$ are at distance at most $2\ell$ in $H$ and $\Gdist_G(v, v') = k + k' \geq 2$. Note that $j \neq j'$ since $i \neq i'$ and hence $v$ and $v'$ must be mapped into distinct paths.

        The last remaining case is where one of $\{v, v'\}$ is mapped into $Q^j$ and the other into $R^{j'}$ for $j, j' \in [m]$. Assume \WLOG{} that $\pi(v) \in Q^j$.
        Then there is $b \in \{0, 1\}$ such that $\Gdist_H(\pi(v), \pi(v')) = \Gdist_H(\pi(v), r_H) + \Gdist_H(r_H, \pi(v')) = (b\ell + k) + (\ell - k' + 1)$.
        Hence $\dis(\pi, v, v') = |(k + k') - (b\ell + k + \ell - k' + 1)| = |2k' - (b+1)\ell - 1| \leq 2\ell$.

        This covers all possible pairs of vertices in $G$; hence, the claim follows.
    \end{claimproof}

    It follows that $\Ddist{(\infty)}(G, H) \leq 2\ell + 1$ in case of a YES-instance. Importantly, \cref{subclaim:pi_distortion_at_most_2ell_plus_one} and thus also the upper bound on the maximum distortion remain true independent of the choice of $c$ and $\ell$.

    \nocase
    Suppose that $\mathcal I$ is a NO-instance and assume towards a contradiction that there is a $\pi \in \Bij(V(G), V(H))$ with $\dis_{(\infty)}(\pi) < 4\ell$.
    Choose $c \ceq 3m\ell + 2$. Then it holds that $|V(H) \setminus D| = 1 + 3m\ell < c$.
    For every $i \in [3m]$ and all distinct $v, v' \in C^i$ it holds that $\Gdist_G(v, v') = 2$. Moreover, for distinct $j, j' \in [m]$, we have $\Gdist_H(w, w') = 4\ell+2$ for all $w \in D^j$ and $w' \in D^{j'}$.
    Hence, it follows that $\pi(C^i) \subseteq (V(H) \setminus D) \cup D^j$ for some $j \in [m]$.
    Since $ca_i > 3m\ell + 1$, it also holds that $\pi(C^i) \cap D \neq \emptyset$.
    In total, each $i \in [3m]$ is assigned precisely one $j \in [m]$. Let $\sigma: [3m] \to [m]$ be the corresponding assignment function.
    For each $j \in [m]$, it holds that
    \begin{align*}
        c \cdot \sum_{i \in \sigma^{-1}(j)} a_i &= \sum_{i \in \sigma^{-1}(j)} ca_i\\
        &= \textstyle \left|\pi\left( \bigcup_{i \in \sigma^{-1}(j)} C^i\right)\right| \\
        &= \textstyle \left| \bigcup_{i \in \sigma^{-1}(j)} \pi(C^i)\right| \\
        &\leq |D^j| + |V(H) \setminus D| \\
        &< cA + c \tag*{$|V(H) \setminus D| < c$}\\
        &= c(A+1).
    \end{align*}
    Dividing both sides of the above inequality by $c$ then yields $\sum_{i \in \sigma^{-1}(j)} a_i \leq A$ for all $j \in [m]$.
    Since $\sum_{i \in [3m]} a_i = mA$ this implies that
    $\sum_{i \in \sigma^{-1}(j)} a_i = A$ for all $j \in [m]$.
    Moreover, $|\sigma^{-1}(j)| = 3$ for every $j \in [m]$ as $\frac{A}{4} < a_i < \frac{A}{2}$ for all $i \in [3m]$.
    Hence $\mathcal I$ is a YES-instance, a contradiction.
    Therefore, $\dis_{(\infty)}(\pi) \geq 4\ell$ for all $\pi \in \Bij(V(G), V(H))$.
    \bigskip
    
    In total, we have shown that YES-instances yield maximum distortion at most $2\ell+1$ whereas NO-instances have maximum distortion at least $4\ell$. 
    Since $\lim_{\ell \to \infty} \frac{4\ell}{2\ell+1} = 2$, for every $\eps > 0$ there is some $\ell \geq 2$ such that $\frac{4\ell}{2\ell+1} > 2-\eps$, for example we can pick $\ell(\eps) \ceq  \max\{2, \left\lfloor \frac{1}{\eps} \right\rfloor + 1\}$. 
    Hence if there were a polynomial-time approximation algorithm for $\GSim{\Ddist{(\infty)}}$ with multiplicative error guarantee $2-\eps$ this would yield a polynomial-time decision algorithm for \threepartition{} by the above construction where we choose $\ell \ceq \ell(\eps)$. This contradicts our assumption that $\P \neq \NP$.
    Note that for fixed $\eps$, $\ell(\eps)$ is a fixed constant. 
    Choosing $\ell = \poly(m)$ suitably, we can also show additive inapproximability with error $\O(n^{1-\eps})$ for any $\eps > 0$. 
    To avoid repetition, we only show this at the very end of the proof for the case of trees of maximum degree three (see \cref{subclaim:additive_approx_hardness_max_distortion_trees}).

    \paragraph*{Trees of Maximum Degree Three}
    We now modify the construction of the above case as follows.
    Instead of choosing $\ell$ to be constant, we will choose $\ell = \Omega(m)$.
    In particular, this means that the distance between leaf sets attached to distinct paths scales at least linearly with $m$.
    As long as $c = \poly(m)$, since the $a_i$ are polynomially bounded by $m$, this means that we can replace each leaf set $C^i$, $D^j$ by roughly balanced rooted binary trees of height $\O(\log m)$ on the corresponding number of vertices. 
    For simplicity, we still refer to the resulting trees as $C^i$ and $D^j$ and denote their roots by $r(C^i)$ and $r(D^j)$ respectively. 
    Then each tree's root is connected by an edge to the corresponding vertex where the leaf set was originally attached. 
    Similarly, the root vertices in $G$ and $H$ are each replaced by binary trees of height $\lceil\log_2 3m\rceil$ (denoted $R_G$ and $R_H$) with exactly $3m$ leaves. 
    In $G$, each of these leaf vertices is attached to an endvertex of one of the $3m$ paths, whereas in $H$, only $2m$ leaves will be attached to the paths $Q^j$ and $R^j$; the remaining $m$ leaves are kept dangling.
    The trees obtained in this way are depicted in \cref{figure:trees_maximum_degree_three}.
    Note that $r \ceq |R_G| = |R_H| \leq 6m$.
    We choose $c \ceq 6m + 3m\ell+1$, which is polynomial in $m$ given that $\ell = \poly(m)$.
    For the remainder of the proof, fix some $\eps > 0$.
    
    \begin{figure}[hbtp]
    \captionsetup{singlelinecheck = true, justification=justified}
    \centering

    \begin{tikzpicture}[
            box/.style={
                draw,
                fill=white,
                rounded corners,
                minimum width=1.6cm,
                minimum height=0.8cm,
                thick
            }
        ]
    
        \newcommand{\kpath}[8]{%

            \begin{scope}[shift={(#2,#3)}]
                \def\ksidearg{#8}\def\ksideleft{left}%
                \ifx\ksidearg\ksideleft
                    \tikzset{kvlabel/.style={left=4pt},
                             kplabel/.style={right=4pt}}%
                \else
                    \tikzset{kvlabel/.style={right=4pt},
                             kplabel/.style={left=4pt}}%
                \fi

                \coordinate (#1start) at (0,0);
                \coordinate (#1end) at (0,-#4);

                \def\pad{0.2}

                \draw[rounded corners, fill=white]
                    (-\pad,\pad) rectangle (\pad,-#4-\pad);

                \fill (#1start) circle (2pt);
                \node[kvlabel] at (#1start) {#6};
                \fill (#1end)   circle (2pt);
                \node[kvlabel] at (#1end) {#7};

                \node[rotate=90, draw=none, fill=none] (mid) at ($(#1start)!0.5!(#1end)$) {$\cdots$};
                \node[kplabel, draw=none, fill=none] at (mid) {#5};
            \end{scope}
        }
    
        \newcommand{\bintree}[6]{%
            \pgfmathsetmacro{\btr}{min(2, 0.3*#6*28.4527)}
            \foreach \d in {0,...,#4} {
                \pgfmathtruncatemacro{\btlast}{2^\d - 1}
                \pgfmathtruncatemacro{\btup}{\d - 1}
                \foreach \i in {0,...,\btlast} {
                    \pgfmathsetmacro{\btx}{#2 + #6*((\i+0.5)*2^(#4-\d) - 2^(#4-1))}
                    \coordinate (#1-\d-\i) at (\btx, #3-\d*#5);
                    \ifnum\d>0
                        \pgfmathtruncatemacro{\btp}{div(\i,2)}
                        \draw[thick] (#1-\btup-\btp) -- (#1-\d-\i);
                    \fi
                    \fill (#1-\d-\i) circle (2pt);
                }
            }
        }
    
        \coordinate (G) at (0,-3);
    
        \begin{scope}[shift=(G)]
            \node[box, minimum width=5cm, minimum height=2.6cm, fill=green!10] at (1.5, -1.1) {};
            \node[box, minimum width=5cm,   minimum height=1.5cm, fill=blue!10]  at (1.5, -3.3) {};
    
            \kpath{G0}{0}{-1}{1}{$P^1$}{}{}{};
            \draw[thick] (0,-2) -- (0,-2.75);
            \bintree{GB0}{0}{-2.75}{3}{0.2}{0.2};
            
            \kpath{G3}{3}{-1}{1}{$P^{3m}$}{}{}{left};
            \draw[thick] (3,-2) -- (3,-2.75);
            \bintree{GB3}{3}{-2.75}{3}{0.2}{0.2};
    
            \bintree{GT}{1.5}{0}{3}{0.2}{0.4}
            \draw[thick] (GT-3-0) -- (G0start);
            \draw[thick] (GT-3-7) -- (G3start);
    
            \node at (1.5,-1.5)  {$\cdots$};
            \node at (1.5,-3.15) {$\cdots$};
            \node at (1.5,-3.75) {$\cdots$};
            \node at (0,-3.75)   {$ca_1$};
            \node at (3,-3.75)   {$ca_{3m}$};
            \node at (1.5, -4.5) {$G$};
        \end{scope}

        \coordinate (H) at (6,-2);

    \begin{scope}[shift=(H)]
        \node[box, minimum width=5.25cm, minimum height=3.6cm, fill=green!10] at (1.5, -1.6) {};
        \node[box, minimum width=6cm,   minimum height=1.5cm, fill=blue!10]  at (1.5, -4.3) {};

        \kpath{HQ1a}{-0.2}{-1}{2}{$Q^1$}{}{}{right};
        \node at (0.35, -2) {$\cdots$};
        \kpath{HQma}{0.9}{-1}{2}{$Q^m$}{}{}{left};

        \kpath{HR1}{2.1}{-1}{1}{$R^1$}{}{}{right};
        \node at (2.65, -1.5) {$\cdots$};
        \kpath{HRm}{3.2}{-1}{1}{$R^m$}{}{}{left};

        \foreach \x in {0,3} {
            \bintree{HB\x}{\x}{-3.75}{4}{0.15}{0.15}
            \node at (\x,-4.7) {$cA$};
        }
        \draw[thick]
            (-0.2,-3) -- (0,-3.75)
            (0.9,-3) -- (3,-3.75);

        \bintree{HT}{1.5}{0}{3}{0.2}{0.4}
        \draw[thick] (HT-3-0) -- (HQ1astart);
        \draw[thick] (HT-3-2) -- (HQmastart);
        \draw[thick] (HT-3-5) -- (HR1start);
        \draw[thick] (HT-3-7) -- (HRmstart);

        \draw[red, thick] (HT-3-3) circle (0.14);
        \draw[red, thick] (HT-3-4) circle (0.14);

        \node at (1.5,-4.15) {$\cdots$};
        \node at (1.5,-4.75) {$\cdots$};
        \node at (1.5, -5.5) {$H$};
    \end{scope}
    
    \end{tikzpicture}

    \caption{The trees of maximum degree three constructed in the reduction. White boxes correspond to paths of $\ell$ vertices ($P, R$) and paths of $2\ell$ vertices ($Q$). 
    The first and last $m$ leaves of the upper binary tree are connected to the paths $Q^1, \ldots, Q^m$ and $R^1,\ldots, R^m$, respectively.
    The leaves in between (circled in red) are \enquote{kept dangling}.
    Note that the vertex counts do not match those of a valid construction for an instance $\mathcal I$ but are for illustration purposes only.}
    \label{figure:trees_maximum_degree_three}
\end{figure}
    
    \yescase Let $\mathcal I$ be a YES-instance.
    Then there is a witnessing assignment $\sigma: [3m] \to [m]$ with $|\sigma^{-1}(j)| = 3$ for each $j \in [m]$.
    We modify the bijection for the YES-case of the previous case by mapping $V(R_G)$ into $V(R_H)$ (previously mapping $r_G$ to $r_H$) and, for each $j \in [m]$ and $\{i_1, i_2, i_3\}$, mapping $V(C^{i_1}) \cup V(C^{i_2}) \cup V(C^{i_3})$ into $V(D^j)$ (previously mapping the leaf sets corresponding to $a_{i_1}, a_{i_2}, a_{i_3}$ in $G$ into a leaf set of size $A$ in $H$). 
    
    Let $\pi$ be a bijection satisfying these conditions and note that $\pi$ keeps the overall structure of the bijection we used in the case of pathwidth two. However, the precise distances between vertices of the modified subtrees are no longer fully determined since we have not specified these mappings; they are arbitrary bijections. The next claim shows that this may increase the maximum distortion between the two trees only by a negligible amount.
    
    \begin{claim}
        Let $d \geq 1$ such that $A, a_1, \ldots, a_{3m} \leq m^d$. 
        Then $\pi$ incurs a maximum distortion of at most $2\ell + 1 +  6\lceil \log cm^d \rceil$.
    \end{claim}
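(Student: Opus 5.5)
We compare $G,H$ with the pathwidth-two trees $G_0,H_0$ of the first case, built with the same $c,\ell$. There the bijection $\pi_0$ from the YES-case satisfies $\dis_{(\infty)}(\pi_0)\le 2\ell+1$ by \cref{subclaim:pi_distortion_at_most_2ell_plus_one}. The aim is to show that every distance in $G$ (resp.\ $H$) differs from a corresponding distance in $G_0$ (resp.\ $H_0$) by at most $3\lceil\log cm^d\rceil$, and that $\pi$ respects this correspondence. The triangle inequality then gives the bound $2\ell+1+2\cdot 3\lceil \log cm^d\rceil$.

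First we set up a \enquote{collapse} map $\kappa_G\colon V(G)\to V(G_0)$. It sends every vertex of $R_G$ to $r_G$ and every vertex of the binary tree $C^i$ to an arbitrary fixed leaf of the star $C^i$ in $G_0$. Path vertices are mapped to their copies. On the tree $C^i$ the map is really a bijection onto the leaf set of $C^i$ in $G_0$, since both have $ca_i$ vertices. We define $\kappa_H$ in the same way: $R_H\mapsto r_H$, and $D^j$ onto the leaf star $D^j$ of $H_0$. The dangling leaves of $R_H$ also go to $r_H$.

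Vertex counts differ ($|R_G|$ vertices collapse to one), so rather than define a bijection of $G_0$, we check the paired distortion directly. For $v,v'\in V(G)$ we write
\[
\dis(\pi,v,v') \le |\Gdist_G(v,v')-\Gdist_{G_0}(\kappa_G v,\kappa_G v')| + |\Gdist_{G_0}(\kappa_G v,\kappa_G v')-\Gdist_{H_0}(\kappa_H\pi v,\kappa_H\pi v')| + |\Gdist_{H_0}(\cdot)-\Gdist_H(\pi v,\pi v')|.
\]
The first and third terms are bounded as follows. Any shortest path in $G$ between two vertices differs from the image path in $G_0$ only inside the replaced gadgets. A path passes through at most one root tree and at most two leaf trees (its endpoints' gadgets). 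Each gadget has height at most $\lceil\log_2 \max(3m, cm^d)\rceil\le\lceil\log cm^d\rceil$, and it replaces a star of radius one or a single vertex. Each gadget therefore changes the length by at most about $2\lceil \log cm^d\rceil$ in total, but a careful count gives $\lceil\log cm^d\rceil$ per traversed root tree and per endpoint gadget. This yields at most $3\lceil\log cm^d\rceil$ for each term.

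The middle term is where we need that $\pi$ structurally agrees with $\pi_0$. By construction $\kappa_H\circ\pi$ and $\pi_0\circ\kappa_G$ agree on path vertices and on the root, and they send $C^i$ into $D^{\sigma(i)}$. The proof of \cref{subclaim:pi_distortion_at_most_2ell_plus_one} only uses the gadget membership of leaf vertices, not their identity, so we can rerun it for the pair $(\kappa_G v,\kappa_G v')$ and their images. There is one subtlety: two distinct $v,v'$ can collapse to the same point (both in $R_G$, or both in some $C^i$). In that case the middle term is at most $0$ or $2$, which the gadget slack absorbs.

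The main obstacle is this careful accounting of the detours through the binary trees. Paths from a $C^i$ leave it through $r(C^i)$ and then traverse $R_G$ between two of its leaves. The replaced root is no longer a single vertex, so the root-to-path distances now depend on which leaves of $R_G$ and $R_H$ are used. Distinct paths are at distance between $2$ and $2\lceil\log_2 3m\rceil$ through the tree. We would handle this by bounding every such detour by the tree heights and summing the at most three gadget contributions on each side. This gives the constant $6$, with some slack, and completes the claim.
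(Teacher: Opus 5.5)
Your proof is correct and is essentially the paper's: you collapse the binary-tree gadgets back to the pathwidth-two trees, apply the $2\ell+1$ bound of \cref{subclaim:pi_distortion_at_most_2ell_plus_one} to the induced bijection, and charge the remaining error to the at most three binary trees that a shortest path meets in $G$ and in $H$. The one count you should state precisely is the root tree, where the detour between two of its leaves can add up to $2\lceil\log_2 3m\rceil$ rather than one height, so you need either $2\lceil\log_2 3m\rceil\le\lceil\log cm^d\rceil$ (true here because $c\ge 3m\ell$ and $m$ is large, matching the paper's \enquote{sufficiently large $m$}) or the observation that collapsing increases a distance by at most one, so only the larger of the two one-sided errors enters the distortion.
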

    \begin{claimproof}
        For sufficiently large $m$, we can identify the vertices of $C^i$ for each $i \in [3m]$ in $G$ and $D^j$ for each $j \in [m]$ in $H$, as well as the vertex sets of $R_G$ and $R_H$, respectively.
        This is because the diameter of each such set in the respective graphs is bounded by $2\lceil \log cm^d \rceil$ and thus mapping arbitrarily within these sets incurs a distortion of at most $2 \lceil \log cm^d \rceil$.
        Since each pair of distinct vertices $(v, v')$ has a unique shortest path in $G$ and the same holds for $(\pi(v), \pi(v'))$ in $H$, it is clear that at most three such binary trees are (partially) traversed in any shortest path.
        Hence, this yields an \emph{additional} distortion (compared to the distortion discussed in \cref{subclaim:pi_distortion_at_most_2ell_plus_one}) of at most $6 \lceil \log cm^d \rceil$.
    \end{claimproof}
    Since $\lceil \log cm^d \rceil = \O(\log m)$ and $\ell = \Omega(m)$, it holds that $\dis_{(\infty)}(\pi) \leq \left(2 + \frac{\eps}{3}\right)\ell$ given that $m$ is sufficiently large.

    \nocase Suppose $\mathcal I$ is a NO-instance.
    Similarly to before, by choice of $c$, any bijection between $V(G)$ and $V(H)$ needs to map two vertices belonging to the same tree $C^i$ into distinct subtrees $D^j, D^{j'}$ of $H$, which incurs a distortion of at least $4\ell - 2\lceil \log cm^d \rceil$.
    Hence, for sufficiently large $m$ it follows that $\dis_{(\infty)}(\pi) \geq \left(4 - \frac{\eps}{3}\right)\ell$ for all $\pi \in \Bij(V(G), V(H))$.

    Note that in both cases we require $m$ to be sufficiently large, say at least $m_0$.
    However, it is easy to see that instances with $m < m_0$ can be checked immediately by brute force since $A \leq p(m)$ and thus these instances only consist of integers bounded by a constant.
    For larger values of $m$, we have shown that the gap between YES- and NO-instances is at least
    \[
        \frac{\left(4 - \frac{\eps}{3}\right)\ell}{\left(2 + \frac{\eps}{3}\right)\ell} > 2-\eps.
    \]
    Hence, a polynomial-time approximation algorithm with multiplicative error guarantee $2-\eps$ would yield a polynomial-time algorithm for \threepartition{}.
    This proves the multiplicative approximation hardness.
    It remains to analyse the additive approximation hardness.

    \begin{claim}
        \label{subclaim:additive_approx_hardness_max_distortion_trees}
        For each $\eps > 0$, there exists a $b(\eps) \in \N$ such that $\ell = m^{b(\eps)}$
        yields $\ell = \omega(n^{1-\eps})$.
    \end{claim}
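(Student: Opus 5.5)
The plan is to count the vertices of the constructed trees as a function of $m$ and $\ell$, and then pick the exponent $b(\eps)$ so that $\ell$ dominates $n^{1-\eps}$.

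First I bound $n = |V(G)| = |V(H)|$. By construction,
\[
n = r + 3m\ell + cmA, \qquad r \leq 6m, \qquad c = 6m + 3m\ell + 1.
\]
Strong \NP-completeness gives $A \leq p(m) \leq m^d$ for a fixed $d \geq 1$ and all large $m$. With $\ell = m^{b}$ and $b \geq 1$,
\[
n \leq 6m + 3m^{b+1} + (6m + 3m^{b+1} + 1)\, m^{d+1} \leq C\, m^{b+d+2}
\]
for an absolute constant $C$ and all $m \geq 1$. So $n = \O(m^{b+d+2})$, and also $n \geq 3m\ell = 3m^{b+1}$, which shows $n$ is polynomial in $m$ and grows with it.

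Next I compare $\ell$ with $n^{1-\eps}$. We have
\[
n^{1-\eps} \leq C^{1-\eps}\, m^{(b+d+2)(1-\eps)},
\]
so $\ell = \omega(n^{1-\eps})$ holds as soon as $b > (b+d+2)(1-\eps)$. This rearranges to $\eps b > (d+2)(1-\eps)$, so it suffices to take
\[
b(\eps) \ceq \left\lceil \frac{(d+2)(1-\eps)}{\eps} \right\rceil + 1.
\]
With this choice, $\ell / n^{1-\eps} \geq m^{\delta}/C'$ for some $\delta > 0$, which tends to infinity.

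I then need to check that this choice of $\ell$ is compatible with the earlier requirements. It satisfies $\ell = \Omega(m)$ and $\ell = \poly(m)$, so the binary-tree replacements and the YES/NO bounds $(2+\frac{\eps}{3})\ell$ and $(4-\frac{\eps}{3})\ell$ still apply for large $m$.

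Finally I derive the additive hardness. The gap between NO- and YES-instances is at least $(2 - \frac{2\eps}{3})\ell = \omega(n^{1-\eps})$. Any additive $\O(n^{1-\eps})$ guarantee is therefore eventually smaller than this gap, so such an algorithm would decide \threepartition{} for large $m$; small $m$ are handled by brute force as before.

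The only delicate point is that $c$ depends on $\ell$, which makes $n$ quadratic-ish in $\ell$ times $m$ rather than linear in $\ell$. This is exactly why the exponent budget is $b+d+2$ and not $b+1$. The leading term $cmA \approx 3m^{b+2}A$ must be tracked carefully, but any fixed polynomial overhead is absorbed by choosing $b$ large.
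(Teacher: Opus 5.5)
Your proposal is correct and follows the paper's argument: bound $n = r + 3m\ell + cmA = \O(\ell\, m^{d+2}) = \O(m^{b+d+2})$ using $r \leq 6m$, $c = 6m + 3m\ell + 1$ and $A \leq p(m)$, then choose $b > \frac{(d+2)(1-\eps)}{\eps}$ so that $\frac{b}{b+d+2} > 1-\eps$. One small quibble: your own estimate shows that $n$ is linear in $\ell$, since the $\ell$-dependence of $c$ only raises the exponent of $m$ to $b+d+2$, so the phrase \enquote{quadratic-ish in $\ell$} in your closing remark is inaccurate, though it does not affect the argument.
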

    \begin{claimproof} 
        Recall that by strong \NP-hardness of \threepartition{}, there is some $d \in \N$
        and a polynomial $p(x)$ of degree $d$ such that the problem remains \NP-hard for instances with
        $A \leq p(m)$. For such instances, our reduction yields graphs of order
        \begin{align*}
            n &= r + 3m\ell + cmA \\
            &\leq 6m + 3m\ell + (6m + 3m\ell + 1)mA = \O(\ell \cdot m^{d+2}),
        \end{align*}
        where we use $r \leq 6m$.
        Choosing $\ell \ceq m^b$ for some constant $b > 0$ still results in a polynomial-time reduction and $m = \Omega(n^{1/(b+d+2)})$, so $\ell = m^b = \Omega(n^{b/(b+d+2)})$. 
        As $\lim_{b \to \infty} \frac{b}{b+d+2} = 1$, for every $\eps > 0$, we may choose
        \[
            \mfrac{b}{b+d+2} > 1-\eps \quad\iff\quad 
             b > \mfrac{(d+2)(1-\eps)}{\eps}
        \]
        to obtain $\ell = \Omega(n^{b/(b+d+2)}) \subseteq \omega(n^{1-\eps})$.
    \end{claimproof}
    \cref{subclaim:additive_approx_hardness_max_distortion_trees} shows that approximating $\GSim{\Ddist{(\infty)}}$ with additive error guarantee $\O(n^{1-\eps})$
    remains \NP-hard. 
\end{proof}

In preparation for the proof of \cref{theorem:gh_inapprox_result_julius}, we generalise the definition
of $\dis_{(\infty)}$ to correspondences.
For graphs $G, H$, and a correspondence $R \in \Corr(V(G), V(H))$, define
\begin{align*}
    \dis_{(\infty)}(R) \ceq \max_{(v,w), (v',w') \in R} \big| \Gdist_G(v,v')-\Gdist_H(w,w') \big|
\end{align*}
and note that
$\Mdist{\GH}(G, H) = \frac{1}{2} \cdot \min_{R} \dis_{(\infty)}(R)$ where $R$ ranges over the set
$\Corr(V(G), V(H))$.
Further, for $U \subseteq V(H)$, define
\begin{align*}
    R^{-1}(U) \ceq \{v \in V(G) \mid (v, u) \in R \text{ for some } u \in U\}.
\end{align*}

\begin{proof}[Proof of \cref{theorem:gh_inapprox_result_julius}]
    For an instance $\mathcal I = (A, a_1, \ldots, a_{3m})$ of
    \threepartition{} and an even integer $\ell \geq 4$, we construct two graphs $G, H$ of order
    polynomial in $A,m,\ell$ such that
    $\Mdist{\GH}(G, H) \leq 2\ell$ if $\mathcal I$ is a YES-instance
    and $\Mdist{\GH}(G, H) \geq \frac{5}{2}\ell$ if $\mathcal I$ is a NO-instance.
    After providing the reduction and proving its correctness, we use this to derive the approximation hardness claim by choosing $\ell$ appropriately.

    The graph $G$ contains for each $i \in [3m]$ a path on $\ell$ vertices
    $p^1_i, \hdots, p_i^\ell$.
    These $3m$ paths are connected using a vertex $r_G$ with edges to each $p^\ell_i$.
    Moreover, for each $i \in [3m]$ and $h \in [a_i]$ there is a path on
    $6\ell$ vertices $s^1_{i,h}, \hdots, s^{6\ell}_{i,h}$. For each $i \in [3m]$ the corresponding
    $a_i$ paths are connected to $p^1_i$ using edges from $s^{6\ell}_{i,h}$.

    Similarly, the graph $H$ contains a vertex $r_H$ and for each $j \in [m]$ a path on $3\ell$ vertices
    $q^1_j, \hdots, q^{3\ell}_j$ such that $q^{3\ell}_j$ and $r_H$ are connected by an edge.
    Further, for each $j \in [m]$ and $k \in [A]$ there is a path on $6\ell$ vertices
    $t^1_{j,k}, \hdots, t^{6\ell}_{j,k}$ and an edge from $t^{6\ell}_{j,k}$ to $q^1_j$.

    The graphs $G$ and $H$ are depicted in \cref{fig:max_distortion_gh_trees}.
    In the following, we denote the vertex sets of $G$ and $H$ by $V$ and $W$, respectively.
    Note that both graphs are trees containing exactly $1+3m\ell+6m\ell A$ vertices each, which
    is polynomial in $A, m$, and $\ell$.
    Clearly, $G$ and $H$ can be constructed from $\mathcal I$ in polynomial time.
    Moreover, $G$ and $H$ have pathwidth at most three.

    To see this, when removing the root of either graph, each remaining component is a subdivision of a star and has a path decomposition of width two by keeping the centre in each bag and collecting the edges along pendant paths using two additional vertices.
    Thus, we may combine these path decompositions to a path decomposition of the entire tree by adding the root to every bag and attaching paths consecutively at their endpoints. Each bag contains at most four vertices yielding a width of three. It is also not hard to see that $G$ and $H$ require a width of three, for example
    using \cite[Theorem 4]{pathwidth_of_trees}.

\begin{figure}[t]
    \captionsetup{singlelinecheck = true, justification=justified}
    \centering
    \begin{tikzpicture}
        \newcommand{\kpath}[8]{%

            \begin{scope}[shift={(#2,#3)}]
                \def\ksidearg{#8}\def\ksideleft{left}%
                \ifx\ksidearg\ksideleft
                    \tikzset{kvlabel/.style={left=4pt},
                             kplabel/.style={right=4pt}}%
                \else
                    \tikzset{kvlabel/.style={right=4pt},
                             kplabel/.style={left=4pt}}%
                \fi

                \coordinate (#1start) at (0,0);
                \coordinate (#1end) at (0,-#4);

                \def\pad{0.2}

                \draw[rounded corners, fill=white]
                    (-\pad,\pad) rectangle (\pad,-#4-\pad);

                \fill (#1start) circle (2pt);
                \node[kvlabel] at (#1start) {#6};
                \fill (#1end)   circle (2pt);
                \node[kvlabel] at (#1end) {#7};

                \node[rotate=90, draw=none, fill=none] (mid) at ($(#1start)!0.5!(#1end)$) {$\cdots$};
                \node[kplabel, draw=none, fill=none] at (mid) {#5};
            \end{scope}
        }

        \coordinate (G) at (0, 0);
        \coordinate (H) at (7.5, 0);

        \begin{scope}[shift=(G)]
            \coordinate (rG) at (1.5,-0.5);
            \fill (rG) circle (2pt);
            \node[above=3pt] at (rG) {$r_G$};
            \kpath{P1}{-0.2}{-1}{1}{}{$p_1^\ell$}{$p_1^1$}{left};
            \kpath{P3m}{3.2}{-1}{1}{}{$p_{3m}^\ell$}{$p_{3m}^1$}{right};
            \node at (1.5,-1.5) {$\cdots$};
            \draw[thick] (-0.2,-1) -- (rG);
            \draw[thick] (3.2,-1) -- (rG);

            \kpath{S11}{-0.75}{-2.75}{1}{}{$s_{1,1}^{6\ell}$}{$s_{1,1}^1$}{left};
            \kpath{S1a1}{0.35}{-2.75}{1}{}{$s_{1, a_1}^{6\ell}$}{$s_{1, a_1}^{1}$}{right};
            \kpath{S3m1}{2.65}{-2.75}{1}{}{$s_{3m, 1}^{6\ell}$}{$s_{3m, 1}^{1}$}{left};
            \kpath{S3ma3m}{3.75}{-2.75}{1}{}{$s_{3m, a_{3m}}^{6\ell}$}{$s_{3m, a_{3m}}^{1}$}{right};

            \draw[thick] (-0.75, -2.75) -- (-0.2, -2);
            \draw[thick] (0.35, -2.75) -- (-0.2, -2);
            \draw[thick] (2.65, -2.75) -- (3.2, -2);
            \draw[thick] (3.75, -2.75) -- (3.2, -2);

            \node at (-0.2, -3.25) {$\cdots$};
            \node at (3.2, -3.25) {$\cdots$};

            \node at (1.5, -4.4) {$G$};
        \end{scope}
        
        \begin{scope}[shift=(H)]
            \coordinate (rH) at (1.5,-0.5);
            \fill (rH) circle (2pt);
            \node[above=3pt] at (rH) {$r_H$};
            \kpath{Q1}{-0.2}{-1}{1}{}{$q_1^{3\ell}$}{$q_1^1$}{left};
            \kpath{Qm}{3.2}{-1}{1}{}{$q_{m}^{3\ell}$}{$q_{m}^1$}{right};
            \node at (1.5,-1.5) {$\cdots$};
            \draw[thick] (-0.2,-1) -- (rH);
            \draw[thick] (3.2,-1) -- (rH);

            \kpath{T11}{-0.75}{-2.75}{1}{}{$t_{1,1}^{6\ell}$}{$t_{1,1}^1$}{left};
            \kpath{T1A}{0.35}{-2.75}{1}{}{$t_{1, A}^{6\ell}$}{$t_{1, A}^{1}$}{right};
            \kpath{Tm1}{2.65}{-2.75}{1}{}{$t_{m, 1}^{6\ell}$}{$t_{m, 1}^{1}$}{left};
            \kpath{TmA}{3.75}{-2.75}{1}{}{$t_{m, A}^{6\ell}$}{$t_{m, A}^{1}$}{right};

            \draw[thick] (-0.75, -2.75) -- (-0.2, -2);
            \draw[thick] (0.35, -2.75) -- (-0.2, -2);
            \draw[thick] (2.65, -2.75) -- (3.2, -2);
            \draw[thick] (3.75, -2.75) -- (3.2, -2);

            \node at (-0.2, -3.25) {$\cdots$};
            \node at (3.2, -3.25) {$\cdots$};
            \node at (1.5, -4.4) {$H$};
        \end{scope}    
    \end{tikzpicture}
    
    \caption{The trees $G$ and $H$ for the proof of
    \cref{theorem:gh_inapprox_result_julius}. White boxes correspond to paths.}
    \label{fig:max_distortion_gh_trees}
\end{figure} 
    
    \yescase Suppose that $\mathcal I$ is a YES-instance.
    Then there is a function
    $\sigma \colon [3m] \to [m]$
    with $\abs*{\sigma^{-1}(j)}=3$ and $\sum_{i \in \sigma^{-1}(j)} a_i = A$ for all $j \in [m]$.
    Equivalently, there exists a bijection
    $\kappa \colon [m] \times [3] \to [3m]$
    such that $a_{\kappa(j,1)} + a_{\kappa(j,2)} + a_{\kappa(j,3)}=A$ for all $j \in [m]$.
    Note that $\kappa$ relates to $\sigma$ by the condition $\sigma^{-1}(j) = \{\kappa(j,1), \kappa(j,2), \kappa(j,3)\}$
    for all $j \in [m]$. We will use $\kappa$ to define a correspondence between $G$ and $H$.
    For $j \in [m]$ and $k \in [3]$
    let $\eta(j,k,\cdot) \colon [a_{\kappa(j,k)}] \to [A]$ be the
    injection defined by $\eta(j,k,h) = h +  \sum_{c=1}^{k-1} a_{\kappa(j,c)}$ and note that
    the images of $\eta(j,1,\cdot), \eta(j,2,\cdot)$ and $\eta(j,3,\cdot)$ partition $[A]$
    according to the sizes $a_{\kappa(j,1)}, a_{\kappa(j,2)}, a_{\kappa(j,3)}$ into sets of
    consecutive integers. Consider the correspondence $R \in \Corr(V, W)$ defined by
    \begin{bracketenumerate}
        \item $(r_G, r_H) \in R$,
        \item $\big( p^r_{\kappa(j,k)}, q_j^{(k-1)\ell + r} \big ) \in R$
            for all $j \in [m],\, k \in [3],\, r \in [\ell]$, and
        \item $\big ( s^r_{\kappa(j,k), h}, t^r_{j, \eta(j,k,h)} \big ) \in R$
            for all $j \in [m],\, k \in [3],\, h \in [a_{\kappa(j,k)}],\, r \in [6\ell]$.
    \end{bracketenumerate}

    \noindent Note that $R$ is the graph of a bijection between $V$ and $W$, \ie{} $R = \{(v, \pi(v)) \mid v \in V\}$ for some $\pi \in \Bij(V, W)$.

    \begin{claim}
        \label{subclaim:gh_4mu}
        $\dis_{(\infty)}(R) = 4\ell$.
    \end{claim}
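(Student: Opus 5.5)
We will prove \cref{subclaim:gh_4mu} by viewing both trees as rooted at $r_G$ and $r_H$ and tracking how $R$ (which is the graph of a bijection $\pi$) shifts depths. For a tree rooted at $r$, write $\mathrm{dep}(x)$ for the distance to the root. Then $\Gdist(x,x') = \mathrm{dep}(x)+\mathrm{dep}(x')-2\,\mathrm{dep}(\mathrm{lca}(x,x'))$. Define the \emph{offset} $\delta(v) \ceq \mathrm{dep}_H(\pi(v)) - \mathrm{dep}_G(v)$. A direct computation from the definition of $R$ gives $\delta(r_G)=0$ and $\delta(p^r_{\kappa(j,k)}) = (3-k)\ell \in \{0,\ell,2\ell\}$. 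Indeed, $p^r_i$ has depth $\ell-r+1$, while $q_j^{(k-1)\ell+r}$ has depth $3\ell-(k-1)\ell-r+1$. Likewise $\delta(s^r_{i,h}) = 2\ell$, since the $s$-paths hang $\ell$ levels deeper in $G$ than the $t$-paths in $H$, up to a shift of $2\ell$. Also record $\mathrm{dep}_H(\pi(v)) \le \delta(v)+\ell$ for path vertices $v = p^r_{\kappa(j,k)}$. The distortion of a pair is then
\[
    \dis(\pi,v,v') = \big|\delta(v)+\delta(v') - 2\big(\mathrm{dep}_H(\mathrm{lca}_H(\pi(v),\pi(v'))) - \mathrm{dep}_G(\mathrm{lca}_G(v,v'))\big)\big|,
\]
so everything reduces to comparing the depths of the two lowest common ancestors.

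\emph{Upper bound.} We split into two cases according to whether $v,v'$ lie in the same branch below $r_G$. Pairs involving the root are handled directly via $\delta(r_G)=0$ and eccentricities.

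\textbf{Different branches.} Here $\mathrm{lca}_G=r_G$, so the distortion equals $|\delta(v)+\delta(v') - 2\,\mathrm{dep}_H(\mathrm{lca}_H)|$. The positive side is at most $\delta(v)+\delta(v') \le 4\ell$. For the negative side, first suppose both vertices are $s$-vertices. Then $\mathrm{dep}_H(\mathrm{lca}_H) \le 3\ell$ while $\delta+\delta' = 4\ell$. Otherwise $\mathrm{dep}_H(\mathrm{lca}_H) \le \min(\delta,\delta')+\ell$ by the recorded inequality, giving $2\,\mathrm{dep}_H(\mathrm{lca}_H)-\delta-\delta' \le 2\ell$.

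\textbf{Same branch $i=\kappa(j,k)$.} All images lie below $q_j^{3\ell}$. If one vertex is an ancestor of the other in $G$, the images stay ancestor-related in $H$, because $p$-vertices map to the $q_j$-path above all $t_{j,\cdot}$. The distortion is then $|\delta(v)-\delta(v')| \le 2\ell$. If $v,v'$ lie on distinct $s$-paths of the same $i$, then $\mathrm{lca}_G = p^1_i$ and $\mathrm{lca}_H = q_j^1$. These have depths $\ell$ and $3\ell$, so the distortion is $|4\ell - 2\cdot 2\ell| = 0$.

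\emph{Lower bound.} Take $v = s^1_{\kappa(1,1),1}$ and $v' = s^1_{\kappa(2,1),1}$, assuming $m \ge 2$ (trivial instances can be discarded). Their distance in $G$ is $2\cdot 7\ell$, while their images $t^1_{1,1}$ and $t^1_{2,1}$ are at distance $2\cdot 9\ell$ in $H$. The difference is exactly $4\ell$.

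The main obstacle is the negative side of the different-branch case, where the two images share a $q_j$-path in $H$ and $\mathrm{lca}_H$ may be deep. The key step is the inequality $\mathrm{dep}_H(\pi(v)) \le \delta(v)+\ell$ for path vertices, which keeps this term from exceeding $4\ell$. I would verify it carefully for each $k$.
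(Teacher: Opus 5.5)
Your proof is correct, but it is organised differently from the paper's. The paper enumerates the six combinations of pair types in $R$ (the root pair, the $p$-pairs and the $s$-pairs) and splits further on $j=j'$ versus $j\neq j'$ and $k=k'$ versus $k\neq k'$. In each case it computes the two distances explicitly through the relevant branching vertex ($r_G$, $r_H$, $p^1_i$ or $q^1_j$).

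You instead write every distortion as
\[
\bigl|\delta(v)+\delta(v')-2\bigl(\mathrm{dep}_H(\mathrm{lca}_H)-\mathrm{dep}_G(\mathrm{lca}_G)\bigr)\bigr|
\]
and note that all offsets lie in $\{0,\ell,2\ell\}$. This makes the upper bound $4\ell$ on the positive side immediate for every cross-branch pair. Structural information is then needed only for the negative side and for same-branch pairs. It consists of the bound $\mathrm{dep}_H(\pi(v))\le\delta(v)+\ell$, the separation of images at $q^1_j$ or $r_H$, and the preservation of ancestry within a branch.

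Your route is shorter and more uniform, and it explains where $4\ell$ comes from: two maximal offsets $2\ell$ whose images are separated at $r_H$. The paper's explicit computations are longer but give the exact distortion value in each case, which is easier to check line by line.

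Two small remarks.
\begin{itemize}
\item In the cross-branch case with two $s$-vertices, your bound $\mathrm{dep}_H(\mathrm{lca}_H)\le 3\ell$ uses that $\eta(j,k,\cdot)$ and $\eta(j,k',\cdot)$ have disjoint images for $k\neq k'$. Hence the two images never lie on a common $t$-path; you should state this.
\item Your observation that equality requires $m\ge 2$ is accurate, and it is implicit in the paper's phrase ``there are pairs attaining $4\ell$''. For $m=1$ only the upper bound holds, and that is all the reduction uses.
\end{itemize}
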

    \begin{claimproof}
        We consider the distortions between all non-trivial combinations (up to symmetry) of pairs in
        $R$.

        \proofsubparagraph{Case (1) and (2):} Bounding each term individually yields
        \begin{align*}
            \Big|\Gdist_{G}\left(r_G, p^r_{\kappa(j,k)}\right)
            -\Gdist_{H}\left(r_H, q_j^{(k-1)\ell + r}\right)\Big|
            \leq \max\{ \ell, 3\ell \} = 3\ell.
        \end{align*}

        \proofsubparagraph{Case (1) and (3):} It holds that
        \begin{align*}
            \abs*{\Gdist_{G}\left(r_G, s^r_{\kappa(j,k), h} \right)
                -\Gdist_{H}\left(r_H, t^r_{j, \eta(j,k,h)}\right)}
            = \abs{\ell + 6\ell + 1 - r - (3\ell + 6\ell+1 - r)}
            = 2\ell.
        \end{align*}

        \proofsubparagraph{Case (2) and (2):} Let $j, j' \in [m],\, k, k' \in [3]$, and $r, r' \in [\ell]$.
        If $j = j'$, then we can again bound in terms of the maximal possible distances
        \begin{align*}
            \abs*{\Gdist_{G}\left (p^r_{\kappa(j,k)}, p^{r'}_{\kappa(j',k')} \right )
                -\Gdist_{H}\left (q_j^{(k-1)\ell + r}, q_{j'}^{(k'-1)\ell + r'} \right )}
            \leq \max \{2\ell, 3\ell-1\} = 3\ell-1.
        \end{align*}
        If, on the other hand $j \neq j'$, observe that the distance in $H$ decomposes along
        the root $r_H$. Further, it holds that $\kappa(j,k) \neq \kappa(j',k')$,
        meaning the distance in $G$ decomposes along $r_G$. Consequently,
        \begin{align*}
            &\abs*{\Gdist_{G}\left (p^r_{\kappa(j,k)}, p^{r'}_{\kappa(j',k')} \right )
                -\Gdist_{H}\left (q_j^{(k-1)\ell + r}, q_{j'}^{(k'-1)\ell + r'} \right )} \\
            =\ &\abs*{2\ell - (r-1+r'-1)
                - \big ( 6\ell - ( (k-1)\ell+r-1+(k'-1)\ell+r'-1 ) \big )} \\
            =\ &\abs*{4\ell - (k-1)\ell - (k'-1)\ell} \leq 4\ell
        \end{align*}
        as $k,k' \in [3]$.

        \proofsubparagraph{Case (2) and (3):} Let
        $j, j' \in [m],\, k,k' \in [3],\, r \in [\ell],\, r' \in [6\ell]$, and $h' \in [a_{\kappa(j',k')}]$.
        We first consider the subcase $j=j'$. From $j = j'$ it follows that the distance in $H$
        decomposes along $q^1_j$ as
        \begin{align*}
            \Gdist_{H} \big ( q_j^{(k-1)\ell + r},  t^{r'}_{j', \eta(j',k',h')} \big )
            = (k-1)\ell + r + 6\ell - r'.
        \end{align*}
        If additionally $k = k'$, then $\kappa(j,k) = \kappa(j', k')$ and the distance in $G$ decomposes
        along $p_{\kappa(j,k)}^1$ as
        \begin{align*}
            \Gdist_{G} \big (p^r_{\kappa(j,k)}, s^{r'}_{\kappa(j',k'), h'} \big )
            = 6\ell + r - r'.
        \end{align*}
        Thus, the distortion incurred in this case is $\abs*{(k-1)\ell } \leq 2 \ell$ because $k \in [3]$.
        If, on the other hand $k \neq k'$, then $\kappa(j,k) \neq \kappa(j', k')$ and the distance in $G$
        decomposes along $r_G$ as
        \begin{align}\label{eq:tsdist}
            \Gdist_{G} \left ( p^r_{\kappa(j,k)}, s^{r'}_{\kappa(j',k'), h'} \right )
            = \ell - (r-1) + 7\ell - (r'-1)
            = 8\ell + 2 - r - r'.
        \end{align}
        Hence the distortion incurred in this case is
        \begin{align*}
            \abs*{(k-1)\ell + r + 6\ell - r' - (8\ell + 2 - r - r')}
            = \abs*{(k-1)\ell - 2\ell + 2r - 2}.
        \end{align*}
        As $r \in [\ell]$,
        this results for $k=1$ in $\abs{2r - 2\ell - 2} \leq 2\ell$,
        for $k=2$ in $\abs{2r-\ell-2} \leq \ell$,
        and for $k = 3$ in $\abs{2r - 2} \leq 2\ell - 2$.

        Next, we consider the subcase $j \neq j'$. Now the distance in $H$ decomposes along $r_H$ as
        \begin{align*}
            \Gdist_{H} \left ( q_j^{(k-1)\ell + r},  t^{r'}_{j', \eta(j',k',h')} \right )
            &= 3\ell + 1 -(k-1)\ell - r + 9\ell + 1 - r' \\
            &= 12\ell - (k-1)\ell + 2 - r - r'.
        \end{align*}
        Since $j \neq j'$ implies $\kappa(j,k) \neq \kappa(j',k')$, the distance in
        $G$ is simply the one given by Equation (\ref{eq:tsdist}) and the resulting distortion is
        \begin{align*}
            \abs*{12\ell - (k-1)\ell + 2 - r - r' - (8\ell + 2 - r - r')}
            = \abs*{4\ell - (k-1)\ell} \leq 4\ell.
        \end{align*}

        \proofsubparagraph{Case (3) and (3):}
        Let $j,j' \in [m],\, k,k' \in [3],\, h \in [a_{\kappa(j,k)}],\, h' \in [a_{\kappa(j',k')}]$, and $r, r' \in [6\ell]$. \WLOGcap{} we may assume $r \leq r'$.
        We again first consider the case $j = j'$. If additionally
        $\eta(j,k,h)=\eta(j',k',h')$, then $\kappa(j,k) = \kappa(j',k')$ and $h=h'$ by definition of
        $\eta$. Hence the distortion incurred in this case is
        \begin{align*}
            \abs*{\Gdist_{G} \left (s^r_{\kappa(j,k), h}, s^{r'}_{\kappa(j',k'), h'} \right)
                - \Gdist_{H}\left(t^r_{j, \eta(j,k,h)}, t^{r'}_{j', \eta(j',k',h')} \right )}
                = \abs*{r' - r - (r'-r)} = 0.
        \end{align*}
        If, on the other hand $\eta(j,k,h) \neq \eta(j',k',h')$, then
        \begin{align*}
            \Gdist_{H}\left (t^r_{j, \eta(j,k,h)}, t^{r'}_{j', \eta(j',k',h')} \right )
            = 12\ell + 2 - r - r'
        \end{align*}
        while the distance in $G$ depends on the relation of $\kappa(j,k)$ and $ \kappa(j',k')$.
        Clearly, if $\kappa(j,k)=\kappa(j',k')$, then
        \begin{align*}
            \Gdist_{G}\left (s^r_{\kappa(j,k), h}, s^{r'}_{\kappa(j',k'), h'} \right )
            = 12\ell + 2 - r - r'
        \end{align*}
        and the distortion is zero. Otherwise $\kappa(j,k)\neq \kappa(j',k')$ and
        \begin{align}\label{eq:last_helper_dist}
            \Gdist_{G} \left (s^r_{\kappa(j,k), h}, s^{r'}_{\kappa(j',k'), h'} \right )
            = 14\ell + 2 - r - r'
        \end{align}
        and the distortion is $2\ell$.

        Finally, consider the subcase $j \neq j'$. The distance in $H$ decomposes along $r_H$ as
        \begin{align*}
            \Gdist_{H}\left (t^r_{j, \eta(j,k,h)}, t^{r'}_{j', \eta(j',k',h')} \right )
            = 18\ell + 2 - r - r'.
        \end{align*}
        Since $j \neq j'$ implies $\kappa(j,k) \neq \kappa(j',k')$, the distance in $G$ is given by Equation
        (\ref{eq:last_helper_dist}) and thus the distortion in this case is $4\ell$.\bigskip

        \noindent As all distortions are bounded by $4\ell$ and there are pairs attaining $4\ell$, the claim follows.
    \end{claimproof}

    By the above claim, $\dis_{(\infty)}(R) = 4\ell$ and hence $\Mdist{\GH}(G, H) \leq  \frac{1}{2} \cdot 4\ell = 2\ell$. Importantly, \cref{subclaim:gh_4mu} holds regardless of the precise choice of $\ell$.

    \nocase Suppose next that $\mathcal I$ is a NO-instance and assume towards a contradiction that
    there exists a correspondence $R \in \Corr(V,W)$ with $\dis_{(\infty)}(R) < 5\ell$.
    
    For $i \in [3m],\, h \in [a_i]$, and $j \in [m],\, k \in [A]$ we refer to
    \[
        S(G, i, h) \ceq \left\{s^1_{i,h}, \ldots, s^{3\ell/2}_{i,h}\right\} \quad\text{and}\quad
        S(H, j,k) \ceq \left\{t^1_{j,k}, \ldots, t^{\ell/2}_{j,k}\right\}
    \]
    as a \emph{leaf segment} of $G$ and $H$, respectively.
    Note that while the size of leaf segments differs between the two graphs,
    the number of leaf segments is exactly $mA$ in both trees.
    We shall prove that $R$ induces a bijection between
    the set of leaf segments in $G$ and the set of leaf segments in $H$.

    Let $j \in [m]$ and $k \in [A]$.
    Every vertex $w \in S(H, j,k)$ has eccentricity at least $17\ell + \frac{\ell}{2}$.
    Since $\dis_{(\infty)}(R) < 5\ell$, any $v \in V$ with $(v, w) \in R$ must satisfy $\ecc_G(v) > 12\ell + \frac{\ell}{2}$
    which implies that $v$ is contained in a leaf segment of $G$.
    Specifically,
    there exist $i \in [3m]$ and $h \in [a_i]$ such that $R^{-1}(S(H, j, k)) \cap S(G, i, h)
    \neq \emptyset$.

    We claim that in fact $R^{-1}(S(H, j, k)) \subseteq S(G, i, h)$.
    Suppose towards a contradiction that there exists $(v', w') \in R$ with
    $w' \in S(H, j, k)$ but $v' \notin S(G, i, h)$. By the previous paragraph, there exists $(v,w) \in R$
    such that $v \in S(G, i, h)$ and $w \in S(H, j, k)$.
    Since $\ecc_G(v') > 12\ell + \frac{\ell}{2}$, $v'$ lies in a different leaf segment of $G$ than $v$.
    Distinct leaf segments are far apart in $G$ and in particular we have $\Gdist_G(v, v') \geq 9\ell + 2$.
    But $\Gdist_{H}(w, w') < \frac{\ell}{2}$, implying
    $\dis_{(\infty)}(R) > 9\ell + 2 - \frac{\ell}{2} \geq 5\ell$, a contradiction.

    Hence $R^{-1}$ maps each leaf segment of $H$ completely into some leaf segment of $G$,
    meaning there is a well-defined map $\phi \colon [m] \times [A] \to [3m] \times \N$
    given by $\phi(j,k) = (i, h)$ whenever
    $R^{-1}(S(H, j, k)) \subseteq S(G, i, h)$.
    We shall prove that $\phi$ is injective, meaning different $H$ leaf segments cannot be images of
    the same $G$ leaf segment, or put differently, $R$ cannot split a leaf segment of $G$ into
    multiple leaf segments of $H$. Suppose towards a contradiction that there are $(v,w),(v',w') \in R$
    with $v,v' \in S(G, i, h)$ and $w \in S(H, j, k), w' \in S(H, j', k')$, where $(j,k) \neq (j',k')$.
    Note that $\Gdist_{G}(v,v') \leq \frac{3}{2}\ell - 1$ while $\Gdist_{H}(w,w') \geq 11\ell$ and
    thus $\dis_{(\infty)}(R) \geq 11\ell - (\frac{3}{2}\ell - 1) \geq 5\ell$, a contradiction.

    Since there are precisely $mA$ many leaf segments in both $G$ and $H$,
    it follows that the image of $\phi$ is 
    $\{(i, h) \mid i \in [3m], h \in [a_i]\}$, meaning $\phi$ provides a bijection between the set
    of leaf segments of $H$ and the set of leaf segments of $G$.
    It remains to show that for all $i \in [3m]$ and $h \neq h' \in
    [a_i]$ the first components of $\phi^{-1}(i, h)$ and $\phi^{-1}(i, h')$ coincide, or put differently,
    each $\{i\} \times [a_i]$ is mapped into a single $\{j\} \times [A]$.
    Intuitively, this means $R$ does not map leaf segments of $G$ corresponding to a common
    number into leaf segments of $H$ corresponding to different parts of the desired partition.
    To this end, suppose towards a contradiction there are pairs $(v, w), (v', w') \in R$ with
    $v \in S(G, i, h)$ and $\ v' \in S(G, i, h')$, and
    $w \in S(H, j, k)$ and $ w' \in S(H, j', k')$, where $j \neq j' \in [m]$.
    Since $j \neq j'$, it holds that $\Gdist_{H}(w,w') \geq 17\ell$,
    whereas $\Gdist_{G}(v,v') \leq 12\ell$.
    Thus $\dis_{(\infty)}(R) \geq 5\ell$, contradiction.
    We conclude that for each $i \in [3m]$ there exists $j \in [m]$ satisfying
    $\phi^{-1}( \{i\} \times [a_i]) \subseteq \{j\} \times [A]$.

    To show that this in fact implies $\mathcal I$ is a YES-instance, we define for
    each $j \in [m]$ the set
    \begin{align*}
        U_j \ceq \big \{ i \in [3m] \mid \phi^{-1}(\{i\} \times [a_i]) \subseteq \{j\} \times [A] \big \}.
    \end{align*}
    Since $\phi$ is a bijection and each $\{i\} \times [a_i]$ is mapped (under $\phi^{-1}$)
    into a single $\{j\} \times [A]$, we have $\sum_{i \in U_j} a_i = A$ for all $j \in [m]$.
    Further, $\frac{A}{4} < a_i < \frac{A}{2}$ implies $\abs{U_j}=3$.
    Consequently, the set $\{U_j\}_{j \in [m]}$
    partitions $[3m]$ into three-element sets summing to $A$. Thus, by mapping each $i \in U_j$ to $j$ one obtains
    the desired function $\sigma \colon [3m] \to [m]$ witnessing that $\mathcal I$ is a YES-instance
    and contradicting our initial assumption. Hence, for NO-instances every correspondence has distortion
    at least $5\ell$, implying $\Mdist{\GH}(G,H) \geq \frac{5}{2}\ell$.\bigskip

    In total, we have shown that, unless $\P = \NP$, there is no polynomial-time
    approximation algorithm for $\GSim{\Mdist{\GH}}$ with additive error guarantee below
    $\frac{1}{2}(5\ell - 4\ell) = \frac{\ell}{2}$.
    Similarly to \cref{subclaim:additive_approx_hardness_max_distortion_trees}, we may choose
    for any $\eps > 0$ some $b > \frac{1}{\eps} (1-\eps)(d+1)$ and $\ell \ceq 2\big\lceil m^b \big\rceil$
    to obtain that approximation with additive error guarantee $\O(n^{1-\eps})$ remains \NP-hard.
    Here, $d$ corresponds to the degree of the polynomial bounding $A$ as guaranteed by strong 
    \NP-hardness of \threepartition{}.
\end{proof}

\section{\texorpdfstring{Proofs Omitted from \cref{section:max_distortion_path}}{Proofs Omitted from Section~\ref{section:max_distortion_path}}}
\label{appendix:max_distortion_path}
\begin{remark}[Self-Reduction for the Maximum Distortion Distance]
    \label{remark:returning_bijection}
    In our setting, the usual self-reduction approach of fixing parts of the solution and querying the decision algorithm on partial instances does not yield an optimisation algorithm that computes the alignment.
    Specifically, this is because fixing a partial alignment cannot readily be encoded as an instance of \GSim{\Ddist{(\infty)}}, \ie{} into a pair of unweighted graphs and a target value $d$.
    This is in contrast to, for example, \textsc{Maximum Matching}, where the decision algorithm can be used to obtain a matching of maximum size by iteratively removing edges from the input graph and querying the decision algorithm.
\end{remark}

\subsection{Additional Notation for Paths}

Recall that for a graph $G$ and vertices $u, v \in V(G)$, a path $P$ from $u$ to $v$ is a sequence of distinct vertices $w_1, \ldots, w_\ell \in V(G)$ with $w_1 = u$, $w_\ell = v$, and $w_iw_{i+1} \in E(G)$ for $i \in [\ell - 1]$.
We say that $P$ is a $uv$-path in $G$ and denote this by $u \arr{P} v$.
We write $|P| \ceq \ell$ for the number of vertices in $P$ and denote the \emph{length} of $P$ by $\len(P) \ceq \ell - 1$.
Note that we allow paths of length zero, consisting of only a single vertex. The \emph{interior} of $P$ is the set of its vertices without the two endpoints.

Slightly abusing notation, we may view a path $P$ as the set of vertices or as the induced subgraph $G[P]$ of $G$ (if $P$ is a shortest path) when convenient. Similarly, we may reverse the orientation of $P$ without explicitly stating this; this will always be clear from context.
For vertices $x, x'$ lying on path $P$, we write $x \arr{P} x'$ for the subpath of $P$ from $x$ to $x'$.
Given paths $u \arr{P} v$ and $v \arr{Q} w$, we denote the concatenation of the two paths by $u \arr{P} v \arr{Q} w$, assuming that this yields a valid path, \ie{} $P \cap Q = \{v\}$.

\subsection{\texorpdfstring{Proof of \cref{theorem:max_distortion_graph_into_path}}{Proof of Theorem~\ref{theorem:max_distortion_graph_into_path}}}
Throughout this subsection, we fix an arbitrary connected $n$-vertex graph $G$, let $d \ceq \diam(G)$ and $\{u, v\}$ be a diametral pair in $G$. Moreover, we denote by $\pi^\ast$ the bijection computed by \cref{alg:graph_to_path} for the fixed diametral pair $\{u, v\}$.
Let $Q$ denote a shortest $uv$-path in $G$.

For a vertex $w \in V(G)$ and $r \in \N$, let $B(w, r) \ceq \{ w' \in V(G) \mid \Gdist_G(w, w') = r \}$.
For $0 \leq i, j \leq d$ with $i + j \geq d$, define the sets $B_{i, j} \ceq B(u, i) \cap B(v, j)$.
Ignoring the fact that some of these sets may be empty, they form a partition of $V(G)$.

For notational convenience, for a vertex $w \in V(G)$, define the $i$-, $j$-, and $a$-\emph{value} of $w$ as $i(w) \ceq i,\ j(w) \ceq j$, and $a(w) \ceq i-j$ for the unique $0 \leq i, j \leq d$ such that $w \in B_{i, j}$.

Let $M \ceq \{-d, \ldots, d\}$ and for each $a \in M$ define $C_a \ceq \bigcup_{i-j=a} B_{i,j}$ as the set of all vertices with the same $a$-value.
Moreover, let $I \ceq \{-d, -d+2, -d+4, \ldots, d-4, d-2, d\}$ and note that $C_a$ is non-empty for each $a \in I$, since it contains the unique vertex of the path $Q$ at distance $i = \frac{1}{2}(a+d)$ from $u$ and distance $d-i$ from $v$. Thus, at least $d+1$ sets are non-empty.

\begin{proof}[Proof of \cref{lemma:graph_to_path_no_large_increase}]
    Let $\{x, x'\}$ be an arbitrary pair of distinct vertices in $G$ and
    $b \ceq a(x),\ b' \ceq a(x')$, in particular $x \in C_b$ and $x' \in C_{b'}$.
    We may \WLOG{} assume that $b \leq b'$.
    We show the existence of at least $d+1$ vertices that do not contribute to an increase in distance between $x$ and $x'$ in $G$ compared to the distance of $\pi^\ast(x)$ and $\pi^\ast(x')$ in $P_n$.

    Let $m \ceq |I \cap [b, b']|$ and
    consider a shortest $xx'$-path $R$ in $G$.
    Vertices in $C_a$ for $a \in I \setminus [b, b']$ may never increase the value $\Gdist_{P_n}(\pi^\ast(x), \pi^\ast(x'))$ compared to $\Gdist_G(x, x')$ by the choice of enumeration of vertices of $\pi^\ast$, since the images of such vertices in $P_n$ either occur before or after the subpath spanned by the vertices $\pi^\ast(x)$ and $\pi^\ast(x')$ in $P_n$.
    This yields at least $d+1-m$ vertices not contributing to an increase in distance under $\pi^\ast$.
    
    Hence, it remains to find an additional $m$ vertices within $C_b \cup \cdots \cup C_{b'}$ that do not increase the distance between $x$ and $x'$. 
    Note that two of these $m$ vertices can always be chosen to be $x$ and $x'$ because their images under $\pi^\ast$ are clearly not between $\pi^\ast(x)$ and $\pi^\ast(x')$ in $P_n$.

    If $b' - b < 3$, then we have already found sufficiently many vertices, since in this case, it holds that $m = |I \cap [b, b']| \leq 2$.
    If not, we use the following claim.

    \begin{claim}
        \label{subclaim:slice_separators}
        Suppose $b' - b \geq 3$.
        Then, for each $b < a < b'-1$, the set $C_a \cup C_{a+1}$ is an $(x, x')$-separator in $G$.
    \end{claim}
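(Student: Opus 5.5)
The plan is to reduce the separator property to one local fact: the $a$-value of a vertex changes by at most two along any edge of $G$. Once this is established, the claim follows from a discrete intermediate-value argument along an arbitrary $xx'$-path.

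\emph{Step 1 (Lipschitz property of the $a$-value).} Let $ww'\in E(G)$. By the triangle inequality for $\Gdist_G$, we have $|i(w)-i(w')| = |\Gdist_G(u,w)-\Gdist_G(u,w')|\le 1$. In the same way, $|j(w)-j(w')|\le 1$. Since $a(w)=i(w)-j(w)$, it follows that
\[
|a(w)-a(w')| \le |i(w)-i(w')| + |j(w)-j(w')| \le 2 .
\]

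\emph{Step 2 (endpoints avoid the separator).} Fix $a$ with $b<a<b'-1$. Then $a(x)=b<a$ and $a(x')=b'>a+1$. Hence neither $x$ nor $x'$ lies in $C_a\cup C_{a+1}$. This matters because a separator should not contain the two vertices being separated.

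\emph{Step 3 (crossing argument).} Let $x=w_1,\dots,w_k=x'$ be any $xx'$-path in $G$. Since $a(w_1)<a\le a(w_k)$, there is a smallest index $t>1$ with $a(w_t)\ge a$. By minimality, $a(w_{t-1})\le a-1$. By Step 1, this gives $a(w_t)\le a(w_{t-1})+2\le a+1$. Therefore $a(w_t)\in\{a,a+1\}$, that is, $w_t\in C_a\cup C_{a+1}$. Since the path was arbitrary, every $xx'$-path meets $C_a\cup C_{a+1}$. Together with Step 2, this shows that $C_a\cup C_{a+1}$ is an $(x,x')$-separator.

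The only point needing care is the bookkeeping of the range of $a$. The hypothesis $b'-b\ge 3$ is exactly what makes the interval $b<a<b'-1$ non-empty, and the upper limit $a<b'-1$ is what keeps $x'$ outside $C_{a+1}$. I do not expect a genuine obstacle: the argument uses only that $a$ is $2$-Lipschitz along edges, and that is a direct consequence of the triangle inequality.
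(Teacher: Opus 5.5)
Your proof is correct and follows the paper's argument: both rest on the bound $|a(w)-a(w')|\le 2$ for every edge $ww'$ (from the triangle inequality applied to the distances to $u$ and $v$), so no $xx'$-path can jump over the two consecutive bags $C_a, C_{a+1}$. Your first-crossing index and your check that $x, x' \notin C_a \cup C_{a+1}$ simply make explicit what the paper leaves implicit.
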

    \begin{claimproof}
        Consider an arbitrary $xx'$-path $y_1, \ldots, y_\ell$ in $G$ and consider $w \ceq y_k$, $w' \ceq y_{k+1}$ for some $k \in [\ell-1]$. 
        Let $i, i', j, j'$ such that $w \in B_{i,j}$ and $w' \in B_{i', j'}$.
        Since $ww' \in E(G)$ it is easy to see that
        $|\Gdist_G(w, u) - \Gdist_G(w', u)| \leq 1$ and $|\Gdist_G(w, v) - \Gdist_G(w', v)| \leq 1$, or put differently, $|i - i'| \leq 1$ and $|j-j'| \leq 1$, which implies that 
        $|a(w) - a(w')| = |(i - j) - (i' - j')| \leq 2$.
        Thus, no single edge can \enquote{jump over} two or more consecutive sets $C_a$ at once.
        It follows that each $xx'$-path must have an edge with at least one endpoint in $C_a \cup C_{a+1}$ and hence removing $C_a \cup C_{a+1}$ must disconnect $x$ from $x'$ in $G$.
    \end{claimproof}
 
    Now, using \cref{subclaim:slice_separators}, we find a sequence of $(x, x')$-separators in $G$, 
    where separators $C_a \cup C_{a+1}$ and $C_{a'} \cup C_{a'+1}$ overlap whenever $|a - a'| = 1$.
    As there are $b'-2 - (b+1) + 1 = b' - b - 2$ (counting the elements in $\{b+1, \ldots, b'-2\}$) separators of this form in total, this implies that any $xx'$-path in $G$ contains at least $\lceil \frac{1}{2} (b'-b-2) \rceil$ vertices in $C_{b+1}\cup\cdots \cup C_{b'-1}$.
    Since each such vertex is contained in $C_a$ for $b < a < b'$, it must be mapped between $\pi^\ast(x)$ and $\pi^\ast(x')$ by the definition of $\pi^\ast$.
    Hence, all these vertices were on a shortest path between $x$ and $x'$ in $G$ and remain between $\pi^\ast(x)$ and $\pi^\ast(x')$ after being mapped into $P_n$.
    In particular, this yields at least $2 + \lceil \frac{1}{2} (b' - b - 2) \rceil = \lceil \frac{1}{2}(b' - b)\rceil + 1 \geq |I \cap [b, b']| = m$ vertices.
    
    Combined with the vertices found previously, this yields at least $d+1$ vertices which do not contribute to an increase in the distance between $x$ and $x'$ in $P_n$.
    It follows that $\Gdist_{P_n}(\pi^\ast(x), \pi^\ast(x')) \leq \Gdist_G(x, x') + n-1-d$.
    Since $\{x, x'\}$ was arbitrary, this concludes the proof.
\end{proof}

\begin{proof}[Proof of \cref{lemma:graph_to_path_no_large_decrease}]
    Let $\{x, x'\}$ be an arbitrary pair in $G$ and assume that $\Gdist_G(x, x') \geq n-d+1$, as otherwise there is nothing we need to show since $\Gdist_{P_n}(y, y') \geq 1$ for all distinct $y, y' \in V(P_n)$.
    Let $R$ denote a shortest $xx'$-path.
    We employ the following high-level strategy:
    Since $R$ is long, it contains many vertices and thus must share some vertices with $Q$.
    We then show that we can either find roughly the same number of vertices that must lie strictly between $\pi^\ast(x)$ and $\pi^\ast(x')$ in $P_n$, or that there are sufficiently many vertices that are neither part of $R$ nor $Q$. By a counting argument, this bounds the length of $R$ and yields a contradiction.
 
    We start with two easy claims.
    \begin{claim}
        \label{subclaim:crow_foot}
        Let $w \in B_{i, j}$ where $0 < i, j \leq d$. 
        Then there are (not necessarily distinct) neighbours $w',w'' \in N_G(w)$ such that $i(w') = i(w) - 1$ and $j(w'') = j(w) - 1$.
    \end{claim}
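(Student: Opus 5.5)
The plan is to use only the definition of graph distance, namely that a shortest path to $u$ (respectively $v$) has a first step that decreases the distance by exactly one. Let $w \in B_{i,j}$ with $0 < i, j \leq d$.

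Since $i = \Gdist_G(u, w) > 0$, we have $w \neq u$, and $G$ is connected. So we fix a shortest $wu$-path $w = y_0, y_1, \ldots, y_i = u$ and take $w' \ceq y_1 \in N_G(w)$. The subpath $y_1 \arr{} u$ has length $i-1$, hence $\Gdist_G(u, w') \leq i - 1$. For the reverse inequality, the triangle inequality along the edge $ww'$ gives $i = \Gdist_G(u, w) \leq \Gdist_G(u, w') + 1$. Together these yield $i(w') = \Gdist_G(u, w') = i - 1$.

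Symmetrically, $j > 0$ gives $w \neq v$. Choosing a shortest $wv$-path and letting $w''$ be its second vertex, the same two inequalities give $j(w'') = j - 1$.

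One small point needs checking: $i(w')$ is defined via the unique $B_{i',j'}$ containing $w'$, and this requires $i' + j' \geq d$. That condition holds for every vertex by the triangle inequality $d = \Gdist_G(u,v) \leq \Gdist_G(u,w') + \Gdist_G(w',v)$, so $i(w')$ is simply $\Gdist_G(u,w')$, and likewise for $j(w'')$.

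I expect no real obstacle. The only thing to note is that $w'$ and $w''$ may coincide, which the claim explicitly allows; this happens exactly when $w$ has a neighbour closer to both $u$ and $v$.
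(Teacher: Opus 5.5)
Your proposal is correct and matches the paper's argument: it takes the second vertex of a shortest $wu$-path (respectively $wv$-path), which lies at distance exactly $i-1$ from $u$ (respectively $j-1$ from $v$). Your extra checks are a harmless elaboration of the paper's one-line proof: the triangle inequality for the lower bound, and that every vertex lies in some $B_{i',j'}$ with $i'+j' \geq d$.
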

    \begin{claimproof}
    By definition of $B_{i, j}$, there is a shortest
    $wu$-path
    of length $i$, so $w$ has a neighbour at distance $i-1$ from $u$. Analogously, $w$ has a neighbour at distance $j-1$ from $v$.
    \end{claimproof}

    \begin{claim}
        \label{subclaim:consecutive_in_Q}
        There is a shortest $xx'$-path $R$ in $G$ for which $R \cap Q$ is a subpath of $Q$.
    \end{claim}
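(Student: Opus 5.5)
We will take an arbitrary shortest $xx'$-path and modify it. If $R\cap Q=\emptyset$ or $|R\cap Q|=1$, the statement holds trivially, since the empty set and a single vertex are subpaths of $Q$. Otherwise we use a rerouting argument. Among all shortest $xx'$-paths, we choose $R$ to maximise $|R\cap Q|$; ties are broken arbitrarily.

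Let $y$ and $y'$ be the first and last vertices of $R$, read from $x$ to $x'$, that lie in $Q$. The subpaths $y\arr{R}y'$ and $y\arr{Q}y'$ are both shortest paths, since every subpath of a shortest path is shortest. Hence
\[
\len\big(y\arr{R}y'\big) = \Gdist_G(y,y') = \len\big(y\arr{Q}y'\big).
\]
Now define
\[
R' \ceq x\arr{R}y\arr{Q}y'\arr{R}x'.
\]
We check three things. First, $R'$ is a valid path. The prefix $x\arr{R}y$ meets $Q$ only in $y$, by the choice of $y$ as the first vertex of $R$ on $Q$, and symmetrically the suffix $y'\arr{R}x'$ meets $Q$ only in $y'$. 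The prefix and suffix are disjoint because they are parts of the single path $R$. Second, $R'$ is a shortest path: its length equals $\len(R)$, since the middle segment was replaced by one of equal length.

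Third, $R'\cap Q$ is exactly the segment $y\arr{Q}y'$, which is a subpath of $Q$. This is again because the outer parts of $R'$ touch $Q$ only at $y$ and $y'$. It also gives $R'\cap Q\supseteq R\cap Q$: every vertex of $R\cap Q$ lies on $y\arr{R}y'$ by the choice of $y,y'$, and each such vertex $z$ lies on the geodesic $y\arr{Q}y'$. The reason is that $z\in Q$ forces $\Gdist(u,z)$ to lie between $\Gdist(u,y)$ and $\Gdist(u,y')$, because $z$ is on a shortest $yy'$-path and all three vertices lie on the geodesic $Q$.

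The one step that needs care is this last containment: a vertex of $Q$ on a shortest $yy'$-path must lie between $y$ and $y'$ along $Q$. Assume $y$ comes before $y'$ on $Q$, i.e.\ $\Gdist(u,y)\le\Gdist(u,y')$. Since $z$, $y$ and $y'$ all lie on the geodesic $Q$, distances between them are differences of their positions along $Q$. Combining this with $\Gdist(y,z)+\Gdist(z,y')=\Gdist(y,y')$ shows that $z$ has position between those of $y$ and $y'$.

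Strictly speaking, the maximality assumption on $R$ is not needed. The rerouting directly produces a shortest path $R'$ with the required property, so the claim follows from the construction alone.
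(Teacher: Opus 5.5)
Your proof is correct and uses the same key fact as the paper: a segment of the shortest path $R$ between two vertices of $Q$ can be replaced by the corresponding segment of $Q$ without changing the length. The difference is that you apply it once, between the first and last vertices of $R$ on $Q$, which yields $R'\cap Q = y\arr{Q}y'$ immediately. The paper instead argues by contradiction, taking $R$ with the fewest re-entries into $Q$ and rerouting around a single excursion. Your one-shot version is cleaner, and, as you note, the maximality choice and the containment $R\cap Q\subseteq R'\cap Q$ can simply be dropped.

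The one thing to repair is the empty case. Under the paper's convention a path has at least one vertex, so $R\cap Q=\emptyset$ would not satisfy the claim, and the subsequent definition of $q,q'$ needs $R\cap Q\neq\emptyset$. You should rule it out as the paper does, via the standing assumption $\Gdist_G(x,x')\geq n-d+1$, which gives $|R|\geq n-d+2> n-(d+1)=|V(G)\setminus Q|$.
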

    \begin{claimproof}
        Consider any shortest $xx'$-path $R$ in $G$.
        By assumption, $\len(R) > n-d$ and hence $|R| > n-d+1$. 
        Together with $|V(G) \setminus Q| = n - (d+1)$, this implies that $R \cap Q \neq \emptyset$.
        Now suppose towards a contradiction that for each such $R$, $R \cap Q$ is not a subpath of $Q$, \ie{}
        $R$ \enquote{leaves} and later \enquote{re-enters} $Q$.
        Let $R$ be one of these paths with the minimal number of entries into $Q$. By assumption $R$ enters $Q$ at least once.
        Then there are vertices $y, y', y''$ occurring in this order on $R$ such that $y, y'' \in Q \cap R$ and $y' \in R \setminus Q$. 
        Let $Q'$ be the $yy''$ subpath of $Q$, $R_1$ the $xy$ subpath of $R$, and $R_2$ the $y''x'$ subpath of $R$. Then, the concatenation $R'$ of $R_1$, $Q'$, and $R_2$ yields a shortest $xx'$-path, since $Q$ is a shortest $uv$-path, and hence $Q'$ is a shortest path between $y$ and $y''$.
        But then $R'$ enters $Q$ fewer times than $R$, contradicting the minimal number of entries of $R$ into $Q$.
    \end{claimproof}

    For the remainder of the proof, fix a shortest $xx'$-path $R$ satisfying \cref{subclaim:consecutive_in_Q}
    and let $Q_1, Q_2, Q_3$ be the subpaths of $Q$ satisfying 
    $Q = u \arr{Q_1} q \arr{Q_2} q' \arr{Q_3} v$
    where $q, q'$ are the vertices of $R \cap Q$ that are closest to $x, x'$, respectively.
    Further, let $X, X'$ be the $xq$- and $x'q'$-subpaths of $R$ and note that the concatenation of $X$, $Q_2$, and $X'$ yields the path $R$ again, in notation 
    $x \arr{X} q \arr{Q_2} q' \arr{X'} x'$.

\begin{figure}[b!]
     \captionsetup{singlelinecheck = true, justification=justified}
    \centering
    \begin{subfigure}[t]{0.45\textwidth}
        \centering
        \begin{adjustbox}{width=\linewidth}
            \begin{tikzpicture}[>=Stealth, line join=round, font=\small, scale=0.7]

    \def\d{12}
    \def\lw{1.3pt}
    \def\thinpoint{3pt}
    \def\thickpoint{4pt}

    \coordinate (u) at (0, 12);
    \coordinate (v) at (12, 0);
    \coordinate (x) at (4, 11);
    \coordinate (q) at (4, 8);
    \coordinate (q') at (8, 4);
    \coordinate (x') at (11, 6);

    \coordinate (s) at (6, 6);
    
    \draw[step=1cm, gray!25, thin] (0,0) grid (\d+.5,\d+.5);
    
    \newcommand{\hatchtriangle}[2][blue!55]{
        \foreach \c in {-#2,...,#2}{%
            \ifnum\c<0
                \draw[#1] ({(#2+\c)/2},{(#2-\c)/2}) -- ({#2+\c},{#2});
            \else
                \draw[#1] ({(#2+\c)/2},{(#2-\c)/2}) -- ({#2},{#2-\c});
            \fi
        }%
    }
    \hatchtriangle[line width=0.6pt, blue!55, opacity=0.6]{\d};
    
    \draw[gray!60] (0,\d) -- (\d,\d) -- (\d,0);
    
    \draw[thick,->] (0,0) -- (0,\d+1) ;
    \draw[thick,->] (0,0) -- (\d+1,0) ;
    
    \node[left=10pt, rotate=90] at (0,7.35) {\large $j=\mathrm{dist}_G(\cdot,v)$};
    \node[below] at (6,0) {\large $i=\mathrm{dist}_G(\cdot,u)$};

    \draw[line width=10pt, green!55, opacity=0.6, line cap=round]   (0,\d) -- (4,8);   
    \draw[line width=10pt, yellow!75!orange, opacity=0.5, line cap=round] (4,8) -- (8,4);   
    \draw[line width=10pt, violet!40, opacity=0.6, line cap=round]  (8,4) -- (\d,0);  
    \draw[line width=\lw] (0,\d) -- (\d,0);
    
    \draw[line width=10pt, cyan!45, opacity=0.6, line cap=round, line join=round]
        (q) -- (4, 9) -- (3, 10) -- (4, 11);
    \draw[line width=\lw]
        (q) -- (4, 9) -- (3, 10) -- (4, 11);

    \draw[line width=10pt, orange!55, opacity=0.6, line cap=round, line join=round]
       (q') -- (9, 5) -- (10, 5) -- (11,6);
    \draw[line width=\lw]
       (q') -- (9, 5) -- (10, 5) -- (11,6);

    \foreach \k in {0,...,\d}{
        \fill (\k,{\d-\k}) circle (\thinpoint);
    }

    \node[left=5pt] at (0,\d) {\Large $u$};
    \node[below=5pt] at (\d,0) {\Large $v$};
    \node[above=4pt] at (x) {\Large $x$};
    \node[below left=2pt] at (q) {\Large $q$};
    \node[below left=2pt] at (q') {\Large $q'$};
    \node[above=4pt] at (x') {\Large $x'$};
    
    \foreach \p in {u, v, x, x', q, q'}{
        \fill (\p) circle (\thickpoint);
    }

    \node at (1.2, 10) {\Large $Q_1$};
    \node at (5, 6.2) {\Large $Q_2$};
    \node at (10, 1.2) {\Large $Q_3$};
    \node at (4.3, 9.8) {\Large $X$};
    \node at (10, 4.45) {\Large $X'$};
\end{tikzpicture}
        \end{adjustbox}
        \caption{Visualisation of the case $a(x) \leq a(q) \leq a(q') \leq a(x')$. 
        Specifically, $x$ may only appear in the cone above $q$, $x'$ may only occur in the cone to the right of $q'$.
        }
        \label{subfig:good_case}
    \end{subfigure}
    \hfill
    \begin{subfigure}[t]{0.45\textwidth}
        \centering
        \begin{adjustbox}{width=\linewidth}
            \begin{tikzpicture}[>=Stealth, line join=round, font=\small, scale=0.7]

    \def\d{12}
    \def\lw{1.3pt}
    \def\thinpoint{3pt}
    \def\thickpoint{4pt}

    \coordinate (u) at (0, 12);
    \coordinate (v) at (12, 0);
    \coordinate (x) at (10, 12);
    \coordinate (q) at (4, 8);
    \coordinate (q') at (8, 4);
    \coordinate (x') at (11, 6);
    \coordinate (z) at (7, 11);
    \coordinate (y) at (8, 11);
    \coordinate (ztilde) at (8, 10);

    \coordinate (s) at (6, 6);
    
    \draw[step=1cm, gray!25, thin] (0,0) grid (\d+.5,\d+.5);
    
    \newcommand{\hatchtriangle}[2][blue!55]{
        \foreach \c in {-#2,...,#2}{%
            \ifnum\c<0
                \draw[#1] ({(#2+\c)/2},{(#2-\c)/2}) -- ({#2+\c},{#2});
            \else
                \draw[#1] ({(#2+\c)/2},{(#2-\c)/2}) -- ({#2},{#2-\c});
            \fi
        }%
    }
    \hatchtriangle[line width=0.6pt, blue!55, opacity=0.6]{\d};
    
    \draw[gray!60] (0,\d) -- (\d,\d) -- (\d,0);
    
    \draw[thick,->] (0,0) -- (0,\d+1) ;
    \draw[thick,->] (0,0) -- (\d+1,0) ;
    
    \node[left=10pt, rotate=90] at (0,7.35) {\large $j=\mathrm{dist}_G(\cdot,v)$};
    \node[below] at (6,0) {\large $i=\mathrm{dist}_G(\cdot,u)$};

    \draw[line width=10pt, green!55, opacity=0.6, line cap=round]   (0,\d) -- (4,8);   
    \draw[line width=10pt, yellow!75!orange, opacity=0.5, line cap=round] (4,8) -- (8,4);   
    \draw[line width=10pt, violet!40, opacity=0.6, line cap=round]  (8,4) -- (\d,0);  
    \draw[line width=\lw] (0,\d) -- (\d,0);
    
    \draw[line width=10pt, cyan!45, opacity=0.6, line cap=round, line join=round]
       (4,8) -- (5,9) -- (5,10) -- (6,11) -- (9,11) -- (x);
    \draw[line width=\lw]
       (4,8) -- (5,9) -- (5,10) -- (6,11) -- (9,11) -- (x);

    \draw[line width=2.5pt, red] (z)--(y);
    \draw[line width=2.5pt, red, dotted] (y)--(ztilde);

    \draw[line width=10pt, orange!55, opacity=0.6, line cap=round, line join=round]
       (q') -- (9, 5) -- (10, 5) -- (11,6);
    \draw[line width=\lw]
       (q') -- (9, 5) -- (10, 5) -- (11,6);

    \foreach \k in {0,...,\d}{
        \fill (\k,{\d-\k}) circle (\thinpoint);
    }

    \node[left=5pt] at (0,\d) {\Large $u$};
    \node[below=5pt] at (\d,0) {\Large $v$};
    \node[above=4pt] at (x) {\Large $x$};
    \node[below left=2pt] at (q) {\Large $q$};
    \node[below left=2pt] at (q') {\Large $q'$};
    \node[above=4pt] at (x') {\Large $x'$};
    
    \node[above=6pt] at (z) {\Large $z$};
    \node[above=4pt] at (y) {\Large $y$};
    \node[below right] at (ztilde) {\Large $\widetilde{z}$};
    
    \foreach \p in {u, v, x, x', q, q', y, z, ztilde}{
        \fill (\p) circle (\thickpoint);
    }

    \node at (1.2, 10) {\Large $Q_1$};
    \node at (5, 6.2) {\Large $Q_2$};
    \node at (10, 1.2) {\Large $Q_3$};
    \node at (5.1, 10.8) {\Large $X$};
    \node at (10, 4.45) {\Large $X'$};
\end{tikzpicture}
        \end{adjustbox}
        \caption{The edge $yz$ (marked in bold red) is bad since $a(z) \leq a(q) < a(y)$. The edge $y\wz$ (dotted red, with $j(y) > j(\wz)$) exists by \cref{subclaim:crow_foot}.
        Note that only one of many such scenarios is depicted.
        }
        \label{subfig:bad_case}
    \end{subfigure}
    
    \caption{Visualisation of the relevant relations between $a(x), a(q), a(q'), a(x')$. 
    Each integer point $(i, j)$ corresponds to the set $B_{i, j}$, hence $u$ is at coordinate $(0, d)$ and $v$ is at coordinate $(d, 0)$ with the diametral path (visualised in black) being the diagonal of points $(i, j)$ satisfying $i+j = d$.
    Each vertex of the graph $G$ lies in the upper triangle of the plot.
    The blue diagonal lines correspond to the sets $C_a$.}
    \label{fig:basic_case}
\end{figure}
    We claim that the following ordering relation of the $a$-values of the vertices $x, q, q'$ and $x'$ always holds:
    \begin{equation}
        \label{equation:vertex_constellation}
        a(x) \leq a(q) \leq a(q') \leq a(x').
    \end{equation}
    To this end, we may always assume that $a(q) \leq a(q')$ holds; otherwise, we simply exchange $x$ with $x'$ and $q$ with $q'$.
    For now, suppose that  \cref{equation:vertex_constellation} indeed holds; a visualisation of this case is given in \cref{subfig:good_case}.
    Then, because
    there are $|Q_2|-2$ vertices between $q$ and $q'$ along $Q$ and 
    the $a$-values along $Q$ are strictly increasing (from $u$ to $v$), it follows that $\pi^\ast$ maps at least $|Q_2|-2$ vertices between the bags $C_{a(x)}$ and $C_{a(x')}$, and thus
    $\Gdist_{P_n}(\pi^\ast(x), \pi^\ast(x')) \geq |Q_2| -1$.
    Moreover, since $|R| \leq n - (d+1) + |Q_2|$ it holds that $\Gdist_G(x, x') \leq n-d+|Q_2|-2$.
    Putting these two inequalities together, we obtain
    $\Gdist_G(x, x') \leq \Gdist_{P_n}(\pi^\ast(x), \pi^\ast(x')) + n-1-d$.
    In particular, showing that \cref{equation:vertex_constellation} always holds concludes the proof.

    For the remainder of this proof, we assume that \cref{equation:vertex_constellation} does not hold.
    We say that an edge $yz \in E(R)$ within $X$ or $X'$ is \emph{bad} if one of the following two cases holds:
    \begin{enumerate}
        \item $yz \in E(X)$, $y$ is closer to $x$ than $z$ (in $X$), $a(y) > a(q)$, and $a(z) \leq a(q)$, 
        \item $yz \in E(X')$, $y$ is closer to $x'$ than $z$ (in $X'$), $a(y) < a(q')$, and $a(z) \geq a(q')$.
    \end{enumerate}
    We refer the reader to \cref{subfig:bad_case} (specifically the edge $yz$ depicted therein) for an example of a bad edge.

    Since \cref{equation:vertex_constellation} does not hold, at least one of the inequalities $a(q) < a(x)$ and $a(x') < a(q')$ must hold. Either case implies the existence of a bad edge within $X$ or $X'$ respectively. 
    In the following, we may \WLOG{} assume the bad edge $e = yz$ to be contained within $X$. The other case is symmetric.
    Since $yz$ is bad, it holds that $a(z) < a(y)$, so in particular $j(z) \geq j(y)$ as one of the following three cases must hold:
    
    \begin{itemize}
        \item $\mathmakebox[2.23cm][l]{i(y) = i(z) + 1}$ and $j(y) = j(z)$,
        \item $\mathmakebox[2.23cm][l]{i(y) = i(z) + 1}$ and $j(y) = j(z) - 1$,
        \item $\mathmakebox[2.23cm][l]{i(y) = i(z)}$ and $j(y) = j(z) - 1$.
    \end{itemize}
    
    \noindent A visualisation of the first of these cases is given in
    \cref{subfig:bad_case}.
    
    By \cref{subclaim:crow_foot}, in any of these cases $y$ needs to be adjacent to a vertex 
    $\wz$ with $j(\wz) < j(y)$.
    By definition $\Gdist_G(\wz, v) = j(\wz)$, so there is a shortest path from $\wz$ to $v$ of length $j(\wz)$.
    Let $s$ denote the vertex on this path that is contained in $R \cup Q$ and closest to $\wz$, call the resulting $\wz s$-subpath $S$.
    In the following, we perform a case distinction based on where $s$ may lie in $R \cup Q$ precisely. Each case assumption will ultimately lead to a contradiction, which in turn implies that $e$ cannot have been a bad edge.
    The cases correspond to the five subpaths that $R$ and $Q$ have been decomposed into; only the last of these cases (corresponding to $X$) is split into two subcases based on where $s$ lies relative to $y$. We exclude $q$ or $q'$ in some of the cases to obtain a partition of $R \cup Q$, so each vertex falls into exactly one of the six cases. 
    \cref{fig:case_distinction_cases} provides a schematic visualisation of these cases.
    
    \begin{figure}[p]
    \captionsetup{singlelinecheck = true, justification=justified}
    \centering
    
    \begin{subfigure}[t]{0.45\textwidth}
        \centering
        \begin{adjustbox}{width=\linewidth}
            \begin{tikzpicture}[>=Stealth, line join=round, font=\small, scale=0.7]

    \def\d{12}
    \def\lw{1.3pt}
    \def\thinpoint{3pt}
    \def\thickpoint{4pt}

    \coordinate (u) at (0, 12);
    \coordinate (v) at (12, 0);
    \coordinate (x) at (10, 12);
    \coordinate (q) at (4, 8);
    \coordinate (q') at (8, 4);
    \coordinate (x') at (11, 6);
    \coordinate (z) at (7, 11);
    \coordinate (y) at (8, 11);
    \coordinate (ztilde) at (8, 10);

    \coordinate (s) at (6, 6);
    
    \draw[step=1cm, gray!25, thin] (0,0) grid (\d+.5,\d+.5);
    
    \newcommand{\hatchtriangle}[2][blue!55]{
        \foreach \c in {-#2,...,#2}{%
            \ifnum\c<0
                \draw[#1] ({(#2+\c)/2},{(#2-\c)/2}) -- ({#2+\c},{#2});
            \else
                \draw[#1] ({(#2+\c)/2},{(#2-\c)/2}) -- ({#2},{#2-\c});
            \fi
        }%
    }
    \hatchtriangle[line width=0.6pt, blue!55, opacity=0.6]{\d};
    
    \draw[gray!60] (0,\d) -- (\d,\d) -- (\d,0);
    
    \draw[thick,->] (0,0) -- (0,\d+1) ;
    \draw[thick,->] (0,0) -- (\d+1,0) ;
    
    \node[left=10pt, rotate=90] at (0,7.35) {\large $j=\mathrm{dist}_G(\cdot,v)$};
    \node[below] at (6,0) {\large $i=\mathrm{dist}_G(\cdot,u)$};

    \draw[line width=10pt, green!55, opacity=0.6, line cap=round]   (0,\d) -- (4,8);   
    \draw[line width=10pt, yellow!75!orange, opacity=0.5, line cap=round] (4,8) -- (8,4);   
    \draw[line width=10pt, violet!40, opacity=0.6, line cap=round]  (8,4) -- (\d,0);  
    \draw[line width=\lw] (0,\d) -- (\d,0);
    
    \draw[line width=10pt, cyan!45, opacity=0.6, line cap=round, line join=round]
       (4,8) -- (5,9) -- (5,10) -- (6,11) -- (9,11) -- (x);
    \draw[line width=\lw]
       (4,8) -- (5,9) -- (5,10) -- (6,11) -- (9,11) -- (x);

    \draw[line width=2.5pt, red] (z)--(y);
    \draw[line width=2.5pt, red, dotted] (y)--(ztilde);

    \draw[line width=10pt, orange!55, opacity=0.6, line cap=round, line join=round]
       (q') -- (9, 5) -- (10, 5) -- (11,6);
    \draw[line width=\lw]
       (q') -- (9, 5) -- (10, 5) -- (11,6);
    
    \draw[line width=10pt, magenta!45, opacity=0.6, line cap=round, line join=round]
       (ztilde) -- (7,9) -- (7,7) -- (6,6);
    \draw[line width=\lw]
       (ztilde) -- (7,9) -- (7,7) -- (6,6);
    
    \foreach \k in {0,...,\d}{
        \fill (\k,{\d-\k}) circle (\thinpoint);
    }

    \node[left=5pt] at (0,\d) {\Large $u$};
    \node[below=5pt] at (\d,0) {\Large $v$};
    \node[above=4pt] at (x) {\Large $x$};
    \node[below left=2pt] at (q) {\Large $q$};
    \node[below left=2pt] at (q') {\Large $q'$};
    \node[above=4pt] at (x') {\Large $x'$};
    
    \node[above=6pt] at (z) {\Large $z$};
    \node[above=4pt] at (y) {\Large $y$};
    \node[below right] at (ztilde) {\Large $\widetilde{z}$};
    
    \node[below left=2pt] at (s) {\Large $s$};

    \foreach \p in {u, v, x, x', q, q', y, z, ztilde, s}{
        \fill (\p) circle (\thickpoint);
    }

    \node at (1.2, 10) {\Large $Q_1$};
    \node at (5, 6.2) {\Large $Q_2$};
    \node at (10, 1.2) {\Large $Q_3$};
    \node at (7.6, 8.1) {\Large $S$};
    \node at (5.1, 10.8) {\Large $X$};
    \node at (10, 4.45) {\Large $X'$};
\end{tikzpicture}
        \end{adjustbox}
        \caption{\textbf{\textsf{\textcolor{lipicsGray}{Case 1}}} ($s \in Q_2$)}
        \label{subfig:case_1}
    \end{subfigure}
    \hfill
    \begin{subfigure}[t]{0.45\textwidth}
        \centering
        \begin{adjustbox}{width=\linewidth}
            \begin{tikzpicture}[>=Stealth, line join=round, font=\small, scale=0.7]

    \def\d{12}
    \def\lw{1.3pt}
    \def\thinpoint{3pt}
    \def\thickpoint{4pt}

    \coordinate (u) at (0, 12);
    \coordinate (v) at (12, 0);
    \coordinate (x) at (10, 12);
    \coordinate (q) at (4, 8);
    \coordinate (q') at (8, 4);
    \coordinate (x') at (11, 6);
    \coordinate (z) at (7, 11);
    \coordinate (y) at (8, 11);
    \coordinate (ztilde) at (8, 10);

    \coordinate (s) at (11, 1);
    
    \draw[step=1cm, gray!25, thin] (0,0) grid (\d+.5,\d+.5);
    
    \newcommand{\hatchtriangle}[2][blue!55]{
        \foreach \c in {-#2,...,#2}{%
            \ifnum\c<0
                \draw[#1] ({(#2+\c)/2},{(#2-\c)/2}) -- ({#2+\c},{#2});
            \else
                \draw[#1] ({(#2+\c)/2},{(#2-\c)/2}) -- ({#2},{#2-\c});
            \fi
        }%
    }
    \hatchtriangle[line width=0.6pt, blue!55, opacity=0.6]{\d};
    
    \draw[gray!60] (0,\d) -- (\d,\d) -- (\d,0);
    
    \draw[thick,->] (0,0) -- (0,\d+1) ;
    \draw[thick,->] (0,0) -- (\d+1,0) ;
    
    \node[left=10pt, rotate=90] at (0,7.35) {\large $j=\mathrm{dist}_G(\cdot,v)$};
    \node[below] at (6,0) {\large $i=\mathrm{dist}_G(\cdot,u)$};

    \draw[line width=10pt, green!55, opacity=0.6, line cap=round]   (0,\d) -- (4,8);   
    \draw[line width=10pt, yellow!75!orange, opacity=0.5, line cap=round] (4,8) -- (8,4);   
    \draw[line width=10pt, violet!40, opacity=0.6, line cap=round]  (8,4) -- (\d,0);  
    \draw[line width=\lw] (0,\d) -- (\d,0);
    
    \draw[line width=10pt, cyan!45, opacity=0.6, line cap=round, line join=round]
       (4,8) -- (5,9) -- (5,10) -- (6,11) -- (9,11) -- (x);
    \draw[line width=\lw]
       (4,8) -- (5,9) -- (5,10) -- (6,11) -- (9,11) -- (x);

    \draw[line width=2.5pt, red] (z)--(y);
    \draw[line width=2.5pt, red, dotted] (y)--(ztilde);

    \draw[line width=10pt, orange!55, opacity=0.6, line cap=round, line join=round]
       (q') -- (9, 5) -- (10, 5) -- (11,6);
    \draw[line width=\lw]
       (q') -- (9, 5) -- (10, 5) -- (11,6);
    
    \draw[line width=10pt, magenta!45, opacity=0.6, line cap=round, line join=round]
       (ztilde) -- (8, 8) -- (10, 6) -- (10,4) -- (11, 3) -- (11, 1);
    \draw[line width=\lw]
       (ztilde) -- (8, 8) -- (10, 6) -- (10,4) -- (11, 3) -- (11, 1);
    
    \foreach \k in {0,...,\d}{
        \fill (\k,{\d-\k}) circle (\thinpoint);
    }

    \node[left=5pt] at (0,\d) {\Large $u$};
    \node[below=5pt] at (\d,0) {\Large $v$};
    \node[above=4pt] at (x) {\Large $x$};
    \node[below left=2pt] at (q) {\Large $q$};
    \node[below left=2pt] at (q') {\Large $q'$};
    \node[above=4pt] at (x') {\Large $x'$};
    
    \node[above=6pt] at (z) {\Large $z$};
    \node[above=4pt] at (y) {\Large $y$};
    \node[right=4pt] at (ztilde) {\Large $\widetilde{z}$};
    
    \node[below left=2pt] at (s) {\Large $s$};

    \foreach \p in {u, v, x, x', q, q', y, z, ztilde, s}{
        \fill (\p) circle (\thickpoint);
    }

    \node at (1.2, 10) {\Large $Q_1$};
    \node at (5, 6.2) {\Large $Q_2$};
    \node at (10, 1.2) {\Large $Q_3$};
    \node at (7.4, 8.0) {\Large $S$};
    \node at (5.1, 10.8) {\Large $X$};
    \node at (9, 5.5) {\Large $X'$};
\end{tikzpicture}
        \end{adjustbox}
        \caption{\textbf{\textsf{\textcolor{lipicsGray}{Case 2}}} ($s \in Q_3 \setminus \{q'\}$)}
        \label{subfig:case_2}
    \end{subfigure}
    \vspace{0.1cm}
    
    \begin{subfigure}[t]{0.45\textwidth}
        \centering
        \begin{adjustbox}{width=\linewidth}
            \begin{tikzpicture}[>=Stealth, line join=round, font=\small, scale=0.7]

    \def\d{12}
    \def\lw{1.3pt}
    \def\thinpoint{3pt}
    \def\thickpoint{4pt}

    \coordinate (u) at (0, 12);
    \coordinate (v) at (12, 0);
    \coordinate (x) at (10, 12);
    \coordinate (q) at (4, 8);
    \coordinate (q') at (8, 4);
    \coordinate (x') at (11, 6);
    \coordinate (z) at (7, 11);
    \coordinate (y) at (8, 11);
    \coordinate (ztilde) at (8, 10);

    \coordinate (s) at (10, 5);
    
    \draw[step=1cm, gray!25, thin] (0,0) grid (\d+.5,\d+.5);
    
    \newcommand{\hatchtriangle}[2][blue!55]{
        \foreach \c in {-#2,...,#2}{%
            \ifnum\c<0
                \draw[#1] ({(#2+\c)/2},{(#2-\c)/2}) -- ({#2+\c},{#2});
            \else
                \draw[#1] ({(#2+\c)/2},{(#2-\c)/2}) -- ({#2},{#2-\c});
            \fi
        }%
    }
    \hatchtriangle[line width=0.6pt, blue!55, opacity=0.6]{\d};
    
    \draw[gray!60] (0,\d) -- (\d,\d) -- (\d,0);
    
    \draw[thick,->] (0,0) -- (0,\d+1) ;
    \draw[thick,->] (0,0) -- (\d+1,0) ;
    
    \node[left=10pt, rotate=90] at (0,7.35) {\large $j=\mathrm{dist}_G(\cdot,v)$};
    \node[below] at (6,0) {\large $i=\mathrm{dist}_G(\cdot,u)$};

    \draw[line width=10pt, green!55, opacity=0.6, line cap=round]   (0,\d) -- (4,8);   
    \draw[line width=10pt, yellow!75!orange, opacity=0.5, line cap=round] (4,8) -- (8,4);   
    \draw[line width=10pt, violet!40, opacity=0.6, line cap=round]  (8,4) -- (\d,0);  
    \draw[line width=\lw] (0,\d) -- (\d,0);
    
    \draw[line width=10pt, cyan!45, opacity=0.6, line cap=round, line join=round]
       (4,8) -- (5,9) -- (5,10) -- (6,11) -- (9,11) -- (x);
    \draw[line width=\lw]
       (4,8) -- (5,9) -- (5,10) -- (6,11) -- (9,11) -- (x);

    \draw[line width=2.5pt, red] (z)--(y);
    \draw[line width=2.5pt, red, dotted] (y)--(ztilde);

    \draw[line width=10pt, orange!55, opacity=0.6, line cap=round, line join=round]
       (q') -- (9, 5) -- (10, 5) -- (11,6);
    \draw[line width=\lw]
       (q') -- (9, 5) -- (10, 5) -- (11,6);
    
    \draw[line width=10pt, magenta!45, opacity=0.6, line cap=round, line join=round]
       (ztilde) -- (8, 9) -- (10, 7) -- (10,5);
    \draw[line width=\lw]
       (ztilde) -- (8, 9) -- (10, 7) -- (10,5);
    
    \foreach \k in {0,...,\d}{
        \fill (\k,{\d-\k}) circle (\thinpoint);
    }

    \node[left=5pt] at (0,\d) {\Large $u$};
    \node[below=5pt] at (\d,0) {\Large $v$};
    \node[above=4pt] at (x) {\Large $x$};
    \node[below left=2pt] at (q) {\Large $q$};
    \node[below left=2pt] at (q') {\Large $q'$};
    \node[above=4pt] at (x') {\Large $x'$};
    
    \node[above=6pt] at (z) {\Large $z$};
    \node[above=4pt] at (y) {\Large $y$};
    \node[right=4pt] at (ztilde) {\Large $\widetilde{z}$};
    
    \node[below left=2pt] at (s) {\Large $s$};

    \foreach \p in {u, v, x, x', q, q', y, z, ztilde, s}{
        \fill (\p) circle (\thickpoint);
    }

    \node at (1.2, 10) {\Large $Q_1$};
    \node at (5, 6.2) {\Large $Q_2$};
    \node at (10, 1.2) {\Large $Q_3$};
    \node at (8.2, 7.7) {\Large $S$};
    \node at (5.1, 10.8) {\Large $X$};
    \node at (9, 5.5) {\Large $X'$};
\end{tikzpicture}
        \end{adjustbox}
        \caption{\textbf{\textsf{\textcolor{lipicsGray}{Case 3}}} ($s \in X' \setminus \{q'\}$)}
        \label{subfig:case_3}
    \end{subfigure}
    \hfill
    \begin{subfigure}[t]{0.45\textwidth}
        \centering
        \begin{adjustbox}{width=\linewidth}
            \begin{tikzpicture}[>=Stealth, line join=round, font=\small, scale=0.7]

    \def\d{12}
    \def\lw{1.3pt}
    \def\thinpoint{3pt}
    \def\thickpoint{4pt}

    \coordinate (u) at (0, 12);
    \coordinate (v) at (12, 0);
    \coordinate (x) at (10, 12);
    \coordinate (q) at (4, 8);
    \coordinate (q') at (8, 4);
    \coordinate (x') at (11, 6);
    \coordinate (z) at (7, 11);
    \coordinate (y) at (8, 11);
    \coordinate (ztilde) at (8, 10);

    \coordinate (s) at (3, 9);
    
    \draw[step=1cm, gray!25, thin] (0,0) grid (\d+.5,\d+.5);
    
    \newcommand{\hatchtriangle}[2][blue!55]{
        \foreach \c in {-#2,...,#2}{%
            \ifnum\c<0
                \draw[#1] ({(#2+\c)/2},{(#2-\c)/2}) -- ({#2+\c},{#2});
            \else
                \draw[#1] ({(#2+\c)/2},{(#2-\c)/2}) -- ({#2},{#2-\c});
            \fi
        }%
    }
    \hatchtriangle[line width=0.6pt, blue!55, opacity=0.6]{\d};
    
    \draw[gray!60] (0,\d) -- (\d,\d) -- (\d,0);
    
    \draw[thick,->] (0,0) -- (0,\d+1) ;
    \draw[thick,->] (0,0) -- (\d+1,0) ;
    
    \node[left=10pt, rotate=90] at (0,7.35) {\large $j=\mathrm{dist}_G(\cdot,v)$};
    \node[below] at (6,0) {\large $i=\mathrm{dist}_G(\cdot,u)$};

    \draw[line width=10pt, green!55, opacity=0.6, line cap=round]   (0,\d) -- (4,8);   
    \draw[line width=10pt, yellow!75!orange, opacity=0.5, line cap=round] (4,8) -- (8,4);   
    \draw[line width=10pt, violet!40, opacity=0.6, line cap=round]  (8,4) -- (\d,0);  
    \draw[line width=\lw] (0,\d) -- (\d,0);
    
    \draw[line width=10pt, cyan!45, opacity=0.6, line cap=round, line join=round]
       (4,8) -- (5,9) -- (5,10) -- (6,11) -- (9,11) -- (x);
    \draw[line width=\lw]
       (4,8) -- (5,9) -- (5,10) -- (6,11) -- (9,11) -- (x);

    \draw[line width=2.5pt, red] (z)--(y);
    \draw[line width=2.5pt, red, dotted] (y)--(ztilde);

    \draw[line width=10pt, orange!55, opacity=0.6, line cap=round, line join=round]
       (q') -- (9, 5) -- (10, 5) -- (11,6);
    \draw[line width=\lw]
       (q') -- (9, 5) -- (10, 5) -- (11,6);
    
    \draw[line width=10pt, magenta!45, opacity=0.6, line cap=round, line join=round]
       (ztilde) -- (7, 9) -- (s);
    \draw[line width=\lw]
       (ztilde) -- (7, 9) -- (s);
    
    \foreach \k in {0,...,\d}{
        \fill (\k,{\d-\k}) circle (\thinpoint);
    }

    \node[left=5pt] at (0,\d) {\Large $u$};
    \node[below=5pt] at (\d,0) {\Large $v$};
    \node[above=4pt] at (x) {\Large $x$};
    \node[below left=2pt] at (q) {\Large $q$};
    \node[below left=2pt] at (q') {\Large $q'$};
    \node[above=4pt] at (x') {\Large $x'$};
    
    \node[above=6pt] at (z) {\Large $z$};
    \node[above=4pt] at (y) {\Large $y$};
    \node[right=4pt] at (ztilde) {\Large $\widetilde{z}$};
    
    \node[below left=2pt] at (s) {\Large $s$};

    \foreach \p in {u, v, x, x', q, q', y, z, ztilde, s}{
        \fill (\p) circle (\thickpoint);
    }

    \node at (1.2, 10) {\Large $Q_1$};
    \node at (5, 6.2) {\Large $Q_2$};
    \node at (10, 1.2) {\Large $Q_3$};
    \node at (6.9, 8.4) {\Large $S$};
    \node at (5.1, 10.8) {\Large $X$};
    \node at (9, 5.5) {\Large $X'$};
\end{tikzpicture}
        \end{adjustbox}
        \caption{\textbf{\textsf{\textcolor{lipicsGray}{Case 4}}} ($s \in Q_1 \setminus \{q\}$). Note that the path $S$ indicated above is not valid because the $j$-value does not decrease in each edge.}
        \label{subfig:case_4}
    \end{subfigure}

    \vspace{0.1cm}
    \begin{subfigure}[t]{0.45\textwidth}
        \centering
        \begin{adjustbox}{width=\linewidth}
            \begin{tikzpicture}[>=Stealth, line join=round, font=\small, scale=0.7]

    \def\d{12}
    \def\lw{1.3pt}
    \def\thinpoint{3pt}
    \def\thickpoint{4pt}

    \coordinate (u) at (0, 12);
    \coordinate (v) at (12, 0);
    \coordinate (x) at (10, 12);
    \coordinate (q) at (4, 8);
    \coordinate (q') at (8, 4);
    \coordinate (x') at (11, 6);
    \coordinate (z) at (7, 11);
    \coordinate (y) at (8, 11);
    \coordinate (ztilde) at (8, 10);

    \coordinate (s) at (5, 9);
    
    \draw[step=1cm, gray!25, thin] (0,0) grid (\d+.5,\d+.5);
    
    \newcommand{\hatchtriangle}[2][blue!55]{
        \foreach \c in {-#2,...,#2}{%
            \ifnum\c<0
                \draw[#1] ({(#2+\c)/2},{(#2-\c)/2}) -- ({#2+\c},{#2});
            \else
                \draw[#1] ({(#2+\c)/2},{(#2-\c)/2}) -- ({#2},{#2-\c});
            \fi
        }%
    }
    \hatchtriangle[line width=0.6pt, blue!55, opacity=0.6]{\d};
    
    \draw[gray!60] (0,\d) -- (\d,\d) -- (\d,0);
    
    \draw[thick,->] (0,0) -- (0,\d+1) ;
    \draw[thick,->] (0,0) -- (\d+1,0) ;
    
    \node[left=10pt, rotate=90] at (0,7.35) {\large $j=\mathrm{dist}_G(\cdot,v)$};
    \node[below] at (6,0) {\large $i=\mathrm{dist}_G(\cdot,u)$};

    \draw[line width=10pt, green!55, opacity=0.6, line cap=round]   (0,\d) -- (4,8);   
    \draw[line width=10pt, yellow!75!orange, opacity=0.5, line cap=round] (4,8) -- (8,4);   
    \draw[line width=10pt, violet!40, opacity=0.6, line cap=round]  (8,4) -- (\d,0);  
    \draw[line width=\lw] (0,\d) -- (\d,0);
    
    \draw[line width=10pt, cyan!45, opacity=0.6, line cap=round, line join=round]
       (4,8) -- (5,9) -- (5,10) -- (6,11) -- (9,11) -- (x);
    \draw[line width=\lw]
       (4,8) -- (5,9) -- (5,10) -- (6,11) -- (9,11) -- (x);

    \draw[line width=2.5pt, red] (z)--(y);
    \draw[line width=2.5pt, red, dotted] (y)--(ztilde);

    \draw[line width=10pt, orange!55, opacity=0.6, line cap=round, line join=round]
       (q') -- (9, 5) -- (10, 5) -- (11,6);
    \draw[line width=\lw]
       (q') -- (9, 5) -- (10, 5) -- (11,6);
    
    \draw[line width=10pt, magenta!45, opacity=0.6, line cap=round, line join=round]
       (ztilde) -- (7, 10) -- (6, 9) -- (s);
    \draw[line width=\lw]
       (ztilde) -- (7, 10) -- (6, 9) -- (s);
    
    \foreach \k in {0,...,\d}{
        \fill (\k,{\d-\k}) circle (\thinpoint);
    }

    \node[left=5pt] at (0,\d) {\Large $u$};
    \node[below=5pt] at (\d,0) {\Large $v$};
    \node[above=4pt] at (x) {\Large $x$};
    \node[below left=2pt] at (q) {\Large $q$};
    \node[below left=2pt] at (q') {\Large $q'$};
    \node[above=4pt] at (x') {\Large $x'$};
    
    \node[above=6pt] at (z) {\Large $z$};
    \node[above=4pt] at (y) {\Large $y$};
    \node[right=4pt] at (ztilde) {\Large $\widetilde{z}$};
    
    \node[above=3pt, left=4pt] at (s) {\Large $s$};

    \foreach \p in {u, v, x, x', q, q', y, z, ztilde, s}{
        \fill (\p) circle (\thickpoint);
    }

    \node at (1.2, 10) {\Large $Q_1$};
    \node at (5, 6.2) {\Large $Q_2$};
    \node at (10, 1.2) {\Large $Q_3$};
    \node at (6.9, 9.2) {\Large $S$};
    \node at (5.1, 10.8) {\Large $X$};
    \node at (9, 5.5) {\Large $X'$};
\end{tikzpicture}
        \end{adjustbox}
        \caption{\textbf{\textsf{\textcolor{lipicsGray}{Case 5}}} ($s \in X \setminus \{q\}$, $s$ between $q$ and $y$). Note that the path $S$ indicated above is not valid because the $j$-value does not decrease in each edge.}
        \label{subfig:case_5}
    \end{subfigure}
    \hfill
    \begin{subfigure}[t]{0.45\textwidth}
        \centering
        \begin{adjustbox}{width=\linewidth}
            \begin{tikzpicture}[>=Stealth, line join=round, font=\small, scale=0.7]

    \def\d{12}
    \def\lw{1.3pt}
    \def\thinpoint{3pt}
    \def\thickpoint{4pt}

    \coordinate (u) at (0, 12);
    \coordinate (v) at (12, 0);
    \coordinate (x) at (11, 9);
    \coordinate (q) at (4, 8);
    \coordinate (q') at (8, 4);
    \coordinate (x') at (11, 6);
    \coordinate (z) at (7, 11);
    \coordinate (y) at (8, 11);
    \coordinate (ztilde) at (8, 10);
    \coordinate (sast) at (9, 8);
    \coordinate (t) at (11, 1);

    \coordinate (s) at (9, 9);
    
    \draw[step=1cm, gray!25, thin] (0,0) grid (\d+.5,\d+.5);
    
    \newcommand{\hatchtriangle}[2][blue!55]{
        \foreach \c in {-#2,...,#2}{%
            \ifnum\c<0
                \draw[#1] ({(#2+\c)/2},{(#2-\c)/2}) -- ({#2+\c},{#2});
            \else
                \draw[#1] ({(#2+\c)/2},{(#2-\c)/2}) -- ({#2},{#2-\c});
            \fi
        }%
    }
    \hatchtriangle[line width=0.6pt, blue!55, opacity=0.6]{\d};
    
    \draw[gray!60] (0,\d) -- (\d,\d) -- (\d,0);
    
    \draw[thick,->] (0,0) -- (0,\d+1) ;
    \draw[thick,->] (0,0) -- (\d+1,0) ;
    
    \node[left=10pt, rotate=90] at (0,7.35) {\large $j=\mathrm{dist}_G(\cdot,v)$};
    \node[below] at (6,0) {\large $i=\mathrm{dist}_G(\cdot,u)$};

    \draw[line width=10pt, green!55, opacity=0.6, line cap=round]   (0,\d) -- (4,8);   
    \draw[line width=10pt, yellow!75!orange, opacity=0.5, line cap=round] 
    (4,8) -- (8,4);   
    \draw[line width=10pt, violet!40, opacity=0.6, line cap=round]  (8,4) -- (\d,0);  
    \draw[line width=\lw] (0,\d) -- (\d,0);
    
    \draw[line width=10pt, cyan!45, opacity=0.6, line cap=round, line join=round]
       (4,8) -- (5,9) -- (5,10) -- (6,11) -- (8,11) -- (9, 10) -- (9, 8) -- (10, 8) -- (x);
    \draw[line width=\lw]
       (4,8) -- (5,9) -- (5,10) -- (6,11) -- (8,11) -- (9, 10) -- (9, 8) -- (10, 8) -- (x);

    \draw[line width=2.5pt, red] (z)--(y);
    \draw[line width=2.5pt, red, dotted] (y)--(ztilde);

    \draw[line width=10pt, orange!55, opacity=0.6, line cap=round, line join=round]
       (q') -- (9, 5) -- (10, 5) -- (11,6);
    \draw[line width=\lw]
       (q') -- (9, 5) -- (10, 5) -- (11,6);
    
    \draw[line width=10pt, magenta!80!black, opacity=0.5, line cap=round] 
       (ztilde) -- (9, 9) -- (9, 8);
    \draw[line width=\lw]
       (ztilde) -- (9, 9) -- (9, 8);

    \draw[line width=10pt, black!45, opacity=0.6, line cap=round, line join=round]
       (sast) -- (9, 7) -- (10, 6) -- (10, 4) -- (11, 3) -- (t);
    \draw[line width=\lw]
       (sast) -- (9, 7) -- (10, 6) -- (10, 4) -- (11, 3) -- (t);

    \foreach \k in {0,...,\d}{
        \fill (\k,{\d-\k}) circle (\thinpoint);
    }

    \node[left=5pt] at (0,\d) {\Large $u$};
    \node[below=5pt] at (\d,0) {\Large $v$};
    \node[above=4pt] at (x) {\Large $x$};
    \node[below left=2pt] at (q) {\Large $q$};
    \node[below left=2pt] at (q') {\Large $q'$};
    \node[above=4pt] at (x') {\Large $x'$};
    \node[below left=2pt] at (t) {\Large $t$};
    \node[right=4pt] at (s) {\Large $s$};
    \node[left=3pt] at (sast) {\Large $s^\ast$};
    
    \node[above=6pt] at (z) {\Large $z$};
    \node[above=4pt] at (y) {\Large $y$};
    \node[left=4pt] at (ztilde) {\Large $\widetilde{z}$};
    

    \foreach \p in {u, v, x, x', q, q', y, z, ztilde, s, sast, t}{
        \fill (\p) circle (\thickpoint);
    }

    \node at (1.2, 10) {\Large $Q_1$};
    \node at (5, 6.2) {\Large $Q_2$};
    \node at (10, 1.2) {\Large $Q_3$};
    \node at (7.8, 9.2) {\Large $S^\ast$};
    \node at (5.1, 10.8) {\Large $X$};
    \node at (9, 5.5) {\Large $X'$};
    \node at (11.3, 3.7) {\Large $T$};
\end{tikzpicture}
        \end{adjustbox}
        \caption{\textbf{\textsf{\textcolor{lipicsGray}{Case 6}}} ($s \in X \setminus \{q\}$, $s$ between $y$ and $x$). Note that in this case $S$ is the $\wz s$-subpath of $S^\ast$.}
        \label{subfig:case_6}
    \end{subfigure}
    
    \caption{The six cases of where $s$ may lie in our case distinction. We use the same conceptual drawing as in \cref{fig:basic_case}.}
    \label{fig:case_distinction_cases}
\end{figure}

    \proofsubparagraph{Case 1:} $s \in Q_2$ (see \cref{subfig:case_1})\\
    Moving along $S$, the $j$-value of consecutive vertices decreases by one for each consecutive edge.
    Thus for the extended path $y\wz \arr{S} s$, in each step the $j$-value decreases as well, whereas in the path $yz \arr{X} q \arr{Q_2} s = y \arr{R} s$,
    for the first edge (namely $yz$), the $j$-value does not decrease since $yz$ is bad. Hence, the latter path must be at least one edge longer than the former path.
    However, $y \arr{R} s$ must be a shortest path since $R$ is a shortest path between $x$ and $x'$, which yields a contradiction.
    
    \proofsubparagraph{Case 2:}
    $s \in Q_3 \setminus \{q'\}$ (see \cref{subfig:case_2})\\
    Again, along the path $y\wz \arr{S} s$, the $j$-value decreases in each step. In the path $yz \arr{X} q \arr{Q_2} q' \arr{Q_3} s$, this is not the case for the edge $yz$ since it is bad. However, we cannot derive a contradiction as easily as in the previous case.
    
    We claim that $j(\wz) \geq j(q) - 2$.
    To see this, note that $q$ has $(i(q), j(q))$ with $i(q) + j(q)$ minimal among all possible pairs $(i, j)$ such that $i-j = a(q)$. Moreover, $a(z) \leq a(q)$ and $a(q) < a(y)$ since $yz$ is bad.
    The following claim uses the minimality of the pair $(i(q), j(q))$ and allows us to obtain a lower bound on the $j$-value of $z$.

    \begin{claim}
        \label{subclaim:jf_claim}
        Any vertex $w \in V(G)$ with $a(w) \leq a(q)$ satisfies $j(w) \geq j(q)$.
    \end{claim}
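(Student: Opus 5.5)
We need to prove: any vertex $w$ with $a(w)\le a(q)$ satisfies $j(w)\ge j(q)$. Here $q$ lies on the shortest $uv$-path $Q$, so $i(q)+j(q)=d$.

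The plan is to combine the definition of the $a$-value with the triangle inequality in $G$. Let $w$ be any vertex with $a(w)\le a(q)$, and write $i\ceq i(w)$ and $j\ceq j(w)$.

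First, since $\{u,v\}$ is a diametral pair at distance $d$, the triangle inequality gives $i+j=\Gdist_G(u,w)+\Gdist_G(w,v)\ge \Gdist_G(u,v)=d$. This is exactly why every vertex lies in some $B_{i,j}$ with $i+j\ge d$.

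Second, we use the hypothesis $a(w)\le a(q)$, which reads $i-j\le i(q)-j(q)$. Since $q\in Q$, we have $i(q)+j(q)=d$, so the hypothesis becomes $i-j\le d-2j(q)$. Combining this with $i\ge d-j$ from the first step yields
\[
d-2j \;\le\; i-j \;\le\; d-2j(q),
\]
and hence $j\ge j(q)$, which is the claim.

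Geometrically, in the $(i,j)$-picture of \cref{fig:basic_case}, the set $\{a\le a(q)\}$ is the region on or above the diagonal line through $q$, intersected with the upper triangle $i+j\ge d$. Its lowest point is $q$ itself, sitting on the line $i+j=d$. The only fact needed is that $q$ lies on a shortest $uv$-path, so there is no real obstacle. The one point to state carefully is that $i(q)+j(q)=d$ holds precisely because $Q$ is a shortest $uv$-path through $q$.
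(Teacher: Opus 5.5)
Your proof is correct and is essentially the paper's argument: both combine the hypothesis $a(w)\le a(q)$ with the inequality $i(w)+j(w)\ge d=i(q)+j(q)$. The paper invokes this inequality as ``minimality of the pair $(i(q),j(q))$'', whereas you derive it explicitly from the triangle inequality and the fact that $q$ lies on the shortest $uv$-path $Q$.
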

    \begin{claimproof}
        Since $a(w) \leq a(q)$, it holds that (i) $i(w) - i(q) \leq j(w) - j(q)$ by definition of the $a$-value. Moreover, by minimality of the pair $(i(q), j(q))$, we have (ii) $i(w) + j(w) \geq i(q) + j(q)$.
        Putting these inequalities together yields
        \begin{align*}
            j(w) + i(q) &= i(w) + j(w) - i(w) + i(q)   \\
            &\geq i(q) + j(q) - i(w) + i(q)  \tag*{(by (ii))} \\
            &\geq i(q) + j(q) - j(w) + j(q). \tag*{(by (i))}
        \end{align*}
        Subtracting $i(q)$ and adding $j(w)$ on both sides yields $2j(w) \geq 2j(q)$.
    \end{claimproof}
    Using \cref{subclaim:jf_claim} and $a(z) \leq a(q)$, we get $j(z) \geq j(q)$. Since $z$ and $\wz$ have distance at most two in $G$ and each edge can change the $j$-value by at most one, it follows that $j(\wz) \geq j(q) - 2$.

    Let $\ell \ceq |V(G) \setminus (Q \cup R)|$ denote the number of vertices in $G$ that are neither in $Q$ nor $R$.
    
    \begin{claim}
        \label{subclaim:number_unused_nodes}
        It holds that $\ell \geq |Q_2| - 2$.
    \end{claim}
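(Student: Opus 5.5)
The plan is to find $|Q_2|-2$ vertices outside $Q \cup R$ on the path $S$. By construction, $s$ is the first vertex of the shortest $\wz v$-path that lies in $R \cup Q$. Hence every vertex of $S$ other than $s$ lies in $V(G) \setminus (Q \cup R)$, and so
\[
\ell \geq |S| - 1 = \len(S).
\]
It therefore suffices to show $\len(S) \geq |Q_2| - 2$.

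To bound $\len(S)$ I will use $j$-values. The path $S$ is a subpath of a shortest path from $\wz$ to $v$, so the $j$-value drops by exactly one along each edge of $S$. This gives $\len(S) = j(\wz) - j(s)$.

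For the lower bound on $j(\wz)$, I use the inequality already established in Case 2: $j(\wz) \geq j(q) - 2$. It follows from \cref{subclaim:jf_claim} applied to $z$, together with $\Gdist_G(z, \wz) \leq 2$.

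For the upper bound on $j(s)$, I use that $Q$ is a shortest $uv$-path, so $j$-values decrease by exactly one along $Q$ from $u$ to $v$. Since $q$ and $q'$ are the endpoints of $Q_2$, this gives $j(q') = j(q) - (|Q_2| - 1)$. Since $s \in Q_3 \setminus \{q'\}$, the vertex $s$ lies strictly beyond $q'$ towards $v$, so $j(s) \leq j(q') - 1 = j(q) - |Q_2|$.

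Combining these bounds,
\[
\ell \geq \len(S) = j(\wz) - j(s) \geq \big(j(q) - 2\big) - \big(j(q) - |Q_2|\big) = |Q_2| - 2.
\]

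The only point needing care is the boundary case $\wz \in Q \cup R$. Then $s = \wz$, $S$ has length zero, and the displayed bound still holds formally, with the same inequalities applying to $s = \wz$. This degenerate case forces $|Q_2| \leq 2$, so the claim remains true. I expect no real obstacle beyond checking that $j$ is strictly monotone along both $S$ and $Q$, which holds because both are segments of shortest paths ending at $v$.
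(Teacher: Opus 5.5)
Your proof is correct and is essentially the paper's argument. Both count the $|S|-1=\len(S)=j(\wz)-j(s)$ vertices of $S$ outside $Q\cup R$, and both bound this count using $j(\wz)\ge j(q)-2$ and $j(s)<j(q')=j(q)-(|Q_2|-1)$. The only difference is cosmetic: the paper first dismisses $|Q_2|\le 2$ and compares against the vertex $\widetilde{q}$ on $Q_2$ at distance two from $q$, whereas you use $j(q)-2$ directly, which makes that case split (and your separate degenerate-case remark) unnecessary.
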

    \begin{claimproof}
        If $|Q_2| \leq 2$, then the claim is trivially satisfied. Hence we may assume $|Q_2| \geq 3$ and let $\widetilde{q}$ be the unique vertex along $Q_2$
        at distance two from $q$.
        Then the vertices along $\wz \arr{S} s$ and $\widetilde{q} \arr{Q_2} q'$ (the former subpath may be longer) decrease in $j$-value by exactly one for each edge traversed (this also holds for all of $Q_2$). 
        We have
        \[
            \len(S) = j(\wz) - j(s) > j(\widetilde{q}) - j(q') = |Q_2| - 3
        \]
        because $j(\wz) \geq j(\widetilde{q})$ and $j(s) < j(q')$. Thus $|S| \geq |Q_2| - 1$ and since $S \cap (R \cup Q) = \{s\}$, this shows that $\ell \geq |Q_2| - 2$.
    \end{claimproof}
    
    In $R$, there are $|X| + |X'| - 2$ vertices outside of $Q$.
    Moreover, there are exactly $n-(d+1)$ vertices in $V(G) \setminus Q$. 
    Hence $|X| + |X'| - 2 \leq n - (d+1) - \ell$. Then, using \cref{subclaim:number_unused_nodes}, we have
    \begin{equation}
        \label{equation:bounding_length}
        \begin{aligned}
            \len(R) &= \len(X) + \len(Q_2) + \len(X') \\
                &= |X| + |Q_2| + |X'| - 3 \\
                &\leq |X| + |X'| + 2 + \ell - 3 \\
                &= |X| + |X'| -2 + \ell + 1 \\
                &\leq n - (d+1) - \ell + \ell + 1 \\
                &= n - d. 
        \end{aligned}
    \end{equation}
    But then $\Gdist_G(x,x') < n-d+1$ in contradiction to our initial assumption.
    
    \proofsubparagraph{Case 3:} $s \in X' \setminus \{q'\}$ (see \cref{subfig:case_3})\\
    Since $y$ and $s$ are on the shortest path $R$ between $x$ and $x'$, $y \arr{R} s$ is a shortest $ys$-path.
    In particular, $y\wz\arr{S}s$ cannot have a strictly smaller length than $y \arr{X} q \arr{Q_2} q' \arr{X'} s$ and thus also not strictly smaller than $q \arr{Q_2} q'$.
    Hence, there are at least $|Q_2| - 2$ vertices strictly between $y$ and $s$ (along $S$) that are neither in $Q$ nor $R$.
    We then employ the same argument as in the previous case (compare \cref{equation:bounding_length}) to obtain a contradiction.

    \proofsubparagraph{Case 4:} $s \in Q_1 \setminus \{q\}$ (see \cref{subfig:case_4})\\
    By assumption on the edge $yz$ it holds that $a(y) > a(q)$.
    Thus by \cref{subclaim:crow_foot}, we have $a(\wz) \geq a(y)$.
    Moreover, by the case assumption $a(s) < a(q)$, which, in summary, implies $a(s) < a(q) < a(y) \leq a(\wz)$.
    As a consequence, somewhere along $S$ the $a$-value of two consecutive vertices needs to decrease.
    Formally, there must exist an edge $\alpha \beta$ along $S$ with $\alpha$ closer to $\wz$ than $\beta$ such that $a(\alpha) > a(\beta)$.
    Since each edge along $S$ decreases the $j$-value by exactly one, we have $j(\beta) = j(\alpha) - 1$.
    This yields
    \begin{flalign*}
        \ \quad&\begin{alignedat}{2}
                &&        a(\alpha) &> a(\beta) \\
        \iff\quad && i(\alpha) - j(\alpha) &> i(\beta) - j(\beta) \\
        \iff\quad && i(\alpha) - j(\beta) - 1 &> i(\beta) - j(\beta) \\
        \iff\quad && i(\alpha) - 1 &> i(\beta),
        \end{alignedat}&
    \end{flalign*}
    a contradiction, since adjacent vertices can differ in their distance to $u$ by at most one.
    \bigskip
    
    \noindent For the remaining two cases, note that $s \neq y$. This is because the $j$-value decreases with each edge taken along $S$, so $j(s) \leq j(\wz)$. At the same time, $j(\wz) < j(y)$ by choice of $\wz$.
    
    \proofsubparagraph{Case 5:} $s \in X \setminus \{q\}$, $s$ between $q$ and $y$ (see \cref{subfig:case_5})\\
    Since $s$ lies between $q$ and $y$ it holds that $\Gdist_G(q, s) < \Gdist_G(q, y)$.
    Then we have $\len(z \arr{X} s)\geq j(z)-j(s)$ since the $j$-value can decrease by at most one along each edge.
    This implies $\len(yz \arr{X} s)\geq j(z)-j(s)+1 \geq j(y) - j(s) + 1$ where the last inequality holds by $j(z) \geq j(y)$ since $yz$ is bad.
    Since $R$ is a shortest path from $x$ to $x'$, $yz \arr{X} s$ needs to be a shortest path from $y$ to $s$.
    However, $\len(\wz \arr{S}s)=j(\wz)-j(s)$ by definition of $S$ and with $j(\wz) = j(y) - 1$ it follows that $\len(y\wz \arr{S}s)=j(y)-j(s)$.
    This is a contradiction.

    \proofsubparagraph*{Case 6:} $s \in X \setminus \{q\}$, $s$ between $y$ and $x$ (see \cref{subfig:case_6})\\
    Formally, one has $\Gdist_G(q, s) > \Gdist_G(q, y)$. 
    In contrast to previous cases, we do not care about $s$ (the first meeting point of the shortest path between $\wz$ and $v$ with $X$), but rather the last which we denote by $s^\ast$ in the following.
    Let $S^\ast$ denote the corresponding $\wz s^\ast$-subpath.
    Since the $j$-value decreases along each edge of $S^\ast$ it follows that $s^\ast$ also lies between $y$ and $x$, the reason is similar to 
    \textbf{\textsf{\textcolor{lipicsGray}{Case 4}}}.
    Note that $S^\ast$ need not be a subpath of $X$. However, we have $\len(y\wz \arr{S^\ast} s^\ast) \leq \len(y \arr{X} s^\ast)$ due to decreasing $j$-values.
    
    Note that we can decompose $R$ as shown below, with a (potentially shorter) alternative path $R^\ast$ which follows $S^\ast$ between $s^\ast$ and $y$:
    
    \begin{equation}
        \label{equation:RandRstar}
        \begin{aligned}
        R = x \arr{X} s^\ast &\arr{\mathmakebox[\widthof{$\scriptstyle S^\ast$}]{X}} \hphantom{\wz}yz \arr{X} q \arr{Q_2} q' \arr{X'} x', \\
        R^\ast = x \arr{X} s^\ast &\arr{S^\ast} \wz yz \arr{X} q \arr{Q_2} q' \arr{X'} x'.
        \end{aligned}
    \end{equation}
    Now consider the shortest path between $s^\ast$ and $v$ that is the continuation (starting at $s^\ast$) of the shortest $\wz v$-path that was used to define $S^\ast$ and let $T$ denote the $s^\ast t$-subpath where $t$ is the first vertex in $R \cup Q$ after $s^\ast$.
    It is clear that $t \notin V(X)$. Moreover, $t \notin V(Q_2) \cup V(X')$ as this would yield an even shorter $xx'$-path than we obtained in \textbf{\textsf{\textcolor{lipicsGray}{Case 1}}} and \textbf{\textsf{\textcolor{lipicsGray}{Case 3}}}.
    Similarly, $t \notin V(Q_1)$ by using the same reasoning as in \textbf{\textsf{\textcolor{lipicsGray}{Case 4}}}.
    Hence $t \in V(Q_3)$.
    We construct an $xx'$-path as follows:
    \[
        R' \ceq x \arr{X} s^\ast \arr{T} t \arr{Q_3} q' \arr{X'} x'.
    \]
    Let $U, U'$ denote the $s^\ast q'$-subpaths of $R, R'$, respectively.
    Along $T$ and $S^\ast$ the $j$-value of neighbouring vertices differs by exactly one.
    Hence, we derive the following two (in)equalities for the lengths of $U$ and $U'$ using the $j$-values of intermediate vertices along the paths:
    \begin{align}
        \label{equ:lenU}
        \len(U) &\geq \big(j(y) - j(s^\ast)\big) + \big(j(y) - j(q') + 1\big), \\
        \label{equ:lenUprime}
        \len(U') &= \big(j(s^\ast) - j(t)\big) + \big(j(q') - j(t)\big).
    \end{align}
    The first summand on the right-hand side of \cref{equ:lenU} is due to \cref{equation:RandRstar} since $ \len(s^\ast \arr{X} y) \geq \len(s^\ast \arr{S^\ast} \wz y) = j(y) - j(s^\ast)$.
    Since $U$ is a subpath of a shortest $xx'$-path, it follows that $U$ is a shortest $s^\ast q'$-path and thus $\len(U) \leq \len(U')$.
    Hence, combining \cref{equ:lenU,,equ:lenUprime}, we get
    \begin{flalign*}
        \quad&\begin{alignedat}{2}
            &&  \big(j(s^\ast) - j(t)\big) + \big(j(q') - j(t)\big) &\geq \big(j(y) - j(s^\ast)\big) + \big(j(y) - j(q') + 1\big) \\
            \iff\quad &&  j(s^\ast) - j(t) &\geq j(y) - j(q') + \mfrac{1}{2}.
        \end{alignedat}&
    \end{flalign*}
    As $j$-values are integers, we can round up the right-hand side of the last inequality.
    With $j(y) - j(q') + 1 \geq |Q_2| - 1$ it now follows that
    $\len(T) \geq |Q_2| - 1$ and hence $|T| \geq |Q_2|$. Since the interior of $T$ is disjoint from $R \cup Q$, we have found $|Q_2| - 2$ vertices that are not used by $R$ or $Q$. Then 
    \begin{equation*}
        \begin{aligned}
            |R| &\leq n - (d+1) + |Q_2| - (|T|-2) \\
                &= n - d + 1 + |Q_2| - |T| \\
                &\leq n - d + 1 \\
        \end{aligned}
    \end{equation*}
    and hence $\Gdist_G(x, x') = \len(R) \leq n-d$, a contradiction to our initial assumption.\bigskip
    
    \noindent In all of the above cases, we have obtained a contradiction. In particular, this implies that $X$ cannot have contained a bad edge.
    Applying the same case distinction (up to symmetry of the cases) to a bad edge in $X'$ yields the same conclusion for $X'$.
    In total, this rules out the existence of bad edges altogether, so \cref{equation:vertex_constellation} holds. This concludes the proof.
\end{proof}

\section{\texorpdfstring{Proofs Omitted from \cref{section:max_distortion_bounded_pathlength}}{Proofs Omitted from Section~\ref{section:max_distortion_bounded_pathlength}}}
\label{appendix:max_distortion_bounded_pathlength}
\newcommand{\gammamon}{\ensuremath{\overset{\smash{\raisebox{-0.4ex}{$\scriptscriptstyle\rightarrow$}}}}{\dis}}
\newcommand{\gammaanti}{\ensuremath{\overset{\smash{\raisebox{-0.4ex}{$\scriptscriptstyle\leftarrow$}}}{\dis}}}

\newcommand{\mon}{\ensuremath{\overset{\smash{\raisebox{-0.3ex}{$\scriptscriptstyle\rightarrow$}}}{\pi}}}
\newcommand{\anti}{\ensuremath{\overset{\smash{\raisebox{-0.3ex}{$\scriptscriptstyle\leftarrow$}}}{\pi}}}

\newcommand{\rev}[1]{#1^{\textsf{rev}}}

\subsection{\texorpdfstring{Proof of \cref{proposition:2_approx_real_point_sets}}{Proof of Theorem~\ref{proposition:2_approx_real_point_sets}}}
We extend the definitions of $\dis_{(\infty)}$ and $\Ddist{(\infty)}$ to (pseudo)metric spaces in the obvious way.
Throughout, finite (ordered) subsets $X = \{x_1 < \cdots < x_n\}$ of $\R$ are equipped with the standard metric $\abs{x-y}$,
that is, we view $X$ as a metric space. For $i \leq j \in [n]$ define 
\begin{align*}
    X_{[i,j]} \ceq \{x_i, \hdots, x_j\}, \quad X_{(i,j)} \ceq \{x_{i+1}, \hdots, x_{j-1}\},
\end{align*}
and analogous half-open variants.
Note that, as for path graphs which are special cases of subsets of the real line,
distances in such subsets decompose along the respective order.
The \emph{reverse of $X$}, denoted $\rev{X}$, is defined as $\{-x_n < \cdots < -x_1\}$
and is isometric to $X$ with $x_i \mapsto - x_i$ being an isometry. 
Given another ordered subset $Y = \{y_1 < \cdots < y_n\}$ of $\R$, 
we denote by $\mon$ the monotone bijection defined by
$x_i \mapsto y_i$ for all $i \in [n]$.
Analogously, we write $\anti$ for the antitone bijection defined by 
$x_i \mapsto y_{n+1-i}$ for all $i \in [n]$.
These are the two bijections of \cref{proposition:2_approx_real_point_sets}.
Note that the monotone bijection $X \to Y$
corresponds to the antitone bijection $X \to \rev{Y}$
by chaining it with the described isometry.

The proof of \cref{proposition:2_approx_real_point_sets} relies crucially on the following lemma, which,
intuitively, tells us that
if an alignment significantly reduces the distortion incurred by the monotone alignment
on large distances in the space $X$, then it must act antitonically on certain outer tails of $X$.

\begin{lemma}\label{lemma:auxiliary_monotone_guarantee}
    Let $X = \{x_1 < \cdots < x_n\}$ and $Y = \{y_1 < \cdots < y_n\}$ be ordered subsets of the 
    real line, and $\pi \in \Bij(X,Y)$. If there are $i < j \in [n]$ such that
    \begin{align*}
        |x_i - x_j| - |y_i - y_j| > \mfrac{3}{2} \dis_{(\infty)}(\pi),
    \end{align*}
    then $\pi (x') < \pi (x)$ holds for all $x \in X_{[1,i]}, \, x' \in X_{[j,n]}$. 
\end{lemma}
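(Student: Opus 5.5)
Throughout, write $D \ceq \dis_{(\infty)}(\pi)$, $L \ceq X_{[1,i]}$, $R \ceq X_{[j,n]}$, and $g \ceq (y_j - y_i) + \frac{1}{2}D$. The plan is to (i) separate $\pi(L)$ from $\pi(R)$ by large gaps in $Y$, (ii) argue that $\pi(L)$ and $\pi(R)$ cannot interleave, and (iii) rule out the orientation \enquote{$\pi(L)$ to the left of $\pi(R)$} by an index-counting argument. The assumption then forces $\pi(R)$ entirely to the left of $\pi(L)$, which is the claim.

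\textbf{Step (i): large gaps.} For every $a \in L$ and $b \in R$ we have $|a - b| \geq x_j - x_i$, since distances on the line decompose along the order. Hence
\[
    |\pi(a) - \pi(b)| \;\geq\; |a-b| - D \;>\; (y_j - y_i) + \tfrac{3}{2}D - D \;=\; g .
\]
So, in the sorted order of $Y$, whenever an image of $L$ is followed by an image of $R$ (or vice versa), the two are more than $g$ apart. Moreover, every point of $Y$ strictly between them lies outside $\pi(L \cup R)$, that is, it is the image of one of the $j-i-1$ points of $X_{(i,j)}$.

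\textbf{Step (iii): the easy orientation (assuming (ii)).} Suppose all of $\pi(L)$ lies to the left of all of $\pi(R)$. The set $\pi(L)$ has $i$ elements, so its largest element is some $y_p$ with $p \geq i$. Likewise $\pi(R)$ has $n-j+1$ elements, so its smallest element is some $y_q$ with $q \leq j$. Then
\[
    y_q - y_p \;\leq\; y_j - y_i \;<\; g ,
\]
contradicting Step (i). Hence the only non-interleaved configuration is $\pi(R)$ entirely to the left of $\pi(L)$, which is exactly the statement: $\pi(x') < \pi(x)$ for all $x \in L$ and $x' \in R$.

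\textbf{Step (ii): no interleaving (the main obstacle).} I must exclude configurations in which, in the order of $Y$, blocks of $\pi(L)$ and $\pi(R)$ alternate, for example $\pi(a) < \pi(b) < \pi(a')$ with $a, a' \in L$ and $b \in R$. This is where the factor $\frac{3}{2}$, rather than $1$, should be needed. The first thing I would try is the following. In such a triple, the distances $|\pi(a) - \pi(b)|$ and $|\pi(b) - \pi(a')|$ both exceed $g$, so $|\pi(a) - \pi(a')| > 2g$. Also $|b - a|$ and $|b - a'|$ are within $D$ of these image distances. On the line, $|a - a'| = \big|\,|b-a| - |b-a'|\,\big|$ because $a, a' \leq x_i < x_j \leq b$. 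Comparing $|a - a'|$ with $|\pi(a) - \pi(a')|$ then gives a distortion of roughly $2g - D$ or more, and the aim is to push this above $D$ for a contradiction. If that estimate is too weak, I would refine it as follows. Take the alternating blocks with their outermost points, and count, as in Step (iii), how many indices of $Y$ each of the at least two gaps (each of length $> g$) must swallow. Only $j-i-1$ non-image points are available in total, while each gap must span an index range whose $Y$-length exceeds $y_j - y_i$. I would then extract a contradiction by combining this counting with the $\frac{1}{2}D$ slack in $g$. The delicate part is that spacing in $Y$ is not uniform, so long gaps do not immediately translate into many indices. I expect this bookkeeping, and the interplay with the extra $\frac{1}{2}D$, to be where the actual work lies.
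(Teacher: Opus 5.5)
Your Steps (i) and (iii) are correct. Step (iii) is a tidy substitute for the paper's cardinality argument, which locates one pair $\tilde x \in X_{[j,n]}$, $\hat x \in X_{[1,i]}$ with $\pi(\tilde x) < \pi(\hat x)$. The gap is Step (ii). It is the real content of the lemma and the only place the factor $\frac{3}{2}$ is needed, and you leave it open. The estimate you sketch cannot close it.

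Write $\alpha = |\pi(a)-\pi(b)|$ and $\beta = |\pi(b)-\pi(a')|$. If you only use $\alpha,\beta > g$ together with $\bigl||b-a|-|b-a'|\bigr| \le |\alpha-\beta|+2D$, the best you get is that $\{a,a'\}$ is distorted by at least $2\min\{\alpha,\beta\}-2D > 2(y_j-y_i)-D$. Your ``$2g-D$'' equals $2(y_j-y_i)$. Both bounds are governed by $y_j-y_i$, which the hypothesis allows to be arbitrarily small compared with $D$, so no contradiction follows. Your fallback index-counting plan is not carried out, and it is also not needed.

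The missing idea is to estimate on the $X$-side rather than through $g$. Suppose $\pi(a)<\pi(b)<\pi(a')$ with $a,a'\in L$ and $b\in R$. Then
\[
    |\pi(a)-\pi(a')| = |\pi(a)-\pi(b)|+|\pi(b)-\pi(a')| \ge |a-b|+|b-a'|-2D .
\]
On the line, $|a-b|+|b-a'|-|a-a'| = 2\bigl(b-\max\{a,a'\}\bigr) \ge 2(x_j-x_i)$. Hence $\{a,a'\}$ is distorted by at least $2(x_j-x_i)-2D$. This exceeds $D$, because the hypothesis gives $x_j-x_i > (y_j-y_i)+\frac{3}{2}D \ge \frac{3}{2}D$.

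The other pattern, $\pi(b)<\pi(a)<\pi(b')$ with $b,b'\in R$ and $a\in L$, is excluded the same way using $2\bigl(\min\{b,b'\}-a\bigr)\ge 2(x_j-x_i)$. You need both patterns, since any interleaving of $\pi(L)$ and $\pi(R)$ contains one of them. This is exactly the inequality the paper applies, twice, to propagate from one reversed pair to all pairs. With it as Step (ii), your Steps (i)--(iii) form a complete proof.
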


\begin{proof}
    Write $\dis \ceq \dis_{(\infty)}(\pi)$ and note that the assumption implies
        $2(x_j-x_i) > 3\dis$.
    Define
    \begin{align*}
        L \ceq \min \{k \in [n] \mid \pi(x_k) \in Y_{[i,j]} \}, \quad
        R \ceq \max \{k \in [n] \mid \pi(x_k) \in Y_{[i,j]} \}
    \end{align*}
    and note that $x_R - x_L \leq y_j - y_i + \dis < x_j - x_i$ by
    definition of $\dis$. It follows from $\abs*{Y_{[i,j]}} > \abs*{X_{(i,j)}}$
    that either $L \leq i$ or $R \geq j$ holds and since $x_R - x_L < x_j - x_i$
    the statements cannot be true simultaneously. The resulting cases $L \leq i$
    and $R \geq j$ are easily seen to be equivalent by reversing both spaces.
    We shall therefore only consider the case $L \leq i$ in which, as noted before,
    $R < j$ has to hold. 
    Intuitively, the preimage of $Y_{[i,j]}$ under $\mon$ has diameter $x_j-x_i$, which exceeds
    $y_j-y_i$ by more than $\frac{3}{2}\dis$. By shifting this preimage to the left, the bijection $\pi$
    can reduce its diameter to $x_R-x_L < x_j-x_i$ and thereby incur less distortion than $\mon$. 
    Let $\widehat{x} \in X_{[1,i]}$ such that $\pi (\widehat{x}) \in Y_{[i,j]}$.
    Since $\abs*{X_{[j,n]}} > \abs*{Y_{(j,n]}}$ and no point of $X_{[j,n]}$ maps under $\pi$
    into $Y_{[i,j]}$ (by definition of $R$ and $R < j$), there exists some $\widetilde{x} \in X_{[j,n]}$
    with $\pi (\widetilde{x}) \in Y_{[1,i)}$ and
    \begin{align*}
        \pi (\widetilde{x}) < y_i \leq \pi (\widehat{x}).
    \end{align*}
    Let $x \in X_{[1,i]}$ and assume towards a contradiction that $\pi (x) < \pi (\widetilde{x})$.
    Then $\pi (x) < \pi (\widetilde{x}) < \pi (\widehat{x})$ and thus
    \begin{align*}
        |x - \widehat{x}| + \dis
        &\geq |\pi (x) - \pi (\widehat{x}) | \\
        &= |\pi (x) - \pi (\widetilde{x})| + |\pi (\widetilde{x}) - 
            \pi (\widehat{x}) | \\
        &\geq |x - \widetilde{x}| + |\widetilde{x}- \widehat{x}| - 2\dis.
    \end{align*}
    However, since $x, \widehat{x} \in X_{[1,i]}$ and $\widetilde{x} \in X_{[j,n]}$ we have
    \begin{align*}
        |x - \widetilde{x}| + |\widetilde{x} - \widehat{x}| - |x - \widehat{x}|
        &= 2\widetilde{x} - x - \widehat{x} - \big ( \max \{x,\widehat{x}\} - \min \{x, \widehat{x}\} \big ) \\
        &= 2\big ( \widetilde{x} - \max \{x, \widehat{x}\} \big ) \\
        &\geq 2(x_j-x_i) \\
        &> 3\dis,
    \end{align*}
    a contradiction. Since $x \neq \widetilde{x}$, it follows that $\pi (\widetilde{x}) < \pi (x)$.

    Finally, let $x' \in X_{[j,n]}$ and assume towards a contradiction that $\pi (x) < \pi (x')$.
    Then $\pi (\widetilde{x}) < \pi (x) \leq \pi (x')$ and thus
    \begin{align*}
        |\widetilde{x} - x'| + \dis
        &\geq |\pi(\widetilde{x}) - \pi(x)| + |\pi (x) - \pi (x') | \\
        &\geq |\widetilde{x} - x| + |x - x'| - 2\dis.
    \end{align*}
    However, since $x \in X_{[1,i]}$ and $\widetilde{x}, x' \in X_{[j,n]}$ we have
    \begin{align*}
        |\widetilde{x}- x| + |x - x'| - |\widetilde{x} - x'|
        &= \widetilde{x} - x + x' - x - \big ( \max \{\widetilde{x}, x'\} - \min \{\widetilde{x}, x'\} \big ) \\
        &= 2 \big ( \min \{\widetilde{x}, x'\} - x \big ) \\
        &\geq 2(x_j-x_i) \\
        &> 3\dis,
    \end{align*}
    a contradiction. Since $x \neq x'$, it follows that $\pi(x') < \pi(x)$.
\end{proof}

    We now proceed with the proof of \cref{proposition:2_approx_real_point_sets}.
    Assume towards a contradiction that there exist ordered subsets of the real line
    $X = \{x_1 < \cdots < x_n\}$ and $Y = \{y_1 < \cdots < y_n\}$ with
    \begin{align}\label{eq:sim_failure_assumption}
        \gammamon > \mfrac{7}{4}\dis \quad\text{and}\quad \gammaanti > \mfrac{7}{4}\dis
    \end{align}
    where $\dis \ceq \Ddist{(\infty)}(X, Y)$ and 
    $\gammamon \ceq \dis_{(\infty)}(\mon)$, $\gammaanti \ceq  \dis_{(\infty)}(\anti)$ denote the distortion of the monotone and antitone alignments ($\mon$ and $\anti$), respectively.
    Note that $n > 1$, because otherwise there only exists a single alignment.
    Note further that $\dis > 0$, because an isometry is necessarily monotone or antitone. 
    Fix an optimal $\pi \in \Bij(X, Y)$, that is $\dis_{(\infty)}(\pi) = \dis$.
    We may assume without loss of generality that $\diam(X) \leq \diam(Y)$, otherwise
    simply swap $X$ and $Y$. Since
    every alignment can have distortion at most $\diam(Y)$, we have $\diam(Y) \geq \frac{7}{4}\dis > \frac{3}{2}\dis$.

    Note that $\pi^{-1} (y_1) \neq \pi^{-1} (y_n)$ since $n > 1$ and $\pi$ is a bijection. 
    If necessary, we reverse $X$ and assume henceforth that
    $\pi^{-1} (y_1) < \pi^{-1} (y_n)$. 
    Reversing either of the spaces interchanges the distortions $\gammamon$ and $\gammaanti$, 
    and hence preserves the simultaneous failure assumption \cref{eq:sim_failure_assumption}. 

    We shall now apply \cref{lemma:auxiliary_monotone_guarantee} to $\mon$ and $\anti$ in a suitable
    fashion. Consider first the monotone alignment $\mon$. Suppose $\gammamon$ is realised by 
    an expansion of a distance in $X$, i.e. that there exist $x,x' \in X$ with $|x-x'|=
    |\mon (x) - \mon (x')|-\gammamon$. Applying \cref{lemma:auxiliary_monotone_guarantee} to
    the alignment $\pi^{-1} \colon Y \to X$ shows that $\pi^{-1}$ acts antitonically on $y_1, y_n$,
    i.e. $\pi^{-1}(y_n) < \pi^{-1}(y_1)$, contradicting our assumption $\pi^{-1} (y_1) < \pi^{-1} (y_n)$.
    Hence $\gammamon$ is only realised by contractions of distances in $X$
    and in particular there are $i < j \in [n]$ such that 
    \begin{align}\label{eq:aux_mon_dis}
        |x_i-x_j| = |\mon (x_i) - \mon (x_j)| + \gammamon.
    \end{align}
    Applying \cref{lemma:auxiliary_monotone_guarantee} shows
    \begin{align}\label{eq:cons_mon_aux_lemma}
        \pi (x') < \pi (x) \quad \text{ for all } x \in X_{[1,i]}, \, x' \in X_{[j,n]}.
    \end{align}
    In particular, $\pi^{-1}(y_1) \notin X_{[1,i]}$ and $ \pi^{-1}(y_n) \notin X_{[j,n]}$,
    because otherwise there must exist points smaller than $y_1$ or larger than $y_n$ in $Y$.
    Overall we thus know
    \begin{align}\label{eq:aux_X_order}
        x_i < \pi^{-1} (y_1) < \pi^{-1} (y_n) < x_j.
    \end{align}

    Consider next the antitone alignment $\anti$. Suppose $\gammaanti$ is attained by a contraction
    of a distance in $X$, \ie{} that there exist $x,x' \in X$ with $|x-x'| = \left|\anti (x)- 
    \anti (x')\right| + \gammaanti$. Since $\anti$ is monotone as a map $X \to \rev{Y}$, we can apply
    \cref{lemma:auxiliary_monotone_guarantee} to $\pi \colon X \to \rev{Y}$ to obtain
    that $\pi (x_1) < \pi (x_n)$, contradicting
    \cref{eq:cons_mon_aux_lemma}. Hence $\gammaanti$ is only attained by expansions of distances
    in $X$. Consequently, there are $ k < \ell \in [n]$ with 
    \begin{align}\label{eq:aux_anti_dis}
        |x_k -x_\ell| = |\anti (x_k)- \anti (x_\ell) | - \gammaanti.
    \end{align}
    Since $\anti^{-1}$ is monotone as a map $Y \to \rev{X}$, 
    applying \cref{lemma:auxiliary_monotone_guarantee} to $\pi^{-1} \colon Y \to \rev{X}$ shows
    \begin{align*}
        \pi^{-1} (y) < \pi^{-1} (y') \quad
        \text{ for all } y \in Y_{[1,n+1-\ell]}, \, y' \in Y_{[n+1-k,n]}.
    \end{align*}
    In particular, $\pi (x_1) \notin Y_{[n+1-k,n]}$, as otherwise every point in
    $\pi^{-1}(Y_{[1, n+1-\ell]})$ must be strictly smaller than $x_1$, which is impossible. 
    Similarly, $\pi (x_n) \notin Y_{[1,n+1-\ell]}$. Together with \cref{eq:cons_mon_aux_lemma}
    this implies 
    \begin{align}\label{eq:aux_Y_order}
        \anti (x_\ell) = y_{n+1-\ell} < \pi (x_n) < \pi (x_1) < y_{n+1-k} = \anti (x_k).
    \end{align}

    The key connection between the monotone and antitone alignments is made by the following claim,
    stating that $X_{[k,\ell]} \subseteq X_{(i,j)}$ and $Y_{[i,j]} \subseteq Y_{(n+1-\ell, n+1-k)}$
    cannot hold simultaneously.

    \begin{claim}\label{claim:aux_mon_and_anti}
        At least one of the following inequalities holds:
        \begin{align*}
            k \leq i, \quad \ell \geq j, \quad i \leq n+1-\ell, \quad j \geq n+1-k.
        \end{align*}
    \end{claim}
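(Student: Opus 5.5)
The plan is to argue by contradiction. Suppose all four inequalities fail, that is, $i < k$, $\ell < j$, $n+1-\ell < i$ and $j < n+1-k$. Together with $i<j$ and $k<\ell$ this gives the nestings
\[
  i < k < \ell < j \qquad\text{and}\qquad n+1-\ell < i < j < n+1-k .
\]
Hence $X_{[k,\ell]} \subseteq X_{(i,j)}$ and $Y_{[i,j]} \subseteq Y_{(n+1-\ell,n+1-k)}$. I will combine these nestings with the two defining equations \cref{eq:aux_mon_dis} and \cref{eq:aux_anti_dis}, the failure assumption \cref{eq:sim_failure_assumption}, and the order information \cref{eq:aux_X_order}, \cref{eq:aux_Y_order} about the optimal alignment $\pi$. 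The goal is an inequality forcing $\gammamon+\gammaanti \le \frac{7}{2}\dis$, which contradicts the failure assumption.

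First I record the two "gap" identities:
\[
  x_j - x_i = (y_j - y_i) + \gammamon, \qquad y_{n+1-k} - y_{n+1-\ell} = (x_\ell - x_k) + \gammaanti .
\]
By the nestings, $x_\ell - x_k \le x_j - x_i$ and $y_j - y_i \le y_{n+1-k}-y_{n+1-\ell}$.

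Second, I use $\pi$ to bound the long intervals by the short ones up to $\dis$.
\begin{itemize}
\item By \cref{eq:aux_X_order}, both $\pi^{-1}(y_1)$ and $\pi^{-1}(y_n)$ lie strictly inside $X_{(i,j)}$.
\item By \cref{eq:aux_Y_order}, both $\pi(x_1)$ and $\pi(x_n)$ lie strictly inside $Y_{(n+1-\ell,n+1-k)}$.
\end{itemize}
Applying the distortion bound of $\pi$ to the pairs $\{\pi^{-1}(y_1),\pi^{-1}(y_n)\}$ and $\{x_1,x_n\}$ gives
\[
  \diam(Y) \le x_j - x_i + \dis, \qquad \diam(X) \le y_{n+1-k}-y_{n+1-\ell} + \dis .
\]
Substituting the gap identities, I expect
\[
  \diam(Y) \le y_j-y_i+\gammamon+\dis \qquad\text{and}\qquad \diam(X) \le x_\ell-x_k+\gammaanti+\dis .
\]
These alone are not enough, so I will sharpen them. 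The points $x_i$ and $x_j$ themselves are mapped by $\pi$ into the outer parts of $Y$: by \cref{eq:cons_mon_aux_lemma}, $\pi(x_j) < \pi(x_i)$, and by \cref{eq:aux_Y_order} both images are separated from $\pi(x_1)$ and $\pi(x_n)$ in a controlled order. Similarly, $\pi^{-1}(y_{n+1-\ell})$ and $\pi^{-1}(y_{n+1-k})$ are controlled by the antitone consequence of \cref{lemma:auxiliary_monotone_guarantee}. I would split $|\pi(x_i)-\pi(x_j)|$ at $\pi(x_1)$ or $\pi(x_n)$, as in the proof of \cref{lemma:auxiliary_monotone_guarantee}, and use the triangle inequality. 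This should give lower bounds like $x_j-x_i \le y_{n+1-k}-y_{n+1-\ell} + c\,\dis$, together with the symmetric counterpart, for a small constant $c$.

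Third, I add the two resulting inequalities. The terms $x_j-x_i$, $x_\ell-x_k$, $y_j-y_i$ and $y_{n+1-k}-y_{n+1-\ell}$ should cancel or be absorbed by the nesting inequalities. What should remain is $\gammamon+\gammaanti \le \frac{7}{2}\dis$, which contradicts $\gammamon,\gammaanti > \frac{7}{4}\dis$.

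The main obstacle is the second step: choosing exactly which pairs of points to route $\pi$ through, so that the accumulated constant in front of $\dis$ is at most $\frac{7}{2}$ in total. My first attempt will use the four "anchor" pairs $\{x_1,x_n\}$, $\{\pi^{-1}(y_1),\pi^{-1}(y_n)\}$, $\{x_i,x_j\}$ and $\{\pi^{-1}(y_{n+1-\ell}),\pi^{-1}(y_{n+1-k})\}$, each incurring one $\dis$. If a single inequality chain loses too much, I would fall back to a case split on the relative order of $\pi(x_i)$ and $\pi(x_j)$ with respect to $\pi(x_1)$ and $\pi(x_n)$.
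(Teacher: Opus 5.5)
Your reduction to the two nestings $X_{[k,\ell]}\subseteq X_{(i,j)}$ and $Y_{[i,j]}\subseteq Y_{(n+1-\ell,n+1-k)}$ is exactly right. Everything after it, however, is not yet a proof. The second step consists of expectations (``should give'', an unspecified constant $c$, a fallback case split that is never carried out), so nothing you write actually forces a contradiction.

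The missing observation is that the nestings already contradict each other by counting indices alone, without any reference to $\pi$ or to distortions. The inclusion $X_{[k,\ell]}\subseteq X_{(i,j)}$ gives $\ell-k+1\le j-i-1$. The inclusion $Y_{[i,j]}\subseteq Y_{(n+1-\ell,n+1-k)}$ gives $j-i+1\le (n-k)-(n+2-\ell)+1=\ell-k-1$. Adding the two yields $2\le -2$. This is the paper's argument, written there as $|Y_{[i,j]}|>|X_{(i,j)}|\ge|X_{[k,\ell]}|>|Y_{(n+1-\ell,n+1-k)}|\ge|Y_{[i,j]}|$.

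Your metric route also works against the surrounding argument. In the nested configuration, the sum of the monotone and antitone distortions equals
\[
(x_k-x_i)+(x_j-x_\ell)+(y_i-y_{n+1-\ell})+(y_{n+1-k}-y_j),
\]
and all four terms are nonnegative. The four alternatives in the claim exist precisely so that the paper can regroup this sum with two nonpositive terms, drop them, and be left with one diameter and one tail. In the nested case none of the paper's four regroupings applies. The estimates available at this point are the tail bounds of \cref{claim:mon_anti_tail_bounds} and $\diam(X)\le\diam(Y)\le 2\dis$ from \cref{claim:mon_anti_diam_bound}. Together they only bound the sum by $4\dis$, not $\frac{7}{2}\dis$.

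So your plan attacks exactly the configuration the claim is meant to rule out, and you have not identified a sharper input that would close that gap. The two-line index count makes all of this unnecessary.
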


    \begin{claimproof}
        Note that we have
        $\abs*{Y_{[i,j]}} > \abs*{X_{(i,j)}}$ and 
        $\abs*{X_{[k, \ell]}}
            > \abs*{Y_{(n+1-\ell, n+1-k)}}$.
        Hence, if $X_{[k,\ell]} \subseteq X_{(i,j)}$ and
        $Y_{[i,j]} \subseteq Y_{(n+1-\ell, n+1-k)}$ hold simultaneously, it follows that
        \begin{align*}
            \abs*{Y_{[i,j]}} > \abs*{X_{(i,j)}} \geq \abs*{X_{[k,\ell]}} >
            \abs*{Y_{(n+1-\ell, n+1-k)}} \geq \abs*{Y_{[i,j]}},
        \end{align*}
        a contradiction. 
    \end{claimproof}
    Combining \cref{eq:aux_mon_dis,,eq:aux_anti_dis}, we have
    \[
        \gammamon + \gammaanti 
        =
        (x_j - x_i) - (y_j - y_i) + (y_{n+1-k} - y_{n+1-\ell}) - (x_\ell - x_k).
    \] 
    Based on the four cases outlined in \cref{claim:aux_mon_and_anti}, we bound
    $\gammamon + \gammaanti$ from above by regrouping according to the case at hand,
    dropping the two non-positive terms, and bounding each of the remaining two summands.
    
    \begin{description}
        \item \textbf{\textsf{\textcolor{lipicsGray}{Case 1}}} ($k \leq i$): Then $x_k \leq x_i$ and 
            \begin{align*}
                \gammamon + \gammaanti
                &= (x_j - x_\ell) + (x_k - x_i) - (y_j - y_i) + (y_{n+1-k} - y_{n+1-\ell}) \\
                &\leq (x_j - x_\ell) + (y_{n+1-k} - y_{n+1-\ell}) \\
                &\leq x_n - x_\ell + \diam(Y).
            \end{align*}
        \item \textbf{\textsf{\textcolor{lipicsGray}{Case 2}}} ($\ell \geq j$): Then $x_\ell \geq x_j$ and
            \begin{align*}
                \gammamon + \gammaanti
                &= (x_j - x_\ell) + (x_k - x_i) - (y_j - y_i) + (y_{n+1-k} - y_{n+1-\ell}) \\
                &\leq (x_k-x_i) + (y_{n+1-k} - y_{n+1-\ell}) \\
                &\leq x_k - x_1 + \diam(Y).
            \end{align*}
        \item \textbf{\textsf{\textcolor{lipicsGray}{Case 3}}} ($i \leq n+1-\ell$): Then $y_i \leq y_{n+1-\ell}$ and
            \begin{align*}
                \gammamon + \gammaanti
                &= (x_j - x_i) + (y_{n+1-k} - y_j) + (y_i - y_{n+1-\ell}) - (x_\ell - x_k) \\
                &\leq (x_j - x_i) + (y_{n+1-k} - y_j) \\
                &\leq \diam (X) + y_n - y_j \\
                &\leq \diam(Y) + y_n - y_j.
            \end{align*}
        \item \textbf{\textsf{\textcolor{lipicsGray}{Case 4}}} ($j \geq n+1-k$): Then $y_j \geq y_{n+1-k}$ and 
            \begin{align*}
                \gammamon + \gammaanti
                &= (x_j - x_i) + (y_{n+1-k} - y_j) + (y_i - y_{n+1-\ell}) - (x_\ell - x_k) \\
                &\leq (x_j - x_i) + (y_i - y_{n+1-\ell}) \\
                &\leq \diam (X) + y_i-y_1 \\
                &\leq \diam(Y) + y_i - y_1.
            \end{align*}
    \end{description}

    \noindent The final two claims show that in each of the four cases, 
    the right-hand side can be bounded from above by $\frac{7}{2}\dis$.
    This implies that either $\gammamon \leq \frac{7}{4}\dis$
    or $\gammaanti \leq \frac{7}{4}\dis$ which contradicts our initial assumption
    stated in \cref{eq:sim_failure_assumption}.

    \begin{claim}\label{claim:mon_anti_diam_bound}
        $\diam(Y) \leq 2\dis$.
    \end{claim}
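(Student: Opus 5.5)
The plan is to get two upper bounds on $\diam(Y)$ that pull in opposite directions in a single auxiliary quantity, and then add them. The only structural facts needed from the preceding argument are the orderings already established. From \cref{eq:aux_X_order} we have $a \ceq \pi^{-1}(y_1) < b \ceq \pi^{-1}(y_n)$. From \cref{eq:aux_Y_order} (or directly from \cref{eq:cons_mon_aux_lemma}) we have $\pi(x_n) < \pi(x_1)$. We also use the standing assumption $\diam(X) \leq \diam(Y)$. Set $e \ceq b - a > 0$ and $c \ceq \pi(x_1) - \pi(x_n) > 0$.

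First, apply the definition of $\dis = \dis_{(\infty)}(\pi)$ to the two pairs $(a,b)$ and $(x_1,x_n)$:
\[
    \diam(Y) = y_n - y_1 \leq e + \dis, \qquad \diam(X) = x_n - x_1 \leq c + \dis.
\]
Second, use the \enquote{crossing} pairs $(x_1, a)$ and $(x_n, b)$. The distortion bound gives $\pi(x_1) - y_1 \leq a - x_1 + \dis$ and $y_n - \pi(x_n) \leq x_n - b + \dis$. Because $\pi(x_n) < \pi(x_1)$, the left-hand sides add up to $(y_n - y_1) + c$, and because $a < b$, the right-hand sides add up to $(x_n - x_1) - e + 2\dis$. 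Hence
\[
    \diam(Y) + c \leq \diam(X) - e + 2\dis.
\]
Inserting $\diam(X) \leq c + \dis$ cancels $c$ and yields $\diam(Y) \leq 3\dis - e$.

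Finally, add this to $\diam(Y) \leq e + \dis$. The variable $e$ cancels, leaving $2\diam(Y) \leq 4\dis$, which is the claim.

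The main point to check is the sign bookkeeping in the crossing step, namely that the absolute values open in the stated directions. This requires $x_1 \leq a$ and $b \leq x_n$, which hold trivially, and $y_1 \leq \pi(x_1)$ and $\pi(x_n) \leq y_n$, which also hold trivially. The orderings $a < b$ and $\pi(x_n) < \pi(x_1)$ are exactly what make the sums telescope to $\diam \pm$ (the gap). Everything else is immediate, so I do not expect a real obstacle beyond confirming that \cref{eq:aux_X_order,eq:aux_Y_order} supply these two orderings.
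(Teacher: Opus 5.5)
Your proof is correct and is essentially the paper's argument. You use exactly the same four distortion inequalities (the diametral pair $y_1,y_n$ pulled back, the diametral pair $x_1,x_n$ pushed forward, and the two crossing pairs $(x_1,\pi^{-1}(y_1))$, $(x_n,\pi^{-1}(y_n))$) with the orderings from \cref{eq:aux_X_order,eq:aux_Y_order}, summed in a slightly different order; the assumption $\diam(X)\leq\diam(Y)$ that you list is never actually used in your chain.
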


    \begin{claimproof}
        Recall from \cref{eq:aux_X_order} and \cref{eq:aux_Y_order} that
        \begin{align*}
            x_1 < \pi^{-1} (y_1) < \pi^{-1} (y_n) < x_n 
            \quad\text{and}\quad
            y_1 < \pi (x_n) < \pi (x_1) < y_n.
        \end{align*}
        Since $\abs{y_1-y_n}=\diam(Y)$,
        we have $\left|\pi^{-1}(y_1)-\pi^{-1} (y_n)\right| \geq \diam(Y)-\dis$ and thus
        \begin{align*}
            \left|x_1- \pi^{-1} (y_1)\right| + \left|\pi^{-1}(y_n)- x_n\right|
            &= \diam(X) - \left|\pi^{-1}(y_1) - \pi^{-1} (y_n)\right| \\
            &\leq \dis - \big (\diam(Y)-\diam(X)\big ).
        \end{align*}
        Similarly, $\left|\pi (x_n) - \pi (x_1)\right| \geq \diam(X) - \dis$ because
        $\abs{x_1 - x_n}=\diam(X)$. Finally,
        \begin{flalign*}
            \quad&\begin{alignedat}{2}
                &&  \left|x_1 - \pi^{-1} (y_1) \right| + \left|\pi^{-1}(y_n) - x_n\right| 
                &\geq \left| \pi (x_1) - y_1\right| + \left|y_n - \pi (x_n)\right| - 2\dis \\
                \implies\quad && \dis - \big (\diam(Y)-\diam(X)\big ) &\geq 
                \diam(Y) + \left|\pi (x_n)- \pi (x_1) \right| - 2\dis \\
                \implies\quad && 3\dis + \diam(X) - \left|\pi (x_n) - \pi (x_1) \right| &\geq 2 \diam (Y) \\
                \implies\quad && 4 \dis &\geq 2\diam(Y),
            \end{alignedat}&
        \end{flalign*}
        which proves the claim.
    \end{claimproof}

    \begin{claim}\label{claim:mon_anti_tail_bounds}
        Each of $ x_n - x_\ell,\, x_k-x_1,\, y_n - y_j,\, y_i - y_1$ is 
        bounded from above by $\frac{3}{2}\dis$.
    \end{claim}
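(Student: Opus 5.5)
The plan is to reduce all four bounds to a single \emph{three-point reordering inequality}, extracted from the computation in the proof of \cref{lemma:auxiliary_monotone_guarantee}. Let $a,b,c \in X$ be such that $\pi(b)$ lies between $\pi(a)$ and $\pi(c)$ in $Y$, say $\pi(a) < \pi(b) < \pi(c)$, while $b$ lies outside the interval spanned by $a$ and $c$ in $X$, say $a,c < b$. Expanding $|\pi(a)-\pi(c)| = |\pi(a)-\pi(b)| + |\pi(b)-\pi(c)|$ and comparing each term with its counterpart in $X$ up to $\dis$ gives
\[
    |a-c| + \dis \geq |a-b| + |b-c| - 2\dis,
\]
that is, $2\big(b-\max\{a,c\}\big) \leq 3\dis$. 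The same holds with the roles of $X$ and $Y$ exchanged, applied to $\pi^{-1}$. Each of the four tail lengths $x_n - x_\ell$, $x_k - x_1$, $y_n - y_j$, $y_i - y_1$ should then appear as such a \enquote{gap} $b - \max\{a,c\}$ (or its mirror image) for a suitable reordered triple.

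Using the symmetries already exploited in the main argument (reversing $X$ or $Y$, and passing to $\pi^{-1}$), I would try to reduce to the two bounds $y_i - y_1 \leq \frac{3}{2}\dis$ and $x_k - x_1 \leq \frac{3}{2}\dis$. I expect this reduction to work because the data $(i,j)$ for $\mon$ and $(k,\ell)$ for $\anti$ enter \cref{eq:aux_X_order}, \cref{eq:aux_Y_order} and \cref{eq:cons_mon_aux_lemma} in a dual way. For $y_i - y_1$, the triple comes from counting. The set $Y_{[1,i)}$ has only $i-1$ points, so some $q \in X_{[1,i]}$ has $\pi(q) \geq y_i$. Let $p \ceq \pi^{-1}(y_1)$; by \cref{eq:aux_X_order} we have $q \leq x_i < p < x_j \leq x_n$. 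By \cref{eq:cons_mon_aux_lemma} we have $\pi(x_n) < \pi(q)$, and $\pi(x_n) > y_1$ holds as well. Hence $y_1 < \pi(x_n) < \pi(q)$, while in $X$ the order is $q < p < x_n$. This is a reordered triple, and it combines with the monotone contraction \cref{eq:aux_mon_dis}, which pins $y_i$ close to the images of $X_{[1,i]}$. Applying the three-point inequality once in each direction should squeeze $y_i - y_1$ below $\frac{3}{2}\dis$. For $x_k - x_1$, I would run the analogous argument with \cref{eq:aux_Y_order} and the antitone statement for $\pi^{-1}\colon Y \to \rev{X}$. Here the counting argument uses $|X_{[1,k)}| = k-1$, and the distinguished point is $\pi(x_1)$, which lies strictly inside $(\pi(x_n), y_{n+1-k})$.

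The main obstacle is the triple selection. The naive triple above only bounds $\min\{\pi(x_n)-y_1,\ \pi(q)-\pi(x_n)\}$, not the full gap $y_i - y_1$. The first thing I would try is to choose $q$ extremally, namely as the point of $X_{[1,i]}$ whose image is smallest among those $\geq y_i$. I would then show, via \cref{eq:cons_mon_aux_lemma} and the fact that no element of $X_{[j,n]}$ maps into $Y_{[i,j]}$ (as in the proof of \cref{lemma:auxiliary_monotone_guarantee}), that the middle point of the triple can be taken to be an image lying above $y_i$. If this fails, the fallback is to split into cases according to whether $\pi(x_n) \geq y_i$ or $\pi(x_n) < y_i$. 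In the first case the gap is directly $\geq y_i - y_1$. In the second case I would apply the inequality to the triple $\{x_n, p, q\}$ read in $X$, and bound the residual using \cref{claim:mon_anti_diam_bound}.
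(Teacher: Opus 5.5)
\textbf{Verdict: there is a genuine gap.} Your three-point reordering inequality is correct, and it is essentially the computation the paper uses. The gap is in the triple you feed into it. You flag this yourself as the main obstacle, and the proposal does not resolve it.

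\textbf{Why your triple fails.} Take the triple $\{q,\pi^{-1}(y_1),x_n\}$. By \cref{eq:cons_mon_aux_lemma} we have $y_1<\pi(x_n)<\pi(q)$ for \emph{every} $q\in X_{[1,i]}$. So the middle point on the $Y$-side is always $\pi(x_n)$. Read either way, the inequality then gives only
\[
\pi(x_n)-y_1\le\tfrac32\dis \quad\text{and}\quad x_n-\pi^{-1}(y_1)\le\tfrac32\dis .
\]
Neither controls $\pi(q)-y_1$.
\begin{itemize}
\item Choosing $q$ extremally does not change which point is in the middle.
\item Your fallback does not close. In the case $\pi(x_n)<y_i$, nothing you list controls the residual $y_i-\pi(x_n)$. \Cref{claim:mon_anti_diam_bound} only yields $y_i-y_1\le\diam(Y)\le 2\dis$, not $\tfrac32\dis$.
\end{itemize}
Since you reduce the four bounds to $y_i-y_1$ and $x_k-x_1$ by symmetry, the same gap also affects $y_n-y_j$.

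\textbf{The missing idea.} Take the third point to be $\pi^{-1}(y_n)$ instead of $x_n$, i.e.\ pair $q$ with \emph{both} extremes of $Y$.
\begin{itemize}
\item Since $y_1$ and $y_n$ are the extremes of $Y$, and $q\neq\pi^{-1}(y_1),\pi^{-1}(y_n)$, the point $\pi(q)$ lies strictly between them.
\item In $X$, \cref{eq:aux_X_order} gives $q\le x_i<\pi^{-1}(y_1)<\pi^{-1}(y_n)$.
\item Apply your inequality to $\pi^{-1}$, reading distances in $Y$. The point $y_1$ lies outside the interval spanned by $\pi(q)$ and $y_n$, while its preimage is the middle one in $X$. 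This gives $2\bigl(\pi(q)-y_1\bigr)\le 3\dis$ for every $q\in X_{[1,i]}$.
\item This needs no extremal choice and nothing beyond \cref{eq:aux_X_order}.
\item Your counting step (some $q\in X_{[1,i]}$ has $\pi(q)\ge y_i$) then gives $y_i-y_1\le\tfrac32\dis$.
\end{itemize}
This is exactly the paper's argument. It sums $|y_1-\pi(x)|\le|\pi^{-1}(y_1)-x|+\dis$ and $|y_1-\pi(x)|\le\diam(Y)-|x-\pi^{-1}(y_n)|+\dis$.

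\textbf{The $x_k-x_1$ case.} Your sketch, with $\pi(x_1)$ strictly inside $(\pi(x_n),y_{n+1-k})$, already has the right shape, provided the triple is $\{\pi(x_n),\pi(x_1),y'\}$ with $y'\in Y_{[n+1-k,n]}$.
\begin{itemize}
\item Here $x_1$ is extreme in $X$ while $\pi(x_1)$ is in the middle in $Y$. This gives $2\bigl(\pi^{-1}(y')-x_1\bigr)\le 3\dis$.
\item Counting ($|Y_{[n+1-k,n]}|=k>|X_{[1,k)}|$) then gives $x_k-x_1\le\tfrac32\dis$.
\end{itemize}
In that case the third point is the image of the opposite extreme of the \emph{other} space. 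The $y_i-y_1$ case has to be set up the same way: $\pi^{-1}(y_n)$, not $x_n$.
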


    \begin{claimproof}
        For $x \in X_{[1,i]}$ we can make use of the following two bounds:
        \begin{align*}
            \left|y_1 - \pi (x) \right| &\leq \left|\pi^{-1} (y_1) - x\right| + \dis, \\
            \left|y_1 - \pi (x) \right| &=\diam(Y) - \left|\pi (x) - y_n\right| \\
            &\leq \diam(Y) - \left|x - \pi^{-1} (y_n )\right| + \dis.
        \end{align*}
        By \cref{eq:aux_X_order}, we can write $\left|x - \pi^{-1} (y_n)\right| =
         \left|x - \pi^{-1} (y_1)\right| 
        + \left|\pi^{-1}(y_1) - \pi^{-1}(y_n)\right|$.
        Plugging this into the sum of the above upper bounds yields
        \begin{align*}
            2\left|y_1 - \pi (x) \right| 
            \leq \diam(Y) - \left|\pi^{-1}(y_1)- \pi^{-1}(y_n) \right| + 2 \dis
            \leq 3\dis.
        \end{align*}
        Since $\abs*{X_{[1,i]}}=i > i-1 = \abs*{Y_{[1,i)}}$, $\pi$ has to map some $x \in X_{[1,i]}$ into
        $Y_{[i,n]}$ which yields $y_i - y_1 \leq \left|y_1 - \pi (x)\right| \leq \frac{3}{2}\dis$.
        
        The same calculation for $x' \in X_{[j,n]}$ paired with $y_n$ shows $\left|\pi (x') - y_n\right| \leq
        \frac{3}{2}\dis$ and, for a similar cardinality reason, $\pi$ has to map some
        $x' \in X_{[j,n]}$ into $Y_{[1,j]}$. Consequently, $y_n - y_j \leq \left|\pi (x') - y_n\right|
        \leq \frac{3}{2}\dis$. 

        The remaining two upper bounds are proven analogously by considering $x_1, x_n$
        paired with $y \in Y_{[1,n+1-\ell]}$ or $y' \in Y_{[n+1-k, n]}$ respectively.
    \end{claimproof}

    \noindent In each of the four cases we obtain $\gammamon + \gammaanti \leq 2\dis + \frac{3}{2}\dis = \frac{7}{2}\dis$. This concludes the proof of \cref{proposition:2_approx_real_point_sets}.

\subsection{Applying Theorem 11 to Graphs of Bounded Pathlength}
To derive \cref{lemma:max_distortion_caterpillar_approx} and \cref{theorem:pathlength_approx}
from \cref{proposition:2_approx_real_point_sets}, we use the following corollary.
Note that equipping multisets over $\R$ with the standard metric $\abs{x-y}$ results in pseudometric spaces.
For two pseudometric spaces $\X, \Y$, let $\Delta^\ast(\X, \Y)$ denote the 
optimal maximum distortion between $\X$ and $\Y$.

\begin{corollary}\label{corr:multiset_to_set}
     Let $\X = \multiset{x_1, \hdots, x_n} $ and $\Y =\multiset{y_1, \hdots, y_n}$ 
     be multisets over $\mathbb{Z}$ enumerated in nondecreasing order. 
     The better of the two bijections $\pi(x_i) = y_i$ and $\pi(x_i) = y_{n+1-i}$ for $i \in [n]$
     has a maximum additive distortion of at most $\frac{7}{4}$ times the optimum.
\end{corollary}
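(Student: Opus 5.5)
The plan is to reduce \cref{corr:multiset_to_set} to \cref{proposition:2_approx_real_point_sets} by a perturbation argument. The points of a multiset may coincide, so we spread them into genuine sets by a small order-preserving shift. Because the original points are integers, every positive distance is at least $1$, and a perturbation of size $\eta \ll 1$ changes every distance by less than $2\eta$. All relevant quantities are maxima of integers in the unperturbed setting, so rounding will recover exact inequalities.

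\textbf{Perturbation.} Fix $\eta > 0$ small, say $\eta < \frac{1}{8n}$. Define $x_i' \ceq x_i + i\eta/n$ and $y_i' \ceq y_i + i\eta/n$ for $i \in [n]$. Since the $x_i$ are nondecreasing integers and the shifts are strictly increasing and below $1$, we get $x_1' < \cdots < x_n'$, and similarly for $Y'$. The index-wise identification $x_i \leftrightarrow x_i'$ is therefore an order-preserving enumeration. Under it, the bijection $x_i \mapsto y_i$ corresponds to the monotone bijection $\mon$ between $X'$ and $Y'$, and $x_i \mapsto y_{n+1-i}$ corresponds to the antitone bijection $\anti$. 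For any bijection $\sigma$ on indices, every pairwise distortion changes by at most $2\eta$ when passing between the multisets and the sets. Hence
\[
\big|\dis_{(\infty)}(\sigma') - \dis_{(\infty)}(\sigma)\big| \le 2\eta,
\]
and in particular the optimum satisfies $\Delta^\ast(X', Y') \le \Delta^\ast(\X, \Y) + 2\eta$.

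\textbf{Applying the theorem.} By \cref{proposition:2_approx_real_point_sets}, one of $\mon$ or $\anti$, say $\sigma'$, satisfies
\[
\dis_{(\infty)}(\sigma') \le \tfrac{7}{4}\Delta^\ast(X',Y').
\]
Returning to the multisets gives
\[
\dis_{(\infty)}(\sigma) \le \tfrac{7}{4}\Delta^\ast(\X,\Y) + \tfrac{7}{2}\eta + 2\eta.
\]

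\textbf{Removing the error.} This step is the only delicate point, and it has two parts.

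First, $\sigma$ is one of only two bijections and $\Delta^\ast(\X,\Y)$ is independent of $\eta$. We may therefore choose a sequence $\eta \to 0$ along which the same $\sigma$ is selected. Letting $\eta \to 0$ yields $\dis_{(\infty)}(\sigma) \le \tfrac{7}{4}\Delta^\ast(\X,\Y)$ exactly.

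Second, the optimum for multisets must be attained, i.e.\ we need $\Delta^\ast$ to be a minimum over finitely many bijections. This holds trivially, since there are only $n!$ bijections. The limit argument then needs no integrality at all; the integrality of the points is used only to ensure the perturbed enumeration stays ordered.

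Finally, if $\Delta^\ast(\X,\Y) = 0$, the bound reduces to showing that $\sigma$ has distortion $0$, and the same limit argument covers this case.
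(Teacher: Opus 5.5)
Your proof is correct and follows the paper's route: the same shift $x_j + j\eps/n$, the same identification of the monotone and antitone bijections, and the same error bound $\frac{7}{4}\Delta^\ast(\X,\Y) + \frac{11}{2}\eps$. The only difference is the final step: the paper fixes $\eps < \frac{1}{22}$ and rounds the error away using that both distortions are integers, whereas your limit $\eta \to 0$ is valid because the multiset quantities do not depend on $\eta$, and it dispenses with integrality entirely. Contrary to your remark, integrality is not needed for the ordering either, since adding strictly increasing shifts to any nondecreasing real sequence already makes it strictly increasing, so your argument covers multisets over $\R$.
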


\begin{proof}
    We begin by shifting each point within $\X$ and $\Y$ by a small amount. In particular, this shifts repeated values apart and thus turns  multisets into sets 
    while preserving the original distances up to an error depending only on the shift amounts. 
    Formally, for $\eps \in (0,1)$ we define
    \begin{align*}
        \X^\eps &\ceq \left\{ x_j + j \cdot \mfrac{\eps}{n} \midd j \in [n] \right\} \quad\text{and}\quad        \Y^\eps \ceq \left\{ y_j + j \cdot \mfrac{\eps}{n} \midd j \in [n] \right\}.
    \end{align*}
    Let $x_1' < \cdots < x_n'$ and $y_1' < \cdots < y_n'$ be the (now increasingly) ordered elements of $\X^\eps$ and $\Y^\eps$, respectively.
    Using the identity on $[n]$, we can naturally
    identify bijections $\X \to \Y$ with bijections $\X^\eps \to \Y^\eps$. Note that this 
    identification preserves monotonicity and antitonicity. 
    Since no single point is shifted by more than $\eps$, the maximum distortion of any fixed bijection
    differs by at most $2\eps$ between its set version $\X^\eps \to \Y^\eps$ and its multiset version
    $\X \to \Y$. 
    
    Invoking \cref{proposition:2_approx_real_point_sets}, we obtain a bijection
    $\X^\eps \to \Y^\eps$ of maximum distortion $\Delta_\eps \leq \frac{7}{4}\Delta^\ast (\X^\eps, \Y^\eps)$.
    Let $\Delta$ denote the distortion of the multiset version of that bijection and observe that
    \begin{align*}
        \Delta &\leq \Delta_\eps + 2\eps
        \leq \mfrac{7}{4}\Delta^\ast\big(\X^\eps, \Y^\eps\big) + 2\eps
        \leq \mfrac{7}{4}\big( \Delta^\ast(\X, \Y) + 2\eps \big) + 2\eps
        = \mfrac{7}{4}\Delta^\ast (\X, \Y) + \mfrac{11}{2}\eps.
    \end{align*}
    Hence, for $\eps < \frac{1}{22}$ we have $\frac{11}{2}\eps < \frac{1}{4}$ and
    \begin{align*}
        \Delta < \mfrac{7}{4}\Delta^\ast (\X, \Y) + \mfrac{1}{4} 
        \leq \left\lfloor \mfrac{7}{4}\Delta^\ast (\X, \Y) \right\rfloor  + 1 
    \end{align*}
    since $\Delta^\ast(\X, \Y)$ is an integer.
    We can conclude $\Delta \leq \frac{7}{4}\Delta^\ast(\X,\Y)$ because 
    $\Delta$ is an integer as well.
\end{proof}

Recall that for a caterpillar $T$, the \emph{spine} of $T$ is the path $v_1, \ldots, v_k$ that remains 
after removing all its leaves.
In the following, we assume $T \neq K_1$ and $T \neq K_2$ so that a spine is always well-defined.
Let $S(T)$ denote the set $\{v_1, \ldots, v_k\}$ of spine vertices of $T$ and
$L(T) \ceq V(T) \setminus S(T)$ the set of leaves of $T$.
Further, for a vertex $v \in V(T)$, let $s_T(v) \in [k]$ denote
the index of the unique closest spine vertex, \ie{} the unique neighbour of a leaf, or the vertex 
itself if it is a spine vertex. 
We define $B_T(j) \ceq N_T[v_j] \setminus (N_T(v_j) \cap S(T))$ as the \emph{bag} of the $j$-th spine vertex for each $j \in [k]$.
Moreover, let $b_T(j) \ceq |B_T(j)|$ denote the \emph{size} of the $j$-th bag. 
Setting the length of edges between $S(T)$ and $L(T)$ to zero (from one), we obtain a multiset $M(T)$ over $\{1, \ldots, k\}$ containing the value $i$ exactly $b_T(i)$ times for each $i \in [k]$.
We view $M(T)$ as pseudometric space with metric $d_T$ defined by
\[
    d_T(x, x') \ceq |s_T(x) - s_T(x')|
\]
for all $x, x' \in V(T)$.
Using the natural identification of vertices of $T$ with the points in the multiset $M(T)$, we may write $|x - x'|$ instead of $|s_T(x) - s_T(x')|$.
Note that the resulting multisets always have a support consisting of \emph{consecutive} integers.
For two caterpillars $T_1, T_2$, an alignment $\pi$ between $M(T_1)$ and $M(T_2)$, that is, a bijection between the vertex sets, is said to be \emph{monotone} if for all $x, x' \in V(T_1)$ with $s_{T_1}(x) < s_{T_1}(x')$ it holds that $s_{T_2}(\pi(x)) \leq s_{T_2}(\pi(x'))$, and \emph{antitone} if for all $x, x' \in V(T_1)$ with $s_{T_1}(x) < s_{T_1}(x')$ it holds that $s_{T_2}(\pi(x)) \geq s_{T_2}(\pi(x'))$.

For two pseudometric spaces $\X, \Y$ and an algorithm $\mathcal A$ computing a bijection $\pi_{\mathcal A}$ between $\X$ and $\Y$, we denote by $\Delta^{\mathcal A}(\X, \Y)$ 
the maximum distortion of $\pi_{\mathcal A}$.

\begin{proof}[Proof of \cref{lemma:max_distortion_caterpillar_approx}]
    Let $\mathcal B$ be the algorithm that, given two caterpillars $T_1$ and $T_2$ on the same number of vertices, first computes $M_1 \ceq M(T_1)$ and $M_2 \ceq M(T_2)$ and then runs the algorithm $\mathcal A$ of \cref{corr:multiset_to_set} on input $M_1$ and $M_2$ and returns the corresponding bijection between $V(T_1)$ and $V(T_2)$.
    Identifying the elements of $M_i$ with $V(T_i)$ in the obvious way, for all bijections $\pi \in \Bij(V(T_1), V(T_2))$, it holds that
    \begin{equation}
        \label{equ:caterpillar_to_multiset}
        \Big|
        \underbrace{\big|\Gdist_{T_1}(x, x') - \Gdist_{T_2}(\pi(x), \pi(x')) \big|}_{\text{caterpillar distortion}} - \underbrace{\big| |x - x'| - |\pi(x) - \pi(x')| \big|}_{\text{multiset distortion}} \Big| \leq 2.
    \end{equation}
    Thus, we have

    \begin{align*}
        \Delta^{\mathcal B}(T_1, T_2)
        &\leq \Delta^{\mathcal A}(M_1, M_2) + 2 \tag*{\cref{equ:caterpillar_to_multiset}}\\
        &\leq \mfrac{7}{4} \Delta^\ast(M_1, M_2) + 2 \tag*{\cref{corr:multiset_to_set}}\\
        &\leq \mfrac{7}{4} \big(\Delta^\ast(T_1, T_2) + 2\big) + 2 \\
        &=\mfrac74\Delta^\ast(T_1, T_2) + \mfrac{11}{2}.
    \end{align*}
    Consequently, algorithm $\mathcal B$ provides a $\left(\frac{7}{4}, \frac{11}{2}\right)$-approximation for 
    caterpillars. 
    
    Regarding the runtime, the spine of a caterpillar can be found in linear time, and thus, a monotone and an antitone alignment between $M_1$ and $M_2$ can be computed in linear time.
    Moreover, the maximum distortion of these bijections 
    can be evaluated in linear time by a single pass over all $x \in V(T_1)$ along the bags corresponding to the spine.
    This is because for a monotone alignment $\pi$ between $M_1$ and $M_2$, it holds that
    \[
        \dis_{(\infty)}(\pi) = \left(\max_{x \in V(T_1)} s_{T_1}(x) - s_{T_2}(\pi(x))\right) - \left(\min_{x \in V(T_1)} s_{T_1}(x) - s_{T_2}(\pi(x)) \right).
    \]
    Intuitively, the quantity $s_{T_1}(x) - s_{T_2}(\pi(x))$ can be understood as the \enquote{tilt} of the \enquote{assignment edge} $x\pi(x)$. 
    Combining the maximum forward and the maximum backward tilt yields the maximum distortion value since the alignment is monotone.
\end{proof}

\cref{example:caterpillar_spaces_32} gives an infinite family of pairs of multisets of unbounded 
diameter for which the monotone and antitone bijections can be off by a factor of $\frac{3}{2}$ from the 
optimal distortion. 
The following example, using multisets, shows that the factor $\frac{7}{4}$ in \cref{proposition:2_approx_real_point_sets} is optimal.
Note that these multisets do not 
directly correspond to caterpillars:
By definition, the outmost bags of multisets obtained from caterpillars must always contain at least two elements (due to the \enquote{contraction} of edges incident to leaves). 
Because of this, we cannot rule out the possibility of an approximation algorithm with multiplicative approximation guarantee of $\frac{3}{2}$ for caterpillars yet. Any example of bounded diameter with a larger multiplicative gap between optimal bijections and the monotone and antitone bijections could still be solved to optimality in polynomial time by solving integer programs in which the number of integer variables is 
bounded by a constant \cite{lenstra_ip_fixed_number_variables}.

\begin{figure}[!htbp]
    \captionsetup{singlelinecheck = true, justification=justified}
    \centering
    \begin{subfigure}[t]{0.48\textwidth}
        \centering
        \begin{adjustbox}{height=3cm}
\def\hLength{5}   
\def\vGap{1.5}       
\begin{tikzpicture}[
    scale=1,
    box/.style={
        draw,
        fill=white,
        rounded corners,
        minimum width=0.6cm,
        minimum height=0.6cm,
        thick
    }
]

\foreach \i/\sz in {0/18,1/11,2/1,3/1,4/1,5/1,6/1,7/4} {
  \pgfmathsetmacro{\xpos}{\i*\hLength/7}
  \coordinate (c\i) at (\xpos,\vGap);
}
 
\foreach \i/\sz in {0/10,1/1,2/4,3/1,4/1,5/18,6/1,7/1,8/1} {
  \pgfmathsetmacro{\ypos}{\i*\hLength/8}
  \coordinate (d\i) at (\ypos,0);
}
 
\foreach \i/\j/\k in {
    0/0/4,
    0/1/1,
    0/2/4,
    0/3/1,
    0/4/1,
    0/5/1,
    1/5/4,
    2/5/1,
    3/5/1,
    4/5/1,
    5/5/1,
    6/5/1,
    7/5/1,
    7/6/1,
    7/7/1,
    7/8/1
} {
    \path (c\i);
    \pgfgetlastxy{\cx}{\cy}
    \path (d\j);
    \pgfgetlastxy{\dx}{\dy}
 
    \pgfmathsetmacro{\vx}{\dx-\cx}
    \pgfmathsetmacro{\vy}{\dy-\cy}
    \pgfmathsetmacro{\len}{veclen(\vx,\vy)}
 
    \pgfmathsetmacro{\ux}{-\vy/\len}
    \pgfmathsetmacro{\uy}{\vx/\len}
 
    \foreach \t in {1,...,\k} {
        \pgfmathsetmacro{\off}{(\t-(\k+1)/2)*0.05}
 
        \draw[thick, gray!50]
            ($ (c\i) + (\off*\ux,\off*\uy) $)
            --
            ($ (d\j) + (\off*\ux,\off*\uy) $);
    }
}

\draw[line width=1.5pt, red] (c0) -- (d5);
\draw[line width=1.5pt, red] (c7) -- (d5);
 
\foreach \i/\sz in {0/18,1/11,2/1,3/1,4/1,5/1,6/1,7/4} {
  \ifnum\sz>1
    \node[box] at (c\i) {\sz};
  \else
    \fill (c\i) circle (2pt);
  \fi
    \node at ($(c\i)+(0,0.5)$) {\textcolor{gray}{\scriptsize \i}};
}
 
\foreach \i/\sz in {0/10,1/1,2/4,3/1,4/1,5/18,6/1,7/1,8/1} {
  \ifnum\sz>1
    \node[box] at (d\i) {\sz};
  \else
    \fill (d\i) circle (2pt);
  \fi
    \node at ($(d\i)-(0,0.5)$) {\textcolor{gray}{\scriptsize \i}};
}
 
\end{tikzpicture}
        \end{adjustbox}
        \caption{A monotone alignment.}
        \label{subfig:monotone_74}
    \end{subfigure}
    \hfill
    \begin{subfigure}[t]{0.48\textwidth}
        \centering
        \begin{adjustbox}{height=3cm}
\def\hLength{5}   
\def\vGap{1.5}       

\begin{tikzpicture}[
  box/.style={
    draw,
    fill=white,
    rounded corners,
    minimum width=0.6cm,
    minimum height=0.6cm,
    thick
  }
]
 
 
\foreach \i/\sz in {0/18,1/11,2/1,3/1,4/1,5/1,6/1,7/4} {
  \pgfmathsetmacro{\xpos}{\i*\hLength/7}
  \coordinate (c\i) at (\xpos,\vGap);
}
 
\foreach \i/\sz in {0/10,1/1,2/4,3/1,4/1,5/18,6/1,7/1,8/1} {
  \pgfmathsetmacro{\ypos}{\i*\hLength/8}
  \coordinate (d\i) at (\ypos,0);
}
 
\foreach \i/\j/\k in {
    0/5/4,
    1/0/4,
    1/4/1,
    2/1/1,
    3/3/1,
    4/6/1,
    5/8/1,
    6/7/1,
    7/2/4
} {
    \path (c\i);
    \pgfgetlastxy{\cx}{\cy}
    \path (d\j);
    \pgfgetlastxy{\dx}{\dy}
 
    \pgfmathsetmacro{\vx}{\dx-\cx}
    \pgfmathsetmacro{\vy}{\dy-\cy}
    \pgfmathsetmacro{\len}{veclen(\vx,\vy)}
 
    \pgfmathsetmacro{\ux}{-\vy/\len}
    \pgfmathsetmacro{\uy}{\vx/\len}
 
    \foreach \t in {1,...,\k} {
        \pgfmathsetmacro{\off}{(\t-(\k+1)/2)*0.05}
 
        \draw[thick, gray!50]
            ($ (c\i) + (\off*\ux,\off*\uy) $)
            --
            ($ (d\j) + (\off*\ux,\off*\uy) $);
    }
}
 
\draw[line width=1.5pt, red] (c2) -- (d1);
\draw[line width=1.5pt, red] (c5) -- (d8);
 
\foreach \i/\sz in {0/18,1/11,2/1,3/1,4/1,5/1,6/1,7/4} {
    \node at ($(c\i)+(0,0.5)$) {\textcolor{gray}{\scriptsize \i}};
  \ifnum\sz>1
    \node[box] at (c\i) {\sz};
  \else
    \fill (c\i) circle (2pt);
  \fi
}
 
\foreach \i/\sz in {0/10,1/1,2/4,3/1,4/1,5/18,6/1,7/1,8/1} {
  \ifnum\sz>1
    \node[box] at (d\i) {\sz};
  \else
    \fill (d\i) circle (2pt);
  \fi
    \node at ($(d\i)-(0,0.5)$) {\textcolor{gray}{\scriptsize \i}};
}
 
\end{tikzpicture}
        \end{adjustbox}
        \caption{A non-monotone \& non-antitone alignment.}
        \label{subfig:non_monotone_74}
    \end{subfigure}
    
    \caption{Two alignments between the same two multisets given over $\{0, \ldots, 8\}$. 
    Each multiset is visualised horizontally, with the multiplicities of each integer denoted by black dots and boxes: Dots correspond to multiplicity of one, a box labelled $k$ denotes a multiplicity of $k$.
    Both multisets have $38$ elements.
    Alignments are visualised by grey edges; an edge connects two bags $B, B'$ if there is an $x \in B$ with $\pi(x) \in B'$. 
    Red edges indicate pairs of maximum distortion under the respective alignment.
    }
    \label{fig:non_monotone_caterpillar_space_74}
\end{figure}

\begin{example}
    \label{example:caterpillar-74-example}
    Consider the multisets depicted in \cref{fig:non_monotone_caterpillar_space_74}, where
    we use the same conceptual visualisation as in \cref{fig:non_monotone_caterpillar_space}.
    We show that the bijection of \cref{subfig:non_monotone_74} achieves a distortion of $4$,
    while both monotone and antitone alignments incur a distortion of $7$.
    Employing the translation of multisets into subsets from 
    \cref{corr:multiset_to_set}, it follows that for every 
    $\eps > 0$ there are pairs of subsets of $\R$ for which the distortion of the optimal bijection
    differs by a factor of at least $\frac{7}{4}-\eps$ from the distortion of the better of the monotone and 
    antitone bijections.
    
    In the following, we refer to maximal collections of points with the same value as bags. The monotone alignment (\cref{subfig:monotone_74}) incurs a distortion of $7$ since the left- and 
    rightmost bags of the top space both contain a point that is 
    mapped into the bag of size $18$ in the bottom space. Similarly, any antitone alignment
    incurs a distortion of $7$, because it spreads the two leftmost bags of the top space
    over the whole bottom space, so to a distance of $8$. 
    In contrast, the non-monotone and non-antitone alignment (\cref{subfig:non_monotone_74}) identifies
    the two bags containing $18$ points each and relies on an intertwined structure to distribute the
    remaining bags. Any two points in the top space at distance at least $4+i, \, i \in [4]$, have 
    their images in the bottom space at distance at least $i$ and vice versa. Hence, the maximum 
    distortion of the alignment is at most $4$. Suppose there exists an alignment with maximum distortion
    at most $3$. Then the image of the first bag of the top space can span across at most four consecutive bags,
    implying that some point of that bag is mapped into the bag with $18$ vertices
    in the bottom space. Every preimage of the first bag of the bottom space must then avoid
    the first two bags of the top space, which is impossible because this leaves only nine
    available points for the preimages. 
\end{example}

Recall that the length of a path decomposition of a graph $G$ is the largest diameter of its bags measured with respect to the shortest-path metric of $G$.
The pathlength of $G$ is the minimum length taken over all path decompositions of $G$.
Deciding whether a graph has pathlength at most two is already \NP-complete \cite[Theorem 3]{np_hardness_path_length} and thus computing a path decomposition of optimal length is \NP-hard.
However, a path decomposition of length at most twice the optimum can be computed in time $\O(n^3)$ \cite[Algorithm 6, Theorem 3]{Dragan2017_pathlength_approx}. 

\begin{proof}[Proof of \cref{theorem:max_distortion_approx_2_4}]
    We start with the proof of (1). Given two connected $n$-vertex graphs $G_1, G_2$ of pathlength at most $k$, first we invoke the algorithm of Dragan, Köhler, and Leitert \cite[Algorithm 6, Theorem 3]{Dragan2017_pathlength_approx} for computing a path decomposition of at most twice the optimal length in time $\O(n^3)$.
    For a given graph $F$ of pathlength at most $k$, their algorithm considers for each start vertex $s \in V(F)$ a path decomposition obtained from a breadth-first search \enquote{layering} starting at $s$, and, among these path decompositions, returns one of smallest length.
    We work with the corresponding layering directly.
    As Dragan and Köhler describe in \cite[Section 3.1]{dragan2025graphparameterscoarselyequivalent_pathlength},
    this layering can be turned into a \enquote{canonical} caterpillar $T$ satisfying
    \begin{equation}
        \label{equation:canonical_diff}
        \Gdist_{T}(x, x') - 2 \leq \Gdist_F(x, x') \leq \Gdist_T(x, x') + 2k
    \end{equation}
    for all $x, x' \in V(F)$ by \cite[Lemma 2]{dragan2025graphparameterscoarselyequivalent_pathlength} where we, again slightly abusing notation, identify the vertices of $T$ with the corresponding vertices of $F$.

    Let $T_1, T_2$ be the caterpillars obtained from $G_1, G_2$ in this way.
    For $x, x' \in V(G_1)$ and $\pi \in \Bij(V(G_1), V(G_2))$, we have
    \begin{equation}
        \label{equ:graph_caterpillar_dist_approx}
        \Big| \big|\Gdist_{T_1}(x, x') -\Gdist_{T_2}(\pi(x), \pi(x'))\big| -
        \big|\Gdist_{G_1}(x, x') -\Gdist_{G_2}(\pi(x), \pi(x'))\big| \Big| \leq  2k + 2
    \end{equation}
    by applying \cref{equation:canonical_diff} twice.
    As a second step, we run the algorithm $\mathcal B$ of \cref{lemma:max_distortion_caterpillar_approx} on $T_1$ and $T_2$.
    Let $\mathcal C$ be the algorithm obtained from combining these two steps.
    Then
    \begin{align*}
        \Delta^{\mathcal C}(G_1, G_2) &\leq \Delta^{\mathcal B}(T_1, T_2) + 2k + 2 \tag*{\cref{equ:graph_caterpillar_dist_approx}}\\
        &\leq \mfrac{7}{4}\Delta^\ast(T_1, T_2) + \mfrac{11}{2} + 2k + 2 \tag*{\cref{lemma:max_distortion_caterpillar_approx}}\\
        &\leq \mfrac{7}{4}\big(\Delta^\ast(G_1, G_2) + 2k + 2 \big) + \mfrac{11}{2} + 2k + 2 \\
        &= \mfrac{7}{4}\Delta^\ast(G_1, G_2) + \mfrac{11}{2}k + 11,
    \end{align*}
    so $\mathcal C$ is a $\big(\frac{7}{4}, \frac{11}{2}k + 11\big)$-approximation algorithm. 

    We now prove (2) by transferring \cite[Theorem 6]{approximating_the_distortion} to our setting.
    Let $G, H$ be the input graphs where we assume \WLOG{} that $H$ has pathlength at most $k$.
    We first compute a layering of $H$ and a corresponding canonical caterpillar $T_H$ as before. Let $M \ceq M(T_H)$ be the corresponding multiset.
    Combining \cref{equ:caterpillar_to_multiset,equation:canonical_diff}, for all $\pi \in \Bij(V(G), V(H))$, we have
    \begin{equation}
        \label{equ:graph_multiset_approx}
        \Big| \big| \Gdist_G(x, x') - \Gdist_H(\pi(x), \pi(x')) \big| - \big| \Gdist_G(x, x') - |\pi(x) - \pi(x')| 
        \big| \Big| \leq 2k + 2
    \end{equation}
    where we again identify vertices of the graph $H$ with their corresponding multiset elements.
    
    We extend the notion of maximum distortion between graphs and multisets in the obvious way.
    To compute an approximately optimal alignment between $G$ and $M$,
    we closely 
    follow the algorithm \textsc{Intervals} of Hall and Papadimitriou 
    \cite[Section 3.2]{approximating_the_distortion} (we denote their algorithm by $\mathcal D$).
    Combined with the previous step, this yields an algorithm $\mathcal E$ with a total running time of $\O(n^3)$.

    Let $w$ be a point in the leftmost bag of $M$. 
    The algorithm $\mathcal D$ considers for each $x \in V(G)$ the partial map $x \mapsto w$ and greedily extends it by mapping the vertices $x' \in V(G) \setminus \{x\}$ ordered increasingly by $\Gdist_G(x, x')$ and matching each to an unmatched point in the leftmost bag of $M$ (that still contains unmatched points).
    This yields, for each $x \in V(G)$, a bijection $\pi_x$. Our algorithm $\mathcal E$ returns a bijection of least maximum distortion among these.

    Combining everything, we have
    \begin{align*}
        \Delta^{\mathcal E}(G, H) &\leq \Delta^{\mathcal D}(G, M) + 2k + 2 \tag*{\cref{equ:graph_multiset_approx}}\\
        &\leq 5\Delta^\ast(G, M) + 2k + 2 \tag*{\cite[Theorem 6]{approximating_the_distortion}}\\
        &\leq 5\big(\Delta^\ast(G, H) + 2k + 2\big) + 2k + 2 \\
        &= 5\Delta^\ast(G, H) + 12k + 12.
    \end{align*}
 
    We note that the second inequality requires extending the result \cite[Theorem 6]{approximating_the_distortion} to the setting where one input is a graph and the other a multiset. This is achieved by shifting the points of $M$ apart as in the 
    proof of \cref{corr:multiset_to_set}, yielding a set $M^\eps$. Since distances in $M$
    and $M^\eps$ differ by at most $\eps$, the distortion of every bijection with
    respect to $M$ and $M^\eps$ differ by at most $\eps$. By running $\mathcal D$
    on $M$ and breaking ties inside bags according to the shifts (equivalently according to the given enumerations) one obtains
    \begin{align*}
        \Delta^{\mathcal D}(G, M) 
        \leq \Delta^{\mathcal D}(G, M^\eps) + \eps
        \leq 5 \Delta^\ast (G, M^\eps) + \eps
        \leq 5 \Delta^\ast (G, M) + 6\eps.
    \end{align*}
    Both $\Delta^{\mathcal D} (G, M)$ and $\Delta^\ast (G, M)$ are integers, meaning for $\eps < \frac{1}{6}$ it holds that
    $\Delta^{\mathcal D}(G, M) \leq 5 \Delta^\ast (G, M)$. 
    
    By precomputing $D_G$ and $d_T$ (the metric on $M$), we can compute $\pi_x$ and its distortion in time $\O(n^2)$; hence, computing an approximately optimal alignment between $G$ and $M$ can be done in time $\O(n^3)$. As all previous steps also require time $\O(n^3)$ this yields the desired time bound.
\end{proof}

\section{\texorpdfstring{Proofs Omitted from \cref{section:beyond_max_distortion}}{Proofs Omitted from Section~\ref{section:beyond_max_distortion}}}
\label{appendix:beyond_max_distortion}
\subsection{Local Distortion}
\begin{proof}[Proof of \cref{theorem:local_dist_hardness_hamiltonian}]
    We reduce from \hamcycle{} which remains \NP-complete even for $3$-regular graphs \cite{hamcycle_threeregular_np_complete}. Given a $3$-regular graph $G$ on $n$ vertices as input for \hamcycle{}, the reduction uses the $(n+1)$-vertex graphs $\widehat{C_n}$ and $\widehat{G}$ as inputs for $\Ddist{p}$ with threshold 1.
    Recall that $b(p) = \max\{\frac{1}{p}, 1-\frac{1}{p}\}$ for $1 \leq p < \infty$ and $b(\infty) = 1$.

    Assume that $G$ admits a Hamiltonian cycle. Then there exists a bijection $\pi: V(C_n) \to V(G)$ that aligns the cycle $C_n$ perfectly with the Hamiltonian cycle in $G$. We extend $\pi$ to a bijection $\widehat{\pi}$ between $\widehat{C_n}$ and $\widehat{G}$ by mapping $c_{C_n}$ to $c_G$. Each vertex of $G$ has three neighbours in $V(G)$ in $\widehat{G}$, two of which are matched correctly by $\widehat{\pi}$ as they are part of the Hamiltonian cycle. Hence, every vertex of $G$ has a mismatch of one. 
    Using \cite[Lemma 14]{gervens2022graphsimilaritybasedmatrix} together with \cref{observation:diameter_two}, it follows that $\Ddist{p}(\widehat{C_n}, \widehat{G}) \leq 1$.

    Now assume that $G$ has no Hamiltonian cycle. 
    Any bijection $\pi' \in \Bij(V(\widehat{C_n}), V(\widehat{G}))$ with $\pi'(c_{C_n}) \neq c_G$ has $\dis_{p}(\pi') \geq (n-4)^{b(p)} \geq 3^{b(p)}$ given that $n \geq 8$ due to the degree mismatch between $c_{C_n}$ and $\pi'(c_{C_n})$.
    Hence, we can restrict to the graphs $C_n, G$ and consider any bijection $\sigma \in \Bij(V(C_n), V(G))$ with $\widehat{\sigma} = \pi'$ instead. 
    Since $G$ is not Hamiltonian, there must be an edge $vv' \in E(C_n)$ such that $\sigma(v)\sigma(v')\notin E(G)$. Moreover, only one of the edges incident to $\sigma(v) \in V(G)$ can be matched correctly. It follows that $\sigma(v)$ has at least three edge mismatches under $\sigma$. Hence, by \cite[Lemma 13]{gervens2022graphsimilaritybasedmatrix} and \cref{observation:diameter_two} it holds $\dis_{p}(\pi') \geq 3^{b(p)}$. Thus $\Ddist{p}(\widehat{C_n}, \widehat{G}) \geq 3^{b(p)}$.
\end{proof}

\begin{figure}[b!]
    \captionsetup{singlelinecheck = true, justification=justified}
    \centering
    
    \begin{tikzpicture}[
            box/.style={
                draw,
                fill=white,
                rounded corners,
                minimum width=1.6cm,
                minimum height=0.8cm,
                thick
            },
            dot/.style={
                draw,
                fill=black,
            }
        ]
    
        \newcommand{\branchleaves}[4]{%
            \coordinate (#1root) at (#2, #3);
            \coordinate (#1nstart) at (#2-0.6,#3-0.75);
            \coordinate (#1nend)   at (#2+0.6,#3-0.75);
    
            \fill (#1nstart) circle (2pt);
            \fill (#1nend)   circle (2pt);
        
            \coordinate (#1mid) at ($(#1nstart)!0.5!(#1nend)$);
            \node at (#1mid) {$\cdots$};
        
            \draw[thick] (#1root) -- (#1nstart);
            \draw[thick] (#1root) -- (#1nend);
        
            \draw[decorate,decoration={brace},very thick]
                ($(#1nend)+(0.1,-0.2)$)
                --
                ($(#1nstart)+(-0.1,-0.2)$)
                node[midway,below=2pt] {#4};
        }
    
        \newcommand{\kpath}[5]{%
    
            \begin{scope}[shift={(#2,#3)}]
        
                \coordinate (#1start) at (0,0);
                \coordinate (#1end) at (0,-#4);

                \def\pad{0.2}
        
                \draw[rounded corners, fill=white]
                    (-\pad,\pad) rectangle (\pad,-#4-\pad);
                    
                \fill (#1start) circle (2pt);
                \fill (#1end)   circle (2pt);
        
                \node[rotate=90] (mid) at ($(#1start)!0.5!(#1end)$) {$\cdots$};
                \node[right=4pt] at (mid) {#5};
        
        
            \end{scope}
        }
    
        \coordinate (G) at (0,0);
        \coordinate (H) at (6,0);
        \begin{scope}[shift=(G)]
    
            \node at (1.5, -2.8) {$G$};
        
            \coordinate (rG) at (1.5,-0.25);
            \fill (rG) circle (2pt);
            \node[above=3pt] at (rG) {$r_G$};
            \draw[thick] (0.25,-1) -- (rG);
            \draw[thick] (2.75,-1) -- (rG);
            \branchleaves{a1}{0.25}{-1}{$a_1$};
            \node at (1.5,-1) {$\cdots$};
            \node at (1.5,-1.75) {$\cdots$};
            \branchleaves{a3m}{2.75}{-1}{$a_{3m}$};
            \node[above left] at (0.25, -1) {$s_1$};
            \node[above right] at (2.75, -1) {$s_{3m}$};
            \fill (0.25, -1) circle (2pt);
            \fill (2.75, -1) circle (2pt);
    
        \end{scope}

        \begin{scope}[shift=(H)]
    
            \node at (1.5, -2.8) {$H$};
            
            \coordinate (rH) at (1.5,-0.25);
            \coordinate (t1) at (-0.75, -1);
            \coordinate (tm) at (1.75, -1);
            \coordinate (tm1) at (2.25, -1);
            \coordinate (t3m) at (3.75, -1);
            \fill (rH) circle (2pt);
            \node[above=3pt] at (rH) {$r_H$};
            \draw[thick] (t1) -- (rH);
            \draw[thick] (tm) -- (rH);
            \branchleaves{a1}{-0.75}{-1}{$A$};
            \node at (0.5,-1) {$\cdots$};
            \node at (0.5,-1.75) {$\cdots$};
            \branchleaves{a3m}{1.75}{-1}{$A$};
            \node[above left] at (t1) {$t_1$};
            \node[above left] at (tm) {$t_{m}$};
            \node[below right] at (tm1) {$t_{m+1}$};
            \node[below right] at (t3m) {$t_{3m}$};
            \fill (t1) circle (2pt);
            \fill (tm) circle (2pt);
            \fill (tm1) circle (2pt);
            \fill (t3m) circle (2pt);
            
            \node at (3,-1) {$\cdots$};

            \draw[thick] (tm1) -- (rH);
            \draw[thick] (t3m) -- (rH);
        \end{scope}
    \end{tikzpicture}
    
    \caption{The trees in the construction of the proof of \cref{theorem:local_dist_mult_hardness_trees}.}
    \label{figure:local_hardness_trees}
\end{figure}

\begin{proof}[Proof of \cref{theorem:local_dist_mult_hardness_trees}]
    We again reduce from \threepartition{}.
    Note that we may \WLOG{} assume that all instances have $m \geq 3$.
    Further, by strong \NP-hardness we can fix some $d \geq 1$ and restrict our attention to instances 
    satisfying $A \leq m^d$, ensuring that our reduction can be computed in polynomial time. 
    For an instance $\mathcal I = (A, a_1, \ldots, a_{3m})$ we
    construct graphs $G$ and $H$ as follows.
    For $G$, start with the disjoint union of $3m$ stars, where the $i$-th star has $a_i$ leaves. The centre of the $i$-th star will be denoted by $s_i$.
    Then $G$ is obtained from the disjoint union of these stars by adding a fresh vertex $r_G$, in the following also referred to as the root of $G$, which is connected to each $s_i$ for $i \in [3m]$.
    The graph $H$ is constructed from $m$ disjoint copies of stars with $A$ leaves each, whose centres we will refer to as $t_1, \ldots, t_m$ and call them the roots of the bags of $H$. Moreover, we add $2m$ isolated vertices which are denoted $t_{m+1}, \ldots, t_{3m}$. To complete the construction of $H$, again add a root vertex $r_H$ which is connected to each $t_j$ for $j \in [3m]$.
    The constructed graphs are trees of diameter four.

    For $i \in [3m]$, we refer to the leaves attached to $s_i$ as the \emph{leaves of $s_i$} and write $L(s_i)$ for the corresponding set.
    Similarly, for $j \in [m]$, the \emph{leaves of $t_j$} are the set $L(t_j)$ of leaves attached to $t_j$. We write $L[v] \ceq \{v\} \cup L(v)$ where $v = s_i$ or $v = t_j$.
    This notation naturally extends to sets.
    Finally, we let $S \ceq \{ s_i \mid i \in [3m]\}$, $T \ceq \{ t_j \mid j \in [m]\}$, and $U \ceq \{ t_j \mid m < j \leq 3m\}$.

    We claim that $\Ddist{1}(G, H) \leq \frac{3}{2}A + 4$ if $\mathcal I$ is a YES-instance and $\Ddist{1}(G, H) \geq \frac{7}{4}A$ otherwise.
    For notational convenience, for a vertex $v \in V(G)$ and a bijection $\pi \in \Bij(V(G), V(H))$, define
    \begin{equation}
        \label{equation:dis_1_pi_v}
        \dis_{1}(\pi, v) \ceq \sum_{v' \in V(G)} \dis(\pi, v, v')
    \end{equation}
    as the sum of distortions incurred relative to the vertex $v$.
    By taking the maximum over all choices of $v$, we have the identity
    $\dis_1(\pi) = \max_{v \in V(G)} \dis_1(\pi, v)$.

    \yescase Suppose $\mathcal I$ is a YES-instance.
    Let $\sigma: [3m] \to [m]$ be a witnessing assignment and define a bijection $\pi \in \Bij(V(G), V(H))$ satisfying the following properties:
    \begin{itemize}
        \item $\pi(r_G) = r_H$,
        \item $\pi(S) = T \cup U$, and thus in particular also $\pi(L(S)) = L(T)$.
        \item For each $j \in [m]$ and $\sigma^{-1}(j) = \{i_1, i_2, i_3\}$, $\pi(L(s_{i_1}) \cup L(s_{i_2}) \cup L(s_{i_3})) = L(t_j)$, and $\pi(s_i) = t_j$ for some $i \in \sigma^{-1}(j)$.
    \end{itemize}
    \begin{claim}
        For $\pi$ specified as above, it holds that $\dis_1(\pi) \leq \frac{3}{2}A + 4$.
    \end{claim}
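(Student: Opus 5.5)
The plan is to fix an arbitrary vertex $v \in V(G)$ and bound $\dis_1(\pi, v)$ from \cref{equation:dis_1_pi_v} directly. Since $\dis_1(\pi) = \max_v \dis_1(\pi, v)$, this suffices. Both trees have diameter four, so I first tabulate all distances.

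In $G$, the distances are:
\begin{itemize}
    \item from $r_G$: distance $1$ to every $s_i$ and $2$ to every leaf;
    \item from $s_i$: distance $1$ to $L(s_i)$, $2$ to every other $s_{i'}$, and $3$ to every leaf of another star;
    \item between two leaves: $2$ if they lie in the same star and $4$ otherwise.
\end{itemize}

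In $H$, the distances are:
\begin{itemize}
    \item from $r_H$: distance $1$ to every $t_j$ and $2$ to every leaf;
    \item from a centre $t_j \in T$: distance $1$ to $L(t_j)$, $2$ to $T \cup U$, and $3$ to the remaining leaves;
    \item from a vertex of $U$: distance $2$ to $T \cup U$ and $3$ to every leaf;
    \item from a leaf of $t_j$: distance $1$ to $t_j$, $3$ to every other vertex of $T \cup U$, $2$ to the other leaves of $t_j$, and $4$ to all other leaves.
\end{itemize}

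With the tables in hand, I split into four cases according to $v$.

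\emph{Case $v = r_G$.} Here $\pi(r_G) = r_H$, and $\pi$ maps $S$ onto $T \cup U$ and $L(S)$ onto $L(T)$. Hence all distances from $r_G$ are preserved and $\dis_1(\pi, r_G) = 0$.

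\emph{Case $v = s_i$ with $\pi(s_i) = t_j \in T$, where $j = \sigma(i)$.} The leaves in $L(s_i)$ stay at distance $1$. Every other $s_{i'}$ stays at distance $2$. Leaves of stars outside the triple $\sigma^{-1}(j)$ stay at distance $3$. The only distortion comes from the leaves of the two triple-mates of $s_i$: they are at distance $3$ in $G$ but their images are at distance $1$ in $H$. There are $A - a_i$ such leaves, so the total is $2(A - a_i) < \frac{3}{2}A$, using $a_i > \frac{A}{4}$.

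\emph{Case $v = s_i$ with $\pi(s_i) \in U$.} The leaves in $L(s_i)$ move from distance $1$ to distance $3$. All other pairs are preserved, since every other leaf is at distance $3$ in both trees and every other spine vertex is at distance $2$ in both. This gives a total of $2a_i < A$.

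\emph{Case $v$ a leaf of $s_i$.} The following pairs are preserved: the pair with $r_G$ (distance $2$), the pairs with the other leaves of $s_i$ (distance $2$), and the pairs with leaves outside the triple (distance $4$). The leaves of the two triple-mates move from distance $4$ to distance $2$, contributing $2(A - a_i) < \frac{3}{2}A$. The remaining pairs are with the spine vertices.
\begin{itemize}
    \item The pair $\{v, s_i\}$ moves from distance $1$ to distance $1$ or $3$, so it contributes at most $2$.
    \item A pair $\{v, s_{i'}\}$ with $i' \neq i$ has distance $3$ in $G$. Its image distance is also $3$ unless $\pi(s_{i'}) = t_{\sigma(i)}$, in which case it is $1$. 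At most one $s_{i'}$ is mapped to $t_{\sigma(i)}$, so this contributes at most $2$ in total.
\end{itemize}
Together this gives $\dis_1(\pi, v) \le \frac{3}{2}A + 4$.

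Combining the four cases yields $\dis_1(\pi) \le \frac{3}{2}A + 4$. The only delicate point is the bookkeeping for leaves with respect to spine vertices. There, the constant $4$ arises because exactly one spine preimage sits at the centre of the leaf's target star, and $\pi(s_i)$ may lie in $U$. The bound $a_i > \frac{A}{4}$ from \threepartition{} is what turns $2(A - a_i)$ into the strict inequality $2(A - a_i) < \frac{3}{2}A$.
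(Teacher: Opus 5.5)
Your proof is correct and follows the paper's argument essentially step for step. You use the same four cases ($v = r_G$; $v \in S$ with $\pi(v) \in T$; $v \in S$ with $\pi(v) \in U$; $v$ a leaf) with the same contributions $0$, $2(A-a_i) < \frac{3}{2}A$, $2a_i < A$, and $2(A-a_i) + 4$, and your distance tables and the observation that $\pi(s_i) \in T$ forces $\pi(s_i) = t_{\sigma(i)}$ only make explicit what the paper leaves to \emph{one easily verifies}.
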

    \begin{claimproof}
        Let $v \in V(G)$.
        If $v = r_G$, then $\dis_1(\pi, r_G) = 0$ since $\pi$ preserves the level of each vertex in the two trees.
        Next, assume $v \in S$ and let $i \in [3m]$ such that $v = s_i$.
        
        If $\pi(v) \in T$, that is, $v$ is mapped to the root of a bag in $H$, then $v$ incurs no distortion to any of its leaves in $G$ because $H[\pi(L[v])]$ is a subtree of $H$.
        Clearly, there is no distortion to any vertex of $S \cup \{r_G\}$.
        Let $j \neq k \in [3m]$ be distinct from $i$ such that $\sigma(i) = \sigma(j) = \sigma(k)$.
        Then $v$ incurs distortion of two to each $v' \in L(s_j) \cup L(s_k)$ since for each such $v'$ it holds that $\Gdist_G(v, v') = 3$ but $\Gdist_H(\pi(v), \pi(v')) = 1$.
        Hence, the total amount of distortion is at most $2\cdot |L(s_j) \cup L(s_k)| = 2(a_j + a_k) = 2(A - a_i) < \frac{3}{2}A$ where the last inequality holds since $a_i > \frac{A}{4}$.
        
        If $\pi(v) \in U$, then $v$ only incurs distortion of two to every vertex in $L(v)$, hence at most $2\cdot |L(s_i)| = 2a_i < A$ where the last inequality holds by $a_i < \frac{A}{2}$.

        Finally, let $v$ be a leaf vertex and let $i \in [3m]$ such that $v \in L(s_i)$.
        Again, let $j \neq k \in [3m]$ be distinct from $i$ such that $\sigma(i) = \sigma(j) = \sigma(k)$.
        One easily verifies that $v$ incurs a distortion of two to every vertex $v' \in L(s_j) \cup L(s_k)$.
        Moreover, the only other vertices whose distances to $v$ may change under $\pi$ are in $\{s_i, s_j, s_k\}$.
        If $\pi(v)\pi(s_i) \in E(H)$, then the distances to all three vertices are preserved under $\pi$. Otherwise, the distance between $v$ and $s_i$ and between $v$ and one of $s_j$ and $s_k$ change by two, respectively.
        In total, again using $a_i > \frac{A}{4}$, this yields a total sum of at most $\frac{3}{2}A + 4$.
    \end{claimproof}
    
    \nocase Suppose $\mathcal I$ is a NO-instance and assume towards a contradiction that there is a bijection $\pi \in \Bij(V(G), V(H))$ with $\dis_1(\pi) < \frac{7}{4}A$.
    We start by showing that under this assumption, any such bijection necessarily needs to preserve layers, \ie{} map $r_G$ to $r_H$ and $S$ to $T \cup U$.

    \begin{claim}
        If $\pi(r_G) \neq r_H$ or $\pi(S) \neq T \cup U$, then
        $\dis_1(\pi) \geq 2A$.
    \end{claim}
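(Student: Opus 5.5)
The plan is to exhibit, in each of the two cases, a single vertex $v \in V(G)$ whose local distortion $\dis_1(\pi, v)$ from \cref{equation:dis_1_pi_v} is already at least $2A$. We choose $v$ so that many vertices are \emph{far} from $v$ in one graph but necessarily \emph{close} to its image in the other. Throughout we use only the distance structure of the two depth-two trees, together with $\sum_i a_i = mA$, $a_i < \frac{A}{2}$, and the standing assumption $m \geq 3$. In particular $mA - a_i > 3A - \frac{A}{2} \geq 2A$ for every $i$.

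\emph{Case 1: $\pi(r_G) \neq r_H$.} Let $v \ceq \pi^{-1}(r_H) \neq r_G$. Every vertex of $H$ is within distance $2$ of $r_H$, so $\Gdist_H(\pi(v), \pi(v')) \leq 2$ for all $v'$.
\begin{itemize}
    \item If $v = s_i \in S$, then every leaf $v' \in L(S) \setminus L(s_i)$ has $\Gdist_G(v, v') = 3$. Each of these $mA - a_i$ vertices therefore incurs distortion at least $1$, and so $\dis_1(\pi, v) \geq mA - a_i \geq 2A$.
    \item If $v \in L(s_i)$ is a leaf, the same $mA - a_i$ leaves are at distance $4$ from $v$. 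Each then contributes distortion at least $2$, which gives $\dis_1(\pi, v) \geq 2(mA - a_i) \geq 2A$.
\end{itemize}

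\emph{Case 2: $\pi(r_G) = r_H$ but $\pi(S) \neq T \cup U$.} Since $|S| = |T \cup U| = 3m$ and $r_G \mapsto r_H$, some leaf $x \in L(s_i)$ must be mapped into $T \cup U$. Take $v \ceq x$. The vertex $\pi(x)$ is a child of $r_H$, so its eccentricity in $H$ is at most $3$. On the other hand, every leaf of $G$ outside $L(s_i)$ lies at distance $4$ from $x$. Hence each of these $mA - a_i$ vertices incurs distortion at least $1$, and again $\dis_1(\pi, x) \geq mA - a_i \geq 2A$.

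Since $\dis_1(\pi) = \max_v \dis_1(\pi, v)$, the claim follows in both cases. The only step needing care is making sure the numerical bound $mA - a_i \geq 2A$ actually holds. This is exactly where $m \geq 3$ and $a_i < \frac{A}{2}$ are used, and it is why we count only the leaves of the \emph{other} stars rather than attempting a finer accounting of distortions.
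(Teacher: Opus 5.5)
Your proof is correct, and it uses the same overall strategy as the paper: exhibit one vertex $v$ whose local sum $\dis_1(\pi, v)$ is already at least $2A$ by counting leaves of the other stars. Both cases are carried out slightly differently, though.

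\textbf{Case 1.} The paper examines $r_G$ itself. Since $\Gdist_G(r_G,\cdot) \leq 2$ everywhere while its image $x \neq r_H$ has at least $(m-1)A$ vertices at distance $\geq 3$, it obtains $\dis_1(\pi, r_G) \geq (m-1)A$. You take the dual choice $\pi^{-1}(r_H)$: you use $\ecc_H(r_H) = 2$ and count the $mA - a_i$ far leaves on the $G$ side. This needs your split into $v \in S$ versus $v \in L(S)$, but it requires no analysis of where $\pi(r_G)$ sits in $H$.

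\textbf{Case 2.} The paper argues by parity: distances from the leaf $v$ to $L(S)$ are even, while distances from $\pi(v) \in T \cup U$ to $L(T)$ are odd. It then adds a separate count of the preimages of $(T \cup U) \setminus \{\pi(v)\}$ and reaches $mA - 1 - a_i$. Your eccentricity comparison (distance $4$ in $G$ versus at most $3$ in $H$) is more direct. It avoids tracking which vertices land in $L(T)$ and which land in $T \cup U$, and it gives the marginally stronger bound $mA - a_i$.
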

    \begin{claimproof}
        First suppose that $\pi(r_G) = x$ for some $x \neq r_H$.
        Note that $\Gdist_G(r_G, v) \leq 2$ for all $v \in V(G)$; however, the set $\{ y \in V(H) \mid \Gdist_H(x, y) > 2 \}$ has size at least $(m-1)A \geq 2A$ for $m \geq 3$. So assume $\pi(r_G) = r_H$ and $\pi(S) \neq T \cup U$.
        This in particular implies that there is a $v \in L(S)$ with $\pi(v) \in T \cup U$.
        Let $i \in [3m]$ such that $v \in L(s_i)$.
        However, vertices in $L(S)$ have even distances between each other in $G$, whereas $\pi(v)$ has odd distances to all vertices in $L(T)$.
        Hence, the parity of distances to $v$ changes for at least $(mA - 1) - (3m-1)$ vertices under $\pi$.
        Finally, note that $v$ only has exactly $a_i$ vertices at distance two from it in $G$ and hence $(3m-1) - a_i$ vertices necessarily change distance to $v$ by at least one.
        In total, this yields $\dis_1(\pi, v) \geq mA - 1 - a_i \geq (m-1)A \geq 2A$ since $a_i < \frac{A}{2}$ and $m \geq 3$.
    \end{claimproof}

    From the above, we may, from now on and until the end of this proof, assume that $\pi(S)  = T\cup U$ and hence $\pi(L(S)) = L(T)$.

    Suppose there is $i \in [3m]$ such that $\pi(L(s_i)) \not\subseteq L(t_j)$ for all $j \in [m]$, that is, $L(s_i)$ is split across multiple leaf sets of $H$.
    Then there exists a vertex $v \in L(s_i)$
    and $j \in [m]$ such that $\pi(v) \in L(t_j)$ with the property that at least half of the vertices of $L(s_i)$ are mapped to $L(T) \setminus L(t_j)$ under $\pi$.
    Then $v$ incurs a distortion of two to every vertex that is mapped into $L(t_j)$ but not contained in $L(s_i)$ in $G$. Since at most $\frac{a_i}{2}$ vertices of $L(s_i)$ are mapped into $L(t_j)$, the number of such vertices is at least $A - \frac{a_i}{2} > \frac{3}{4} A$ since $a_i < \frac{A}{2}$.
    Additionally, $v$ incurs a distortion of two to every vertex in $L(s_i)$ that is mapped to $L(T) \setminus L(t_j)$. The number of such vertices is at least $\frac{a_i}{2} > \frac{A}{8}$ since 
    $a_i > \frac{A}{4}$.
    Thus $\dis_1(\pi) \geq \dis_1(\pi, v) > 2\big(\frac{3}{4}A + \frac{1}{8}A\big) = \frac{7}{4}A$.

    Hence, we may assume that, for each $i \in [3m]$, there is a $j \in [m]$ such that $\pi(L(s_i)) \subseteq L(t_j)$.
    Let $\sigma: [3m] \to [m]$ be the corresponding assignment function.
    Then $\sum_{i \in \sigma^{-1}(j)} a_i \leq A$ for each $j \in [m]$, which implies that $\mathcal I$ is a YES-instance, a contradiction.
    Hence $\Ddist{1}(G, H) > \frac{7}{4}A$.\bigskip

    \noindent The above gap in $\dis_1$-values between YES- and NO-instances shows that there cannot exist a polynomial-time approximation algorithm with a multiplicative error guarantee of $\frac{7}{6}-\eps$ since for each fixed $\eps > 0$, for sufficiently large $A$ we have
    \begin{equation}
        \label{equation:local_tree_eps}
        \frac{\frac{7}{4}A}{\frac{3}{2}A + 4} > \mfrac{7}{6} - \eps.
    \end{equation}
    Hence, a $\big(\frac{7}{6}-\eps\big)$-approximation algorithm would decide \threepartition{}, in contradiction to the assumption $\P \neq \NP$.
    Note that while $A$ is part of the input instance, we can always multiply the values of $\mathcal I$ 
    by a constant factor (depending on $\eps$) to ensure that \cref{equation:local_tree_eps} holds.

    For the additive approximation hardness, we again replicate the leaves in both trees $c = m^b$ times 
    (compare \cref{appendix:hardness_max_distortion}) for an integer $b > 0$. Recall that $A \leq m^d$.
    The resulting graphs have $n \ceq 1 + 3m + cmA = \O(m^{b+d+1})$ vertices and for $A \geq 32$ 
    (which we may assume \WLOG{} by simply considering the instance $32\mathcal I$)
    and $c \geq 1$, the gap between YES- and NO-instances is at least 
    \begin{align*}
        \mfrac{7}{4}cA - \left(\mfrac{3}{2}cA + 4\right)
        = \mfrac{1}{4}cA - 4 \geq \mfrac{1}{8}cA
        = \Omega\big(m^b\big)
        = \Omega\big(n^{b/(b+d+1)}\big).
    \end{align*}
    Since $\lim_{b \to \infty} \frac{b}{b + d + 1} = 1$, for every $\eps > 0$, we can choose $b$
    sufficiently large to ensure $\frac{b}{b+d+1} > 1-\eps$, implying that the gap lies in $\omega(n^{1-\eps})$.
    Hence, approximating $\Ddist{1}$ with additive error guarantee $\O(n^{1-\eps})$ remains \NP-hard.
\end{proof}

\subsection{Global Distortion}
We start with the following simple lower bound on the global distortion.
\begin{claim}
    \label{claim:lower_bound_distortion_sum}
    Let $1 \leq p < \infty$ and let $G, H$ be connected graphs on the same number of vertices, $\pi \in \Bij(V(G), V(H))$ with $\pi(v) = w$.
    Then $\dis_{(p)}(\pi) \geq |\deg_G(v) - \deg_H(w)|^{1/p}$.
\end{claim}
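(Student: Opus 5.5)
The plan is to look only at the row of $D_G^\pi - D_H$ indexed by $w=\pi(v)$. All other entries of the difference matrix are nonnegative in absolute value, so dropping them can only decrease the sum. This gives
\[
    \dis_{(p)}(\pi)^p \;\geq\; \sum_{v' \in V(G)} \dis(\pi, v, v')^p \;\geq\; \sum_{v' \in V(G)} \dis(\pi, v, v'),
\]
where the second inequality holds because every distortion is a nonnegative integer, so $t^p \geq t$ for $t \in \N$ and $p \geq 1$. It therefore suffices to show that the number of vertices $v' \neq v$ with $\dis(\pi, v, v') \geq 1$ is at least $|\deg_G(v) - \deg_H(w)|$.

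For this counting step, assume without loss of generality that $\deg_G(v) \geq \deg_H(w)$; the other case is symmetric, obtained by replacing $\pi$ with $\pi^{-1}$, which leaves the multiset of distortions unchanged. Consider the set $N_G(v)$. A vertex $v' \in N_G(v)$ has distortion zero with $v$ only if $\Gdist_H(w, \pi(v')) = 1$, that is, only if $\pi(v') \in N_H(w)$. Since $\pi$ is injective, at most $\deg_H(w)$ neighbours of $v$ can map into $N_H(w)$. Hence at least $\deg_G(v) - \deg_H(w)$ neighbours $v'$ satisfy $\Gdist_G(v,v') = 1 \neq \Gdist_H(w,\pi(v'))$, and each of these contributes a distortion of at least one.

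Combining the two steps yields $\dis_{(p)}(\pi)^p \geq |\deg_G(v) - \deg_H(w)|$, and taking $p$-th roots gives the claim. I do not expect any real obstacle here. The only points needing care are the integrality argument used to pass from $p$-th powers to plain counts, and observing that connectivity, which is part of the hypotheses, ensures all distances are finite so that the distortions are well-defined integers.
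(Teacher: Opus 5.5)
Your proposal is correct and follows essentially the same route as the paper: keep only the pairs containing $v$, assume $\deg_G(v)\geq\deg_H(w)$ by passing to $\pi^{-1}$ if necessary, and count the neighbours of $v$ that are not mapped into $N_H(w)$, each of which contributes distortion at least one. Your detour through $t^p\geq t$ for nonnegative integers is harmless; the paper instead uses directly that each term it keeps has $p$-th power at least $1$.
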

\begin{claimproof}
    We may assume \WLOG{} that $\deg_G(v) \geq \deg_H(w)$, otherwise consider $\pi^{-1}$.
    We have 
    \begin{align*}
        \dis_{(p)}(\pi) &= \left(\sum_{v', v'' \in V(G)} \big|\Gdist_G(v', v'') - \Gdist_H(\pi(v'), \pi(v''))\big|^p \right)^{1/p} \\
        &\geq \left( \sum_{v' \in V(G)} \big|\Gdist_G(v, v') - \Gdist_H(w, \pi(v'))\big|^p \right)^{1/p} \\
        &\geq \left( \sum_{\substack{v' \in N_G(v)\\\pi(v') \notin N_H(w)}} \big|\Gdist_G(v, v') - \Gdist_H(w, \pi(v'))\big|^p \right)^{1/p} \geq \big(\deg_G(v) - \deg_H(w) \big)^{1/p}
    \end{align*}
    which proves the claim.
\end{claimproof}
We also note that for graphs $G, H$, and all $\pi \in \Bij(V(G), V(H))$ one has $\dis_{(1)}(\pi) \leq \big(\dis_{(p)}(\pi)\big)^p$. Equality holds if $\dis(\pi, v, v') \in \{0, 1\}$ for all $v, v' \in V(G)$.

\begin{proof}[Proof of
    \cref{theorem:additive_inapprox_global_distortion}]
    The proof is by reduction from \textsc{Hamiltonian Cycle}, which is already \NP-complete for 3-regular graphs \cite{hamcycle_threeregular_np_complete}. 
    Given a 3-regular graph $F$ on $n$ vertices, we construct two graphs $G, H$ from $C_n$ and $F$ as follows.
    
    Let $X \in \{C_n, F\}$ and assume $V(X) = \{v_1, \ldots, v_n\}$.
    Then we construct a new graph by replacing each vertex $v \in V(X)$ by a $k$-clique, the vertex set of which is denoted $V_1(v)$.
    Next, introduce a fresh vertex $c_X$ that is connected to all vertices of the $k$-cliques.
    If $vv' \in E(X)$, then each pair of vertices in $V_1(v) \times V_1(v')$ is joined by an edge.
    Next, add a fresh $2n(n+4)k^2$-clique and partition its vertex set into $n$ blocks of size $2(n+4)k^2$ each (the blocks are denoted $L_1, \ldots, L_n$ in the following) and introduce an edge for each pair of vertices in $L_i \times V_1(v_i)$, where $i \in [n]$. We let $V_2(v_i) \ceq L_i$ and, from now on, write $V_2(v_i)$ instead of $L_i$.
    \cref{figure:global_additive_hardness} shows the above transformation on an example input graph.

    \begin{figure}[b!]
    \captionsetup{singlelinecheck = true, justification=justified}
    \centering
    
    \begin{tikzpicture}[
        node/.style={
            draw,
            circle,
            fill=white,
        },
        s/.style={
            inner sep=1pt
        },
        m/.style={
            inner sep=5pt
        },
    ]
    
        \coordinate (offset1) at (0, 0.5);
        \begin{scope}[scale=0.7, shift=(offset1)]
            \coordinate (b1) at (0.2, 0);  
            \coordinate (b2) at (0.2, 2);  
            \coordinate (b3) at (1.4, 1);  
            \coordinate (b4) at (2.6, 1);  
            \coordinate (b5) at (3.8, 2);  
            \coordinate (b6) at (3.8, 0);
            
            \draw[line width=1pt]
                (b1) -- (b2)
                (b1) -- (b3)
                (b1) -- (b6)
                (b2) -- (b3)
                (b2) -- (b5)
                (b3) -- (b4)
                (b4) -- (b5)
                (b4) -- (b6)
                (b5) -- (b6)
            ;
    
            \fill (b1) circle (3pt);
            \fill (b2) circle (3pt);
            \fill (b3) circle (3pt);
            \fill (b4) circle (3pt);
            \fill (b5) circle (3pt);
            \fill (b6) circle (3pt);

            \node at (2, -0.75) {$F$};
        \end{scope}

        \node at (4.75, 1.25) {\Large$\xlongrightarrow{}$};
    
        \coordinate (offset2) at (7, 0);
        \begin{scope}[shift=(offset2)]
            \coordinate (b1) at (0.2, 0);  
            \coordinate (b2) at (0.2, 2);  
            \coordinate (b3) at (1.4, 1);  
            \coordinate (b4) at (2.6, 1);  
            \coordinate (b5) at (3.8, 2);  
            \coordinate (b6) at (3.8, 0);  
            
            \coordinate (c1) at (-1, 3.25);
            \coordinate (c2) at (0.2, 3.25);
            \coordinate (c3) at (1.4, 3.25);
            \coordinate (c4) at (2.6, 3.25);
            \coordinate (c5) at (3.8, 3.25);
            \coordinate (c6) at (5, 3.25);
            
            \coordinate (c) at (2, -1);

            \draw[line width=1.2pt]
                (b1) -- (b2)
                (b1) -- (b3)
                (b1) -- (b6)
                (b2) -- (b3)
                (b2) -- (b5)
                (b3) -- (b4)
                (b4) -- (b5)
                (b4) -- (b6)
                (b5) -- (b6)
            
                (b1) to[bend left=5] (c1)
                (b2) -- (c2)
                (b3) -- (c3)
                (b4) -- (c4)
                (b5) -- (c5)
                (b6) to[bend right=5] (c6)
                
                (b1) to[bend right=10] (c)
                (b2) to[bend right=10] (c)
                (b3) to[bend right=5] (c)
                (b4) to[bend left=5] (c)
                (b5) to[bend left=10] (c)
                (b6) to[bend left=10] (c)
            ;

            \fill (c) circle (2.5pt);
            \node[below=2pt] at (c) {$c_F$};
            
            \node[node,s] at (b1) {$K_k$};
            \node[node,s] at (b2) {$K_k$};
            \node[node,s] at (b3) {$K_k$};
            \node[node,s] at (b4) {$K_k$};
            \node[node,s] at (b5) {$K_k$};
            \node[node,s] at (b6) {$K_k$};
            
            \node[node,m] at (c1) {$K_\ell$};
            \node[node,m] at (c2) {$K_\ell$};
            \node[node,m] at (c3) {$K_\ell$};
            \node[node,m] at (c4) {$K_\ell$};
            \node[node,m] at (c5) {$K_\ell$};
            \node[node,m] at (c6) {$K_\ell$};
            
            \node[
                draw,
                rounded corners,
                fit=(c1)(c6),
                inner sep=20pt,
                label=above:{$K_{n\ell}$}
            ] {};

        \end{scope}
    
    \end{tikzpicture}
    
    \caption{Transforming an example graph $F$ as described in the construction for \cref{theorem:additive_inapprox_global_distortion}. Here $n=6$ and $\ell \ceq 20k^2$. Cliques are indicated by large circles and boxes. Edges between cliques indicate that all pairs of vertices are connected by edges.}
    \label{figure:global_additive_hardness}
\end{figure}
    
    The resulting graphs have $1 + nk + 2n(n+4)k^2$ vertices each, which is polynomial in the input size, given that $k = \poly(n)$. We will choose the correct value of $k$ later in the proof.

    For convenience, we denote the vertex sets of $G$ and $H$ by $V$ and $W$, respectively, and partition the vertex sets based on the distance vertices have to the vertex $c_G$ added in the first augmentation step.
    Specifically, $V = V_0 \uplus V_1 \uplus V_2$ where $V_0 = \{c_G\}$, $V_1 = \bigcup_{i \in [n]} V_1(v_i)$ contains the vertices of the $k$-cliques corresponding to the original vertices of $C_n$, and $V_2$ contains the remaining $2n(n+4)k^2$ vertices. 
    In particular, we have $V_2(v) = N_G(V_1(v)) \cap V_2$.
    We define the same partition on $W$ in the obvious way.

    Before proving correctness, we briefly explain the idea behind our construction: If $F$ admits a Hamiltonian cycle, then the \enquote{thickened} cycle (where each vertex has been replaced by a $k$-clique) can be aligned with the thickened version of $F$ such that each clique has distortion to only one other clique.
    If $F$ does not admit a Hamiltonian cycle, then the large clique in $V_2$ and $W_2$ forces any reasonable alignment to align the cycle with the graph (as if they were not thickened, \ie{} vertices were truly vertices and not $k$-cliques).
    But then there are two cliques with mismatched adjacencies to three other cliques.
    The key point is that, since each clique has $k$ vertices, the sum of distortions will scale quadratically in $k$ since each pair of such vertices from distinct cliques incurs a distortion of one.
    
    It remains to prove correctness.
    We claim that $\Ddist{(p)}(G, H) \leq (nk^2)^{1/p}$ if $F$ admits a Hamiltonian cycle and $\Ddist{(p)}(G, H) \geq ((n+4)k^2)^{1/p}$ otherwise.
    Since both graphs have diameter two, by \cref{observation:diameter_two}, the only nonzero distortions under a bijection $\pi$ have value one and correspond exactly to edge mismatches under $\pi$. Hence, in the following it suffices to count mismatched edges under $\pi$ and then take the $p$-th root to obtain $\dis_{(p)}(\pi)$.
    
    \yescase Suppose $F$ has a Hamiltonian cycle. Then there is a bijection $\sigma: V(C_n) \to V(F)$ aligning the $n$-cycle with $F$ such that every vertex $w \in V(F)$ is incident to two mapped edges of $C_n$. 
    We extend this mapping in the obvious way, mapping $V_1(v)$ to $W_1(\sigma(v))$ and $V_2(v)$ to $W_2(\sigma(v))$, to obtain a sum of distortions of exactly $nk^2$ since each pairwise distortion is either zero or one and the only nonzero distortions are incurred by pairs of vertices coming from cliques corresponding to vertices that are non-adjacent in $\widehat{C_n}$ but whose images are adjacent in $\widehat{F}$.
    Here we use the fact that all pairs of distinct vertices in $V_2$ have a distance of one and, moreover, if $v, v' \in V(C_n)$ are distinct and $x \in V_1(v)$ and $x' \in V_2(v')$, then $\Gdist_G(x, x') = 2$. In particular, no other vertex pair incurs any distortion. It follows that $\dis_{(p)}(\pi) = (nk^2)^{1/p}$.

    \nocase Now suppose that $F$ has no Hamiltonian cycle and assume towards a contradiction that there is a $\pi: V(G) \to V(H)$ with $\dis_{(p)}(\pi) < ((n+4)k^2)^{1/p}$.
    By \cref{observation:diameter_two}, our assumption is equivalent to $\dis_{(1)}(\pi) < (n+4)k^2$.
    We start by showing that $\pi$ must preserve the partitioning of the vertex sets described above.
    \begin{claim}
        \label{subclaim:layer_structure}
        For each $i \in \{0, 1, 2\}$, it holds that $\pi(V_i) = W_i$.
    \end{claim}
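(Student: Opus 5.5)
The plan is to compare degrees layer by layer and use the lower bound of \cref{claim:lower_bound_distortion_sum}. For $p=1$ that bound says that every vertex $x$ with $\pi(x)=y$ forces $\dis_{(1)}(\pi)\geq|\deg_G(x)-\deg_H(y)|$. So any vertex whose degree differs from that of its image by at least $(n+4)k^2$ already contradicts the assumption $\dis_{(1)}(\pi)<(n+4)k^2$.

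The first step records the degrees. Write $N\ceq 2n(n+4)k^2$ and $\ell\ceq 2(n+4)k^2$. The centre $c_G$ (and $c_H$) has degree $nk$. A vertex of $V_1$ has degree $(k-1)+1+\ell+2k$ in $G$, since its clique neighbours in $C_n$ contribute $2k$. In $H$ the same count gives $(k-1)+1+\ell+3k$. A vertex of $V_2$ has degree $(N-1)+k$ in both graphs.

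The second step shows that $V_2$ maps into $W_2$. Suppose $x\in V_2$ maps into $W_0\cup W_1$. Then the degree gap is at least $N-1+k-(\ell+4k)$. This is much larger than $(n+4)k^2$ once $n\geq 2$ and $k\geq 1$, because $N-\ell=(n-1)\ell\geq\ell=2(n+4)k^2$ dominates $3k+1$. This contradicts the bound, so $\pi(V_2)\subseteq W_2$, and equality follows from $|V_2|=|W_2|$.

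The third step separates the centre from the $V_1$ layer. We now have $\pi(V_0\cup V_1)=W_0\cup W_1$, so it remains to show $\pi(c_G)=c_H$. The degree gap between $c_G$ and a vertex of $W_1$ is $|nk-(\ell+4k)|$. This need not exceed $(n+4)k^2$ by itself, since $\ell=2(n+4)k^2$ gives a gap of about $2(n+4)k^2-nk$, which does exceed $(n+4)k^2$ once $k\geq 1$ and $n$ is moderate. I expect this inequality to go through directly. If it is too tight, I would instead use the neighbourhood argument. Suppose $\pi(c_G)=y\in W_1(w)$. Then $y$ is adjacent to the whole block $W_2(w)$ of size $\ell$, while $c_G$ is at distance two from all of $V_2$. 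Every $x\in V_2$ with $\pi(x)\in W_2(w)$ therefore creates a mismatch, giving $\ell=2(n+4)k^2$ mismatched pairs, each counted twice. This contradicts the assumption.

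The main obstacle is only this arithmetic check at the $V_0$ versus $V_1$ boundary. The neighbourhood counting in the third step is the robust fallback, because it uses the fact that $V_2$ already maps onto $W_2$. Once the claim is settled, equality for $V_1$ follows by cardinality.
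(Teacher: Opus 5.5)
Your proposal is correct and follows the paper's own argument: compare layer degrees and apply \cref{claim:lower_bound_distortion_sum} with $p=1$. Your use of cardinality to reduce the cases is a minor streamlining. The check you hedge on at the $V_0$/$W_1$ boundary holds for all $n,k\geq 1$, since $2(n+4)k^2+4k-nk\geq (n+4)k^2$ is equivalent to $(n+4)k^2+4k\geq nk$, so your neighbourhood fallback, though also valid, is not needed.
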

    \begin{claimproof}
        We start by observing that
        \[
            \deg_G(v) = \begin{cases}
                nk, & \text{if $v \in V_0$}, \\
                3k + 2(n+4)k^2, & \text{if $v \in V_1$}, \\
                2n(n+4)k^2 -1 + k, & \text{if $v \in V_2$},
            \end{cases}
        \]
        and
        \[
            \deg_H(w) = \begin{cases}
                nk, & \text{if $w \in W_0$}, \\
                4k + 2(n+4)k^2, & \text{if $w \in W_1$}, \\
                2n(n+4)k^2 - 1 + k, & \text{if $w \in W_2$}.
            \end{cases}
        \]
        Hence, if $n, k \geq 2$ and there exist $v \in V_i$ and $w \in W_j$ such that $i \neq j$ and
        $\pi(v) = w$, then by \cref{claim:lower_bound_distortion_sum} we have $\dis_{(1)}(\pi) 
        \geq (n+4)k^2$.
    \end{claimproof}
    Next, we show that the layering structure guaranteed by \cref{subclaim:layer_structure} forces each clique in $V_1$ to be mapped to some clique in $W_1$.
    \begin{claim}
        \label{subclaim:layer_one_bijection}
        For each $v \in V(C_n)$, there is a $w \in V(F)$ such that $\pi(V_1(v)) = W_1(w)$.
    \end{claim}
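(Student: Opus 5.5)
We would show that a clique $V_1(v)$ cannot be split across several cliques of $W_1$, using the huge clique in $V_2$ and $W_2$ as an anchor. By \cref{subclaim:layer_structure}, $\pi(V_1)=W_1$ and $\pi(V_2)=W_2$. Since $G$ and $H$ have diameter two, \cref{observation:diameter_two} lets us work purely with edge mismatches. The total mismatch budget is $\dis_{(1)}(\pi)<(n+4)k^2$.

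The key observation is that every $x\in V_1(v)$ is adjacent to exactly the block $V_2(v)$, of size $2(n+4)k^2$, inside $V_2$. The same holds in $H$: $\pi(x)\in W_1(w)$ is adjacent exactly to $W_2(w)$ inside $W_2$. Fix $x\in V_1(v)$ with $\pi(x)\in W_1(w)$, and count the pairs $\{x,y\}$ with $y\in V_2$ that are mismatched. These are the vertices of $V_2(v)$ whose image lies outside $W_2(w)$, together with the vertices of $V_2\setminus V_2(v)$ whose image lies in $W_2(w)$. Because $\pi(V_2)=W_2$ and all blocks have the same size $B\ceq 2(n+4)k^2$, this count is at least $B-|\pi(V_2(v))\cap W_2(w)|$, counted once on each side.

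Now suppose two vertices $x,x'\in V_1(v)$ are mapped into different cliques $W_1(w)$ and $W_1(w')$ with $w\neq w'$. The blocks $W_2(w)$ and $W_2(w')$ are disjoint, so $|\pi(V_2(v))\cap W_2(w)|+|\pi(V_2(v))\cap W_2(w')|\le B$. Summing the counts for $x$ and $x'$ therefore gives at least $2B-B=B=2(n+4)k^2$ mismatched pairs. Each mismatch appears twice in the entrywise sum over ordered pairs. Either way the total exceeds $(n+4)k^2$, which contradicts the budget. Hence each $V_1(v)$ is mapped into a single $W_1(w)$, and since all cliques have size $k$ and $\pi(V_1)=W_1$, this gives $\pi(V_1(v))=W_1(w)$.

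The main obstacle is the bookkeeping in this counting step. We must make sure that distinct $x$ and $x'$ give genuinely distinct mismatched pairs, which they do because the pairs are $\{x,y\}$ and $\{x',y\}$. We also have to handle the ordered-versus-unordered factor of two in $\dis_{(1)}$ consistently with the degree-based argument in \cref{subclaim:layer_structure}. Since $B$ is twice the budget, there is enough slack for either convention.
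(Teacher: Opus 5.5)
Your proposal is correct and follows essentially the same argument as the paper. Both observe that $x,x'\in V_1(v)$ sent into distinct cliques $W_1(w),W_1(w')$ must split the anchor block $V_2(v)$ between the disjoint blocks $W_2(w)$ and $W_2(w')$, which forces at least $(n+4)k^2$ lost adjacencies. The paper uses pigeonhole to pick the worse of $x,x'$ and counts only its lost neighbours, whereas you sum over both vertices; this reaches the same contradiction with more slack.
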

    \begin{claimproof}
        Suppose for contradiction that there is $v \in V(C_n)$ such that there are distinct $w, w' \in V(F)$ and distinct $x, x' \in V_1(v)$ with $\pi(x) \in W_1(w)$ and $\pi(x') \in W_1(w')$.
        Let $M_x \ceq \{ z \in V_2(v) \mid \pi(z) \in W_2(w) \}$, and similarly, $M_{x'} \ceq \{z \in V_2(v) \mid \pi(z) \in W_2(w')\}$.
        Since $|M_x| + |M_{x'}| \leq 2(n+4)k^2$ it follows that at least one of the two sets contains at most $(n+4)k^2$ vertices.
        \WLOGcap{} we may assume this to be set $M_x$.
        Then $x$ has $2(n+4)k^2$ neighbours in $V_2(v)$ in $G$ but at most $(n+4)k^2$ of these vertices remain neighbours after mapping under $\pi$, all others must be at distance two in $H$ since $\diam(H) = 2$.
        Hence, $\dis_{(1)}(\pi) \geq 2(n+4)k^2 - (n+4)k^2 = (n+4)k^2$ in contradiction to $\dis_{(1)}(\pi) < (n+4)k^2$.
    \end{claimproof}
    Let $\sigma: V(C_n) \to V(F)$ be a bijection induced by \cref{subclaim:layer_one_bijection}.
    Then, for $C_n$ and $F$, we have $\dis_{(1)}(\sigma) < n+4$ as otherwise $\dis_{(1)}(\pi) \geq (n+4)k^2$ because each edge mismatch of $\sigma$ is scaled by $k^2$ when replacing vertices by $k$-cliques.

    We now argue that $\dis_{(1)}(\sigma) \geq n+4$ which then yields a contradiction.
    Since $F$ does not contain a Hamiltonian cycle,
    $\sigma$ does not align $C_n$ perfectly with $F$. Specifically, there must be an edge $vv' \in E(C_n)$ with $\sigma(v)\sigma(v') \notin E(F)$, which implies that $v$ and $v'$ \enquote{lose} one neighbour and \enquote{gain} two new neighbours under $\sigma$, so 
    $\dis_1(\sigma, v), \dis_1(\sigma, v') \geq 3$ borrowing the notation introduced in \cref{equation:dis_1_pi_v}.
    For the remaining vertices $v'' \in V(C_n) \setminus \{v, v'\}$, since $\deg_{C_n}(v'') = 2$ but $\deg_F(w) = 3$ for all $w \in V(F)$ it follows that $\dis_1(\sigma, v'') \geq 1$.
    Note that $\dis_{(1)}(\sigma) = \sum_{v \in V(C_n)} \dis_1(\sigma, v)$ as both sides sum over all ordered pairs of vertices.
    Thus, it follows that
    $\dis_{(1)}(\sigma) \geq (n - 2) \cdot 1 + 2 \cdot 3 = n+4$.
    \bigskip
    
    The above shows that $\Ddist{(p)}(G, H) \leq (nk^2)^{1/p}$ if $F$ is a YES-instance and, conversely, $\Ddist{(p)}(G, H) \geq ((n+4)k^2)^{1/p}$ if $F$ is a NO-instance.
    To analyse the additive gap,
    let 
    \[
        D(n, k) \ceq ((n+4)k^2)^{1/p} - (nk^2)^{1/p}
    \]
    denote the minimum guaranteed difference of $\Ddist{(p)}$-values between YES- and NO-instances (parametrised by $n$ and $k$),
    and similarly, write 
    $N(n, k) \ceq 1 + nk + 2n(n+4)k^2$
    for the total number of vertices in the graphs constructed in the reduction.
    We prove the following relation between these two quantities which yields the inapproximability guarantee of the theorem.
    \begin{claim}
        \label{subclaim:beta}
        For every $0 < \beta < \frac{1}{p}$, there are $\alpha \in \N$ and $n_0 \in \N$ such that
        \begin{equation}
            \label{equation:to_show_ineq}
            D(n, n^\alpha) > N(n, n^\alpha)^{\beta}
        \end{equation}
        for all $n \geq n_0$.
    \end{claim}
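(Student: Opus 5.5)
We need to show that for $0<\beta<1/p$ there exist $\alpha$ and $n_0$ with $D(n,n^\alpha) > N(n,n^\alpha)^\beta$ for all $n\ge n_0$. The plan is to estimate both sides asymptotically as powers of $n$ and then choose $\alpha$ large enough.

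\emph{Lower bound on $D$.} Factor out $(nk^2)^{1/p}$:
\[
D(n,k) = (nk^2)^{1/p}\Big(\big(1+\tfrac{4}{n}\big)^{1/p}-1\Big).
\]
For $0<t\le 1$ the function $t\mapsto(1+t)^{1/p}-1$ is concave and vanishes at $0$, so it lies above its chord: $(1+t)^{1/p}-1 \ge (2^{1/p}-1)\,t$. Take $t=4/n$, valid for $n\ge 4$. This gives
\[
D(n,k)\ge c_p\,\frac{(nk^2)^{1/p}}{n},\qquad c_p \ceq 4(2^{1/p}-1)>0 .
\]
With $k=n^\alpha$ we get $D(n,n^\alpha)\ge c_p\, n^{(1+2\alpha)/p-1}$.

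\emph{Upper bound on $N$.} We have $N(n,k)=1+nk+2n(n+4)k^2 \le 20\,n^2k^2$ for $n,k\ge1$. Hence $N(n,n^\alpha)^\beta \le 20^\beta\, n^{\beta(2+2\alpha)}$.

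\emph{Comparing exponents.} It suffices that
\[
\tfrac{1+2\alpha}{p}-1 > \beta(2+2\alpha),
\quad\text{i.e.}\quad
2\alpha\big(\tfrac1p-\beta\big) > 1+2\beta-\tfrac1p .
\]
Since $\tfrac1p-\beta>0$, this holds for every sufficiently large integer $\alpha$; fix such an $\alpha$. The exponent of $n$ on the left is then strictly larger than on the right. Therefore the constants $c_p$ and $20^\beta$ are absorbed for $n\ge n_0$, with $n_0\ge 4$ chosen suitably. This gives \cref{equation:to_show_ineq}.

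\emph{Main obstacle.} The only delicate point is the lower bound on $D$. The gap decays like $1/n$ relative to $(nk^2)^{1/p}$, so it costs a full factor of $n$. Choosing $k=n^\alpha$ with $\alpha$ large is exactly what makes the $k^{2/p}$ growth dominate both this $1/n$ loss and the $n^2$ factor in $N$. Everything else is routine.

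Afterwards, to conclude the theorem: an algorithm with additive error $\O(N^{1/p-\eps})$ would be applied with $\beta\in(1/p-\eps,1/p)$. By \cref{equation:to_show_ineq}, its error is eventually below the gap $D$, so it would decide \hamcycle{} on $3$-regular graphs. Here $k=n^\alpha$ keeps the reduction polynomial.
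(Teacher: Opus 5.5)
Your proof is correct and follows the paper's route: lower-bound $D(n,n^\alpha)$ by a constant times $n^{(2\alpha+1)/p-1}$, bound $N(n,n^\alpha)$ by a constant times $n^{2+2\alpha}$, and choose $\alpha$ so that $\frac{2\alpha+1}{p}-1>(2\alpha+2)\beta$. The only difference is cosmetic: you bound $(n+4)^{1/p}-n^{1/p}$ via the chord inequality for the concave map $t\mapsto(1+t)^{1/p}-1$, whereas the paper uses the mean value theorem, and both give the same exponent.
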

    \begin{claimproof}
        We first derive a lower bound on $D(n, k)$. We have
            $D(n, k) = k^{2/p}\left( (n+4)^{1/p} - n^{1/p} \right)$.
        Applying the mean value theorem to the function $x \mapsto x^{1/p}$, there exists a $\zeta_n \in (n,n+4)$ such that 
        \[
            (n+4)^{1/p}-n^{1/p}
            =
            \mfrac{4}{p}\zeta_n^{1/p-1}.
        \]
        Since $\frac{1}{p}-1 \le 0$, the map $x \mapsto x^{1/p-1}$ is non-increasing on $(0,\infty)$. 
        Because $\zeta_n < n+4$, we obtain
        $\zeta_n^{1/p-1} \ge (n+4)^{1/p-1}$.
        Thus, we have
        \[
            D(n, k) \geq \mfrac{4}{p}k^{2/p}(n+4)^{1/p-1}.
        \]
        We choose $k \ceq n^\alpha$ and define $f: \N \to \Rnn,\, n \mapsto D(n, n^\alpha)$.
        For $n \geq 4$, it holds that $(n+4)^{1/p-1} \geq (2n)^{1/p-1}$ and thus
        \[
            f(n) \geq \mfrac{4}{p}2^{1/p-1}n^{\frac{2\alpha + 1}{p} - 1} = \Omega\left( n^{\frac{2\alpha +1}{p} -1 } \right).
        \]
        
        Next, observe that $g(n) \ceq N(n, n^\alpha)$ is dominated by the term $2n(n+4)n^{2\alpha}$, so there is a constant $C > 0$ such that $g(n) \leq Cn^{2+2\alpha}$, and thus $g(n) = \O(n^{2+2\alpha})$.
        To prove \cref{equation:to_show_ineq}, it suffices to prove that
        \[
            n^{\frac{2\alpha + 1}{p} - 1} > n^{(2\alpha+2)\beta}
        \]
        for all $n \geq n_0$
        where the multiplicative constants of the bounds on the functions $f$ and $g$ are omitted.
        It remains to show
        \begin{equation}
            \label{equation:sufficient_inequ}
            \mfrac{2\alpha + 1}{p} - 1 > (2\alpha + 2)\beta.
        \end{equation}
        Rearranging the above inequality for $\beta$, we obtain
        $\beta < \frac{2\alpha + 1 - p}{2p(1+\alpha)}$.
        Observe that the function $h: \Rnn \to \R,\, \alpha \mapsto \frac{2\alpha + 1 - p}{2p(1+\alpha)}$ is strictly increasing and $\lim_{\alpha \to \infty}h(\alpha) = \frac{1}{p}$.
        Hence, for any $0 < \beta < \frac{1}{p}$, there is an $\alpha \in \N$ with $\beta < h(\alpha)$.
        This proves \cref{equation:sufficient_inequ} and thus also \cref{equation:to_show_ineq} for a sufficiently large $n_0$.
    \end{claimproof}
    To obtain the inapproximability guarantee of the theorem, for each fixed $0<\eps< \frac{1}{p}$, let $\beta \ceq \frac{1}{p}-\frac{\eps}{2}$ and choose $k \ceq n^\alpha$ (for the $\alpha \in \N$ guaranteed by \cref{subclaim:beta}) in the construction of the graphs $G$ and $H$.
    Then, our reduction yields an additive gap of at least $N^{1/p-\eps/2} = \omega(N^{1/p-\eps})$ between YES- and NO-instances, where $N \ceq N(n, n^\alpha)$ is the order of the two input graphs for $\GSim{\Ddist{(p)}}$. Note that for $\eps \geq \frac{1}{p}$ the statement holds trivially. 
\end{proof}

\begin{proof}[Proof of
    \cref{theorem:global_distortion_into_path_hardness}]
    Membership in \NP{} is immediate since any bijection that certifies the inequality $\Ddist{(1)}(G, P) \leq k$ for an instance $(G, k)$ can be used as a certificate and only requires polynomially many bits.
    To show \NP-hardness, we give a reduction from \hampath{} \cite[Problem GT39]{garey1979computers}, which remains \NP-hard for graphs of diameter two.
    To see this, we give a simple reduction from arbitrary graphs to the case of graphs of diameter two.
    Given a graph $G$ on $n$ vertices, let $\widetilde{G}$ be the graph obtained from $G$ by adding a vertex $x$ that is connected to all vertices of $G$ and attaching one fresh leaf $y$ to $x$.
    The resulting graph has $n+2$ vertices, can be computed in polynomial time, and any Hamiltonian path in $G$ can be extended to a Hamiltonian path in $\widetilde{G}$ by appending first $x$ and then $y$. Conversely, any Hamiltonian path in $\widetilde{G}$ must either end or start with $y$ since it is a leaf. Since $x$ is the unique neighbour of $y$ it must be the neighbour in the Hamiltonian path, so in particular, the subpath excluding $x$ and $y$ yields a Hamiltonian path in $G$.

    Now let $G$ be an arbitrary graph of diameter two, let $n \ceq |G|$, $m \ceq |E(G)|$ and denote by $P$ the path on $n$ vertices with vertex set $V(P) = [n]$.
    For a bijection $\pi \in \Bij(V(G), V(P))$, we define $\ell_\pi(v, v') \ceq \Gdist_P(\pi(v), \pi(v'))$ for all $v, v' \in V(G)$.
    Then $\dis(\pi, v, v') = |\Gdist_G(v, v') - \ell_\pi(v, v')|$.
    Note that for distinct $v, v' \in V(G)$, it holds that $\dis(\pi, v, v') = |1 - \ell_\pi(v, v')| = \ell_\pi(v, v') - 1$ if $vv' \in E(G)$ and $\dis(\pi, v, v') = |2 - \ell_\pi(v, v')|$ otherwise.
    Rewriting the identity for non-edges using indicator variables for the values $\ell_\pi(v, v')$, we have
    \begin{equation}
        \label{equation:non_edge_global_path}
        \dis(\pi, v, v') = \ell_\pi(v, v') - 2 + 2 \cdot 
        \indicator[\ell_\pi(v, v') = 1].
    \end{equation}
    Finally, for a given $\pi$, define the set
    \[
        M(\pi) \ceq \{ i \in [n-1] \mid \pi^{-1}(i)\pi^{-1}(i+1) \notin E(G) \}
    \]
    of indices of the edges in $P$ that are not covered by an edge from $G$ under $\pi$.
    In particular, $|M(\pi)|$ is the number of non-edges in $G$ that are mapped onto edges of $P$.
    Thus, it holds that
    \begin{equation}
        \label{equation:size_of_M_global_path}
        2 \cdot |M(\pi)| = \sum_{\substack{v \neq v' \in V(G)\\vv' \notin E(G)}} 
        \indicator[\ell_\pi(v, v') = 1]
    \end{equation}
    where the factor of two appears since each (unordered) non-edge is counted twice on the right-hand side.
    Summing over all (ordered) pairs of vertices and grouping them according to their (non-)adjacency we have
    \begin{align*}
        \dis_{(1)}(\pi) 
        &= \sum_{v, v' \in V(G)} \dis(\pi, v, v') \\
        &= \sum_{\substack{v \neq v' \in V(G)\\vv' \in E(G)}} (\ell_\pi(v, v') - 1)
        + \sum_{\substack{v \neq v' \in V(G)\\vv' \notin E(G)}} \big( \ell_\pi(v, v') - 2 + 2 \cdot
        \indicator[\ell_\pi(v, v') = 1]
        \big) 
        \\
        &= \left(\sum_{v \neq v' \in V(G)} \ell_\pi(v, v') \right)
        - 2m - 4\cdot\left( \binom{n}{2} - m \right) + 4 \cdot |M(\pi)| 
        \\ 
        &= 2 \cdot \left(\sum_{1 \leq i < j \leq n} (j-i) \right)
        + 2m - 4 \cdot \binom{n}{2} + 4 \cdot |M(\pi)| \\ 
        &= 
        \underbrace{2 \cdot \left(\frac{n(n^2 - 1)}{6} + m - n(n-1) \right)}_{=: f(G)} 
        +\ 4 \cdot |M(\pi)|.
    \end{align*}
    Note that $f(G)$ depends on $G$ but is independent of the bijection $\pi$.
    In particular, if $G$ admits a Hamiltonian path, then $\pi$ can be chosen such that $|M(\pi)| = 0$, so $\dis_{(1)}(\pi) = f(G)$. Otherwise, $|M(\pi)| \geq 1$ and thus $\dis_{(1)}(\pi) \geq f(G) + 4$ for all $\pi$.
\end{proof}

\begin{proof}[Proof of \cref{theorem:additive_approx_global_distortion}]
    Let $\eps>0$ be fixed and let $G_1,G_2$ be connected graphs on $n$ vertices.
    Choose $\eps' \ceq \frac{1}{8}\eps$ and $r \ceq \eps' n$.
    For $i\in\{1,2\}$, let $B_i\subseteq V(G_i)$ be an inclusion-wise maximal set of vertices with pairwise distance greater than $r$. Note that such sets can be computed greedily in polynomial time.
    By maximality, every vertex has distance at most $r$ to some vertex of $B_i$.
    The balls of radius $\frac{r}{2}$ around the vertices of $B_i$ are pairwise disjoint.
    By connectivity, each ball contains at least $\frac{r}{2}$ vertices, unless $B_i$ is a single vertex.
    Hence $|B_i| \leq \max\big\{1, \frac{2n}{r} \big\} \leq \left\lceil \frac{2}{\eps'} \right\rceil$.

    Assign every vertex $v \in V(G_i)$ to a nearest vertex $b_i(v)\in B_i$ (we break ties arbitrarily), 
    and define the simplified pseudometric $d'_{G_i}(v,v') \ceq \Gdist_{G_i}(b_i(v), b_i(v'))$.
    Intuitively, this identifies all vertices assigned to the same representative.
    For $i \in \{1, 2\}$ and $v, v' \in V(G_i)$, the triangle inequality gives
    \[
        \big|\Gdist_{G_i}(v, v') - d'_{G_i}(v, v') \big| \leq \Gdist_{G_i}(v,b_i(v)) + \Gdist_{G_i}(v',b_i(v')) \leq 2\eps' n,
    \]
    and thus the distortion of every vertex pair under any bijection can be approximated with additive error at most $4\eps' n$.
    Summing over all $n^2$ vertex pairs gives an additive error of at most $4\eps' n^3$.
    Let $\pi^\ast$ denote an optimal bijection between $G_1$ and $G_2$ and $\pi'$ an optimal bijection between the simplified spaces (with metrics $d'_{G_1}$ and $d'_{G_2}$). Then
    \begin{align*}
        &\ \sum_{v,v'\in V(G_1)}\big|\Gdist_{G_1}(v, v') - \Gdist_{G_2}(\pi'(v), \pi'(v')) \big|\\
        \leq& \ \sum_{v,v'\in V(G_1)}\big|d'_{G_1}(v, v') - d'_{G_2}(\pi'(v), \pi'(v')) \big|+4\eps' n^3\\
        \leq& \ \sum_{v,v'\in V(G_1)}\big|d'_{G_1}(v, v') - d'_{G_2}(\pi^\ast(v), \pi^\ast(v')) \big|+4\eps' n^3\\
        \leq& \ \sum_{v,v'\in V(G_1)}\big|\Gdist_{G_1}(v, v') - \Gdist_{G_2}(\pi^\ast(v), \pi^\ast(v')) \big|+8\eps' n^3\\
        =& \ \Ddist{(1)}(G_1,G_2)+8\eps' n^3.
    \end{align*}
    By choice of $\eps'$, this yields an additive error guarantee of $\eps n^3$.
    
    It therefore suffices to compute an optimal bijection between the simplified pseudometric spaces $(V(G_1), d'_{G_1})$ and $(V(G_2), d'_{G_2})$.
    This reduces to finding a nonnegative integer matrix $M$ specifying how many vertices from each equivalence class of $d'_{G_1}$ are mapped to each equivalence class of $d'_{G_2}$, subject to the prescribed class sizes, so as to minimise the total distortion.
    Since $M$ has only $\O\big(\frac{1}{\eps^2}\big)$ entries whose values lie in $\{0, \ldots, n\}$, an optimal matrix can be found by brute-force enumeration in time $n^{\O(1/\eps^2)}$.

    The same approach can be used to obtain additive $\eps n$-approximations for $\GSim{\Ddist{(\infty)}}$ and $\GSim{\Mdist{\GH}}$ that run in time $2^{\O(1/\eps^2)} \cdot \poly(n)$.
    To compute an optimal bijection between the simplified pseudometric spaces, we enumerate all $2^{\O(1/\eps^2)}$ possible supports for the alignment between the classes. 
    Regarding $\Ddist{(\infty)}$, for any chosen support, we check whether it is realisable as a bijection by solving a bipartite flow problem which accounts for the numbers of vertices in the respective classes. 
    For $\Mdist{\GH}$, every support that covers all classes of both sides is realisable, so the feasibility check can be omitted. 
\end{proof}

\section{\texorpdfstring{Proofs Omitted from \cref{section:gw}}{Proofs Omitted from Section~\ref{section:gw}}}
\label{appendix:gw_hardness}
Before giving the proof of \cref{theorem:glob_gw_dist}, we note that
Mémoli \cite{memoli_gw_original} also introduced a variant of the Gromov--Wasserstein distance for 
$p=\infty$. 

\subsection{\texorpdfstring{$\infty$-Gromov--Wasserstein}{∞-Gromov--Wasserstein}}
\label{appendix:gw_hardness_infty}

For simplicity, we restrict the discussion to finite weighted metric spaces $\X = (X, d_{\X},
\mu_\X)$ and $\Y = (Y, d_{\Y}, \mu_\Y)$ with $\mu_\X(X) = \mu_\Y(Y)$. 
The \emph{support} of a coupling $\lambda \in \mathcal C(\mu_\X, \mu_\Y)$
is $\supp(\lambda) \ceq \big\{(x,y) \;\big|\; \lambda\big(\{(x,y)\}\big) > 0 \big\}$ and
\begin{equation}\label{eq:GW_infty}
    \Mdist{(\infty)}^{\GW}(\X, \Y) \ceq
    \mfrac{1}{2}\cdot \inf_{\lambda \in \mathcal C(\mu_\X, \mu_\Y)} 
    \underbrace{\max_{(x,y), (x',y') \in \supp(\lambda)} \abs*{d_\X(x,x')-d_\Y(y,y')}}_{\eqqcolon\, \dis_{(\infty)}
    (\lambda)}.
\end{equation}
Since $\dis_{(\infty)}(\lambda)$ depends only on $\supp(\lambda)$ and there are only finitely many
supports, the infimum in \cref{eq:GW_infty} is attained. 

Analogously to \cref{subsection:gw_derivation}, one can derive a graph metric $\Ddist{(\infty)}^{\GW}$ from 
$\Ddist{(\infty)}$ by replacing bijections with doubly stochastic matrices $Q \in \DS(V(G), V(H))$ and defining
$\dis_{(\infty)}(Q) \ceq \max \big\{\abs{D_{G}(v,v')-D_{H}(w,w')} \;\big|\; (v,w), (v',w') \in \supp(Q)\big\}$.
However, it is easy to see that $\Ddist{(\infty)}$ and $\Ddist{(\infty)}^{\GW}$ are equal: By Birkhoff's Theorem
\cite{birkhoff_theorem}
every doubly stochastic matrix is a convex combination of permutation matrices whose supports are contained
in $\supp(Q)$. 
Moreover, when considering graphs with uniform probability measures, the metrics 
$\Ddist{(\infty)}^{\GW}$ and $\Mdist{(\infty)}^{\GW}$ coincide up to a factor of $2$.

\begin{proposition}
    \label{proposition:gw_max_distortion_relation}
    For connected graphs $G, H$ of the same order, let $\X \ceq (V(G), \Gdist_G, \mu_G)$ and $\Y \ceq (V(H), \Gdist_H, \mu_H)$ be metric spaces where $\mu_G$ and $\mu_H$ are uniform probability measures on $V(G)$ and $V(H)$, respectively. 
    Then
    \begin{align*}
        2 \cdot \Mdist{(\infty)}^{\GW} (\X, \Y) 
        = \Ddist{(\infty)}^{\GW} ( G, H )
        = \Ddist{(\infty)} ( G, H  ).
    \end{align*}
\end{proposition}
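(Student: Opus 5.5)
I will establish the two equalities separately, starting with $\Ddist{(\infty)}^{\GW}(G,H) = \Ddist{(\infty)}(G,H)$.

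\emph{Direction $\leq$.} Every permutation matrix $P$ is doubly stochastic. Its support is exactly the graph of the corresponding bijection $\pi$, so $\dis_{(\infty)}(P) = \dis_{(\infty)}(\pi)$. Minimising over permutation matrices therefore gives the inequality.

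\emph{Direction $\geq$.} Take an optimal $Q \in \DS(V(G),V(H))$; the minimum exists because only finitely many supports occur. By Birkhoff's Theorem, $Q = \sum_k \lambda_k P_k$ with $\lambda_k > 0$ and each $P_k$ a permutation matrix. All entries are nonnegative, so no cancellation can occur, and hence $\supp(P_k) \subseteq \supp(Q)$ for every $k$. Since $\dis_{(\infty)}$ is a maximum over pairs from the support, it is monotone under inclusion of supports. Thus $\dis_{(\infty)}(P_1) \leq \dis_{(\infty)}(Q)$, which gives the reverse inequality.

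For the equality $2\,\Mdist{(\infty)}^{\GW}(\X,\Y) = \Ddist{(\infty)}^{\GW}(G,H)$, I will use the normalisation described in \cref{subsection:gw_derivation}. With uniform measures $\mu_G,\mu_H$ on $n$ points, a coupling $\lambda \in \mathcal C(\mu_G,\mu_H)$ is exactly a nonnegative $V(G)\times V(H)$ matrix with all row and column sums $\frac1n$. Hence $\lambda \mapsto n\lambda$ is a bijection onto $\DS(V(G),V(H))$, with inverse $Q \mapsto \frac1n Q$. Scaling by the positive factor $n$ preserves supports, so $\dis_{(\infty)}(\lambda) = \dis_{(\infty)}(n\lambda)$, because both quantities are the same maximum over the same set of support pairs. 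Taking the infimum over $\lambda$ and multiplying \cref{eq:GW_infty} by $2$ then yields the claimed equality.

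No step is really an obstacle. The one point needing care is the support-inclusion argument in Birkhoff's decomposition, which relies on nonnegativity of the convex combination. I also need to check that the $\max$ defining $\dis_{(\infty)}$ ranges over all ordered pairs of support elements, including a pair with itself, which contributes $0$. This holds identically in both definitions, so nothing subtle happens there.
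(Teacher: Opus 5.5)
Your proof is correct. Its first half, Birkhoff's decomposition plus support inclusion giving $\Ddist{(\infty)}^{\GW} = \Ddist{(\infty)}$, is exactly the argument the paper gives just before the proposition.

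You differ in how you connect to $\Mdist{(\infty)}^{\GW}$. You pass through $\Ddist{(\infty)}^{\GW}$ and observe that $\lambda \mapsto n\lambda$ identifies $\mathcal C(\mu_G,\mu_H)$ with $\DS(V(G),V(H))$ while preserving supports. So $2\,\Mdist{(\infty)}^{\GW} = \Ddist{(\infty)}^{\GW}$ becomes a pure rescaling, and all the combinatorial content sits in Birkhoff's theorem.

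The paper instead compares couplings with bijections directly. For the easy direction it uses the coupling $\lambda_\pi$ with mass $\frac{1}{n}$ on the graph of $\pi$. For the other direction it checks, from the marginal constraints, that the set $N(A)$ of support partners of any $A \subseteq V$ satisfies $|N(A)| \geq |A|$. Hall's marriage theorem then yields a bijection whose graph lies inside $\supp(\lambda)$.

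The two arguments are essentially equivalent, since this Hall step is the core lemma in the standard proof of Birkhoff's theorem. Your version is shorter and uses a single tool. The paper's version stays in the measure-theoretic language of couplings and does not need the matrix normalisation.
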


\begin{proof}
    Recall that $\Ddist{(\infty)}^{\GW} ( G, H) = \Ddist{(\infty)} ( G, H )$.
    We prove $2 \cdot \Mdist{(\infty)}^{\GW} (\X, \Y) = \Ddist{(\infty)} ( G, H )$.
    Let $V$ and $W$ denote the vertex sets of $G$ and $H$ respectively
    and $n$ the order of the graphs. 
    For $\pi \in \Bij(V,W)$ let $\lambda_\pi\big((v,w)\big)\ceq\frac{1}{n}$ if $\pi(v)=w$
    and $0$ otherwise.
    Note that $\lambda_\pi \in \mathcal C(\mu_G, \mu_H)$ and $\supp(\lambda_\pi) =
    \{ (v, \pi(v)) \mid v \in V\}$, implying 
    \begin{align*}
        2 \cdot \Mdist{(\infty)}^{\GW} (\X, \Y)
        \leq \dis_{(\infty)}(\lambda_\pi)
        = \dis_{(\infty)}(\pi).
    \end{align*} 
    Minimising over all $\pi$ yields 
    $2 \cdot \Mdist{(\infty)}^{\GW} (\X, \Y) \leq \Ddist{(\infty)} ( G, H )$.

    Conversely, consider a $\lambda \in \mathcal C(\mu_G, \mu_H)$, let
    $S \ceq \supp(\lambda) \subseteq V \times W$, and define
    $$N(A) \ceq \{w \in W \mid (v,w) \in S \text{ for some } v \in A \}$$
    for every $A \subseteq V$. Then
    \begin{align*}
        \frac{\abs{A}}{n} = \mu_G(A) = \lambda (A \times W) = \lambda\big(A \times N(A) \big)
        \leq \lambda\big(V \times N(A) \big) = \mu_H\big(N(A)\big) = \frac{\abs{N(A)}}{n}.
    \end{align*}
    Hence, $\abs{N(A)} \geq \abs{A}$ holds for every $A \subseteq V$ and, by Hall's Marriage Theorem
    \cite[Theorem 2.1.2]{diestel2025graph},
    there is a $V$-perfect matching for the bipartite graph $(V \cup W, S)$. As $\abs{V}=\abs{W}$,
    such a matching defines a bijection $\pi_\lambda \colon V \to W$ with
    $\{(v,\pi_\lambda(v)) \mid v \in V\} \subseteq S$. Therefore 
    \begin{align*}
        \Ddist{(\infty)} (G, H)
        \leq \dis_{(\infty)}(\pi_\lambda) \leq \dis_{(\infty)}(\lambda).
    \end{align*}
    Taking the infimum over all $\lambda$ gives
    $\Ddist{(\infty)}( G, H ) \leq 2 \cdot \Mdist{(\infty)}^{\GW} (\X, \Y)$.
\end{proof}

As such, the hardness results of \cref{theorem:max_distortion_hardness} for 
$\GSim{\Ddist{(\infty)}}$
carry over to
$\GSim{\Mdist{(\infty)}^{\GW}}$ by endowing the constructed graphs with uniform measures.
Moreover, $\GSim{\Mdist{(\infty)}^{\GW}}$ is contained in $\NP$ for rational inputs:
A certificate for $\Mdist{(\infty)}^{\GW} (\X, \Y) \leq q$ is a set $S \subseteq X \times Y$
such that $\abs*{d_\X(x,x')-d_\Y(y,y')} \leq 2q$ for all $(x,y),(x',y') \in S$ and some $\lambda \in
\mathcal C(\mu_\X, \mu_\Y)$ satisfies $\supp(\lambda) \subseteq S$. The first condition can be checked 
directly and the second can be recast as a max-flow problem, thus verifying a certificate takes
polynomial time. 

\subsection{\texorpdfstring{Proof of \cref{theorem:glob_gw_dist}}{Proof of Theorem~\ref{theorem:glob_gw_dist}}}
    We start by proving membership in \NP{} for the decision variant of $\Ddist{(p)}^{\GW}$:
    Given graphs $G$ and $H$ on the same number of vertices and a rational value $q$, decide whether $\Ddist{(p)}^{\GW}(G, H) \leq q$.
     
    If $p = \infty$, we use $\Ddist{(\infty)}^{\GW} = \Ddist{(\infty)}$ (\cref{proposition:gw_max_distortion_relation}), so a bijection can be used as a certificate.
    If $p$ is an integer, the problem can be written as a quadratic program with integer coefficients.
    Thus, by \cite{vavasis_quadratic_programming_in_np}, the program has a rational optimal solution $Q^\ast \in \DS(V(G), V(H))$ where each entry of $Q^\ast$ only requires polynomial bitlength. Hence, the encoding of a witness only requires polynomially many bits and can be verified in polynomial time.

    We now turn to the \NP-hardness statement.
    The case $p = \infty$ follows from \cref{proposition:gw_max_distortion_relation} and \cref{theorem:max_distortion_hardness}.
    For the remainder of the proof, fix some $1 \leq p < \infty$.
    The reduction is again from \threepartition{}.
    Given an instance $\mathcal I = (A, a_1, \ldots, a_{3m})$ we construct the graphs $G'$ and $H'$ where $G'$ consists of $3m$ paths denoted $P^1, \ldots, P^{3m}$, where the $i$-th path is on $a_i$ vertices. Similarly, $H'$ consists of $m$ cycles, denoted $C^1, \ldots, C^m$, on $A$ vertices each. We denote the vertex sets of $G'$ and $H'$ by $V_2$ and $W_2$, respectively.
    Then, $G$ and $H$ are obtained by introducing a fresh vertex ($r_G$ and $r_H$, respectively) that is connected to all other vertices by an edge. We let $V_1 \ceq \{r_G\}$ and $W_1 \ceq \{r_H\}$.
    Finally, we add to both graphs a large clique on $M$ vertices (with vertex sets $V_0$ and $W_0$, respectively) and connect a single vertex of each (denoted $z_G$ and $z_H$) to $r_G$ and $r_H$ each by an edge. The concrete value of $M$ will be determined later in the proof and depends on the input instance.
    We refer to $V_0, V_1, V_2$ and $W_0, W_1, W_2$ as the \emph{layers} of $G$ and $H$ and let $V$ and $W$ denote the vertex sets, and $N$ the order of the two resulting graphs. Note that this construction is computable in polynomial time given that $A, M = \poly(m)$ and recall that we can safely assume 
    $A = \poly(m)$ by strong \NP-hardness of \threepartition{}.
    We claim that $\Ddist{(p)}^{\GW}(G, H) \leq (6m)^{1/p}$ if and only if $\mathcal I$ is a YES-instance.
    
    The idea of our construction is as follows: If $\mathcal I$ is a YES-instance, we may choose a bijection that maps triples of paths of $G[V_2]$ into cycles of $H[W_2]$ only incurring distortion of one at the endvertices of the paths. Intuitively, this amount of distortion is unavoidable due to degree differences in the two graphs.
    If $\mathcal I$ is a NO-instance, then we need to reason about all $Q \in \DS(V, W)$. Since each pair of vertices is considered in global distortion, the large clique of size $M$ essentially forces any $Q$ with \enquote{small} $\dis_{(p)}(Q)$ to respect the layer structure of the two graphs which allows us to argue that some path needs to be spread fractionally across multiple cycles since $\mathcal I$ is a NO-instance, \ie{} there is no way to map triples of paths into the cycles without \enquote{running out of space} in at least one cycle. This split incurs an additional amount of distortion which is sufficient to separate YES- from NO-instances based on the value of $\Ddist{(p)}^{\GW}(G, H)$.
    
    In the following, we assume that instances satisfy $A > m$ and $a_i \geq 4$ for all $i \in [3m]$. These assumptions can always be achieved by scaling the entries of a given instance.
    The reduction remains polynomial-time computable.

    \yescase If $\mathcal I$ is a YES-instance, then there is a witnessing assignment $\sigma: [3m] \to [m]$.
    We define a bijection $\pi: V \to W$ by mapping the vertices of $P^i$ into $C^j$ (mapping each edge of the path onto an edge of the cycle) whenever $\sigma(i) = j$. 
    This defines a bijection between $V_2$ and $W_2$ which we extend by mapping $V_1$ to $W_1$ (\ie{} $r_G$ to $r_H$) and mapping $V_0$ to $W_0$ such that $\pi(z_G) = z_H$.
    One easily checks that the only vertices incurring a distortion of one each are the endvertices of the $3m$ paths (in $G[V_2]$). All other vertices incur no distortion whatsoever.
    There are a total of $6m$ such endvertices, each having a distortion of one to exactly one other endvertex (the endvertex that is mapped next to it in some cycle).
    Hence $\dis_{(1)}(\pi) = 6m$, so in particular $\Ddist{(p)}^{\GW}(G, H) \leq \dis_{(p)}(\pi) = (6m)^{1/p}$.
    Formally, we identify $\pi$ with the unique permutation matrix $P \in \DS(V, W)$ satisfying $P(v, w) = 1$ 
    if and only if $\pi(v) = w$.

    \nocase Now assume that $\mathcal I$ is a NO-instance
    and assume towards a contradiction that $\dis_{(p)}(Q) \leq (6m)^{1/p}$ for some $Q \in \DS(V, W)$.
    In the following, we iteratively show restrictions on the structure of $Q$ until we have reached a contradiction.
    We start by showing that $Q$ needs to essentially preserve the layer structure, \ie{} map $V_i$ to $W_i$ for $i \in \{0, 1, 2\}$. Otherwise the distortion sum is already too large.
    Let
    \[
        \eta \ceq \sum_{i \in \{0, 1, 2\}} Q(V_i, W \setminus W_i)
    \]
    be the amount of mass that is mapped between distinct layers. 

    \begin{lemma}[\pc{Robin Westermann}, private communication]
        \label{subclaim:bounding_inter_layer_mapping}
        It holds that $\eta \leq \frac{12m}{N}$.
    \end{lemma}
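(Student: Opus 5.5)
We bound $\eta$ by using the clique $V_0$ (respectively $W_0$) as an anchor. Since $\dis_{(p)}(Q)^p \le 6m$ and every distortion is an integer, while $\abs{\cdot}^p \ge \abs{\cdot}$ on integers, we first note that $\dis_{(1)}(Q) \le \dis_{(p)}(Q)^p \le 6m$. So it suffices to show
\[
    \dis_{(1)}(Q) \;\ge\; \frac{N}{2}\,\eta,
\]
up to a constant that the intended choice of $M$ (so that $M \ge N/2$, say) absorbs. Then $\eta \le 12m/N$ follows immediately.

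The key step is a per-pair estimate. Fix $(v,w)$ with $v \in V_i$, $w \in W_j$ and $i \ne j$. We lower bound
\[
    \sum_{v',w'} \abs{D_G(v,v')-D_H(w,w')}\,Q(v',w')
\]
by the contribution of pairs $(v',w')$ with $v' \in V_0$ or $w' \in W_0$. The distance profile to the clique separates the layers. For $v \in V_0 \setminus\{z_G\}$ the distance to every other clique vertex is $1$. For $v = r_G$ the distance is $2$ to all but $z_G$. For $v \in V_2$ it is $3$. The same holds in $H$. Hence a vertex of layer $i$ and a vertex of layer $j \ne i$ differ in their distance to all but at most one clique vertex by at least $1$.

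The remaining concern is that $Q$ need not map $V_0$ onto $W_0$. So we split $\sum_{v' \in V_0} \sum_{w'} Q(v',w') = M$ into mass landing in $W_0$ and mass $\eta_0$ leaving it. For the mass staying in the clique, the distortion between $(v,w)$ and $(v',w')$ is at least $1$ whenever $v' \ne z_G$ and $w' \ne z_H$, so this part contributes at least $M - \eta_0 - 2$. The total over $(v,w)$ is then at least
\[
    \eta\,(M - \eta - 2).
\]

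To close the argument we must rule out large $\eta$. A crude global bound, $\eta \le$ the total mass $N$, is useless here. Instead we use the fact that only the layer-$0$ mass matters: since $\abs{V_1 \cup V_2} = 1 + mA$ is small compared with $M$, at most $2(1+mA)$ mass crosses layers at all. With $M$ chosen so that $M - 2(1+mA) - 2 \ge N/2$, we get $\dis_{(1)}(Q) \ge \eta N/2$, which gives the claim.

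The main obstacle I expect is the bookkeeping of the ordered-pair double counting and the exceptional vertices $z_G, z_H$, and making the constant exactly $12$. If the factor does not come out right, I would count each crossing pair twice (ordered pairs $(v,v')$ and $(v',v)$), which doubles the lower bound and yields $\eta \le 12m/N$ with the requirement $M \ge N/2 + \O(mA)$.
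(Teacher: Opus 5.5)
Your argument is correct and is essentially the paper's: every cross-layer pair $(v,w)$ has distortion at least one against every $(v',w')\in(V_0\setminus\{z_G\})\times(W_0\setminus\{z_H\})$, and the size of the clique forces $Q(V_0\setminus\{z_G\},W_0\setminus\{z_H\})\ge N/2$, so $6m\ge \eta N/2$. The only difference is that you bound $Q(V_0,W_0)$ through $\eta\le 2(1+mA)$ rather than directly by $M-(N-M)$, which needs a slightly larger $M$ but is still satisfied by the final choice $M=144mA+1$; the constant $12$ already comes out exactly, so your fallback of counting crossing pairs twice is not needed.
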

    \begin{proof}
        Since $V_0$ and $W_0$ contain the vast share of vertices in the two graphs, $Q$ must transport much of the mass between these sets. Formally,
       \begin{equation}
            \label{equ:dist_sum_layer_bound}
            \begin{aligned}
                Q(V_0, W_0) &= Q(V_0, W) - Q(V_0, W \setminus W_0) \\
                &\geq Q(V_0, W) - Q(V, W \setminus W_0) \\
                &= M - (N-M) \\
                &\geq \mfrac{N}{2} + 2
            \end{aligned}
        \end{equation}
        where the last inequality holds if 
        $M \geq \frac{3}{4}N + 1$ (we will later choose $M$ accordingly).
        Hence
        \begin{equation}
            \label{equ:dist_layer_zero_without_connector}
            \begin{aligned}
                Q(V_0 \setminus \{z_G\}, W_0 \setminus \{z_H\}) &= \underbrace{Q(V_0, W_0)}_{\geq \frac{N}{2} + 2} - \underbrace{Q(z_G, W_0)}_{\leq 1} - \underbrace{Q(V_0, z_H)}_{\leq 1} + \underbrace{Q(z_G, z_H)}_{\geq 0} \\
                &\geq \mfrac{N}{2}.
            \end{aligned}
        \end{equation}
        Thus, we have
        \begin{align*}
            \left(\dis_{(p)}(Q)\right)^p 
            &= \sum_{\substack{v, v' \in V\\w, w' \in W}} |\Gdist_G(v, v') - \Gdist_H(w, w')|^p Q(v, w)Q(v',w') \\
            &\geq \sum_{i \in \{0, 1, 2\}} \sum_{\substack{v \in V_i,\, v' \in V_0 \setminus \{z_G\} \\ w \in W \setminus W_i,\, w' \in W_0 \setminus \{z_H\}}} |\Gdist_G(v, v') - \Gdist_H(w, w')|^p Q(v, w)Q(v',w') \\
            &\geq \sum_{i \in \{0, 1, 2\}} \sum_{\substack{v \in V_i,\, v' \in V_0 \setminus \{z_G\}\\ w \in W \setminus W_i,\, w' \in W_0 \setminus\{z_H\}}} Q(v, w)Q(v',w') \\
            &= \sum_{i \in \{0, 1, 2\}} Q(V_i, W \setminus W_i) Q(V_0 \setminus \{z_G\}, W_0 \setminus \{z_H\}) \\
            &\geq \frac{N}{2} \sum_{i \in \{0, 1, 2\}} Q(V_i, W \setminus W_i) \\
            &= \frac{N}{2} \cdot \eta
        \end{align*}
        where the last inequality holds by \cref{equ:dist_layer_zero_without_connector}.
        Moreover, by assumption we have $\big(\dis_{(p)}(Q)\big)^p \leq 6m$, so in particular $\eta \leq \frac{12m}{N}$ which concludes the proof.
    \end{proof}
    Using \cref{subclaim:bounding_inter_layer_mapping}, by choosing $M$ sufficiently large, we can ensure that the total mass that $Q$ maps between distinct layers can be bounded from above by any small constant.
    Hence, in the following analysis, we are able to \enquote{focus} on the intra-layer mapping of the mass. The inter-layer mapping is accounted for with error terms that depend on $\eta$.

    Going forward, our goal is to find an amount of distortion that yields $\dis_{(p)}(Q) > (6m)^{1/p}$.
    Roughly speaking, this distortion is split into two \enquote{disjoint} parts. 
    The first part is the distortion incurred by the endvertices of the paths in $G[V_2]$ and comes from the fact that these vertices have degree two in $G$, but in contrast, all vertices in $W_2$ have degree three.
    Thus, under the simplifying assumption that no inter-layer mapping occurs at all, each endvertex of a path yields a distortion of at least one.
    Note that this first part of distortion is inherent in the construction and was the part we also found in the case of a YES-instance. 
    However, since we now have to consider arbitrary $Q$, tracking this distortion down is more involved.

    For the second part of distortion, we explicitly use the fact that $\mathcal I$ is a NO-instance and argue that an edge along one of the paths in $G[V_2]$ must be split up fractionally across multiple cycles in $H$ by $Q$.
    Our argument yields an amount of distortion bounded away from zero.
    Together with the first part this is sufficient to obtain a contradiction.

    We start with the first part.
    Let $S$ denote the set of endvertices of the paths in $G[V_2]$ and recall that $|S| = 6m$.
    Define the quantity
    \begin{equation}
        \label{equation:first_part_contribution} 
        d_1(Q) \ceq \sum_{s \in S}\sum_{v \in V_2 \setminus N_G[s]} \sum_{w, w' \in W_2} |2 - \Gdist_H(w, w')|^p Q(s, w)Q(v, w').
    \end{equation}
    Intuitively, $d_1(Q)$ captures the amount of distortion incurred due to the degree difference of the vertices of $S$ in $G$ to those in $W_2$ in $H$.
    Each summand corresponds to the distortion of a pair $(s, v)$ where $s$ and $v$ are distinct and non-adjacent in $G$.
    \begin{lemma}
        \label{subclaim:degree_diff_distortion}
        It holds that $d_1(Q) \geq 6m - 3\eta$.
    \end{lemma}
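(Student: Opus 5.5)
We fix $s \in S$ and lower bound the inner double sum by roughly $1$ minus a loss proportional to the inter-layer mass leaving $s$ or $V_2$. Summing over the $6m$ endvertices should then cost at most $3\eta$ in total.

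\textbf{Reduction to a pointwise bound.} Fix $s \in S$ and some $w \in W_2$ with $Q(s,w) > 0$. Every vertex $w \in W_2$ has exactly three neighbours in $W_2$, since each cycle has length $A \geq 4$. It therefore suffices to show that $v$ ranges over $V_2 \setminus N_G[s]$ and $w'$ ranges over $N_H(w) \cap W_2$, since then $\Gdist_H(w,w') = 1$ and $|2-1|^p = 1$. This gives
\[
    \sum_{v \in V_2 \setminus N_G[s]} \sum_{w' \in W_2} |2-\Gdist_H(w,w')|^p\, Q(v,w') \;\geq\; \sum_{w' \in N_H(w)\cap W_2} Q\big(V_2 \setminus N_G[s], w'\big).
\]
Each $w'$ has column mass $1$. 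Of this mass, at most $Q(V\setminus V_2, w')$ comes from other layers, and at most $Q(N_G[s]\cap V_2, w')$ comes from the two vertices of $N_G[s]\cap V_2 = \{s, \text{its path neighbour}\}$. Hence the right-hand side is at least
\[
    3 - Q\big(\{s,s'\}, N_H(w)\cap W_2\big) - Q\big(V\setminus V_2, N_H(w)\cap W_2\big).
\]
The term $Q(\{s,s'\}, \cdot)$ contributes at most the row mass $2$. This bound is crude but gives at least $1$ minus the column inter-layer mass near $w$.

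\textbf{Summation.} Multiply by $Q(s,w)$ and sum over $w \in W_2$ and $s \in S$. The main terms give $\sum_{s}Q(s,W_2)\cdot 1 \geq 6m - Q(S, W\setminus W_2)$. For the error terms I use two facts. First, $\sum_w Q(s,w)\,Q(V\setminus V_2, N_H(w)) \leq 3\,Q(V\setminus V_2, W_2)\cdot\max Q(s,\cdot)$, which I will bound using $Q(s,w) \leq 1$. Second, each $w'$ has only three neighbours $w$. With these, the error from the third term summed over $s \in S$ is at most $\sum_{w'}Q(V\setminus V_2,w')\sum_{w \in N(w')}Q(S,w)$. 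Here $Q(S,w) \leq 1$ because columns sum to $1$, so this error is at most $3\,Q(V\setminus V_2,W_2) \leq 3\eta$.

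\textbf{Main obstacle.} The delicate point is the second term $Q(\{s,s'\},N_H(w))$. Subtracting the full $2$ is too lossy if the mass of $s$ is spread out. I would refine this by not discarding $Q(s',\cdot)$. The pair $(s,s')$ is adjacent in $G$, so it is excluded from $d_1$ anyway. The quantity to exploit is that the column masses of the three $W_2$-neighbours of $w$ total $3$, while $s$ and $s'$ together carry row mass at most $2$. This leaves at least $1 - (\text{inter-layer mass into these columns})$ from $V_2\setminus N_G[s]$, so the exact accounting is ``$3$ columns minus $2$ rows'' rather than a bound that fails. I would double-check this step carefully. If it fails, I would instead use the mass $Q(S, W\setminus W_2)$ together with the row-sum constraint to obtain the combined error $3\eta$.
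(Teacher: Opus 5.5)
There is a genuine gap, and it sits in your first step. A vertex $w\in W_2$ does \emph{not} have three neighbours in $W_2$. It lies on a cycle $C^j$, so it has exactly two neighbours in $W_2$; its third neighbour in $H$ is $r_H\in W_1$. With the correct count, restricting to $w'\in N_H(w)\cap W_2$ gives total column mass only $2$. After subtracting the row mass $2$ of $\{s,s'\}$, your pointwise bound collapses to $-Q(V\setminus V_2,N_H(w)\cap W_2)\le 0$.

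This is not just a weakness of the crude estimate. The open-neighbourhood part of $d_1(Q,s)$ can genuinely be below $1$. Take consecutive cycle vertices $a^-,a,a^+,a^{++}$ and set $Q(s,a)=\tfrac34$, $Q(s,a^+)=\tfrac14$, $Q(s',a^+)=\tfrac34$, $Q(s',a^-)=\tfrac14$, with no inter-layer mass. Then
$\sum_w Q(s,w)\,Q(V_2\setminus N_G[s],N_H(w)\cap W_2)=\tfrac34\cdot\tfrac34+\tfrac14\cdot\tfrac54=\tfrac78$.
So no refinement of your ``main obstacle'' step can succeed while the diagonal terms are discarded.

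The missing idea is to keep the term $w'=w$. There $\Gdist_H(w,w)=0$ while $\Gdist_G(s,v)=2$, so the integrand is $2^p\ge 1$. You should therefore work with the closed neighbourhood $N_H[w]\cap W_2$, which has exactly three vertices. With it, your ``three columns minus two rows'' accounting becomes correct:
$Q(V_2\setminus N_G[s],N_H[w]\cap W_2)\ge 3-2-\beta(w)$, where $\beta(w)=Q(V_0\cup V_1,N_H[w]\cap W_2)$.
The crude bound $Q(\{s,s'\},\cdot)\le 2$ that you worried about is all that is needed.

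For the summation, each $w'\in W_2$ lies in exactly three such closed neighbourhoods. Your second argument (swap the order of summation and use $Q(S,w)\le 1$) then gives
$d_1(Q)\ge Q(S,W_2)-3Q(V_0\cup V_1,W_2)=6m-\alpha-3\beta\ge 6m-3\eta$,
where $\alpha=Q(S,W_0\cup W_1)$, $\beta=Q(V_0\cup V_1,W_2)$, and $\alpha+\beta\le\eta$. Drop your per-$s$ bound via $\max Q(s,\cdot)$: summed over the $6m$ endvertices it costs $18m\beta$, which is too much. With this fix, your argument coincides with the paper's proof.
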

    \begin{proof}
        For each $s \in S$, let $d_1(Q, s)$ denote the sum obtained by fixing the argument of the outer summation of \cref{equation:first_part_contribution} to be $s$. 
        Then, we have

        \begin{align*}
            d_1(Q, s) &= \sum_{v \in V_2 \setminus N_G[s]} \sum_{w, w' \in W_2} |2 - \Gdist_H(w, w')|^p Q(s, w)Q(v, w') \\
            &=\sum_{w \in W_2} Q(s, w) \left( \sum_{v \in V_2 \setminus N_G[s]} \sum_{w' \in W_2} |2 - \Gdist_H(w, w')|^p Q(v, w') \right) \\
            &=\sum_{w \in W_2} Q(s, w) \Biggl(
            \sum_{v \in V_2 \setminus N_G[s]}
            \Bigl(
            \sum_{w' \in W_2 \setminus N_H[w]}
            \underbrace{|2 - \Gdist_H(w, w')|^p}_{=0}Q(v, w') \\
            &\hspace{4.8cm}
            + \sum_{w' \in W_2 \cap N_H[w]}
            \underbrace{|2 - \Gdist_H(w, w')|^p}_{\geq 1}Q(v, w')
            \Bigr)
            \Biggr) \\
            &\geq \sum_{w \in W_2} Q(s, w) \sum_{v \in V_2 \setminus N_G[s]} 
            \left( \sum_{w' \in N_H[w] \cap W_2} Q(v, w')  \right) \\
            &= \sum_{w \in W_2} Q(s, w) \underbrace{Q(V_2 \setminus N_G[s], N_H[w] \cap W_2)}_{\eqqcolon (\ast)}.
        \end{align*}
        Assuming no inter-layer mapping ($\eta = 0$),
        a simple counting argument would already show that $(\ast) \geq 1$:
        We have $|V_2 \setminus N_G[s]| = |V_2| - 2$ but $|W_2 \setminus N_H[w]| = |V_2| - 3$, and hence at least one unit of mass must be transported to $N_H[w] \cap W_2$.
        Since $d_1(Q) = \sum_{s \in S} d_1(Q, s)$, this would immediately yield $d_1(Q) \geq 6m$.
        
        We now drop the assumption of no inter-layer mapping.
        To this end, let $\alpha \ceq Q(S, W_0 \cup W_1)$ and $\beta \ceq Q(V_0 \cup V_1, W_2)$ denote parts of the mass that is mapped between layers.
        Moreover, for $w \in W_2$, define $\beta(w) \ceq Q(V_0 \cup V_1, N_H[w] \cap W_2)$. Note that each $w' \in W_2$ appears in precisely three closed neighbourhoods of the form $N_H[w] \cap W_2$, so $\sum_{w \in W_2} \beta(w) = 3\beta$. 

        We rewrite
        \begin{align*}
            \underbrace{Q(V_2 \setminus N_G[s], N_H[w] \cap W_2)}_{(\ast)} &= 
            Q(V_2, N_H[w] \cap W_2) - \underbrace{Q(N_G[s] \cap V_2, N_H[w] \cap W_2)}_{\leq 2} \\
            &\geq \underbrace{Q(V, N_H[w] \cap W_2)}_{=3} - \underbrace{Q(V_0 \cup V_1, N_H[w] \cap W_2)}_{= \beta(w)} - \,2 \\
            &= 1 - \beta(w).
        \end{align*}
        Substituting this into the previous inequalities, we obtain
        \begin{align*}
            d_1(Q) &= \sum_{s \in S} d_1(Q, s) \\
                &\geq \sum_{s \in S}\sum_{w \in W_2}Q(s, w) \underbrace{Q(V_2 \setminus N_G[s], N_H[w] \cap W_2)}_{(\ast)} \\
                &\geq \sum_{s \in S} \sum_{w \in W_2} Q(s, w)(1 - \beta(w)) \\
                &= \sum_{s \in S} \sum_{w \in W_2} Q(s, w) - \sum_{w \in W_2} \beta(w) \underbrace{\sum_{s \in S} Q(s, w)}_{\leq 1} \\
                &\geq Q(S, W_2)  - \sum_{w \in W_2} \beta(w) \\
                &= Q(S, W_2) - 3\beta \\
                &= Q(S, W) - Q(S, W_0 \cup W_1) - 3\beta \\
                &\geq Q(S, W) - \alpha - 3\beta \\
                &= 6m - \alpha - 3\beta.
        \end{align*}
        By definition of $\eta$ it follows that $\alpha + \beta \leq \eta$ and thus $6m - \alpha - 3\beta \geq 6m - 3\eta$.
    \end{proof}

    We now turn to the second part of distortion.
    For each $i \in [3m]$, define $\Pi_i \ceq \{(v, v') \in V(P^i)^2 \mid v = v' \text{ or } vv' \in E(P^i) \}$.
    Define the sum
    \begin{equation}
        \label{equ:d2Q}
        d_2(Q) \ceq \sum_{i \in [3m]} \sum_{(v, v') \in \Pi_i} \sum_{w, w' \in W} |\Gdist_G(v, v') - \Gdist_H(w, w')|^p Q(v, w)Q(v', w').
    \end{equation}
    Intuitively, this captures the amount of distortion of vertices in $V_2$, either to themselves or to their neighbours along their respective paths.
    Hence, the distortion captured by $d_2(Q)$ is entirely disjoint from that of $d_1(Q)$.

    \begin{lemma}
        \label{subclaim:d2Q}
        It holds that $d_2(Q) \geq \frac{1}{4} - 2A\eta$.
    \end{lemma}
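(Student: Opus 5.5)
The plan is to lower-bound $d_2(Q)$ using only the pairs $w,w'$ lying in \emph{different} cycles of $H[W_2]$, and then to use that $\mathcal I$ is a NO-instance to show that such pairs carry mass at least $\frac14$, up to an error from inter-layer mass. For $v\in V_2$ and $j\in[m]$ write $q_{v,j}\ceq Q(v,V(C^j))$ and $s_v\ceq\sum_j q_{v,j}=Q(v,W_2)$. Every vertex of $V_2$ lies in exactly one layer, so $\sum_{v\in V_2}(1-s_v)=Q(V_2,W\setminus W_2)\le\eta$.

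If $w\in C^j$ and $w'\in C^{j'}$ with $j\neq j'$, then $\Gdist_H(w,w')=2$, since the only connection runs through $r_H$. Hence such a pair incurs distortion at least $1$ against $\Gdist_G(v,v')\in\{0,1\}$, and its $p$-th power is still at least $1$. Dropping all other nonnegative terms of \cref{equ:d2Q} gives
\[
d_2(Q)\;\ge\;\sum_{i}\sum_{(v,v')\in\Pi_i}\ \sum_{j\neq j'} q_{v,j}\,q_{v',j'} .
\]
For a diagonal pair $(v,v)$ the inner sum equals $s_v^2-\sum_j q_{v,j}^2\ge s_v^2-s_v\max_j q_{v,j}\ge (1-\max_j q_{v,j})-2(1-s_v)$.

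The argument then splits into three cases, according to the \emph{majority cycle} $j(v)$, meaning a $j$ with $q_{v,j}>\frac12$ (it is unique if it exists).

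\textbf{(a)} Some $v\in V_2$ has no majority cycle. Its diagonal term alone is at least $\frac12-2\eta$.

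\textbf{(b)} Every vertex has a majority cycle, but some path edge $vv'$ has $j(v)\neq j(v')$. The ordered pair $(v,v')\in\Pi_i$ contributes at least $q_{v,j(v)}\,q_{v',j(v')}>\frac14$.

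\textbf{(c)} Otherwise every path $P^i$ has a common majority cycle $\sigma(i)$. Since $\mathcal I$ is a NO-instance, some $j$ satisfies $\sum_{i\in\sigma^{-1}(j)}a_i\ge A+1$, so the set $U\ceq\bigcup_{i\in\sigma^{-1}(j)}V(P^i)$ has $|U|\ge A+1$. Because column sums of $Q$ are one and $|C^j|=A$, we have $\sum_{v\in U}q_{v,j}\le A$, hence $\sum_{v\in U}(1-q_{v,j})\ge 1$. Summing the diagonal bound over $U$, with $\max_j q_{v,j}=q_{v,j}$ for $v\in U$, gives $d_2(Q)\ge 1-2\sum_{v}(1-s_v)\ge 1-2\eta$.

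In every case $d_2(Q)\ge\frac14-2\eta$, which is at least $\frac14-2A\eta$ because $A\ge1$. The generous factor $A$ leaves room for a cruder treatment of the error terms, for example bounding each path's outside mass by $a_i\eta<A\eta$.

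The main obstacle is case (c). One has to keep the inter-layer mass $1-s_v$ from undermining both the majority definition and the counting bound $\sum_{v\in U}q_{v,j}\le A$. Defining majority relative to total mass $1$ rather than to $s_v$, and charging all losses to $\sum_v(1-s_v)\le\eta$ through the diagonal inequality above, handles this. The remaining care is to check that case (b) correctly uses ordered pairs in $\Pi_i$ and that nothing is double counted; no double counting is needed, since a single term or a single overloaded cycle already suffices.
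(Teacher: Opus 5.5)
Your proof is correct, and it actually gives the sharper bound $d_2(Q)\ge\frac14-2\eta$. The skeleton is the same as the paper's: use threshold-$\frac12$ majorities; a path edge whose endpoints disagree contributes $\frac14$; otherwise the induced assignment overloads some cycle because $\mathcal I$ is a NO-instance. The execution, however, differs.

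\begin{itemize}
\item The paper works with a per-edge surrogate $\dis(Q,v,v',j)$ that only distinguishes $C^j$ from the rest of $W_2$. It spreads each vertex's self-split over its incident edges and subtracts $4\eta$ per edge, which is where the factor $A$ comes from.
\item The paper also needs a pigeonhole step, which relies on its standing assumption $A>m$, to find a cycle with $\eps^{i,j}>\frac14$. It then handles the overflow path by path, via $|P^i|-\eps^{i,j(i)}\ge\frac14$.
\item You instead keep the exact cross-cycle mass $\sum_{j\neq j'}q_{v,j}q_{v',j'}$ and use the diagonal pairs $(v,v)\in\Pi_i$ directly, through $s_v(s_v-\mu)-(1-\mu)+2(1-s_v)=(1-s_v)^2+\mu(1-s_v)\ge 0$. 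All leakage is charged globally to $\sum_v(1-s_v)\le\eta$.
\end{itemize}

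This makes the no-majority case a one-liner and removes both the pigeonhole step and any use of $A>m$. It also lets you aggregate the overflow over all paths sent to the overloaded cycle, giving $1-2\eta$ there. The paper's route, by contrast, certifies the bound within a single $d_2(Q,i)$.

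One step deserves a sentence: why an arbitrary $\sigma\colon[3m]\to[m]$, not a priori grouping triples, must overload some cycle. If every sum $\sum_{i\in\sigma^{-1}(j)}a_i$ were at most $A$, then all of them would equal $A$, since the total is $mA$. The bounds $\frac A4<a_i<\frac A2$ would then force $|\sigma^{-1}(j)|=3$ for every $j$, so $\sigma$ would witness a YES-instance.
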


    \begin{proof}
        For each $i \in [3m]$, let $d_2(Q, i)$ denote the sum obtained by fixing the argument of the outer summation of \cref{equ:d2Q} to be $i$.
        Clearly $d_2(Q, i) \leq d_2(Q)$ for each $i \in [3m]$.
    
        We now develop a lower bound for $d_2(Q, i)$ that holds for at least one such $i$.
        To this end, we start by developing the necessary formalism.
        For $v \in V_2$ and $j \in [m]$, define $\eps^j_v \ceq Q(v, V(C^j))$.
        Moreover, for $i \in [3m]$ we let $\eps^{i,j} \ceq \sum_{v \in V(P^i)} \eps^j_v$.
        Intuitively, $\eps^j_v$ captures the amount of mass of $v$ that is mapped into $C^j$ and $\eps^{i,j}$ captures the total mass of the path $P^i$ that is mapped into $C^j$.
        
        Let $v, v'$ be neighbours in some path $P^i$ and let $j \in [m]$ be arbitrary.
        In the following, we develop an approximation of the distortion incurred by the pair $\{v, v'\}$ under $Q$ that is relative to the cycle $C^j$.
        For this, we use the fact that $\eps^j_v$- and $\eps^j_{v'}$-fractions are mapped into $C^j$ and that fractions of at least $(1-\eps^j_v - \eta)$ and $(1 - \eps^j_{v'} - \eta)$ are mapped into $W_2 \setminus V(C^j)$. 
        For simplicity, we disregard all other terms of distortion occurring within $C^j$ and $W \setminus V(C^j)$, respectively.
        We thus define the following lower bound:
        \begin{equation*}
            \label{equ:dis_pair_bound_orig}
                \dis(Q, v, v', j) \ceq \eps^j_v\big(1 - \eps^j_{v'} - \eta\big) + \big(1 - \eps^j_v - \eta\big) \eps^j_{v'} + \eps^j_v\big(1-\eps^j_v - \eta\big) +  \eps^j_{v'}\big(1-\eps^j_{v'} - \eta\big).
        \end{equation*}
        Here, the first two quantities account for the distortion of one between the fractions of $v$ inside $C^j$ and $v'$ in $W_2 \setminus C^j$ (and vice versa), respectively, and the third and fourth quantities account for \emph{half} of the distortion (of two) incurred by each $v$ and $v'$ by being split up fractionally across $C^j$ and $W_2 \setminus C^j$: This is because vertices in distinct cycles are at distance two.
        
        Note that we only consider half of the latter quantities to prevent double counting of these contributions as we will sum up $\dis(Q, \cdot, \cdot, j)$ along an entire path $P^i$ (each interior vertex of a path appears in exactly two edges) to obtain a lower bound on $d_2(Q, i)$.
    
        First, we simplify $\dis(Q, v, v', j)$ by separating the contribution of $\eta$, that is
        \begin{equation}
            \begin{aligned}
                \label{equ:dis_pair_bound}
                \dis(Q, v, v', j) &= \eps^j_v\big(1 - \eps^j_{v'} - \eta\big) + \big(1 - \eps^j_v - \eta\big) \eps^j_{v'} + \eps^j_v\big(1-\eps^j_v - \eta\big) +  \eps^j_{v'}\big(1-\eps^j_{v'} - \eta\big) \\
                 &= 2\eps^j_v + 2\eps^j_{v'} - 2\eps^j_v\eps^j_{v'} - 
                 \big(\eps^j_v\big)^2 - \big(\eps^j_{v'}\big)^2 - \underbrace{\big(2\eps^j_v \eta + 2\eps^j_{v'}\eta\big)}_{\leq 4\eta} \\
                &\geq \big(\eps^j_v + \eps^j_{v'}\big)\big(2 - \eps^j_v - \eps^j_{v'}\big) - 4\eta.
            \end{aligned}
        \end{equation}
        Note that we can make $\eta$ arbitrarily small using \cref{subclaim:bounding_inter_layer_mapping}.
        Letting
        \[
            \dis(Q, i, j) \ceq \sum_{vv' \in E(P^i)} \dis(Q, v, v', j),
        \]
        for $i \in [3m]$ and $j \in [m]$, we have
        \begin{equation}
            \label{equation:d2Q_lb_dis_Qij}
            d_2(Q) \geq d_2(Q, i) \geq \dis(Q, i, j)
        \end{equation}
        for each $i \in [3m]$ and $j \in [m]$. It hence suffices to bound one such $\dis(Q, i, j)$ from below.
        
        \begin{claim}
            \label{subclaim:finding_a_quarter}
            There exist $i \in [3m]$ and $j \in [m]$ such that $\dis(Q, i, j) \geq \mfrac{1}{4} - 2A\eta$.
        \end{claim}
        \begin{claimproof}
            First, suppose there are $v, v' \in V(P^i)$ for some $i \in [3m]$ with $vv' \in E(P^i)$ and there is $j \in [m]$ such that $\eps^j_v \leq \frac{1}{2}$ and $\eps^j_{v'} \geq \frac{1}{2}$ or vice versa. Then \cref{equ:dis_pair_bound} yields that 
            \begin{align}
                \label{equation:bounding_single_edge}    
                \dis(Q, v, v', j) &\geq \big(\eps^j_v + \eps^j_{v'}\big)\big(2 - \eps^j_v - \eps^j_{v'}\big) - 4\eta \geq \mfrac{1}{4} - 4\eta
            \end{align}
            as both factors are bounded from below by $\frac{1}{2}$.
            Moreover, for every edge $vv' \in E(P^i)$, we have $\dis(Q, v, v', j) \geq -4\eta$, which, together with \cref{equation:bounding_single_edge}, implies
            \[
                \dis(Q, i, j) \geq \mfrac{1}{4} - |E(P^i)|\cdot 4\eta \geq \mfrac{1}{4} - 2A\eta
            \]
            since $|E(P^i)| = a_i - 1 <  \frac{A}{2}$.
        
            In the following, assume that either $\eps^j_v, \eps^j_{v'} \leq \frac{1}{2}$ or $\eps^j_v, \eps^j_{v'} > \frac{1}{2}$ for all pairs $v, v'$ of adjacent vertices in the paths $P^i$ and all cycles $C^j$.
            In particular, this implies that all vertices belonging to the same path either all map at most half their mass into some cycle $C^j$ or all vertices map strictly more than half their mass into $C^j$.
            Since $\frac{A}{4} < a_i$, with $A \geq m+1$ it follows that $4a_i \geq A+1 \geq m+2$ and thus $\frac{a_i - \eta}{m} > \frac{1}{4}$ since $\eta < \frac{1}{2}$ (using \cref{subclaim:bounding_inter_layer_mapping}).
            Hence, there is a $j \in [m]$ such that $\eps^{i, j} > \frac{1}{4}$. 
            
            For the first case, we assume that $\eps_v^j \leq \frac{1}{2}$ for all $v \in V(P^i)$.
            Then we have
            \begin{align*}
                \dis(Q, i, j) &= \sum_{vv' \in E(P^i)} \dis(Q, v, v', j) \\
                &\geq \sum_{vv' \in E(P^i)}
                \Big( \big(\eps^j_v + \eps^j_{v'}\big)\underbrace{\big(2 - \eps^j_v - \eps^j_{v'}\big)}_{\geq 1} - \ 4\eta \Big) \tag*{\cref{equ:dis_pair_bound}}\\
                &\geq \sum_{vv' \in E(P^i)}  \big(\eps^j_v + \eps^j_{v'} - 4\eta\big) \\
                &\geq \eps^{i, j} - 2A\eta \tag*{$|E(P^i)| \leq \frac{A}{2}$}\\
                &> \mfrac{1}{4} - 2A\eta.
            \end{align*}
            If the above holds for a single $i \in [3m]$, we are done.
            If not, that is, the second case holds, then there is a $j \in [m]$ such that $\eps_v^j > \frac{1}{2}$ for all $v \in V(P^i)$.
            Since these inequalities are strict, there is a \emph{unique} $j \in [m]$ with this property, denoted $j(i)$ in the following.
            Then
            \begin{align*}
                \dis(Q, i, j(i)) &= \sum_{vv' \in E(P^i)} \dis(Q, v, v', j(i)) \\
                &\geq \sum_{vv' \in E(P^i)} 
                \Big( \underbrace{\big(\eps^{j(i)}_v + \eps^{j(i)}_{v'}\big)}_{\geq 1}\big(2 - \eps^{j(i)}_v - \eps^{j(i)}_{v'}\big) - \ 4\eta \Big) \tag*{\cref{equ:dis_pair_bound}}\\
                &\geq \sum_{vv' \in E(P^i)} 
                \Big( \big(2 - \eps^{j(i)}_v - \eps^{j(i)}_{v'}\big) - 4\eta \Big) \\
                &\geq \sum_{v \in V(P^i)} \big(1-\eps^{j(i)}_v - 4\eta\big)  \\
                &\geq |P^i| - \eps^{i, j(i)} - 2A\eta. \tag*{$|V(P^i)| \leq \frac{A}{2}$}
            \end{align*}
            We claim that the last quantity can be bounded from below by $\frac{1}{4} - 2A\eta$ for some $i \in [3m]$.
            Suppose for contradiction that
            for each $i \in [3m]$ it holds 
            $|P^i| - \eps^{i, j(i)} < \frac{1}{4}$, or equivalently $\eps^{i,j(i)} > |P^i| - \frac{1}{4}$.
            Define the mapping $\sigma: [3m] \to [m],\, i \mapsto j(i)$.
            Then, for each $j \in [m]$ we have
            \[
                A \geq \sum_{i \in \sigma^{-1}(j)} \eps^{i, j(i)} > \left(\sum_{i \in \sigma^{-1}(j)} |P^i|\right) - \mfrac{|\sigma^{-1}(j)|}{4}.
            \]
            Since $\frac{A}{4} < |P^i| < \frac{A}{2}$, it follows that $|\sigma^{-1}(j)| \leq 3$.
            As the domain of $\sigma$ has $3m$ elements, it follows that $|\sigma^{-1}(j)| = 3$ for all $j \in [m]$.
            But then
            \[
                A \geq \sum_{i \in \sigma^{-1}(j)} \eps^{i, j(i)} > \left(\sum_{i \in \sigma^{-1}(j)} |P^i|\right) - \mfrac{3}{4}
            \]
            it follows that $\sum_{i \in \sigma^{-1}(j)} a_i < A+1$. Together with $\sum_{i \in [3m]} a_i = mA$, this implies that $\sigma$ is a valid partition function witnessing that $\mathcal I$ is a YES-instance. This is a contradiction.
            Hence, there exist $i \in [3m]$ and $j \in [m]$ with $\dis(Q, i, j) \geq \frac{1}{4} - 2A\eta$.
        \end{claimproof}
        With \cref{subclaim:finding_a_quarter}, the lemma follows.
    \end{proof}
        
    \noindent Putting everything together, we have
    \begin{align*}
        \dis_{(p)}(Q) &\geq \Big( d_1(Q) + d_2(Q) \Big)^{1/p} \\
        &\geq \left( 6m - 3\eta + \mfrac{1}{4} - 2A\eta \right)^{1/p} \tag*{\cref{subclaim:degree_diff_distortion,subclaim:d2Q}}\\
        &\geq \left( 6m + \mfrac{1}{4} - 3A\eta \right)^{1/p} \\
        &\geq \left( 6m + \mfrac{1}{4} - \mfrac{36mA}{N} \right)^{1/p} \tag*{\cref{subclaim:bounding_inter_layer_mapping}} \\
        &> (6m)^{1/p}
    \end{align*}
    where the last inequality holds for $N > 144mA$.
    We now conclude the proof by choosing $M \ceq 144mA + 1$ which is also sufficient for \cref{equ:dist_sum_layer_bound} to remain valid since $N = M + 1 + mA$.
    Note that by the strong \NP-hardness of \threepartition{}, $M$ remains polynomial in the input size.
    As a final remark, we note that the value $(6m)^{1/p}$ may be irrational. However, as we find a nonzero gap between the values resulting from YES- and NO-instances, we may choose a rational threshold (of polynomial precision). This proves \cref{theorem:glob_gw_dist}.

\end{document}